\documentclass[11pt,leqno]{article}

\usepackage[margin=1.1in]{geometry}
\usepackage{amsmath,amssymb,amsthm}
\usepackage{graphicx}
\usepackage{float}
\usepackage{array}
\usepackage{xcolor}
\usepackage{enumitem}
\usepackage[authoryear,round]{natbib}
\usepackage{hyperref}
\hypersetup{
  colorlinks=true,
  citecolor=black,     
  linkcolor=black,      
  urlcolor=black,   
  filecolor=black,
}

\newtheorem{theorem}{Theorem}
\newtheorem{corollary}{Corollary}
\newtheorem{lemma}{Lemma}
\newtheorem{proposition}{Proposition}

\theoremstyle{definition}
\newtheorem{assumption}{Assumption}
\newtheorem{remark}{Remark}
\theoremstyle{plain}

\newenvironment{aparts}
  {\begin{enumerate}[label=\theassumption.\arabic*,ref=\theassumption.\arabic*,
                     leftmargin=2.7em,labelsep=0.6em,topsep=0.3em,
                     itemsep=0.6em,parsep=0.2em,partopsep=0pt]}
  {\end{enumerate}}

\makeatletter
\renewenvironment{proof}[1][]{\par\pushQED{\qed}\normalfont
  \topsep6\p@\@plus6\p@\relax
  \trivlist\item[\hskip\labelsep\itshape Proof #1\@addpunct{.}]\ignorespaces
}{\popQED\endtrivlist\@endpefalse}
\makeatother

\def\ind{ {{\rm 1}\hskip-2.2pt{\rm l}}}
\def\P{\mathbb P}
\def\E{\mathbb E}
\begin{document}

\title{Statistics of multivariate extremes under random censoring}
\author{Martin Bladt\\[2pt]
\normalsize Department of Mathematical Sciences, University of Copenhagen}
\date{}
\maketitle

\begin{abstract}
\noindent
We study tail dependence of a $d$-dimensional random vector whose coordinates are
subject to random right censoring.  Along each fixed direction the censored problem
reduces exactly to a univariate one, and the observed data determine the radius and
whether it was produced by the event or censoring vector.  An
ordinary Kaplan--Meier product limit therefore estimates the joint tail probability in
that direction, in every dimension, and with no multivariate survival function, no
smoothing and no tuning parameter beyond the threshold.  The theory of this directional estimator is formulated
under an arbitrary marginal standardization and conditions only imposed on the standardized
laws, in particular for any max-domain of attraction.  We prove uniform consistency and functional weak convergence at the square root
of the effective number of joint extremes, allowing the standardization to be
estimated.  A multiplicative standardization recovers the heavy-tailed theory, whereas
a standardization built from the marginal (non-directional) Kaplan--Meier estimators requires no marginal
tail model and targets the normalized tail copula itself.  The standardization error is negligible for the former, and for the latter under a
mild condition on the joint censoring. Simulation studies validate the finite-sample performance of the estimator.  An application to the National Flood Insurance Program dataset comprised of
claims generated by Hurricane Ian estimates the joint upper tail of building and
contents losses from indemnities, which are subject to capping, simultaneously in both coordinates.
\end{abstract}

\noindent\textbf{Keywords:} multivariate extremes; random censoring; tail dependence;
Kaplan--Meier estimator; tail copula.

\noindent\textbf{MSC 2020:} Primary 62G32; secondary 62N01, 60G70.

\addtocontents{toc}{\protect\setcounter{tocdepth}{-5}}

\section{Introduction}\label{sec:intro}

Risk is usually multivariate.  A single storm can damage a building and its contents at the same
time; a flood defense may fail when several gauges are high together; a portfolio is
threatened when several positions lose together; or a patient may die when certain biological indices are jointly elevated.  In each case the quantity that determines
the decision is the probability that every coordinate of a random vector is
simultaneously extreme, and that probability is not solely determined by the marginal tails.  Estimating that probability is challenging, as it is small, so the observations reaching the
region are scarce, and every quantity computed from it propagates this instability.  Multivariate extreme-value theory is
the established framework for this task.

Let $X=(X^{(1)},\ldots,X^{(d)})$, $d\geq2$, be a multivariate random vector.  The problem treated
here is that its coordinates are observed under componentwise random right censoring:
one sees $Z^{(j)}=X^{(j)}\wedge C^{(j)}$ and the associated censoring indicators, with
the vector $C$ independent of $X$ but with arbitrary dependence among the coordinates of
either vector.  Such settings arise in many applications, for instance as capped values, limitations in measurement tools, or as partial observations of lifetimes as in classical medical studies in survival analysis. Censoring changes the mathematical problem and its effective sample size.  The extreme
observations of $Z$ depend on the tails of both $X$ and $C$, and the proportion of
censored observations need not vanish in the tail, and may converge to a constant determined by the relative tail decay.  An uncensored tail
estimator applied to $Z$ therefore targets the wrong joint tail, whereas univariate
marginal corrections for censoring do not suffice by themselves to reconstruct a joint exceedance probability.

Without censoring the problem is classical.  Multivariate extreme-value theory separates
marginal tail behavior from extremal dependence, in the sense that after marginal
standardization, the tail behavior is described by an exponent
measure, or equivalently by a spectral measure or a stable tail dependence function.  Its nonparametric estimation
is based on the empirical tail measure and its weak limit; see for instance
\citet{EinmahlDeHaanSinha1997}, \citet{DreesHuang1998},
\citet{EinmahlDeHaanPiterbarg2001}, \citet{SchmidtStadtmuller2006},
\citet{EinmahlKrajinaSegers2012}, \citet{BucherVolgushev2013} and
\citet{deHaanFerreira2006}.

Under censoring, the univariate theory of extremes was developed by
\citet{BeirlantEtAl2007}, \citet{EinmahlFilsVilletardGuillou2008} and
\citet{WormsWorms2014}, and subsequently investigated via trimming, covariates, and extreme
Kaplan--Meier integrals; see, respectively,
\citet{BladtAlbrecherBeirlant2021}, \citet{Stupfler2016}, and \citet{BladtRodionov2025}.
Recent related product-limit
methods for bias correction and tail risk functionals are given by
\citet{BladtGoegebeurGuillou2026a}.  On the other hand, in several dimensions, for non-extreme regions, the classical survival
literature includes the planar product-limit theory of
\citet{Dabrowska1988,Dabrowska1989}, the moment and estimating-equation constructions
of \citet{PrenticeCai1992}, the efficiency analysis of
\citet{GillVanDerLaanWellner1995} and \citet{vanderLaan1996}, and the polar
constructions of \citet{DaiFu2012,DaiRestainoWang2016}.  Work directed specifically at
censored extremal dependence includes the Hall-type analysis of joint-tail probabilities
by \citet{HashorvaLingPeng2014} and \citet{GoegebeurGuillouQin2019}. The recent contributions \citet{BladtGoegebeurGuillou2026b} and
\citet{BladtGuillou2026} tackle the bivariate and regularly varying case by the arduous construction of \citet{Dabrowska1988} with considerable technical effort.  In contrast, the approach taken here allows
arbitrary dimension and extends to any marginal tail model, with the same rate of convergence.

More precisely, the present approach reduces each joint-tail probability to an ordinary
product limit before passing to the extreme-value limit.  Fix a direction $q$ in the
positive orthant and consider the first time, along the ray through $q$, at which the
standardized event vector fails to have all coordinates above the moving point.  The analogous construction for the censoring
vector produces a time which can be shown to be independent of the event time.  The observed data then determine the minimum of these two times
and whether the minimum was caused by the event vector, exactly and without
approximation.  Consequently, an ordinary
one-dimensional Kaplan--Meier product limit estimates the joint tail probability in the
direction $q$.  Letting $q$ vary weaves all these estimators together, and recovers the tail surface; normalizing by the
product limit at $\boldsymbol 1_d$ yields an estimator of the normalized joint-tail
limit $F^\circ$; on upper orthants this is the object corresponding to the exponent
measure of the uncensored theory.  In the absence of censoring the construction reduces
to the empirical joint-exceedance estimator.

This directional formulation avoids estimating a full multivariate survival
distribution but still recovers the joint extremal dependence structure. The
mathematical difficulties it creates are nonetheless substantial. We highlight, for instance, that the resulting Kaplan--Meier estimators have a sample which varies with $q$, and a risk set which is not bounded from below. The marginal standardizations are a second ingredient requiring careful analysis, and we propose two choices.  A multiplicative one gives
the usual heavy-tailed formulation, in which $F^\circ$ keeps the marginal tail scales.
In contrast, the one obtained from applying the marginal (non-directional) Kaplan--Meier estimators to each coordinate, imposes no marginal tail model
and makes $F^\circ$ the normalized tail copula itself.

The paper has three mathematical contributions.  First, the directional identity converts the
multivariate censored problem into an indexed family of ordinary univariate product limits, which, when woven together, are able to recover tail dependence structures in upper orthants.
Second, a general theory gives uniform consistency and functional weak convergence when
the marginal standardization is known or estimated.  Stability theory is provided, which in particular separates changes arising from
estimation error in the radius from changes in the censoring indicator, which is the delicate part of
plug-in standardization.  Third, the two concrete standardizations give explicit
procedures and their estimation error is negligible at the joint effective
sample-size scale. This always holds for the multiplicative standardization, and for the
Kaplan--Meier one when the joint censoring is of smaller order than its margins; that
condition is met, for instance, when the censoring coordinates are independent and at
least two of them are censored, but also in other settings.  The limiting rate
is $r_n$, the square root of the expected number of observations reaching the extreme
level in all coordinates, rather than the square root of the sample fraction. In the absence of censoring, the rates agree.

Some nonparametric survival theory had to be strengthened to achieve our limit theorems. Related but unfortunately inapplicable results include the weak
convergence and whole-line theory for an ordinary Kaplan--Meier estimator due to
\citet{BreslowCrowley1974} and \citet{Gill1983}; and uniform and increasing-set rates of convergence developed by \citet{CsorgoHorvath1983}, \citet{Stute1994a,Stute1994b} and
\citet{Csorgo1996}. All of those results assume a censored sample with fixed observations.  Here, the estimated marginal standardization changes both the reduced times and their censoring indicators, uniformly in direction and at the scale
of a triangular rare-event array.  We therefore establish, in the Supplementary Material,
the stability of the directional product limit under such distortions.  With those tools in place, we establish the asymptotic theory of the univariate Kaplan--Meier standardizations,
uniformly on a fixed standardized interval, in exactly the form required by some generic
plug-in conditions. We use empirical-process
arguments throughout, in the framework of \citet{vanderVaartWellner1996}.

The remainder of the paper is organized as follows. The directional construction, the
estimator, and the assumptions are given in
Section~\ref{sec:setup}.  Consistency and functional weak convergence are established
in Section~\ref{sec:asymptotics}, and the simulation experiments in
Section~\ref{sec:simulations}.  An application to insurance losses censored by policy
limits occupies Section~\ref{sec:ian}, and Section~\ref{sec:discussion} concludes and discusses some open problems.  
The Supplementary Material contains the two concrete standardizations, further numerical studies, the plug-in stability theory, the additional marginal Kaplan--Meier theory, and proofs of all the results.

\section{Setup, estimator, and conditions}\label{sec:setup}

We first define setup and the notation, then build the estimator from
the directional reduction, and finally state the mathematical assumptions required throughout.

Let $X=(X^{(1)},\ldots,X^{(d)})$, $d\geq 2$ be a positive random vector and let
$C=(C^{(1)},\ldots,C^{(d)})$ be a positive censoring vector, independent of $X$.
From the independent copies $(X_i,C_i)$, $i=1,\ldots,n$, one observes
\begin{align}
 Z_i^{(j)}=X_i^{(j)}\wedge C_i^{(j)},\qquad
 \delta_i^{(j)}=\ind_{\{X_i^{(j)}\leq C_i^{(j)}\}},\qquad j=1,\ldots,d.
 \label{obs}
\end{align}
Dependence between the coordinates of either vector is unrestricted.
For a positive random variable $V$, write its tail as $\overline F_V=1-F_V$ and
the associated tail quantile function by $U_V(r)=\inf\{x:F_V(x)\geq1-r^{-1}\}$ for $r>1$,
and write $\boldsymbol 1=\boldsymbol 1_d$ for the vector of ones.

\subsection{Construction of the estimator}\label{sec:reduction}\label{sec:plugin}

We consider $d$
deterministic, strictly increasing, and continuous maps
\begin{align*}
 \psi_{1,n},\ldots,\psi_{d,n}:(0,\infty)\longrightarrow(0,\infty),
\end{align*}
which standardize the margins at the extreme level of interest, and we work
throughout with the standardized observations $\psi_{j,n}(Z^{(j)}_i)$.  The reader may
keep in mind the two choices, studied as special cases of the general theory, of the Supplementary Material,
\begin{align}
 \psi_{j,n}(x)={x}/{u_{j,n}}
 \qquad\text{and}\qquad
 \psi_{j,n}(x)=\frac{k/n}{\overline F_{X^{(j)}}(x)} ,
 \label{twochoices}
\end{align}
where $k=k_n$ is an intermediate sequence, $k\to\infty$ and $k/n\to0$, and
$u_{j,n}=U_{X^{(j)}}(n/k)$ is the corresponding intermediate quantile.  The first is the
multiplicative standardization used when $X^{(j)}$ has a heavy tail, and makes
$\psi_{j,n}(X^{(j)})$ approximately standard Pareto at the level $k/n$; the second is
the marginal rank transformation, which makes it exactly standard Pareto at that level
whatever the margin, and which is estimated by the marginal Kaplan--Meier estimator.
Nothing below depends on either
choice; only the two displayed joint probabilities
\begin{equation}
\begin{aligned}
p_n(q)
  &= \P\bigl(\psi_{j,n}(X^{(j)})>q_j,\ j=1,\ldots,d\bigr),\\
c_n(q)
  &= \P\bigl(\psi_{j,n}(C^{(j)})>q_j,\ j=1,\ldots,d\bigr),
\qquad q=(q_1,\ldots,q_d)\in(0,\infty)^d.
\end{aligned}
\label{pncn}
\end{equation}
are affected by these functions, and by the independence of $X$ and $C$, we have
\begin{align}
 \P\bigl(\psi_{j,n}(Z^{(j)})>q_j,\ j=1,\ldots,d\bigr)=p_n(q)\,c_n(q).
 \label{ZisXC}
\end{align}

Define
\begin{align}
 T_{i,n}(q)&=\min_{1\leq j\leq d}\frac{\psi_{j,n}(X_i^{(j)})}{q_j}, \quad D_{i,n}(q)=\min_{1\leq j\leq d}\frac{\psi_{j,n}(C_i^{(j)})}{q_j}.
 \label{TD}
\end{align}
The two variables in \eqref{TD} are independent, but neither is observable.  What is
observable, once $\psi_{1,n},\ldots,\psi_{d,n}$ are known, is
\begin{align*}
 A_{i,n}^{(j)}(q)=\frac{\psi_{j,n}(Z_i^{(j)})}{q_j},\qquad j=1,\ldots,d,
 \qquad
 W_{i,n}(q)=\min_{1\leq j\leq d}A_{i,n}^{(j)}(q),
\end{align*}
together with the censoring indicator, or mark,
\begin{align}
 \Delta_{i,n}(q)=\max\bigl\{\delta_i^{(j)}:\ A_{i,n}^{(j)}(q)=W_{i,n}(q)\bigr\},
 \label{Delta}
\end{align}
the maximum being over the coordinates that attain the minimum.  For $d=2$ this is
$\delta^{(1)}_i$ or $\delta^{(2)}_i$ according to which of the two standardized
coordinates is smaller, and $\delta^{(1)}_i\vee\delta^{(2)}_i$ in case of a tie.
The point of these definitions is that the $d$-dimensional censored problem collapses,
in each fixed direction, to a one-dimensional one. Indeed, the following identities hold
exactly, for every $i$, $n$, and $q$,
\begin{align}
 W_{i,n}(q)=T_{i,n}(q)\wedge D_{i,n}(q),\qquad
 \Delta_{i,n}(q)=\ind_{\{T_{i,n}(q)\leq D_{i,n}(q)\}}.
 \label{reduction}
\end{align}
Thus $(W_{i,n}(q),\Delta_{i,n}(q))$ is a randomly right-censored observation of
$T_{i,n}(q)$, with independent censoring variable $D_{i,n}(q)$, and the Kaplan--Meier
product limit of these pairs estimates $\P\bigl(T_{1,n}(q)>1\bigr)=p_n(q)$.  The
tie-breaking convention in \eqref{Delta} is retained throughout; no assumption
concerning atoms of the observed angle is needed for consistency.

Since the above construction does not distinguish two marginal models that differ by increasing marginal
transformations, only the standardized laws \eqref{pncn} play an asymptotic role.
Accordingly, the conditions below are stated in terms of $p_n$ and $c_n$ alone.

Fix throughout a constant $T>1$; the estimator is studied at directions lying in
$K=[1,T]^d$.  In practice $\psi_{1,n},\ldots,\psi_{d,n}$ are unknown and require replacement by suitable estimators
$\widehat\psi_{1,n},\ldots,\widehat\psi_{d,n}$.  These are required to be nondecreasing, and to
be injective on $\{Z^{(j)}_i:\delta^{(j)}_i=1\}$, the uncensored observations of the
$j$th coordinate.  Replacing $\psi_{j,n}$ by $\widehat\psi_{j,n}$ in the definitions of $A_{i,n}^{(j)}(q)$,
$W_{i,n}(q)$ and \eqref{Delta} gives a sample array
$(\widehat W_{i,n}(q),\widehat\Delta_{i,n}(q))$, $i=1,\ldots,n$.

For $v>0$, $q\in(0,\infty)^d$, and Borel sets $B\subset(0,\infty)$, let
\begin{align*}
 \widehat Y_n(v;q)=\sum_{i=1}^n\ind_{\{\widehat W_{i,n}(q)\geq v\}},\qquad
 \widehat N_n(B;q)=\sum_{i=1}^n
 \widehat\Delta_{i,n}(q)\ind_{\{\widehat W_{i,n}(q)\in B\}}.
\end{align*}
Define
\begin{align}
 \widehat p_n(q)=
 \prod_{0<v\leq1}\bigl\{1-{\widehat N_n(\{v\};q)\over
                                      \widehat Y_n(v;q)}\bigr\}.
 \label{KM}
\end{align}
where the product is over the atoms of $\widehat N_n(\cdot\,;q)$.  This is the
Kaplan--Meier product limit \citep{KaplanMeier1958}
for the transformed observations.  Finally,
\begin{align}
 \widehat F_n^\circ(q)={\widehat p_n(q)\over\widehat p_n(\boldsymbol 1)},
 \label{estimator}
\end{align}
the ratio being set equal to zero if its denominator vanishes.  We call the sample
built from the true $(W_{i,n}(q),\Delta_{i,n}(q))$ the \emph{oracle} sample, and write
$\widetilde p_n(q)$ for the product limit \eqref{KM} formed by it.

\subsection{Regularity conditions}\label{sec:conditions}

The conditions are of two kinds, and we state them in order.
Assumption~\ref{ass:structure} constrains the standardized laws $p_n$ and $c_n$ alone, and it
concerns the oracle problem, in which $\psi_{1,n},\ldots,\psi_{d,n}$ are known.
Assumption~\ref{ass:plugin} then controls the fact of estimating them,
which turns out to be an unproblematic one.

\begin{assumption}[Standardized tails and risk set]\label{ass:structure}\leavevmode
\begin{aparts}
\item\label{a:margins} The $2d$ marginal distribution functions of $X$ and $C$ are
continuous.

\item\label{a:FG} The standardized laws stabilize: there are functions $F^\circ$ and
$G^\circ$ with, locally uniformly on $(0,\infty)^d$,
\begin{align*}
 {p_n(q)\over p_n(\boldsymbol 1)}\longrightarrow F^\circ(q),\qquad
 {c_n(q)\over c_n(\boldsymbol 1)}\longrightarrow G^\circ(q),
\end{align*}
where $F^\circ$ and $G^\circ$ are continuous and strictly positive on $(0,\infty)^d$
and, for each $j$, converge to zero as $q_j\to\infty$ with the remaining
coordinates held fixed.  Being limits of $p_n(\cdot)/p_n(\boldsymbol 1)$ and
$c_n(\cdot)/c_n(\boldsymbol 1)$, they are automatically nonincreasing in each argument and
equal to one at $\boldsymbol 1$.

\item\label{a:rate} The effective sample size diverges:
\begin{align*}
 r_n:=\bigl\{n\,p_n(\boldsymbol 1)\,c_n(\boldsymbol 1)\bigr\}^{1/2}\longrightarrow\infty.
\end{align*}

\item\label{a:radialmodulus} The risk set does not grow too fast along a ray: there is
$\rho>0$ such that, for all $n$ large,
\begin{align*}
 {p_n(vq)\,c_n(vq)\over p_n(Avq)\,c_n(Avq)}\ \leq\ A^{\rho},
 \qquad q\in[1/4,16T]^d,\ \ A\geq1,\ \ 0<v\leq Av\leq16 .
\end{align*}
\end{aparts}
\end{assumption}

\begin{remark}\label{rem:asm1}
The function $F^\circ$ is the target of the estimator, and
$G^\circ$ describes the censoring on the same scale.
By \eqref{ZisXC}, $r_n^2$ is the expected number of observations exceeding the
standardized level $\boldsymbol 1$ in every coordinate; it is this sequence, and neither $n$ nor $k$, that is the rate at which
the estimator is shown to converge below.
By \eqref{ZisXC}, Assumption~\ref{a:radialmodulus} states that the survival function of the reduced observation
$W_{1,n}(q)$ decreases at least as slowly as the fixed power $v^{-\rho}$ along every ray  $q\in[1/4,16T]^d$; it is
the Potter bound of that univariate survival function, and it is equivalent to the monotonicity of
$v\mapsto v^{\rho}p_n(vq)c_n(vq)$ on the same range.  The constants $1/4$, $16$ and
$16T$ merely enlarge the direction set $K$ by a fixed factor, which the plug-in argument requires.
\end{remark}

Probabilities of events involving a single
observation are normalized by $r_n^2$ throughout, and for that reason we introduce the shorthand measure notation
\begin{align}
 \nu_n(A)={\P(A)\over\P\bigl(\psi_{j,n}(Z^{(j)})>1,\ j=1,\ldots,d\bigr)}
 ={n\over r_n^2}\,\P(A).
 \label{nu}
\end{align}
Two survival objects appear repeatedly throughout the proofs.  For $v>0$ and
$q\in(0,\infty)^d$, put
\begin{align}
 H_n(v;q)=\nu_n\{W_{1,n}(q)\geq v\},\qquad
 H_{1,n}(dv;q)=\nu_n\{W_{1,n}(q)\in dv,\ \Delta_{1,n}(q)=1\},
 \label{HH1}
\end{align}
which are the normalized risk function and the normalized failure measure of the reduced
observations.  Both are fully determined by $p_n$ and $c_n$.  In this notation
Assumption~\ref{a:radialmodulus} is $H_n(v;q)\leq A^{\rho}H_n(Av;q)$.

The estimated sample is the oracle sample with random coordinate
distortions $\widehat\psi_{j,n}\circ\psi_{j,n}^{-1}$, and its stochastic effect on the product limit is
delicate.  Assumption~\ref{ass:plugin} states what is required of the estimated
standardization.

\begin{assumption}[Estimated standardization]\label{ass:plugin}\leavevmode
\begin{aparts}
\item\label{a:plugin} Accuracy is required only below the fixed
level $16T$, and
is measured by a positive deterministic sequence $\eta_n\downarrow0$ with
\begin{align*}
 \P\Bigl(\sup_{x:\ \psi_{j,n}(x)\leq16T}
 \bigl|{\widehat\psi_{j,n}(x)/\psi_{j,n}(x)}-1\bigr|
 \leq\eta_n\ \text{ for }j=1,\ldots,d\Bigr)\longrightarrow1 .
\end{align*}

\item\label{a:plugincons} The speed of $\eta_n$ satisfies
\begin{align*}
 \eta_n\,\log{n\over r_n^2}\longrightarrow0 .
\end{align*}

\item\label{a:pluginnorm} The same quantity, at the rate $r_n$, satisfies
\begin{align*}
 r_n\,\eta_n\,\log{n\over r_n^2}\longrightarrow0 .
\end{align*}
\end{aparts}
\end{assumption}

\begin{remark}\label{rem:asm2}
A distortion as in Assumption~\ref{a:plugin} moves each failure time by a factor in $[1-\eta_n,1+\eta_n]$, and it may also change which coordinate attains the
minimum in $W_{i,n}(q)$.  By \eqref{reduction} the latter alters the indicator only when the
reduced failure and censoring times are within that factor of each other.
The quantity $n/r_n^2=1/\{p_n(\boldsymbol 1)c_n(\boldsymbol 1)\}$
is the reciprocal of the probability that a single observation reaches the standardized
level $\boldsymbol 1$ in every coordinate, so Assumption~\ref{a:plugincons} requires that the marginal standardizations should be estimated
to a relative accuracy of smaller order than $1/\log(n/r_n^2)$.
The stronger Assumption~\ref{a:pluginnorm} is often easily satisfied, as marginal rates are often faster than joint ones, which makes the estimator asymptotically equivalent to its oracle counterpart;
since $r_n\to\infty$, it implies Assumption~\ref{a:plugincons}.
\end{remark}

\section{Main results}\label{sec:asymptotics}
This section establishes
uniform consistency of the directional estimator on $K$ and functional weak convergence at the
rate $r_n$; in both cases, the proofs treat the oracle problem first and then the
distortion of the coordinates is shown to be asymptotically negligible.
The section closes with a proposed threshold selection rule.

\subsection{Large sample behaviour}\label{sec:largesample}

Recall the fixed constant $T>1$ and the direction set $K=[1,T]^d$.

\begin{theorem}\label{thm:consistency}
Suppose that Assumptions~\ref{ass:structure} and~\ref{ass:plugin}.1--\ref{ass:plugin}.2
hold.  Then
\begin{align*}
 \sup_{q\in K}|\widehat F_n^\circ(q)-F^\circ(q)|
 \overset{\P}{\longrightarrow}0.
\end{align*}
\end{theorem}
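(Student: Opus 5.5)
The plan is to first prove the oracle statement $\sup_{q\in K}|\widetilde p_n(q)/p_n(q)-1|\overset{\P}{\to}0$, and then show that replacing $\psi_{j,n}$ by $\widehat\psi_{j,n}$ changes $\widehat p_n(q)/p_n(q)$ only by $o_\P(1)$, uniformly on $K$. The theorem then follows from the decomposition
\[
\widehat F_n^\circ(q)=\frac{\widehat p_n(q)/p_n(q)}{\widehat p_n(\boldsymbol 1)/p_n(\boldsymbol 1)}\cdot\frac{p_n(q)}{p_n(\boldsymbol 1)} .
\]
Here $p_n(q)/p_n(\boldsymbol 1)\to F^\circ(q)$ uniformly on the compact $K$ by Assumption~\ref{a:FG}, and $F^\circ$ is bounded by one on $K\subset[1,\infty)^d$. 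The convention of setting the ratio to zero on $\{\widehat p_n(\boldsymbol 1)=0\}$ is harmless, since that event has vanishing probability once $\widehat p_n(\boldsymbol 1)/p_n(\boldsymbol 1)\to1$.

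\textbf{Oracle step.} By \eqref{reduction}, for each fixed $q$ the pairs $(W_{i,n}(q),\Delta_{i,n}(q))$ form an i.i.d.\ randomly censored sample. Hence $p_n(q)=\exp\{-\Lambda_n(q)\}$, up to the product-integral correction, with cumulative hazard
\[
\Lambda_n(q)=\int_{(0,1]}H_{1,n}(dv;q)/H_n(v;q).
\]
Continuity of the margins (Assumption~\ref{a:margins}) and Assumption~\ref{a:FG} should make the hazard asymptotically continuous. I would use the Duhamel identity
\[
\widetilde p_n(q)/p_n(q)-1=-\int_{(0,1]}\frac{\widetilde p_n(v-;q)}{p_n(v;q)}\,d(\widetilde\Lambda_n-\Lambda_n)(v;q)
\]
and split the integral at a small level $\epsilon$.

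On $[\epsilon,1]$, I would normalise the counting and risk processes by $r_n^2$:
\[
r_n^{-2}\widehat Y_n(v;q)\quad\text{and}\quad r_n^{-2}\widehat N_n((v,1];q).
\]
These are empirical measures of sets of the form $\{\min_j \psi_{j,n}(Z^{(j)})/q_j\ge v\}$, jointly with the mark. These sets are indexed by $(v,q)$ in a VC class, since they are unions of lower-left orthants intersected with coordinate events. Their expectations under $\nu_n$ are bounded, and by Assumption~\ref{a:radialmodulus} the risk function $H_n(v;q)$ is bounded below by a constant multiple of $\epsilon^{\rho}$ on this range. A Glivenko--Cantelli theorem for normalised rare-event empirical processes (a ratio/Talagrand bound with variance of order $r_n^{-2}$ and $r_n\to\infty$) therefore gives uniform convergence in $(v,q)\in[\epsilon,1]\times K$.

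The part $(0,\epsilon)$ is where the risk set is not bounded below in the direction of large values. Fortunately, small $v$ means a large risk set: as $v\downarrow0$, $H_n(v;q)$ grows at most polynomially, which Assumption~\ref{a:radialmodulus} bounds at most up to $n/r_n^2$. So the contribution of $(0,\epsilon)$ to $\Lambda_n$ is controlled by $\log\{H_n(0;q)/H_n(\epsilon;q)\}$. I expect, though, that the hazard near the origin is small relative to it. Concretely, I would bound the variance of the martingale $\widetilde\Lambda_n-\Lambda_n$ on $(0,\epsilon)$ by $\int H_{1,n}/(r_n^2H_n^2)\lesssim 1/(r_n^2H_n(\epsilon))$. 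The bias term $\int(\widetilde p_n/p_n)$ would be handled with Lenglart's inequality and a uniform-in-$q$ bound on $\inf_v r_n^{-2}\widehat Y_n(v;q)/H_n(v;q)$. For that bound I would use a Wellner-type linear lower bound for ratios of empirical to true measures of nested sets, which is uniform thanks to the VC structure.

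\textbf{Plug-in step.} On the event of Assumption~\ref{a:plugin}, each $\widehat W_{i,n}(q)$ lies in $[(1-\eta_n)W_{i,n}(q),(1+\eta_n)W_{i,n}(q)]$ for the relevant range, so the risk process is sandwiched between oracle risk processes at $v(1\pm\eta_n)$. The marks can flip only when $T_{i,n}(q)$ and $D_{i,n}(q)$ lie within a factor $1\pm2\eta_n$. The risk-set sandwich costs a relative error $O(\eta_n\rho)$ per unit of $\log v$ by Assumption~\ref{a:radialmodulus}. Integrating over $v$ from the lowest relevant risk level up to $1$ accumulates a factor of $\log(n/r_n^2)$, and this is exactly what Assumption~\ref{a:plugincons} kills. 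The expected number of flipped marks divided by the risk set is bounded the same way, by comparing the near-tie band with $H_n$ along the ray.

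I expect the main obstacle to be the uniformity in $q$ of the mark-flipping control together with the unbounded-below risk set near $v=0$. The estimated sample changes with $q$ non-monotonically through the indicators, so these flips must be handled by the monotone sandwich in direction $q(1\pm\eta_n)$ over the enlarged set $[1/4,16T]^d$.
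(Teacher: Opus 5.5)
\textbf{Verdict: there is a genuine gap.} Your overall architecture matches the paper's. You treat the oracle product limit first, then compare via the radius sandwich and the mixed-mark near-tie band, with the factor $\log(n/r_n^2)$ arising from integrating the radial modulus against the hazard. Your treatment of $[\epsilon,1]$ is essentially the paper's Step~2.

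\textbf{Where it breaks: the supremum over $q$ on $(0,\epsilon]$.} This is the region where the risk function diverges. Counting-process martingales, Lenglart's inequality, and the Gill-type control of $\widetilde p_n(v-;q)/p_n(v;q)$ that your Duhamel route needs all live in the time variable $v$ for a \emph{fixed} direction.
\begin{itemize}
\item The filtration changes with $q$. Your variance bound $1/\{r_n^2H_n(\epsilon;q)\}$ therefore controls each $q$ separately and says nothing about $\sup_{q\in K}$.
\item $\widetilde p_n(q)$ is not monotone across directions, only along rays. Pointwise control cannot be upgraded for free.
\item A grid plus a union bound does not rescue this. The grid mesh would have to be $o(1/\log(n/r_n^2))$, which is the stability scale, and a union bound over such a grid is not supported by the bare hypothesis $r_n\to\infty$.
\item Your uniform ``Wellner-type'' ratio bound for $\widehat Y_n/(r_n^2H_n)$, even if granted, only handles denominators.
\item You also cannot drop the martingale structure and rely on a uniform relative bound $\sup|\widetilde H_n/H_n-1|\le\varepsilon$. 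The cumulative hazard on $(0,\epsilon]$ is $-\log p_n(\epsilon q)$, which is unbounded in the extreme regime (of order $\log(n/k)$ for the concrete standardizations). The nonlinear term would then be only $O(\varepsilon)$ times that diverging quantity.
\end{itemize}

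\textbf{The missing idea is the paper's layering.} Cut $(0,\tau]$ into the dyadic layers $I_{m,n}(q)=\{v:2^mb_n\le H_n(v;q)<2^{m+1}b_n\}$. Because $W_{1,n}(\boldsymbol 1)/T\le W_{1,n}(q)\le W_{1,n}(\boldsymbol 1)$ on $K$, every risk and failure band of layer $m$, for all $q\in K$ simultaneously, lies in the single envelope $E_{m,n}=\{W_{1,n}(\boldsymbol 1)\ge\underline v_{m,n}\}$. Its $\nu_n$-mass is at most $C_T2^{m+1}b_n$ by Assumption~\ref{a:radialmodulus} with $A=T$. The envelope-localized VC maximal inequality \eqref{maximalVC} then gives
\[
\E^*\sup_{q}\sup_{v\in I_{m,n}(q)}|\widetilde H_n-H_n|(v;q)\le C\sqrt{2^mb_n}/r_n .
\]
This is a relative accuracy $2^{-m/2}/(r_n\sqrt{b_n})$ that improves geometrically as the risk set doubles. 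Summing over $m$ gives $C/(r_n\sqrt{b_n})\to0$ uniformly in $q$, for both terms of the exact decomposition of $\widetilde\Lambda_n$. The passage from Nelson--Aalen to Kaplan--Meier is then the deterministic tie-count bound \eqref{ranksum}--\eqref{KMNA}, which is uniform in $q$, rather than Duhamel.

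\textbf{Where Assumption~\ref{a:radialmodulus} actually enters.} In the oracle step it is used precisely to build the envelope above. In your sketch it is misplaced:
\begin{itemize}
\item On $[\epsilon,1]$ the lower bound is just monotonicity: $H_n(v;q)\ge H_n(1;q)\to H(1;q)>0$.
\item Near the origin, $H_n\le H_n(0+;q)=n/r_n^2$ holds trivially.
\end{itemize}

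\textbf{The plug-in step has the same gap.} You bound only the \emph{expected} flip and shift contributions. The realized sums, and the empirical risk increments $\widetilde Y_n(v/\theta_n;q)-\widetilde Y_n(\theta_nv;q)$, must be controlled uniformly in $q$. The paper does this with the same layers and envelopes in Steps~C--D of the proof of Proposition~\ref{prop:stability}.
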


The distributional limit requires the following additional conditions.

\begin{assumption}[Refinements for weak convergence]\label{ass:wc}\leavevmode
\begin{aparts}
\item\label{a:divergence} The risk set diverges as one moves toward the origin along
a ray; equivalently, the
limit of the normalized risk function $H_n$ blows up at the origin:
\begin{align*}
 \inf_{q\in K'}F^\circ(vq)\,G^\circ(vq)\longrightarrow\infty
 \qquad (v\downarrow0)
\end{align*}
for every compact $K'\subset(0,\infty)^d$.

\item\label{a:bias} The estimator is centered at
$p_n(\cdot)/p_n(\boldsymbol 1)$, and we require the deterministic approximation of
$F^\circ$ by this ratio to be small at the rate $r_n$. Namely, for every compact
$\widetilde K\subset(0,\infty)^d$,
\begin{align*}
 \sup_{q\in\widetilde K}r_n
 \bigl|{p_n(q)\over p_n(\boldsymbol 1)}-F^\circ(q)\bigr|\longrightarrow0 .
\end{align*}

\end{aparts}
\end{assumption}

We now construct the Gaussian process appearing in the weak limit. Since it depends on the direction $q$, it requires some care, and we proceed in three steps.

First we identify a key limit measure. Notice that the measures $\nu_n$ are defined on $(0,\infty)^d\times\{0,1\}^d$.  At a point
$(z,\epsilon)$ put $W(q)=\min_{1\leq j\leq d}z_j/q_j$ and
$\Delta(q)=\max\{\epsilon_j:z_j/q_j=W(q)\}$; at $z_j=\psi_{j,n}(Z_1^{(j)})$ and
$\epsilon_j=\delta_1^{(j)}$ these are simply $W_{1,n}(q)$ and $\Delta_{1,n}(q)$.
For every $\lambda>0$, $\nu_n$ converges weakly on
$[\lambda,\infty)^d\times\{0,1\}^d$ to the image $\nu$ of the product of the two
measures on $(0,\infty)^d$ with survival functions $F^\circ$ and $G^\circ$ under the
coordinatewise map $(x,c)\mapsto(x\wedge c,\ \ind_{\{x\leq c\}})$.  

Next, we identify some $q$-dependent functions which drive the covariance structure. Writing
$H(v;q)=F^\circ(vq)G^\circ(vq)$ and $H_1(dv;q)=G^\circ(vq)\{-dF^\circ(vq)\}$ for the limits
of \eqref{HH1}, let
\begin{align}
 \xi_q=\int_{(0,1]}{\ind_{\{W(q)\geq v\}}\over H(v;q)^2}\,H_1(dv;q)
 -{\Delta(q)\,\ind_{\{W(q)\leq1\}}\over H(W(q);q)}.
 \label{scorelimit}
\end{align}
Notice that $\xi_q$ belongs to $L^2(\nu)$, since
\begin{align}
 \nu\bigl(\xi_q^2\bigr)=\int_{(0,1]}{H_1(dv;q)\over H(v;q)^2}
 \leq\int_{(0,1]}{-dH(v;q)\over H(v;q)^2}\leq{1\over H(1;q)},
 \label{limitlower}
\end{align}
and we recognise a Gill--Greenwood-type variance formula for each $q$.  

Finally, we can define a centered Gaussian process
$\mathbb G$ on $K$ with covariance
\begin{align}
 \E[\mathbb G(q)\mathbb G(q')]
 =\nu\bigl\{(\xi_q-\xi_{\boldsymbol 1})(\xi_{q'}-\xi_{\boldsymbol 1})\bigr\}.
 \label{Glimit}
\end{align}
We can now state functional weak convergence of the directional estimators.

\begin{theorem}\label{thm:normality}
Suppose that Assumptions~\ref{ass:structure}, \ref{ass:plugin}
and~\ref{ass:wc} hold.
Then
\begin{align*}
 r_n\{\widehat F_n^\circ-F^\circ\}
 \ \leadsto\ F^\circ\mathbb G
 \qquad\text{in }\ell^\infty(K),
\end{align*}
where $\mathbb G$ is the centered Gaussian process of \eqref{Glimit}.
\end{theorem}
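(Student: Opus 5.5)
The plan follows the route announced at the start of this section: prove the result for the oracle estimator $\widetilde p_n$ first, then show that plugging in $\widehat\psi_{j,n}$ changes nothing at the scale $r_n$.

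\emph{Oracle representation.} For each fixed $q$, \eqref{reduction} makes $(W_{i,n}(q),\Delta_{i,n}(q))$ an ordinary independently right-censored sample. The Duhamel identity then gives
\begin{align*}
 \frac{\widetilde p_n(q)}{p_n(q)}-1=-\int_{(0,1]}\frac{\widetilde p_n(v-;q)}{S_n(v;q)}\,\frac{d\bigl(\widehat N_n-\Lambda\text{-compensator}\bigr)(v;q)}{\widehat Y_n(v;q)},
\end{align*}
where $S_n(v;q)=p_n(vq)/p_n(q)$. Rescaling by $r_n$ and using the normalization $\nu_n$, I would linearize this as
\begin{align*}
 r_n\Bigl\{\frac{\widetilde p_n(q)}{p_n(q)}-1\Bigr\}=\frac{1}{r_n}\sum_{i=1}^n\bigl\{\xi_{q,n}(Z_i,\delta_i)-\E\xi_{q,n}\bigr\}+o_\P(1),
\end{align*}
uniformly in $q\in K$. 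Here $\xi_{q,n}$ is the finite-$n$ version of \eqref{scorelimit}, built from $H_n$ and $H_{1,n}$ in \eqref{HH1}. The $1/r_n$ normalization of the sum is right because each summand is nonzero only with probability of order $r_n^2/n$.

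\emph{Controlling the remainder near $v\downarrow 0$.} The delicate point is that the risk set is unbounded below along the ray. I would truncate at a small $v_0$. On $[v_0,1]$, the normalized counting processes $r_n^{-2}\widehat Y_n$ and $r_n^{-2}\widehat N_n$ converge uniformly in $(v,q)$; the function classes are VC (indicators of orthant-type sets), so the usual rare-event uniform LLN applies. On $(0,v_0)$, Assumption~\ref{a:radialmodulus} together with Assumption~\ref{a:divergence} makes $H_n(v;q)$ large, so $1/H_n$ is small and this piece contributes negligibly to both the variance and the remainder. A Lenglart-type inequality would show the second-order term is $o_\P(1)$.

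\emph{Weak convergence of the linear term.} I would apply a Lindeberg-type functional CLT for triangular arrays, such as Theorem 2.11.22 or 2.11.23 of \citet{vanderVaartWellner1996}, to the class $\{\xi_{q,n}:q\in K\}$ with envelope and bracketing controlled on $K$. The covariance converges to $\nu(\xi_q\xi_{q'})$ by weak convergence of $\nu_n$ to $\nu$ on $[\lambda,\infty)^d\times\{0,1\}^d$, combined with the truncation above. Equicontinuity in $q$ would use continuity of $F^\circ$ and $G^\circ$ and the local uniformity in Assumption~\ref{a:FG}.

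\emph{Passing to the ratio.} With the notation $\widetilde F_n^\circ(q)=\widetilde p_n(q)/\widetilde p_n(\boldsymbol 1)$ for the oracle version of \eqref{estimator}, the delta method gives
\begin{align*}
 r_n\bigl\{\widetilde F_n^\circ(q)-p_n(q)/p_n(\boldsymbol 1)\bigr\}\leadsto F^\circ(q)\bigl\{\mathbb G'(q)-\mathbb G'(\boldsymbol 1)\bigr\},
\end{align*}
and this matches \eqref{Glimit}. Assumption~\ref{a:bias} then allows re-centering at $F^\circ$.

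\emph{Plug-in step.} It remains to show $\sup_K r_n|\widehat p_n(q)/p_n(q)-\widetilde p_n(q)/p_n(q)|=o_\P(1)$. On the event of Assumption~\ref{a:plugin}, each $\widehat W_{i,n}(q)$ lies within the factor $1\pm\eta_n$ of $W_{i,n}(q)$, with the mark possibly flipped only when $T_{i,n}$ and $D_{i,n}$ are within that factor of each other. The estimated sample is therefore sandwiched between oracle samples at directions $(1\pm\eta_n)q$ and slightly dilated times. There are two error sources. The radial shift costs a factor $\rho\eta_n$ per unit of $\log v$; integrated over the effective range $v\gtrsim (r_n^2/n)^{1/\rho}$, this gives $O(\eta_n\log(n/r_n^2))$. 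The mark flips cost $\nu_n$-mass of order $\eta_n$ times the same logarithmic integral. Assumption~\ref{a:pluginnorm} makes the product with $r_n$ vanish.

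I expect this plug-in step to be the main obstacle, specifically bounding the indicator flips uniformly in $q$. The estimated $\widehat\psi$ are data dependent, so the distortion is not independent of the sample. I would first try a deterministic monotonicity sandwich: $\widehat N_n$ and $\widehat Y_n$ are bounded by oracle counts taken over enlarged and shrunk regions. The difference of those counts is then controlled through its expectation, using the Potter bound, together with a uniform empirical-process bound over the VC class of such regions.
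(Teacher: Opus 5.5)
Your skeleton (oracle first, linearize, truncate near the origin, CLT, delta method, then shift and flip for the plug-in) parallels the paper's. Routing through Duhamel and a functional CLT for the score class, instead of the paper's indicator-process CLT, Nelson--Aalen linearization and deterministic comparison \eqref{KMNA}, would be a legitimate variation. The gap is that the one genuinely hard step is only asserted, and the tools you name for it do not deliver it.

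On $(0,v_0)$ you need three things to be small \emph{uniformly in $q\in K$}, in probability and uniformly in $n$ as $v_0\downarrow0$: the lower linear term, the nonlinear remainders, and the relative error $\widetilde H_n/H_n-1$ in the denominators. The bound $\int_{(0,v_0]}H_{1,n}(dv;q)/H_n(v;q)^2\le1/H_n(v_0;q)$ is a pointwise variance bound. Lenglart's inequality controls one martingale in one filtration, whereas the reduced samples for different $q$ are ordered differently. Neither yields a supremum over $q$. A single global VC maximal inequality also fails: every observation is at risk as $v\downarrow0$, so any envelope for the lower region has $\nu_n$-mass $H_n(0+;q)=n/r_n^2\to\infty$. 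The same obstruction sits inside your functional CLT for $\{\xi_{q,n}:q\in K\}$. An $L^2(\nu_n)$-bounded envelope and $n$-uniform entropy for that class near the origin do not follow from continuity of $F^\circ$ and $G^\circ$. The natural envelope $\sup_qH_n(W_{1,n}(q)\wedge1;q)^{-1}$ can only be compared across directions through Assumption~\ref{a:radialmodulus}.

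The missing idea is the peeling of Step~5 of the proof of Theorem~\ref{thm:normality}. Put $b_{n,\tau}=\inf_qH_n(\tau;q)$ and proceed as follows.
\begin{enumerate}
\item Cut $(0,\tau]$ into layers $\{2^mb_{n,\tau}\le H_n(v;q)<2^{m+1}b_{n,\tau}\}$.
\item Use $W_{1,n}(\boldsymbol 1)/T\le W_{1,n}(q)\le W_{1,n}(\boldsymbol 1)$ and Assumption~\ref{a:radialmodulus} to show that all layer-$m$ risk events, over all $q\in K$, lie in one event of $\nu_n$-mass at most $C2^mb_{n,\tau}$.
\item Apply \eqref{maximalVC} on each layer against the weight $H_n^{-1}\asymp(2^mb_{n,\tau})^{-1}$, and sum the geometric series. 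This gives $\E^*\sup_q|\cdot|\le C/\sqrt{b_{n,\tau}}$ for the lower linear term, which vanishes as $\tau\downarrow0$ by Assumption~\ref{a:divergence}.
\end{enumerate}
The same layers give the relative-risk event \eqref{relrisk} that controls the nonlinear terms. So Assumption~\ref{a:radialmodulus} enters as a comparison of risk sets across directions at a common radius. It is not what makes $H_n$ large; that is Assumption~\ref{a:divergence}.

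Your plug-in step has the same hole below $\tau$, and a second one on $[\tau,2]$. The union over $q$ of the near-tie bands and thin shells is not small, so a plain VC bound on their centered counts gives only $O_\P(r_n^{-1})$, i.e.\ $O_\P(1)$ after multiplying by $r_n$. To reach $o_\P(r_n^{-1})$ you need the asymptotic equicontinuity \eqref{aecsix}, using that the band increments are at $\varrho_n$-distance $O(\eta_n^{1/2})$ by \eqref{rhonu} and Lemma~\ref{lem:nearties}.

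Two smaller points. In your Duhamel display the denominator should be $p_n(vq)$, not $p_n(vq)/p_n(q)$. And covariance convergence from weak convergence of $\nu_n$ needs a check that the boundaries of the indicator sets in $\xi_q$ are $\nu$-null. That check uses independence of $X$ and $C$ and the fact that ratio constraints only occur on mixed censoring marks.
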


\begin{remark}\label{rem:regular}
Call the standardization \emph{regular} if, for every compact
$\widetilde K\subset[1/4,4T]^d$,
$r_n\{\log\widehat p_n-\log\widetilde p_n\}\leadsto\mathbb M$ in
$\ell^\infty(\widetilde K)$, jointly with $r_n\{\log\widetilde p_n-\log p_n\}$.
Under regularity the proof of
Theorem~\ref{thm:normality} applies unchanged except that now
$r_n\{\widehat F_n^\circ-F^\circ\}\leadsto
F^\circ\{\mathbb G+\mathbb M-\mathbb M(\boldsymbol 1)\}$ in $\ell^\infty(K)$.

Assumption~\ref{a:pluginnorm} gives regularity with $\mathbb M\equiv0$, by
Proposition~\ref{prop:stability}(ii), which also depends on the
standardization.  For instance, for the multiplicative
standardization only the quantile $u_{j,n}$ is estimated, and $\mathbb M\equiv0$ whatever the dependence among the
coordinates of $C$.  For the Kaplan--Meier standardization the marginal survival function
is estimated nonparametrically at the threshold, and $\mathbb M\equiv0$ requires
$c_n(\boldsymbol 1)$ to be of smaller order than $\min_j\P\bigl(C^{(j)}>u_{j,n}\bigr)$; this holds for instance
for independent censoring coordinates with at least two of them censored in the tail.
Otherwise, identifying $\mathbb M$ is
required, albeit not straightforward; see Section~\ref{sec:discussion}.
\end{remark}

\begin{remark}\label{rem:uncensored}
Without censoring, that is when $\Delta_{i,n}\equiv1$, the limit reduces to the
classical one.  Indeed $H_1(dv;q)=-dH(v;q)$, so by
Assumption~\ref{a:divergence} we get 
$\xi_q=\ind_{\{W(q)>1\}}/F^\circ(q)$.  Since $\nu\{W(q)>1\}=F^\circ(q)$ and
$\{W(q)>1\}\cap\{W(q')>1\}=\{W(q\vee q')>1\}$, \eqref{Glimit} becomes
$\E[\mathbb G(q)\mathbb G(q')]=F^\circ(q\vee q')/\{F^\circ(q)F^\circ(q')\}-1$, so that
the limit $F^\circ\mathbb G$ has covariance
$F^\circ(q\vee q')-F^\circ(q)F^\circ(q')$, which agrees with the uncensored case treated in \citet{DreesHuang1998}. 
\end{remark}

\begin{proposition}\label{prop:rateest}
Suppose that Assumptions~\ref{a:margins}--\ref{a:rate} and~\ref{a:plugin} hold.
Then the rate is consistently estimated by
\begin{align*}
 \widehat r_n=\Bigl[\sum_{i=1}^n
 \ind_{\{\widehat\psi_{j,n}(Z_i^{(j)})\geq1,\ j=1,\ldots,d\}}\Bigr]^{1/2}.
\end{align*}
More precisely, $\widehat r_n/r_n\overset{\P}\longrightarrow1$.
\end{proposition}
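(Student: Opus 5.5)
Our plan is to sandwich $\widehat r_n^2$ between two oracle counts and show that each of them, divided by $r_n^2$, tends to one in probability. Write $S_n(a)=\sum_{i=1}^n\ind_{\{\psi_{j,n}(Z_i^{(j)})\geq a,\ j=1,\ldots,d\}}$ for $a>0$. By \eqref{ZisXC} and continuity of the margins (Assumption~\ref{a:margins}, which makes $\geq$ and $>$ interchangeable almost surely), $S_n(a)$ is binomial with parameters $n$ and $p_n(a\boldsymbol 1)c_n(a\boldsymbol 1)$. On the event of Assumption~\ref{a:plugin}, whose probability tends to one, every $x$ with $\psi_{j,n}(x)\leq16T$ satisfies $(1-\eta_n)\psi_{j,n}(x)\leq\widehat\psi_{j,n}(x)\leq(1+\eta_n)\psi_{j,n}(x)$. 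Hence $\widehat\psi_{j,n}(Z)\geq1$ implies $\psi_{j,n}(Z)\geq1/(1+\eta_n)$: either $\psi_{j,n}(Z)>16T$, which trivially exceeds $1/(1+\eta_n)$, or the sandwich applies. In the other direction, $\psi_{j,n}(Z)\geq1/(1-\eta_n)$ implies $\widehat\psi_{j,n}(Z)\geq1$. This is immediate inside the controlled range. Above $16T$ it also holds by monotonicity of $\widehat\psi_{j,n}$, since $\widehat\psi_{j,n}$ at the largest point with $\psi_{j,n}=16T$ (which exists by continuity) is already at least $(1-\eta_n)16T>1$. Therefore, with probability tending to one,
\begin{align*}
 S_n\bigl(1/(1-\eta_n)\bigr)\leq\widehat r_n^2\leq S_n\bigl(1/(1+\eta_n)\bigr).
\end{align*}

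The next step is to control the expectations of the two bounds. We want $n\,p_n(a_n\boldsymbol 1)c_n(a_n\boldsymbol 1)/r_n^2\to1$ for $a_n=1/(1\pm\eta_n)\to1$. The most natural tool is the radial Potter bound of Assumption~\ref{a:radialmodulus} with $q=\boldsymbol 1$. For $a_n<1$, take $v=a_n$ and $A=1/a_n$; the bound gives that the ratio $H_n(a_n;\boldsymbol 1)/H_n(1;\boldsymbol 1)$ lies between $1$ and $a_n^{-\rho}\to1$. For $a_n>1$, take $v=1$ and $A=a_n$; the ratio lies between $a_n^{-\rho}$ and $1$. Both directions use only Assumption~\ref{a:radialmodulus} together with the fact that the survival function is nonincreasing. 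Alternatively, one could use the local uniform convergence in Assumption~\ref{a:FG} together with continuity of $F^\circ G^\circ$ at $\boldsymbol 1$ and the value $F^\circ(\boldsymbol 1)G^\circ(\boldsymbol 1)=1$. The expected counts are therefore $r_n^2(1+o(1))$.

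Finally, each $S_n(a_n)$ is binomial with mean $m_n=r_n^2(1+o(1))\to\infty$ by Assumption~\ref{a:rate}, and its variance is at most $m_n$. Chebyshev's inequality then gives $S_n(a_n)/m_n\to1$ in probability, so $S_n(a_n)/r_n^2\to1$. The sandwich yields $\widehat r_n^2/r_n^2\to1$, and taking square roots finishes the proof. The only delicate point is the sandwich step: the behaviour of $\widehat\psi_{j,n}$ above the level $16T$ is not controlled by Assumption~\ref{a:plugin}, and one must check that the monotonicity of $\widehat\psi_{j,n}$ suffices there, as argued above. Everything else is routine, and in particular neither the rate condition of Assumption~\ref{a:plugincons} nor any directional uniformity is needed.
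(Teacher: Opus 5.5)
Your proposal follows the paper's proof: on the event of Assumption~\ref{a:plugin}, you sandwich $\widehat r_n^2$ between the oracle counts at levels $(1-\eta_n)^{-1}$ and $(1+\eta_n)^{-1}$, using monotonicity above $16T$ exactly as in \eqref{pluginabove}, then show the normalized means tend to one and finish with Chebyshev's inequality. One correction: Assumption~\ref{a:radialmodulus} is not among the hypotheses, which are only Assumptions~\ref{a:margins}--\ref{a:rate} and~\ref{a:plugin}, so for the expectation step you should use your stated alternative (the locally uniform convergence in Assumption~\ref{a:FG} with continuity of $F^\circ G^\circ$ at $\boldsymbol 1$), which is exactly what the paper does.
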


The covariance representation above also allows for a direct plug-in estimator of the
pointwise variance, obtained by replacing every component of $\xi_q$ by its
empirical counterpart.  On the event that
$\widehat Y_n(v;q)>\widehat N_n(\{v\};q)$ at every atom $v\in(0,1]$ of
$\widehat N_n(\cdot\,;q)$, for $i\leq n$ define
\begin{align}
 \widehat\xi_{i,n}(q)
 ={}&n\int_{(0,1]}
 {\ind_{\{\widehat W_{i,n}(q)\geq v\}}\over
  \widehat Y_n(v;q)
  \{\widehat Y_n(v;q)-\widehat N_n(\{v\};q)\}}
 \,\widehat N_n(dv;q)-{n\widehat\Delta_{i,n}(q)
       \ind_{\{\widehat W_{i,n}(q)\leq1\}}
 \over \widehat Y_n(\widehat W_{i,n}(q);q)
       -\widehat N_n(\{\widehat W_{i,n}(q)\};q)}.
 \label{greenwoodscore}
\end{align}
With
$q_0=\boldsymbol 1$, set
\begin{align}
 \widehat\sigma_n^2(q)
 ={\widehat r_n^2\over n(n-1)}\sum_{i=1}^n
 \bigl\{\widehat\xi_{i,n}(q)-\widehat\xi_{i,n}(q_0)\bigr\}^2.
 \label{greenwoodvariance}
\end{align}
and set $\widehat\sigma_n^2(q)=0$ if a denominator is zero.  The latter event
has probability tending to zero under the conditions below.

\begin{proposition}\label{prop:variance}
Suppose that the hypotheses of Theorem~\ref{thm:normality} hold.  Then, for every
fixed $q\in K$,
\begin{align*}
 \widehat\sigma_n^2(q)
 \overset{\P}\longrightarrow \E\bigl[\mathbb G(q)^2\bigr].
\end{align*}
In particular, if $\E[\mathbb G(q)^2]>0$, then
\begin{align*}
 {\widehat r_n\over\widehat\sigma_n(q)}
 \bigl\{\log\widehat F_n^\circ(q)-\log F^\circ(q)\bigr\}
 \ \leadsto\ N(0,1).
\end{align*}
Consequently, an asymptotic pointwise confidence interval of level $1-\alpha$ is
\begin{align}
 \widehat F_n^\circ(q)
 \exp\bigl\{\mathord\pm z_{1-\alpha/2}
                 \widehat\sigma_n(q)/\widehat r_n\bigr\}.
 \label{greenwoodCI}
\end{align}
\end{proposition}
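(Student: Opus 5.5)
The plan is to establish the consistency of the Greenwood-type plug-in $\widehat\sigma_n^2(q)$ for the limit variance $\nu\{(\xi_q-\xi_{\boldsymbol 1})^2\}$. The ratio CI statement then follows from Theorem~\ref{thm:normality}, Proposition~\ref{prop:rateest}, the delta method for $\log$, and Slutsky's lemma. As in the proofs of the main theorems, I work first with the oracle sample and then transfer to the estimated one.

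\emph{Step 1: the oracle statistic.} Let $\widetilde\xi_{i,n}(q)$ be \eqref{greenwoodscore} computed from the oracle sample $(W_{i,n}(q),\Delta_{i,n}(q))$. Write $\widetilde Y_n(v;q)/r_n^2$ and $\widetilde N_n(\cdot;q)/r_n^2$, the normalized risk process and failure measure. By Assumption~\ref{a:rate} and a Glivenko--Cantelli argument for the triangular array, these converge uniformly on $v\in[\lambda,1]$ to $H(v;q)$ and $H_1(\cdot;q)$. This is the same uniform law of large numbers used in Theorem~\ref{thm:consistency}. Since $n/\widetilde Y_n\approx (n/r_n^2)/H$, the rescaled score $(r_n^2/n)\widetilde\xi_{i,n}(q)$ is uniformly close to the deterministic function $\xi^{(n)}_q$ evaluated at the $i$th observation. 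Here $\xi^{(n)}_q$ is \eqref{scorelimit} with $H,H_1$ replaced by $H_n,H_{1,n}$. The identity
\[
\frac{\widehat r_n^2}{n(n-1)}\sum_i\bigl(\widetilde\xi_{i,n}(q)-\widetilde\xi_{i,n}(\boldsymbol 1)\bigr)^2
=\frac{\widehat r_n^2}{r_n^2}\cdot\frac{n}{n-1}\cdot\frac{1}{r_n^2}\sum_i\Bigl(\frac{r_n^2}{n}\bigl(\widetilde\xi_{i,n}(q)-\widetilde\xi_{i,n}(\boldsymbol 1)\bigr)\Bigr)^2
\]
then reduces the problem to two facts. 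First, the empirical measure $r_n^{-2}\sum_i\delta_{(\psi(Z_i),\delta_i)}$ applied to $(\xi^{(n)}_q-\xi^{(n)}_{\boldsymbol 1})^2$ converges to $\nu\{(\xi_q-\xi_{\boldsymbol 1})^2\}$. Its mean is $\nu_n$ of the function, and its variance is $O(r_n^{-2})$ times a fourth-moment bound. Second, $\widehat r_n/r_n\to1$, which is Proposition~\ref{prop:rateest}.

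\emph{Step 2: the main obstacle.} The difficulty is the lower end of the integral near $v=0$, where the risk set is huge but the integrand is not uniformly controlled, so uniform convergence cannot be applied on all of $(0,1]$. I would truncate at $\lambda>0$. The contribution of $v<\lambda$ to both the oracle and limit scores is bounded by $\int_{(0,\lambda]}H_1/H^2\le 1/H(\lambda;q)$. By Assumption~\ref{a:divergence} this tends to $0$ as $\lambda\downarrow0$. For the empirical version, I would use Assumption~\ref{a:radialmodulus} (Potter bound) to control $\sup_{v\le\lambda} H_n(v)/(\widetilde Y_n(v)/r_n^2)$ in probability. This is a Daniels-type linear bound on the ratio of the empirical to the true risk function, which I take to be available from the oracle part of the proof of Theorem~\ref{thm:normality}. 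Weak convergence of $\nu_n\to\nu$ on $[\lambda,\infty)^d\times\{0,1\}^d$, together with continuity of $F^\circ,G^\circ$, gives convergence of the truncated second moments. Assumption~\ref{a:margins} ensures that the atoms $\widetilde N_n(\{v\})$ are asymptotically single, so that $\widetilde Y-\widetilde N\sim\widetilde Y$ and the denominators are positive with probability tending to one.

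\emph{Step 3: plug-in.} Replacing the oracle sample by $(\widehat W_{i,n},\widehat\Delta_{i,n})$, I would invoke the stability theory (Proposition~\ref{prop:stability}) under Assumption~\ref{a:plugin}--\ref{a:plugincons}. Radii move by factors $1\pm\eta_n$, which changes $\widehat Y_n/r_n^2$ and $\widehat N_n/r_n^2$ uniformly by $o_P(1)$ on $[\lambda,1]$. Marks change only for observations with $T$ and $D$ within factor $1+\eta_n$. By the bound used in Theorem~\ref{thm:consistency}, these number $o_P(r_n^2)$, so they perturb the average of squared scores by $o_P(1)$, since scores are bounded by $O(n/r_n^2)$ on the truncated range. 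Combining the three steps gives $\widehat\sigma_n^2(q)\to\E[\mathbb G(q)^2]$.

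For the final claims, Theorem~\ref{thm:normality} at a fixed $q$ gives $r_n(\widehat F^\circ_n-F^\circ)(q)\leadsto F^\circ(q)\mathbb G(q)$. The delta method for $\log$, with $F^\circ(q)>0$, yields $r_n(\log\widehat F^\circ_n-\log F^\circ)(q)\leadsto\mathbb G(q)$. Slutsky's lemma with $\widehat r_n/r_n\to1$ and $\widehat\sigma_n\to\sigma>0$ gives $N(0,1)$, and inverting this gives \eqref{greenwoodCI}.
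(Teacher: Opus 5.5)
Your route is sound but organized differently from the paper's. Above the truncation level, you replace each rescaled score $(r_n^2/n)\widetilde\xi_{i,n}(q)$ by the deterministic influence function $\xi^{(n)}_q$ at the $i$th observation. You then apply a second-moment law of large numbers to $(\xi^{(n)}_q-\xi^{(n)}_{\boldsymbol 1})^2$, which identifies the limit directly as $\nu\{(\xi_q-\xi_{\boldsymbol 1})^2\}$, the defining covariance \eqref{Glimit}. The paper instead expands the square into four double Stieltjes integrals of the empirical second-moment kernels $\widehat\Sigma^\tau_{ab,n}$ and proves their uniform convergence on $[\tau,1]$. Its oracle-versus-estimated comparison is done indicator by indicator through Lemma~\ref{lem:sandwich} and Lemma~\ref{lem:nearties}. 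It then identifies the limit as $\E[\{\mathbb L_\tau(q_0)-\mathbb L_\tau(q)\}^2]$, reusing the covariance machinery of Theorem~\ref{thm:normality}; your identification is more direct.

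The more substantive difference is below the truncation level. The paper uses the exact Greenwood orthogonality identity \eqref{greenwoodidentity}: contributions of distinct atoms are orthogonal under summation over $i$. Hence the lower part of the sum of squared scores equals $n^2\int_{(0,\tau]}\widehat N_n(dv;q)/[\widehat Y_n(v;q)\{\widehat Y_n(v;q)-\widehat N_n(\{v\};q)\}]\le 2n^2/\{\widehat Y_n(\tau;q)-2\}$. Its contribution to $\widehat\sigma_n^2(q)$ is therefore $O_{\P}\{1/H(\tau;q)+1/H(\tau;q_0)\}$. Being algebraic, this applies verbatim to the estimated sample and needs neither a relative-risk bound nor any plug-in comparison below $\tau$.

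Your ratio bound $\widetilde H_n\ge H_n/2$ on $(0,\lambda]$, available as the event ${\cal E}_n$ of \eqref{relrisk}, only controls denominators. Two things remain open in your proposal:
\begin{itemize}
\item You still need a moment computation for $r_n^{-2}\sum_i\{\int_{(0,\lambda\wedge W_{i,n}(q)]}\widetilde H_{1,n}(dv;q)/H_n(v;q)^2\}^2$. This sum runs over all $n$ observations against a random integrating measure.
\item Your Step~3 treats the estimated sample only on the truncated range, so the lower part of $\widehat\sigma_n^2(q)$ itself is never bounded.
\end{itemize}
Both can be closed by transferring the ratio bound to $\widehat Y_n$ via \eqref{Ynested} and Assumption~\ref{a:radialmodulus}, then computing expectations. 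The Greenwood identity settles both in one line.
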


\begin{remark}
At $q=q_0$, $\widehat F_n^\circ(q_0)=F^\circ(q_0)=1$ identically and
$\mathbb G(q_0)=0$, so no confidence region is either needed or intended there.  For $q\ne q_0$, a
sufficient nondegeneracy condition is $F^\circ(q)<1$. Indeed, the normalized probability
that $X$ exceeds $q_0$ but not $q$, while $C$ exceeds $q$, converges to
$\{1-F^\circ(q)\}G^\circ(q)>0$. In both concrete standardizations, this condition holds
whenever every coordinate of $q$ exceeds one.
\end{remark}

As concrete illustrations of the above asymptotic results, two standardizations are provided in
the Supplementary Material, with their corresponding asymptotic theory as direct corollaries. The
multiplicative standardization divides each coordinate by an intermediate
marginal quantile, and requires regularly varying margins.
The Kaplan--Meier standardization divides instead by the marginal
Kaplan--Meier survival function, and assumes no marginal model. Asymptotically, the multiplicative standardization is negligible at the scale
$r_n$, and so is the Kaplan--Meier one when the joint censoring is of smaller order than
$\min_j\P\bigl(C^{(j)}>u_{j,n}\bigr)$; otherwise there is an additional term
in the limit.

\subsection{Threshold selection}\label{sec:threshold}
In practice, $k$ must be chosen from the data, and Proposition~\ref{prop:variance}
motivates the following simple rule. It consists of taking the smallest candidate beyond which the estimate
stops moving, across directions, by more than a fixed multiple of its estimated standard
deviation.  Concretely, let ${\cal Q}\subset K$ be a finite direction grid, put
$q_0=\boldsymbol 1_d$ and ${\cal Q}_0={\cal Q}\setminus\{q_0\}$, and let
$k_1<\cdots<k_L$ be a grid of candidate threshold counts.  Write
$\widehat F_{n,k}^\circ$ for the estimator computed using $k$ and let
$\widehat{\rm sd}_{n,k}(q)=\widehat F^\circ_{n,k}(q)\widehat\sigma_{n,k}(q)/\widehat r_{n,k}$
for its estimated standard deviation from \eqref{greenwoodvariance}.  Put
\begin{align}
 A(k_l)&=\max_{l'\leq l}\ \max_{q\in{\cal Q}_0}
 {\bigl|\widehat F^\circ_{n,k_l}(q)-\widehat F^\circ_{n,k_{l'}}(q)\bigr|
  \over{\widehat{\rm sd}_{n,k_{l'}}(q)}},\qquad
 \overline A(k_l)=\max_{m\leq l}A(k_m) .
 \label{stabilityrule}
\end{align}
which corresponds to the largest standardized deviation of the estimator at $k_l$ from any estimate at a
smaller threshold $k_{l'}$; a ratio with zero denominator is understood as zero if its
numerator vanishes and as $+\infty$ otherwise, so that $A(k_1)=0$ always.  Then take
$\widehat k$ to be the largest $k_l$ with
$\overline A(k_l)\leq\kappa$ for a fixed constant $\kappa$; we use $\kappa=2$ in the numerical studies below.  This is
the sequential rule of \citet{DreesKaufmann1998} adapted to the present setting,
with the deterministic normalization $\sqrt{j}$ of that rule replaced by the estimated
standard deviation, which makes the criterion comparable across directions and
dimensions.  The rule is designed to detect a stable region; see also \citet{ResnickStarica1997} and \citet{DreesDeHaanResnick2000}
and the tail-count rule of \citet{BladtGuillou2026}.

\begin{proposition}\label{prop:adaptive}
Let the candidate counts in \eqref{stabilityrule} depend on $n$, with
$k_1\to\infty$ and $k_L/n\to0$.  If the hypotheses of
Proposition~\ref{prop:variance} hold for $k=k_1$ and $F^\circ(q)<1$ for every
$q\in{\cal Q}_0$, then $\widehat k\to\infty$, $ {\widehat k/ n}\to0$, and
\begin{align*}
 \max_{q\in{\cal Q}_0}
 \bigl|\widehat F^\circ_{n,\widehat k}(q)-F^\circ(q)\bigr|
 =O_{\P}(r_n^{-1}),
\end{align*}
where $r_n$ is evaluated at $k=k_1$.  In particular, the adaptively
selected estimator is consistent on ${\cal Q}_0$.
\end{proposition}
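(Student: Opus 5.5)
The plan follows the Drees--Kaufmann argument. At the smallest candidate $k_1$ the estimator is $r_n$-consistent with a consistent standard deviation. The rule, in turn, can only select a larger $k_l$ if the estimate there stays within $\kappa$ estimated standard deviations of every earlier estimate, in particular of the one at $k_1$.

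\textbf{Step 1: behaviour at $k_1$.} Theorem~\ref{thm:normality} at $k=k_1$ gives $\max_{q\in\mathcal Q_0}|\widehat F^\circ_{n,k_1}(q)-F^\circ(q)|=O_\P(r_n^{-1})$; the grid is finite, so the maximum is harmless. Propositions~\ref{prop:rateest} and~\ref{prop:variance} give $\widehat r_{n,k_1}/r_n\to1$ and $\widehat\sigma_{n,k_1}(q)\to\sigma(q):=\{\E\mathbb G(q)^2\}^{1/2}$ in probability. By the nondegeneracy remark, $F^\circ(q)<1$ implies $\sigma(q)>0$. Hence
\begin{align*}
\widehat{\rm sd}_{n,k_1}(q)=r_n^{-1}F^\circ(q)\sigma(q)\{1+o_\P(1)\},
\end{align*}
and this is bounded away from zero on the scale $r_n^{-1}$, uniformly over $q\in\mathcal Q_0$.

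\textbf{Step 2: the selected $\widehat k$.} We have $\widehat k\ge k_1\to\infty$ because $A(k_1)=0\le\kappa$, so $\widehat k$ is well defined. Also $\widehat k\le k_L$, so $\widehat k/n\le k_L/n\to0$ deterministically.

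\textbf{Step 3: accuracy at $\widehat k$.} By the definition of $\widehat k=k_{\hat l}$, we have $A(k_{\hat l})\le\overline A(k_{\hat l})\le\kappa$. Taking $l'=1$ in \eqref{stabilityrule} gives, for every $q\in\mathcal Q_0$,
\begin{align*}
|\widehat F^\circ_{n,\widehat k}(q)-\widehat F^\circ_{n,k_1}(q)|\le\kappa\,\widehat{\rm sd}_{n,k_1}(q).
\end{align*}
The zero-denominator convention only helps here: if the denominator is zero, the bound forces the numerator to vanish. By the triangle inequality and Step 1,
\begin{align*}
\max_{q\in\mathcal Q_0}|\widehat F^\circ_{n,\widehat k}(q)-F^\circ(q)|\le O_\P(r_n^{-1})+\kappa\max_{q}\widehat{\rm sd}_{n,k_1}(q)=O_\P(r_n^{-1}).
\end{align*}
Consistency then follows from $r_n\to\infty$, which is Assumption~\ref{a:rate} at $k_1$.

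\textbf{Expected obstacle.} The main subtlety is that $\widehat k$ is data dependent, so one might try to control the estimator uniformly over all candidate thresholds. This is unnecessary, because only the estimate at $k_1$ and its standard deviation are used. What must be checked is that $\widehat{\rm sd}_{n,k_1}$ is $O_\P(r_n^{-1})$ rather than merely consistent in a relative sense. This needs $\widehat F^\circ_{n,k_1}(q)\to F^\circ(q)$, $\widehat\sigma_{n,k_1}(q)=O_\P(1)$ and $\widehat r_{n,k_1}^{-1}=O_\P(r_n^{-1})$, all of which Steps 1 and 2's cited results supply. No hypothesis is needed at the larger $k_l$.
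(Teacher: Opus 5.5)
Your proof is correct and follows the paper's argument: both take $l'=1$ in \eqref{stabilityrule} to obtain $|\widehat F^\circ_{n,\widehat k}(q)-\widehat F^\circ_{n,k_1}(q)|\le\kappa\,\widehat{\rm sd}_{n,k_1}(q)$, then bound both terms of the triangle inequality by $O_{\P}(r_n^{-1})$ using Theorem~\ref{thm:normality} and Propositions~\ref{prop:rateest} and~\ref{prop:variance} at $k_1$. Your observation that the zero-denominator convention makes this inequality hold deterministically is a slight sharpening: the positivity of $\widehat{\rm sd}_{n,k_1}(q)$, which the paper derives from $F^\circ(q)<1$, is not needed for the rate bound itself.
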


\section{Finite-sample numerical studies}\label{sec:simulations}
We consider two numerical experiments with one common
data-generating mechanism, which we call the \emph{prime case} and which is the
one in which every hypothesis of Section~\ref{supp:mda} holds.  The event
vector has a Gumbel copula,
\begin{align}
 C_\theta(u_1,\ldots,u_d)
 =\exp\Bigl[-\Bigl\{\sum_{j=1}^d(-\log u_j)^\theta\Bigr\}^{1/\theta}\Bigr],
 \qquad \theta>1,
 \label{gumbelcopula}
\end{align}
so that the event vector has a nondegenerate tail copula $R^X$ with
$R^X(\boldsymbol 1)>0$; the censoring coordinates are
mutually independent and independent of the event vector; and each censoring margin is
chosen so that its tail, measured on the quantile scale of the corresponding event margin,
is regularly varying with a common index $-\beta_j=c/(c-1)$,
 $c$ being the asymptotic marginal fraction of censored observations in
the tail.  In two
dimensions the associated tail copula is
$R_\theta(x,y)=x+y-(x^\theta+y^\theta)^{1/\theta}$.  Different departures from the prime case are
studied separately in the Supplementary Material.

The first experiment
quantifies finite-sample error as a function of the effective number of joint extremes,
rather than by the classical sample fraction $k$.  The second explores the effect of dimension compared to the effect from the
number of joint exceedances, and gives convenient one-dimensional views of a
higher-dimensional tail surface.
The multiplicative standardization is used in both, whose event margins are
regularly varying; the Kaplan--Meier standardization, which imposes no
marginal tail model, is also studied in the Supplementary Material.

For replication
$b$, set the squared error averaged over the direction grid, and the corresponding integrated root mean squared error are given by
\begin{align}
 E_b({\cal Q})=\Bigl[{1\over|{\cal Q}_0|}\sum_{q\in{\cal Q}_0}
 \bigl\{\widehat F_b^\circ(q)-F^\circ(q)\bigr\}^2\Bigr]^{1/2}, \quad  \operatorname{IRMSE}({\cal Q})
 =\Bigl[{1\over B}\sum_{b=1}^B E_b({\cal Q})^2\Bigr]^{1/2}.
 \label{irmse}
\end{align}
The naive estimator, for comparison, is the normalized empirical joint exceedance estimator applied
directly to the standardized observed vector $Z$.  It deliberately ignores censoring
after the correct marginal standardization.  All studies use $B=500$ independent replications.

\subsection{Regular variation and effective sample size}
The goal is to measure the performance of our estimator, and we choose to perform such analysis as a function of the effective sample size
$r_n$, rather than the classical $\sqrt{k}$.
The event margins are Burr distributed with quantile function
$Q_{\tau,\lambda}(u)=\{(1-u)^{-1/\lambda}-1\}^{1/\tau}$, and respective parameters
$(\tau_1,\tau_2)=(10,5)$ and $\lambda_1=\lambda_2=1/2$.  Their implied tail indices are
$(\gamma_1,\gamma_2)=(0.2,0.4)$.  The censoring margins are also Burr distributed, with a common
$\lambda_C=1/2$ and tail indices
$\gamma_{C,j}=(1-c)\gamma_j/c$, for $c\in\{0.10,0.25,0.40\}$.  The event copula has
$\theta=2$.  We consider sample sizes $n$ from the set
${\cal N}=\{3000,6000,12000,24000\}$ and directions parametrized by
\begin{align*}
 q(a)=\bigl(s(a),t(a)\bigr)=\bigl(e^{a_+},e^{(-a)_+}\bigr),\qquad
 a\in[-\log2,\log2],
\end{align*}
on an equally spaced grid; $a=0$ is the reference and the other six points form
${\cal Q}_0$.  The target object is
\begin{align*}
 F^\circ(s,t)={R_2(s^{-1/\gamma_1},t^{-1/\gamma_2})\over R_2(1,1)}.
\end{align*}
The marginal thresholds are estimated by a censoring-adjusted Weissman construction, as follows.
Writing $Z^{(j)}_{n-k:n}$ for the threshold, $\widehat\gamma_{Z,j}$ for the Hill
estimate from the $k$ largest observations of marginal $j$, and $\widehat p_j$ for their average uncensored
indicators, we use
\begin{align}
 \widehat\gamma_j={\widehat\gamma_{Z,j}\over\widehat p_j},\qquad
 \widehat u_j=Z^{(j)}_{n-k:n}
 \Biggl\{{\widehat{\overline F}_{j,n}(Z^{(j)}_{n-k:n})\over k/n}\Biggr\}^{\widehat\gamma_j},
 \label{simthreshold}
\end{align}
where $\widehat{\overline F}_{j,n}$ is the marginal Kaplan--Meier estimator.  This is
the usual construction but with both the tail index and the exceedance
probability corrected for censoring, see also \citet{Bladt21042026}.  For each pair $(n,c)$, $k$ is chosen so that the
effective sample size $r_n^2$ takes a nominal value $r_\ast^2\in\{25,50\}$.  With
$\beta=c/(1-c)$, the relation
$r_\ast^2=kR_2(1,1)(k/n)^{2\beta}$ gives
\begin{align*}
 k=\operatorname{round}\bigl[
 \bigl\{{r_\ast^2 n^{2\beta}/ R_2(1,1)}\big\}^{1/(1+2\beta)}\big].
\end{align*}
The relation is a first order approximation, obtained by replacing $R^X_{n/k}(1,1)$ by its limit and
$G_1(n/k)G_2(n/k)$, where $G_j=\overline F_{C^{(j)}}\circ U_{X^{(j)}}$, by the pure power $(k/n)^{2\beta}$, so the effective sample size
actually obtained deviates from $r_\ast^2$ depending on censoring.  Computed
exactly for this model, $r_n^2$ goes from $17.3$ to $19.1$ at $c=0.10$ and from $25.2$
to $25.3$ at $c=0.40$ for $r_\ast^2=25$, the ranges running over the four
sample sizes in ${\cal N}$; the mean observed joint count $\widehat r_n^2$ of
Proposition~\ref{prop:rateest} tracks these values setting by setting, spanning $17.5$
to $25.5$ overall.  For $r_\ast^2=50$ the realized means go from $33.9$
to $52.2$.

We compare the performance of the directional estimator against the bivariate Dabrowska-based
estimator of \citet{BladtGoegebeurGuillou2026b} and the naive estimator.  The former is
available only in this bivariate and regularly varying setting.  For method $M$, define the
relative summary
\begin{align}
 \rho_M(c,r_\ast^2)={1\over|{\cal N}|}\sum_{n\in{\cal N}}
 {\operatorname{IRMSE}_M(n,c,r_\ast^2)\over
  \operatorname{IRMSE}_{\rm KM}(n,c,r_\ast^2)}.
 \label{relativeIRMSE}
\end{align}
The six nearly horizontal trajectories in Figure~\ref{fig:sim-effective}(a) show that
increasing $n$, and hence $k$, while targeting the same amount of joint information
does not materially reduce error.  This highlights the importance of varying the joint count, rather than the
marginal threshold count, when using the estimator in finite samples.
Across the four sample sizes, directional IRMSE ranges from $0.105$ to $0.110$,
$0.106$ to $0.111$ and $0.119$ to $0.122$ for a nominal count of $25$ as censoring
increases; for a nominal count of $50$, the corresponding ranges are $0.075$ to $0.078$,
$0.075$ to $0.076$ and $0.083$ to $0.085$.  The directional and Dabrowska estimators are
nearly indistinguishable, the six Dabrowska ratios lying between $0.99$ and $1.07$.
For the naive estimator, $\rho_{\rm naive}(c,25)$ is $0.99$, $1.26$, and $1.72$, and
$\rho_{\rm naive}(c,50)$ is $1.05$, $1.65$, and $2.38$.  Alignment with the Dabrowska
estimator is somewhat expected here, since this benchmark is designed specifically for (and is only possible in) this
setting; in this specific case, the
directional construction provides neither an accuracy advantage nor a disadvantage.  However, under
heavier censoring, the Dabrowska plane-based construction must estimate a bivariate survival function
on a grid on sets that thin out faster than those of a one-dimensional ray; the effect can already be observed near $c=0.4$, and the ratio reaches $1.26$ at $c=0.6$, as examined closer in the Supplementary Material, in favor of the directional estimator.

\begin{figure}[]
 \centering
 \includegraphics[width=\textwidth]{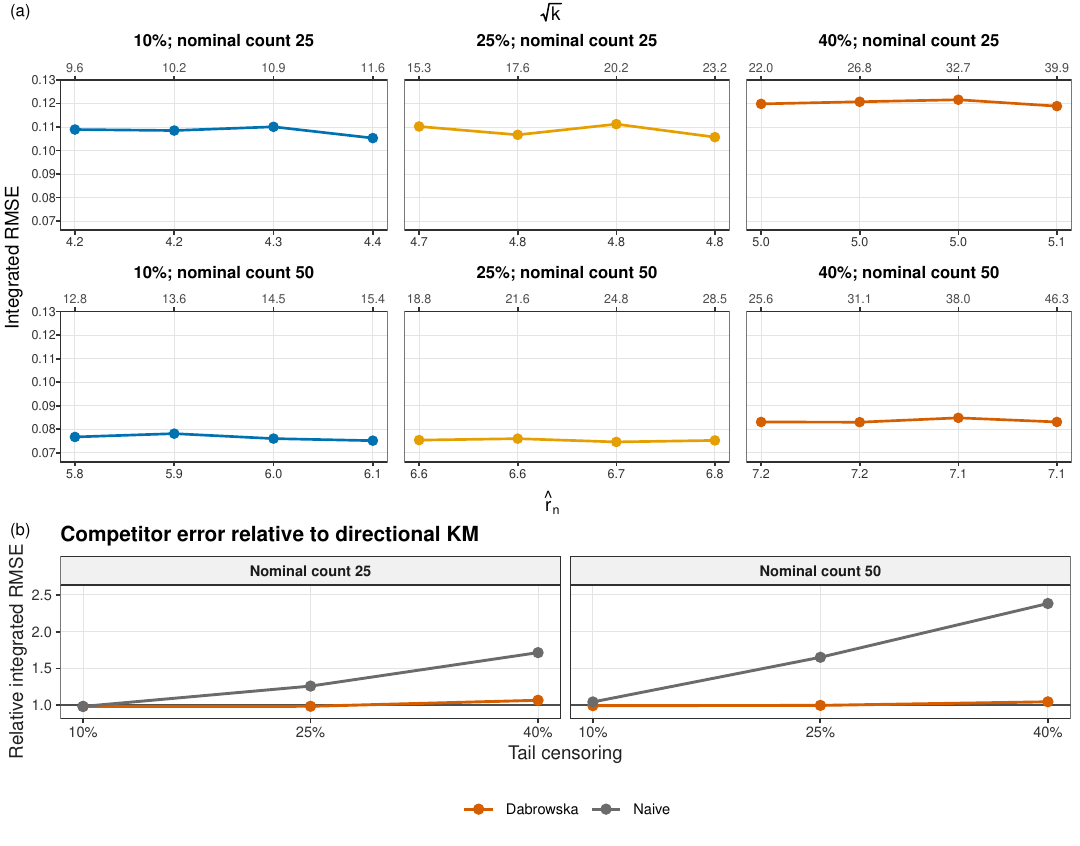}
 \caption{Regularly varying margins and effective sample size.  In each display of panel
 (a), the four points from left to right correspond to
 $n=3000,6000,12000,24000$.  The lower ticks give the rate estimator
 $\widehat r_n$ of Proposition~\ref{prop:rateest}, averaged over the replications,
 whereas the upper ticks give the
 associated marginal sample fraction $\sqrt{k}$; the censoring level and nominal
 joint exceedance count are fixed.  The two displays of panel (b) give $\rho_M(c,r_\ast^2)$ from
 \eqref{relativeIRMSE} for nominal counts $25$ and $50$.}
 \label{fig:sim-effective}
\end{figure}

\subsection{Higher dimensions and directional paths}
The main goal is to investigate deterioration of estimation performance as $d$ grows, and whether it is
intrinsic to dimension (curse of dimensionality) or whether it is mainly caused by a shrinking joint risk set.  Additionally, we inspect directional performance in a pathwise fashion, rather than only its grid
average;  since $F^\circ$ is a function of $d$ arguments, neither its surface nor its estimation error can be easily visually inspected for $d>2$.  
We study dimensions
$d=2,\ldots,10$.  The event margins are standard Pareto
with common tail index $1/2$, the event copula has $\theta=3$, and each censoring margin
is Pareto with tail index $9.5$, giving asymptotic $5\%$ marginal tail censoring.  Thresholds are again
estimated by \eqref{simthreshold}.  The $d$-variate upper tail copula is
\begin{align}
 R_{\theta,d}(x_1,\ldots,x_d)
 =\sum_{\varnothing\neq A\subseteq\{1,\ldots,d\}}(-1)^{|A|+1}
 \Bigl(\sum_{j\in A}x_j^\theta\Bigr)^{1/\theta},
 \label{gumbelR}
\end{align}
by inclusion--exclusion, and the target object is
\begin{align*}
 F^\circ(q)={R_{3,d}(q_1^{-2},\ldots,q_d^{-2})\over R_{3,d}(1,\ldots,1)}.
\end{align*}
For $S\subset\{1,\ldots,d\}$ define the path
\begin{align*}
 q_{S,j}(\lambda)=
 \begin{cases}1,&j\in S,\\2^\lambda,&j\notin S,\end{cases}
 \qquad 0\leq\lambda\leq1,
\end{align*}
at 21 equally spaced values of $\lambda$, and take
$|S|\in\{1,\lfloor d/2\rfloor,d-1\}$.  Along a path, the coordinates in $S$ remain at
the reference threshold while the remaining coordinates are changed continuously.  All paths start at $q_S(0)=\boldsymbol 1_d$, where $F^\circ=1$, and
at $\lambda=1$ every threshold outside $S$ has been doubled, at a log-linear rate.  The choices $|S|=d-1$,
$\lfloor d/2\rfloor$, and $1$ change, respectively, one, half, and $d-1$
coordinates, and hence are designed to investigate sparse, intermediate, and dense changes of the joint
exceedance set.  Exchangeability makes the particular choice of $S$ immaterial. We refer to this construction as a
directional path.

We fix $n=10000$ and first consider $k=600$ in every dimension; its mean joint count decreases from $334$
at $d=2$ to $75$ at $d=10$.  Then we consider
\begin{align*}
 k=(204,267,330,393,458,525,594,664,736)
\end{align*}
for $d=2,\ldots,10$, targeting roughly 100 joint observations; the mean lying between
$100$ and $103$.

The results in Figure~\ref{fig:sim-dimension} show that, under a common $k$, directional
IRMSE increases from $0.025$ to $0.056$ between $d=2$ and $d=10$, as qualitatively expected.  Under matched joint
effective sample size it remains between $0.045$ and $0.048$.  The naive IRMSE still increases from $0.051$ to $0.102$ in the second case.  Thus, most
of the apparent dimensional deterioration of the directional estimator is accounted
for by the scarcity of observations that reach the joint region, i.e., in this setting, the directional estimator does not seem to accumulate additional marginal or construction biases in higher dimensions; in contrast, the accumulated
censoring bias of the naive estimator remains dimension-dependent.  At $d=10$ and
$\lambda=1$, the directional means are within $0.009$ of the truth along all three
paths, whereas the naive estimator underestimates the target by $0.041$ when one
coordinate is changed and by $0.126$ when nine are changed.  The path profiles
therefore highlight the importance of censoring correction. At the heavier censoring setting
$c=0.25$, matching the joint count removes less of that
deterioration, shown in detail in the Supplementary Material, where the
scaled error $\widehat r_n\times\operatorname{IRMSE}$ rises from $0.51$ at $d=2$ to
$0.91$ at $d=10$. 
\begin{figure}[]
 \centering
 \includegraphics[width=\textwidth]{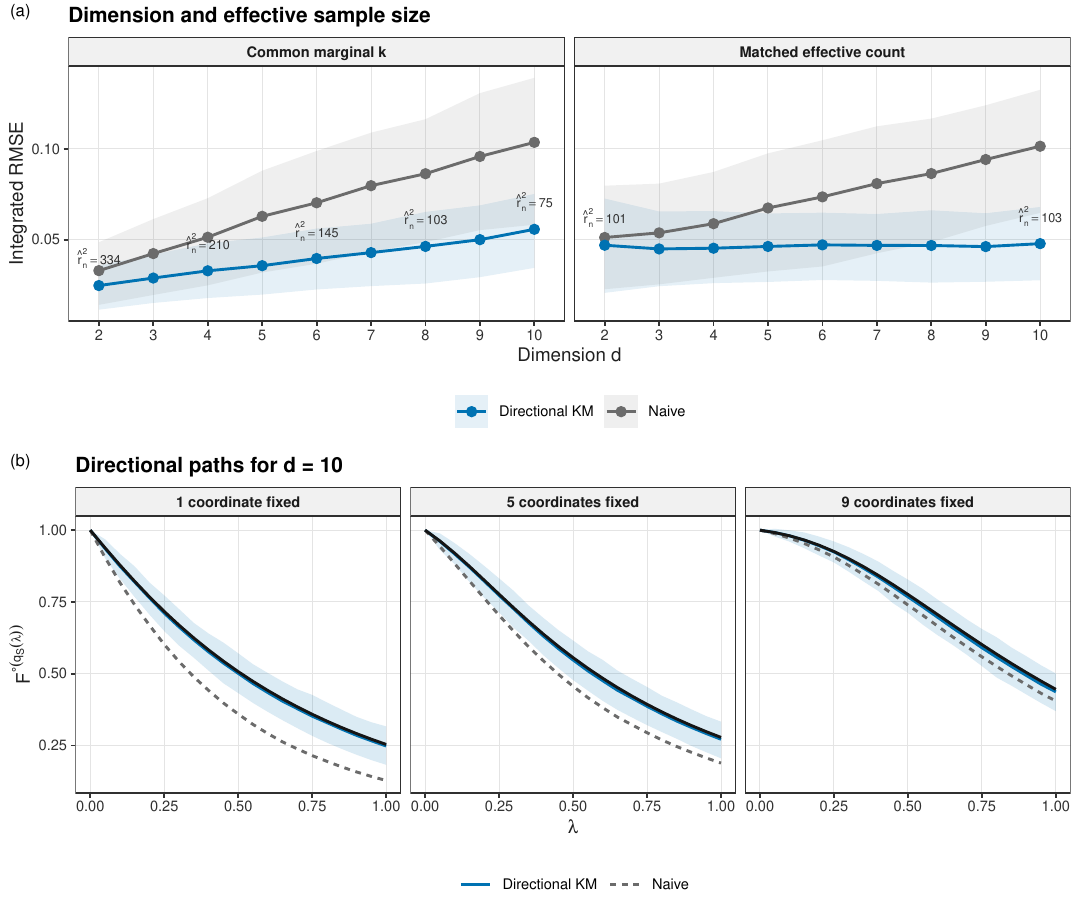}
 \caption{Dimension and effective sample size.  In panel~(a), IRMSE for a common
 marginal $k$ and for matched realized joint counts; its shading is the 10th--90th percentile range of
 $E_b({\cal Q})$ across the 500 replications.  In panel~(b), the $d=10$ directional
 path under the matched design, whose shading is again the 10th--90th percentile
 range of $\widehat F_b^\circ(q_S(\lambda))$ across the 500 replications.}
 \label{fig:sim-dimension}
\end{figure}
\subsection*{Supplementary numerical studies}
The Supplementary Material provides four further
studies. The first takes unequal Weibull margins, which are not regularly varying,
and analyzes both the target recovered by the Kaplan--Meier standardization across three
censoring regimes and the coverage of confidence intervals.  The second
replaces the estimated marginal survivals by their known values, measuring the standardization error as the sample grows and at a
shrinking sample fraction.  The third examines the coverage of confidence intervals across
dimensions two to ten at matched joint counts, and displays all $500$ scaled error
trajectories behind the present dimension-wise mean.  The fourth departs from the prime case
in four separate ways; namely, dependent censoring coordinates, heavier censoring,
asymptotic independence, and dimension growth under heavy censoring.

\section{Flood losses censored by policy limits}\label{sec:ian}
In flood insurance, policies usually cap the structure coverage and the contents coverage
separately, and the payment that a claim incurs is the loss only below the cap.
What the insurer is exposed to in a catastrophic event is a joint effect, in which a
single flood may drive both coverages far into their upper tails at once, possibly surpassing the cap. In this section, two quantities
are estimated from Hurricane Ian claims, the probability of a joint event and
the factor by which it is understated when using the capped payments
as if they were the true losses, or multiplying the two marginal tails as if the coverages
were independent.

\subsection{Data description}\label{sec:iandata}

Residential insurance policies of the United States National Flood Insurance Program insure the
structure of a building and its contents under two separate coverages.  For a single-family dwelling
the two coverages are capped by the insurer, at different levels chosen by the insured, and also overall by statute at maximal values of \$250{,}000 and \$100{,}000, respectively. What the
program reports is the indemnity paid, which equals the loss net of the deductible
when that quantity is below the cap and equals the cap otherwise.  Writing $X^{(1)}$
and $X^{(2)}$ for the indemnities that the two coverages would have paid in the absence
of the caps, and $C^{(1)},C^{(2)}$ for the two coverage limits specific to the policy, the
recorded payments are exactly the observation scheme \eqref{obs} with $d=2$.

We use the claims generated by Hurricane Ian, which made landfall in southwest Florida
on September 28, 2022, and are recorded in the OpenFEMA claims file \citep{FEMA2026}.  Of the
$48{,}776$ recorded claims we retain the $n=3774$ that are single-family primary
residences in Lee County, the landfall county, insured on a replacement-cost basis under
a single non-condominium policy and with a positive payment under both coverages. The subset is chosen to be both catastrophic and homogeneous. A
further $52$ records are discarded in which a payment strictly exceeds its own limit.  This gives $7.79\%$ of building payments censored, $13.51\%$ of
contents payments, $3.58\%$ censored in both coordinates simultaneously, and $17.73\%$
in at least one.

Some remarks are worth making.  First, no independence between the two coverage limits is
required, since Assumption~\ref{a:RC} is an assumption on the joint standardized censoring law.  The
ratio of
$\P(C^{(1)}>x_1,C^{(2)}>x_2)$ to the product of its margins is $1.002$ at
$x=(\$100{,}000,\$40{,}000)$ and $1.065$ at $(\$200{,}000,\$40{,}000)$, eventually reaching the joint atom at the statutory maximum, so the limits are mildly dependent in
the tail; this is admissible, and affects the conclusions only through the joint
censoring index $\beta^\ast$.  The intervals \eqref{greenwoodCI} reported below use
Proposition~\ref{prop:variance}, hence Assumption~\ref{a:pluginnorm}, which under this
standardization requires that $\beta^\ast$ exceed $\max_j\beta_j$; the feeble dependence supports it here. Second, the independence of $X$ and $C$ is important but not identifiable from \eqref{obs}, since the law of
$(Z,\delta)$ does not determine that of $(X,C)$; see for instance \citet{Tsiatis1975}.  It is therefore
assumed here.  What supports it on
these data is the design, since the coverage is fixed by
contract before any event generates a loss.

\subsection{Standardization, estimable range, and threshold}\label{sec:ianstd}
We use the Kaplan--Meier standardization, whose target $R^X(q^{-1})/R^X(\boldsymbol1)$ is the
tail copula of the loss vector itself. The
multiplicative standardization would instead target
$R^X(q^{-1/\gamma})/R^X(\boldsymbol1)$, the tail copula on the marginal tail scales,
and require $\gamma_{X^{(j)}}$ in each coordinate.  The choice matters little for this dataset, as the margins seem regularly varying under simple diagnostics (omitted for brevity); the difference between standardizations is actually only between $0.1\%$ and $4.0\%$
at the six threshold pairs considered below.

In the context of an upper limit, Remark~\ref{rem:identifiability} is applicable here, and restricts the choice of sample fraction in the analysis.
In this case, a policy coverage limit has a distributional upper endpoint below that of the loss, so
$\overline F_{X^{(j)}}$ is identified only below the eventual plateau of the marginal
Kaplan--Meier estimator.  Those plateaus have heights $0.0750$ and $0.0668$. This implies, for instance, that with $T=2$ the estimator is well
defined for $k\geq567$.  Within that range we take the direction family
\begin{align*}
 q(a)=\bigl(e^{a_+},e^{(-a)_+}\bigr),\qquad a\in[-\log2,\log2],
\end{align*}
on thirteen equally spaced points, and select $k$ automatically by the rule
\eqref{stabilityrule} over the choices $k\in\{575,600,\ldots,1150\}$.  It returns $\widehat
k=750$, that is, a tail fraction of $0.199$, with a resulting $351$ joint observations used to construct the directional estimator.  The two edge directions of the grid are shown across the whole range in
panel~(a) of Figure~\ref{fig:ian}, together with their pointwise confidence intervals.  The
estimates move by less than one standard deviation over $k\in[575,800]$ and drift
downward thereafter, which is also what \eqref{stabilityrule} detected.

Since the tail fraction satisfies the fixed-sample constraint $k/n\ge 0.150$, the estimate
$\widehat r_n$ cannot be arbitrarily small. From a practical perspective, this means that the flood data exhibit an interplay between
censoring and extremes;  the censoring is heaviest
exactly where extremes occur, and thus the two cannot be completely disentangled. This type of censoring is quite common in applications involving payment sizes, which are hard-capped. Larger insurers have larger portfolios, larger reserves, and thus allow for higher caps, such that in this case for the upper endpoint of the censoring distribution can be argued to be increasing with $n$. However, without such an argument, the far tail of the distribution is asymptotically intractable by design.

\subsection{Results}\label{sec:ianresults}
The estimator $\widehat F^\circ_n$ is shown over the directions in the grid in panel~(b) of
Figure~\ref{fig:ian}, together with its pointwise confidence intervals, against two
references.  These are the naive estimator, which applies the same standardization but
ignores the censoring; and the value $\prod_jq_j^{-1}$ that the ratio
$p_n(q)/p_n(\boldsymbol 1)$ would take if the two standardized coordinates were exactly
independent.  At the two ends of the grid, the directional estimates are $0.641$, with
interval $(0.588,0.699)$, and $0.595$, with interval $(0.547,0.647)$, against $0.500$
under independence, so the joint tail departs from independence by relatively
the same amount in the two directions.  The naive estimator gives $0.581$ and $0.610$.
The fact that it lies below the directional estimate in one direction and above it in the other
is the typical feature of asymmetric censoring, the contents coordinate being censored almost
twice as severely as the building coordinate. This is also in line with the effects observed in
the simulations of the Supplementary Material. Contents totals are usually compiled from an itemized inventory, while a building figure has to be downright estimated, which could contribute to the asymmetry of the two curves.

The estimator is further reported on the monetary scale in panel~(c), applied directly to the payments at six increasing pairs
of dollar thresholds, and the $y$-axis showing the joint exceedance probability.  The
directional estimates run from $0.340$ at $(\$120{,}000,\$25{,}000)$ to $0.0475$ at
$(\$245{,}000,\$75{,}000)$, with $1229$ and $145$ observations reaching the two
edge thresholds.  Against the naive estimator, the
correction grows monotonically with the threshold, from a factor $1.074$
to a factor $1.255$, so that ignoring the censoring effects understates the joint tail, and
understates it more the further into the tail one goes.  Against the product of the two marginal
Kaplan--Meier survival probabilities, the estimate is larger by an important factor rising from
$1.64$ to $4.82$.  Treating the two coverages as censored but independent observations would therefore severely understate
the probability of a simultaneous exceedance.

\begin{figure}[]
 \centering
 \includegraphics[width=\textwidth]{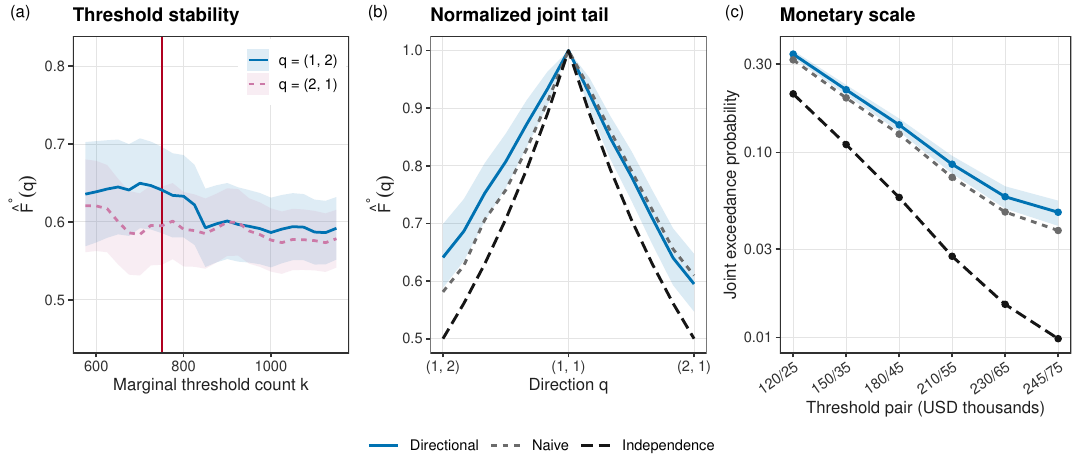}
 \caption{Hurricane Ian claims.  In panel~(a),
 $\widehat F^\circ_n(q)$ is shown at the two extreme directions of the grid as a function of the
 marginal sample fraction $k$, with pointwise $95\%$ intervals; the vertical line is
 the $\widehat k=750$ selected by \eqref{stabilityrule}.  In panel~(b),
 $\widehat F^\circ_n$ is shown over the direction grid at $\widehat k$, with plug-in pointwise $95\%$
 intervals \eqref{greenwoodCI}, the naive estimator, and the independence
 reference $\prod_jq_j^{-1}$.  In panel~(c), joint exceedance probabilities are provided as a function of
 actual monetary claim sizes, on a logarithmic scale, again together with the two reference estimators; each tick label gives the
 building and the contents threshold, in USD thousands, of the pair at which the joint
 exceedance probability is evaluated.}
 \label{fig:ian}
\end{figure}

Two further checks are performed, both provided in Figure~\ref{fig:iancheck}.  First, a nonparametric bootstrap that draws $500$ samples of the
$3774$ claims and repeats the entire procedure on each, including the two marginal
Kaplan--Meier standardizations at the same $\widehat k$. This gives bootstrap standard
deviations of $\log\widehat F^\circ_n$ that are between $0.69$ and $0.95$ times the
plug-in ones of Proposition~\ref{prop:variance}, with median $0.82$; the intervals
\eqref{greenwoodCI} are therefore conservative here, which matches the conclusion from the coverage
simulation study.  Finally, the dataset also provides an adjuster's assessment of the
damage, which is available for $3770$ of the $3774$ claims and is in principle a direct
proxy for $X$.  It is not used as a substitute for $X$, and should not be.  Among the claims whose
building payment was censored, two thirds have an assessed damage no larger than the
coverage plus the deductible, with a median ratio to the coverage being equal to $0.988$. Thus,
beyond the cap, the assessment is an unreliable estimate of the
loss. In panel~(c) we show what happens when considering the assessed damage as
the loss and counting joint exceedances directly. We observe probabilities below the
directional estimates at every pair, by a factor falling from $0.90$ to $0.52$ as the
pair deepens, and below the naive estimator as well, which falls only from $0.93$ to
$0.80$.  Substituting the assessments for $X$ is therefore even worse than ignoring the censoring
altogether.  The mathematical reason is that the building assessments do not reach the cap but stop
just short of it; for instance, at the highest threshold pair $47\%$ of the claims whose building payment exceeds
$\$245{,}000$ carry an assessment that does not. There is also a possible causal explanation. An adjuster settles a claim rather than accurately measuring a
loss, and once the damage is clearly past the coverage there is little to be gained from
refining the figure. This highlights the importance of using survival methods to get accurate benchmark estimates.

\begin{figure}[h]
 \centering
 \includegraphics[width=\textwidth]{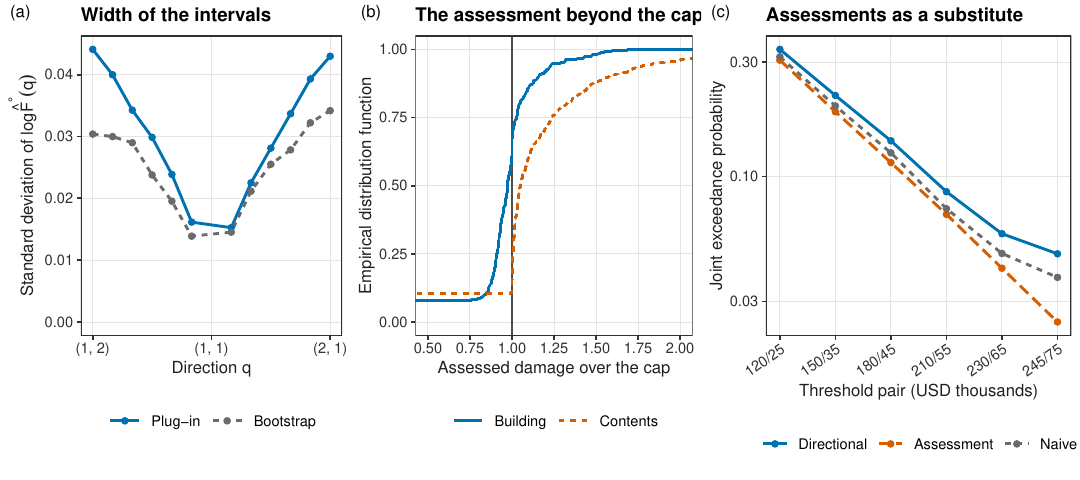}
 \caption{The two checks.  In panel~(a), the standard deviation of
 $\log\widehat F^\circ_n(q)$ is shown over the direction grid, as given by
 Proposition~\ref{prop:variance} and as returned by the nonparametric bootstrap; the
 reference direction is omitted, both vanishing there.  In panel~(b), we show the empirical
 distribution function of the assessed damage divided by the coverage plus the
 deductible, taken over the claims censored in that coordinate, with the vertical line
 marking the standardized cap, and the $x$-axis restricted to $[0.5,2]$.  In panel~(c), the
 joint exceedance probabilities of Section~\ref{sec:ianresults} are plotted against those obtained
 by treating the assessed damage as the loss and counting exceedances directly, with the
 naive estimator for reference, in logarithmic scale.}
 \label{fig:iancheck}
\end{figure}

\section{Discussion and outlook}\label{sec:discussion}
The key directional projection in \eqref{reduction} is an exact identity, and it thus allows for a mathematical treatment through decompositions of the Kaplan--Meier estimator.  Its
inevitable difficulty, however, is that the reduced data are not easily handled by elementary methods. Nonetheless, in this paper we proposed conditions which are understandable from either survival analysis or extreme value theory literature, and which ensure tractability.  Assumption~\ref{a:radialmodulus} is the most specialized, and
bounds the growth of the risk set along a ray, while the simpler Assumption~\ref{a:rate}
provides a tail rate; both are expected from an extremes perspective.  Independence of
$X$ and $C$ is crucial, as is standard in survival analysis, but also unfortunately not testable without any further parametric assumptions.  The condition $R^X(\boldsymbol 1)>0$
excludes asymptotic independence for the simple but crucial reason that the
effective sample size is $r_n^2\sim kR^X(\boldsymbol 1)c_n(\boldsymbol 1)$, so that asymptotic independence would require a different analysis with another rate of convergence. A notable extension to the bivariate literature is that no dependence structure is imposed on the censoring vector, and Assumption~\ref{a:RC} allows for independent and
asymptotically dependent censoring coordinates alike.

Three questions are left open for future work.

\emph{Efficiency.}  No efficiency claim is made here.  For the full multivariate
survival function under censoring the nonparametric maximum likelihood estimator is
inefficient \citep{GillVanDerLaanWellner1995,vanderLaan1996}. However,
the directional product limit uses, at each $q$, the one-dimensional reduction
of the data.  Whether the resulting asymptotic variance attains the
nonparametric bound for $F^\circ$ is unknown.  The boundary case is
settled, since without censoring, i.e., when $\Delta_{i,n}\equiv1$, the product limit \eqref{KM} is
the empirical survival function of the reduced sample, and \eqref{estimator} is the
empirical joint-exceedance estimator, whose optimality is classical, see
\citet{DreesHuang1998}.

\emph{Identifying cases where the standardization contributes.}  When
the regularity of Remark~\ref{rem:regular} is not implied by
Assumption~\ref{a:pluginnorm}, the process
$\mathbb M$ requires identification for specific standardizations. For instance, we showed that it is equal to zero for the multiplicative standardization. However, in some instances, for the Kaplan--Meier standardization, a joint weak limit of the marginal processes and the directional one is required, as well as an asymptotic expansion of the
displacement of the pair $(W_{i,n}(q),\Delta_{i,n}(q))$ under the estimated
standardization; the indicator and the radius move together at a near-tie, so the two cannot
be separated.

\emph{Asymptotic independence.}  Extending the theory to the near-independence regime
of \citet{LedfordTawn1996} and to hidden regular variation \citep{Resnick2002}, where
$p_n(q)$ decays at a rate slower than the one $r_n$ is built from, requires a different
scaling regime, but otherwise seems promising under similar techniques, albeit under additional conditions.

\section*{Acknowledgments}

The author was supported by the Carlsberg Foundation, grant CF23-1096.

\section*{Data and code availability}

The Hurricane Ian data analyzed in Section~\ref{sec:ian} are part of the public OpenFEMA data set
\textit{FIMA NFIP Redacted Claims} \citep{FEMA2026}, retrieved from
\url{https://www.fema.gov/api/open/v2/FimaNfipClaims}.  The subset in
Section~\ref{sec:iandata} is reproduced exactly in the code, which also
reproduces every simulation of Section~\ref{sec:simulations} and
of the Supplementary Material.  The code is openly available at
\url{https://github.com/martinbladt/Multivariate-Censored-Extremes}.

\clearpage

\setcounter{section}{0}
\setcounter{equation}{0}
\setcounter{figure}{0}
\setcounter{table}{0}
\setcounter{theorem}{0}
\setcounter{corollary}{0}
\setcounter{lemma}{0}
\setcounter{proposition}{0}
\setcounter{remark}{0}
\setcounter{assumption}{0}

\renewcommand{\thesection}{S\arabic{section}}
\renewcommand{\theequation}{S\arabic{equation}}
\renewcommand{\thefigure}{S\arabic{figure}}
\renewcommand{\thetable}{S\arabic{table}}
\renewcommand{\thetheorem}{S\arabic{theorem}}
\renewcommand{\thecorollary}{S\arabic{corollary}}
\renewcommand{\thelemma}{S\arabic{lemma}}
\renewcommand{\theproposition}{S\arabic{proposition}}
\renewcommand{\theremark}{S\arabic{remark}}
\renewcommand{\theassumption}{S\arabic{assumption}}

\def\theHsection{S\arabic{section}}
\def\theHequation{S\arabic{equation}}
\def\theHfigure{S\arabic{figure}}
\def\theHtable{S\arabic{table}}
\def\theHtheorem{S\arabic{theorem}}
\def\theHcorollary{S\arabic{corollary}}
\def\theHlemma{S\arabic{lemma}}
\def\theHproposition{S\arabic{proposition}}
\def\theHremark{S\arabic{remark}}
\def\theHassumption{S\arabic{assumption}}

\addtocontents{toc}{\protect\setcounter{tocdepth}{2}}

\begin{center}
{\large\bfseries Supplementary material for\\[2pt]
``Statistics of multivariate extremes under random censoring''}\\[8pt]
{\normalsize Martin Bladt}\\[2pt]
{\normalsize Department of Mathematical Sciences, University of Copenhagen}
\end{center}

\begingroup
\hypersetup{linkcolor=black}   
\tableofcontents
\endgroup

\newpage
\section{Concrete standardizations}\label{supp:mda}
The conditions of Section~\ref{sec:setup} hold for general standardizing functions
$\psi_{1,n},\ldots,\psi_{d,n}$.  This section makes the two concrete choices
\begin{align*}
 \psi_{j,n}(x)={x}/{u_{j,n}}
 \qquad\text{and}\qquad
 \psi_{j,n}(x)=\frac{k/n}{\overline F_{X^{(j)}}(x)} ,
\end{align*}
and, for each, states the resulting consistency and weak-convergence
results explicitly.  Recall that, throughout this section, $k=k_n$ is an intermediate sequence,
$k\to\infty$ and $k/n\to0$, and $u_{j,n}=U_{X^{(j)}}(n/k)$.

The two subsections have a similar structure for convenience to the reader. The proofs are collected in
Section~\ref{sec:corollaryproofs} and they simply verify the conditions of
Sections~\ref{sec:setup} and~\ref{sec:asymptotics} one by one.

Both subsections use the same structural hypotheses, on the dependence and
on the censoring, collected in Assumption~\ref{ass:supp}; Assumption~\ref{a:margins}
is assumed throughout.

\begin{assumption}[Dependence and censoring]\label{ass:supp}\leavevmode
\begin{aparts}
\item\label{a:RX} There is a function $R^X$ on
$(0,\infty)^d$, the tail copula of $X$, with
\begin{align*}
 R_r^X(x):=
 r\,\P\Bigl(
 \overline F_{X^{(j)}}(X^{(j)})\leq{x_j\over r},\ j=1,\ldots,d
 \Bigr)\longrightarrow R^X(x),\qquad r\to\infty,
\end{align*}
locally uniformly on $(0,\infty)^d$, and asymptotic independence is excluded by assuming
$R^X(\boldsymbol 1)>0$.

\item\label{a:RC} Let
$\Gamma(y;x):=\P\bigl(\overline F_{X^{(j)}}(C^{(j)})\leq x_j/y,\ j=1,\ldots,d\bigr)$.
There are a joint censoring index $\beta^\ast\in(0,\infty)$ and a function $R^C$ on
$(0,\infty)^d$ such that $y\mapsto\Gamma(y;\boldsymbol 1)$ is regularly varying at infinity
with index $-\beta^\ast$, and
\begin{align*}
 {\Gamma(y;x)\over\Gamma(y;\boldsymbol 1)}\longrightarrow R^C(x),\qquad y\to\infty,
\end{align*}
locally uniformly on $(0,\infty)^d$, where $R^C$ is continuous and strictly positive,
$R^C(\boldsymbol 1)=1$, and, for each $j$, $R^C(x)\to0$ as $x_j\to0$ with the remaining
coordinates held fixed.

\item\label{a:Cbeta} For $j=1,\ldots,d$, the composition
\begin{align*}
 G_j:=\overline F_{C^{(j)}}\circ U_{X^{(j)}}\ \text{ is regularly varying at infinity
 with index }-\beta_j,\qquad \beta_j\in[0,\infty).
\end{align*}

\item\label{a:effective} The effective sample size diverges:
\begin{align*}
 {k\,\P\bigl(C^{(j)}>u_{j,n},\ j=1,\ldots,d\bigr)\over\log^2(n/k)}
 \longrightarrow\infty .
\end{align*}

\end{aparts}
\end{assumption}

\begin{remark}\label{rem:asmS1}
Because the margins of $X$ are continuous, $\overline
F_{X^{(j)}}(X^{(j)})$ is uniform on $(0,1)$, and $R^X_r$ is nondecreasing in each
argument and Lipschitz with constant one in each argument; the limit $R^X$ inherits
both properties and, replacing $r$ by $r/w$ in Assumption~\ref{a:RX}, one obtains homogeneity:
\begin{align}
 R^X(wx)=w\,R^X(x),\qquad w>0 .
 \label{homog}
\end{align}
Taking $w=\min_jx_j$ and using monotonicity, one also obtains
\begin{align}
 \bigl(\min_jx_j\bigr)\,R^X(\boldsymbol 1)\ \leq\ R^X(x)\ \leq\ \min_jx_j,
 \qquad x\in(0,\infty)^d ,
 \label{copulalower}
\end{align}
the upper bound holding because $R^X_r(x)$ is at most its value when all coordinates
but the minimizing one are set to infinity.

The function $G_j$ of Assumption~\ref{a:Cbeta} measures the censoring tail on the quantile scale of the
corresponding margin of $X$.  The value $\beta_j=0$
includes, in particular, a coordinate that is observed exactly, for which $G_j\equiv1$;
it may also describe genuine censoring whose survival decreases only slowly.  We reserve
\emph{exactly uncensored} for $G_j\equiv1$. 

When the coordinates of $C$ are independent and every $\beta_j>0$,
Assumption~\ref{a:RC} holds with
$R^C(x)=\prod_jx_j^{\beta_j}$ and $\beta^\ast=\sum_j\beta_j$, since then the joint
probability factorises and each factor is regularly varying by Assumption~\ref{a:Cbeta}.  Clearly
$R^C$ is nondecreasing in each argument, and writing $\Gamma(y;\lambda x)=\Gamma(y/\lambda;x)$
and using the regular variation of $\Gamma(\cdot\,;\boldsymbol 1)$ gives the homogeneity
\begin{align}
 R^C(\lambda x)=\lambda^{\beta^\ast}R^C(x),\qquad \lambda>0 .
 \label{homogC}
\end{align}
Since $\Gamma(y;\boldsymbol 1)\leq G_j(y)$ for every $j$, comparing indices of
regular variation gives $\beta^\ast\geq\max_j\beta_j$. In case of equality we say that
the coordinates of $C$ asymptotically dependent.  

The following consequences are worth noting.  First, by Karamata's theorem applied to
 $\P\bigl(\delta^{(j)}=1,\ \overline F_{X^{(j)}}(X^{(j)})\leq1/y\bigr)
=\int_y^\infty G_j(z)z^{-2}dz$, the limiting proportion of non-censored observations
above a high marginal level is
\begin{align}
 \lim_{u\uparrow\,x^*_{Z^{(j)}}}\P\bigl(\delta^{(j)}=1\mid Z^{(j)}>u\bigr)
 ={1\over1+\beta_j}\in(0,1],
 \label{pbeta}
\end{align}
so $\beta_j<\infty$ excludes the degenerate regime in which the extreme risk set
consists asymptotically of censored observations only.  Second, $G_j$ is regularly
varying and hence strictly positive on $(0,\infty)$, so the upper endpoint of
$C^{(j)}$ is not below that of $X^{(j)}$ and the tail of $X^{(j)}$ remains
identifiable.
\end{remark}

\begin{remark}\label{rem:identifiability}
If $x^*_{C^{(j)}}<x^*_{X^{(j)}}$ for some $j$, then Assumption~\ref{a:Cbeta} cannot hold.  Nothing about $\overline F_{X^{(j)}}$ below the level
$\overline F_{X^{(j)}}(x^*_{C^{(j)}})$ is then identifiable, and neither
is $p_n(q)$ for any $q$ whose $j$th standardized coordinate lies beyond that level.  All the
statements below are therefore to be understood on the estimable direction set
$\{q:\ c_n(q)>0\}$, whose empirical counterpart is the set of directions at which
the marginal Kaplan--Meier estimators of Section~\ref{sec:KM} have not yet reached
their plateau.
\end{remark}

\medskip

\noindent\textit{The plug-in rate.} Under the Kaplan--Meier standardization the
standardizing functions are estimable at the rate
$\zeta_n=\max_j\{kG_j(n/k)\}^{-1/2}$, the reciprocal square root of the number of
observations above the marginal threshold of the most heavily censored coordinate; see for instance, in the regularly varying case,
\citet{EinmahlFilsVilletardGuillou2008S,BeirlantEtAl2007S,WormsWorms2014S}. Under the
multiplicative standardization only a marginal quantile has to be estimated.  Since $r_n^2=n\,p_n(\boldsymbol 1)c_n(\boldsymbol 1)$ by
definition,
\begin{align}
 r_n^2\zeta_n^2={n\,p_n(\boldsymbol 1)\,c_n(\boldsymbol 1)\over k\,\min_jG_j(n/k)},
 \label{freeplugin}
\end{align}
so the marginal standardization is estimated at a rate faster than $r_n$, and
the standardization is regular (Remark~\ref{rem:regular}) with $\mathbb M\equiv0$, when the joint
censoring probability $c_n(\boldsymbol 1)$ is of smaller order than
$\min_jG_j(n/k)$ by more than a power of $\log(n/k)$.  That happens when the coordinates of $C$ are independent and at least
two censoring indices are positive, and it fails when they are asymptotically dependent,
the joint probability then being of the order of the smallest margin.  In the latter case, the marginal estimation contributes $\mathbb M$ in the limit.

\subsection{Regularly varying margins and the multiplicative standardization}\label{sec:RV}
Here the margins of $X$ are regularly varying, as in Assumption~\ref{a:RVX} below, and
we take the multiplicative standardization
\begin{align}
 \psi_{j,n}(x)={x\over u_{j,n}},\qquad
 \widehat\psi_{j,n}(x)={x\over\widehat u_{j,n}},
 \label{RVpsi}
\end{align}
where $\widehat u_{j,n}>0$ is a suitable estimator of $u_{j,n}$.

\medskip

\noindent\textit{The target.}  Since $\overline F_{X^{(j)}}(u_{j,n})=k/n$ by continuity
of the margins, Assumptions~\ref{a:RVX} and~\ref{a:RX} give, writing
$q^{-1/\gamma}=(q_1^{-1/\gamma_{X^{(1)}}},\ldots,q_d^{-1/\gamma_{X^{(d)}}})$,
\begin{align}
 {n\over k}\,p_n(q)\longrightarrow R^X\bigl(q^{-1/\gamma}\bigr)
 \label{pnlimit}
\end{align}
locally uniformly on $(0,\infty)^d$, so that the estimator has the target
\begin{align}
 F^\circ(q)={R^X\bigl(q^{-1/\gamma}\bigr)\over R^X(\boldsymbol 1)} ,
 \label{Fcirc}
\end{align}
which corresponds to the tail copula of $X$ on the marginal tail scales.  This is the
classical estimand in the heavy-tailed context.

\medskip

\noindent\textit{The conditions.}  The conditions specific to this standardization are
given in Assumption~\ref{ass:RV}.

\begin{assumption}[The multiplicative standardization]\label{ass:RV}\leavevmode
\begin{aparts}
\item\label{a:RVX} The margins of $X$ are regularly varying: for $j=1,\ldots,d$ and some
$\gamma_{X^{(j)}}>0$, as $r\to\infty$,
\begin{align*}
 {\overline F_{X^{(j)}}(rx)\over \overline F_{X^{(j)}}(r)}
 \longrightarrow x^{-1/\gamma_{X^{(j)}}},\qquad x>0 ,
\end{align*}
and $F_{X^{(j)}}$ is strictly increasing on $\bigl(0,x^*_{X^{(j)}}\bigr)$.

\item\label{a:radialRV} A radial regularity requirement on the joint law of the
observed vector: there is $\rho>0$ such that
\begin{align*}
 \P\bigl(Z^{(j)}>z_j,\ j=1,\ldots,d\bigr)
 \ \leq\ A^{\rho}\,\P\bigl(Z^{(j)}>Az_j,\ j=1,\ldots,d\bigr),
 \qquad A\geq1,\ z\in(0,\infty)^d .
\end{align*}

\item\label{a:quantilecons} Only $u_{j,n}$ has to be estimated here, and we
assume for consistency
\begin{align*}
 \log{n\over k}\,\bigl({\widehat u_{j,n}/ u_{j,n}}-1\bigr)
 \overset{\P}{\longrightarrow}0,\qquad j=1,\ldots,d.
\end{align*}

\item\label{a:quantilerate} For weak convergence, the stronger
\begin{align*}
 {\sqrt k\over\log^2(n/k)}\bigl({\widehat u_{j,n}/ u_{j,n}}-1\bigr)=O_{\P}(1),
 \qquad j=1,\ldots,d .
\end{align*}

\item\label{a:Hall} For the margins
$V\in\{X^{(1)},\ldots,X^{(d)}\}$, the Hall-type model
\begin{align*}
 \overline F_V(x)=A_Vx^{-1/\gamma_V}\bigl\{1+{1\over\gamma_V}\delta_V(x)\bigr\},
\end{align*}
where $A_V>0$, $\gamma_V>0$, and $|\delta_V|$ is regularly varying with negative index
\citep{deHaanFerreira2006S}.

\item\label{a:S} For the tail copula function, there are
$\alpha(r)\downarrow0$ and a nonzero continuous function $M$ with
\begin{align*}
 {R_r^X(x)-R^X(x)\over\alpha(r)}\longrightarrow M(x)
\end{align*}
uniformly on compact subsets of $(0,\infty)^d$.

\item\label{a:rateXR} The rates for the two preceding functions satisfy
\begin{align*}
 \sqrt{k}\,\bigl|\delta_{X^{(j)}}\bigl(U_{Z^{(j)}}(n/k)\bigr)\bigr|=O(1),
 \quad j=1,\ldots,d,
 \qquad\text{and}\qquad
 r_n\,\alpha(n/k)\longrightarrow0 .
\end{align*}
\end{aparts}
\end{assumption}

\begin{remark}\label{rem:asmS2}
Assumption~\ref{a:effective} is the effective sample
size condition of the heavy-tailed setting.  Assumption~\ref{a:radialRV}
holds, for instance, whenever the joint survival function of $Z$ is of the form
$x^{-\rho}\ell_{z}(x)$ along each ray $z\in(0,\infty)^d$ with $\ell_{z}$
nondecreasing, which
covers the Pareto-type joint models considered in practice.  Any $\rho$ exceeding
$\sum_j1/\gamma_{Z^{(j)}}$, where
$\gamma_{Z^{(j)}}=(\gamma_{X^{(j)}}^{-1}+\gamma_{C^{(j)}}^{-1})^{-1}$ with
$\gamma_{C^{(j)}}=\gamma_{X^{(j)}}/\beta_j$, is then admissible
for the tail behavior. The condition additionally rules out oscillation at large
$x$.

See for instance \citet{Bladt21042026} for an estimator satisfying Assumptions~\ref{a:quantilecons} and~\ref{a:quantilerate} under censoring.  The last three parts control the bias,
which here derives from standard second-order conditions of heavy-tailed theory.
\end{remark}

\begin{corollary}\label{cor:RVcons}
Assume Assumptions~\ref{ass:supp} and~\ref{ass:RV}.1--\ref{ass:RV}.3.  Then
\begin{align*}
 \sup_{q\in K}|\widehat F_n^\circ(q)-F^\circ(q)|\overset{\P}\longrightarrow0 .
\end{align*}
\end{corollary}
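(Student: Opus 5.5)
The corollary will follow from Theorem~\ref{thm:consistency}. The plan is to verify Assumptions~\ref{a:margins}--\ref{a:radialmodulus} and \ref{a:plugin}--\ref{a:plugincons} for the choices \eqref{RVpsi}, taking $F^\circ$ as in \eqref{Fcirc} and identifying $G^\circ$.

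Assumption~\ref{a:margins} is assumed throughout.

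For Assumption~\ref{a:FG}, write
\[
p_n(q)=\P\bigl(X^{(j)}>q_ju_{j,n},\ j\le d\bigr)=\frac kn\,R^X_{n/k}(x_n),
\qquad
x_{n,j}=\frac nk\,\overline F_{X^{(j)}}(q_ju_{j,n}).
\]
By Assumption~\ref{a:RVX} and continuity of the margins, $x_{n,j}\to q_j^{-1/\gamma_{X^{(j)}}}$ locally uniformly; uniformity comes from monotonicity and the continuous limit, via Pólya/Dini. Since $R^X_r$ is Lipschitz in each argument (Remark~\ref{rem:asmS1}), the local uniform convergence in Assumption~\ref{a:RX} gives \eqref{pnlimit}. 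Hence $p_n(q)/p_n(\boldsymbol 1)\to F^\circ(q)$, using $R^X(\boldsymbol 1)>0$. By \eqref{copulalower}, $F^\circ$ is continuous and positive, and it vanishes as $q_j\to\infty$.

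For the censoring part,
\[
c_n(q)=\Gamma\bigl(n/k;\,y_n\bigr),
\qquad
y_{n,j}=\frac nk\,\overline F_{X^{(j)}}(q_ju_{j,n})\to q_j^{-1/\gamma_{X^{(j)}}}.
\]
Assumption~\ref{a:RC} then gives $c_n(q)/c_n(\boldsymbol 1)\to G^\circ(q):=R^C(q^{-1/\gamma})$. Locally uniform convergence of the argument is enough here because $R^C$ is continuous and $\Gamma(y;\cdot)$ is monotone: sandwich $y_n$ between $(1\pm\epsilon)$-multiples of its limit and use the homogeneity \eqref{homogC}. The stated properties of $R^C$ yield positivity and the vanishing as $q_j\to\infty$, since $q_j\to\infty$ sends $x_j\to0$.

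For Assumption~\ref{a:rate}, use $p_n(\boldsymbol 1)\sim (k/n)R^X(\boldsymbol 1)$. This gives
\[
r_n^2\sim R^X(\boldsymbol 1)\,k\,\P\bigl(C^{(j)}>u_{j,n},\ \forall j\bigr)\to\infty
\]
by Assumption~\ref{a:effective}, indeed faster than $\log^2(n/k)$.

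Assumption~\ref{a:radialmodulus} is immediate from Assumption~\ref{a:radialRV} with $z_j=vq_ju_{j,n}$. By \eqref{ZisXC}, $p_n(vq)c_n(vq)=\P(Z^{(j)}>vq_ju_{j,n},\ \forall j)$, so the required inequality is exactly the radial bound with the same $\rho$.

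For Assumption~\ref{ass:plugin}, note that $\widehat\psi_{j,n}/\psi_{j,n}=u_{j,n}/\widehat u_{j,n}$ is constant in $x$. It is nondecreasing and injective, so Assumption~\ref{a:plugin} holds with any $\eta_n$ dominating $|u_{j,n}/\widehat u_{j,n}-1|$ in probability. By Assumption~\ref{a:quantilecons}, $|\widehat u_{j,n}/u_{j,n}-1|=o_\P(1/\log(n/k))$, so we can choose a deterministic $\eta_n\downarrow0$ with $\eta_n\log(n/k)\to0$ and the event holding with probability tending to one. This is the standard diagonal choice from a sequence converging to zero in probability.

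It remains to turn $\log(n/k)$ into $\log(n/r_n^2)$ for Assumption~\ref{a:plugincons}. We have
\[
\frac{n}{r_n^2}=\frac{1}{p_n(\boldsymbol 1)c_n(\boldsymbol 1)}\sim \frac{n/k}{R^X(\boldsymbol 1)\,\Gamma(n/k;\boldsymbol 1)}.
\]
Since $\Gamma(\cdot;\boldsymbol 1)$ is regularly varying with index $-\beta^\ast$, this gives $\log(n/r_n^2)\sim(1+\beta^\ast)\log(n/k)$, and the condition follows. I expect this bookkeeping to be the main obstacle: the consistency condition on $\widehat u$ is stated at the scale $\log(n/k)$, while the theorem needs $\log(n/r_n^2)$. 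The comparison works only because $\beta^\ast<\infty$ and $\log\Gamma(y;\boldsymbol 1)/\log y\to-\beta^\ast$. Once it is in place, Theorem~\ref{thm:consistency} applies directly and yields the claim.
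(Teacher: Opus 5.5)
Your proposal is correct and follows essentially the same route as the paper. Like the paper, you check Assumptions~\ref{a:margins}--\ref{a:radialmodulus} through the exact identities $p_n(q)=(k/n)R^X_{n/k}(x_n)$ and $c_n(q)=\Gamma(n/k;x_n)$, and you read Assumption~\ref{a:radialmodulus} directly off Assumption~\ref{a:radialRV}. You then use the constancy of $\widehat\psi_{j,n}/\psi_{j,n}=u_{j,n}/\widehat u_{j,n}$ for Assumption~\ref{a:plugin}, and convert $\log(n/k)$ into $\log(n/r_n^2)$ via the regular variation of $\Gamma(\cdot\,;\boldsymbol 1)$ before invoking Theorem~\ref{thm:consistency}. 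The only difference is that the paper also verifies Assumption~\ref{a:divergence} at this stage, but that is needed only later for Corollary~\ref{cor:RVnorm}, not for consistency.
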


\begin{corollary}\label{cor:RVnorm}
Assume the hypotheses of Corollary~\ref{cor:RVcons} with Assumption~\ref{a:quantilecons}
replaced by Assumption~\ref{a:quantilerate}, together with the second order conditions
Assumptions~\ref{a:Hall}--\ref{a:rateXR}.
Then
\begin{align*}
 r_n\bigl\{\widehat F_n^\circ-F^\circ\bigr\}\ \leadsto\ F^\circ\,\mathbb G
 \qquad\text{in }\ell^\infty(K),
\end{align*}
with $\mathbb G$ as in \eqref{Glimit}.
\end{corollary}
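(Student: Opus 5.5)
The plan is to deduce Corollary~\ref{cor:RVnorm} from Theorem~\ref{thm:normality}. Corollary~\ref{cor:RVcons} already gives Assumptions~\ref{ass:structure} and~\ref{a:plugin}--\ref{a:plugincons} for the multiplicative standardization, so three things remain to be checked: Assumption~\ref{a:pluginnorm}, Assumption~\ref{a:divergence} and Assumption~\ref{a:bias}. I will reuse the asymptotics established there, namely $p_n(q)\sim (k/n)R^X(q^{-1/\gamma})$ from \eqref{pnlimit}. The analogous statement for censoring is $c_n(\boldsymbol 1)=\P(C^{(j)}>u_{j,n}\ \forall j)$, with $c_n(q)/c_n(\boldsymbol 1)\to G^\circ(q)$. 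Because $C^{(j)}>u_{j,n}q_j$ is equivalent to $\overline F_{X^{(j)}}(C^{(j)})<\overline F_{X^{(j)}}(u_{j,n}q_j)\approx (k/n)q_j^{-1/\gamma_j}$, Assumption~\ref{a:RC} gives $G^\circ(q)=R^C(q^{-1/\gamma})$. Hence $r_n^2\sim k R^X(\boldsymbol 1)c_n(\boldsymbol 1)$.

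\emph{Divergence.} By homogeneity \eqref{homog} and \eqref{homogC}, $F^\circ(vq)G^\circ(vq)=R^X((vq)^{-1/\gamma})R^C((vq)^{-1/\gamma})/R^X(\boldsymbol 1)$. Put $w=v^{-1/\gamma_{\min}}$ and write $v^{-1/\gamma_j}\geq w$ for $v<1$. Monotonicity, homogeneity and \eqref{copulalower} then give a lower bound of order $w^{1+\beta^\ast}$ uniformly on compact $K'$, and this tends to infinity as $v\downarrow0$.

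\emph{Plug-in rate.} With $\widehat\psi_{j,n}/\psi_{j,n}=u_{j,n}/\widehat u_{j,n}$ constant in $x$, Assumption~\ref{a:plugin} holds with any $\eta_n$ dominating $|\widehat u_{j,n}/u_{j,n}-1|$. Assumption~\ref{a:quantilerate} lets me take $\eta_n=\epsilon_n\log^2(n/k)/\sqrt k$ with $\epsilon_n\to\infty$ slowly. Next, $\log(n/r_n^2)=\log(n/k)+O(\log(1/c_n(\boldsymbol 1)))$, and since $\Gamma(\cdot;\boldsymbol 1)$ is regularly varying, $\log(1/c_n(\boldsymbol 1))=O(\log(n/k))$. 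So it suffices to show that $r_n\log^3(n/k)/\sqrt k\to0$ after choosing $\epsilon_n$ slowly enough. Since $r_n^2/k\asymp c_n(\boldsymbol 1)$, which decays like a power of $k/n$ when $\beta^\ast>0$, this product is $O((k/n)^{\beta^\ast/2}\log^3(n/k))\to0$. This is where $\beta^\ast\in(0,\infty)$ is used. Assumption~\ref{a:pluginnorm} then follows.

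\emph{Bias.} I write $p_n(q)/p_n(\boldsymbol 1)-F^\circ(q)$ as the sum of two terms. The first comes from the tail copula: $R^X_{n/k}(x_n(q))/R^X_{n/k}(x_n(\boldsymbol 1))$ against $R^X(x_n(q))/R^X(x_n(\boldsymbol 1))$, where $x_{n,j}(q)=(n/k)\overline F_{X^{(j)}}(u_{j,n}q_j)$. By Assumption~\ref{a:S} this is $O(\alpha(n/k))$ uniformly on compacts, and it is killed by $r_n\alpha(n/k)\to0$. The second comes from the margins: $x_{n,j}(q)-q_j^{-1/\gamma_j}$, controlled through the Hall model \ref{a:Hall}. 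This is $O(|\delta_{X^{(j)}}(u_{j,n})|)$, and then the Lipschitz property of $R^X$ applies.

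The main obstacle is this marginal term. Assumption~\ref{a:rateXR} only gives $\sqrt k|\delta|=O(1)$, evaluated at $U_{Z^{(j)}}(n/k)$ rather than at $u_{j,n}$. I would first compare the two arguments via regular variation of $|\delta|$, using $U_{Z^{(j)}}(n/k)\asymp u_{j,n}\,$times a slowly varying power, so the ratio of $\delta$'s is bounded. Then $r_n|\delta|=O(r_n/\sqrt k)=O(c_n(\boldsymbol 1)^{1/2})\to0$, again because $\beta^\ast>0$. With all three conditions in hand, Theorem~\ref{thm:normality} yields the claim with $\mathbb M\equiv0$.
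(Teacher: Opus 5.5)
Your proposal is correct and is essentially the paper's proof. It makes the same reduction to Theorem~\ref{thm:normality} and the same choice $\eta_n=a_n\log^2(n/k)/\sqrt k$, with $r_n/\sqrt k=\{R^X_{n/k}(\boldsymbol 1)c_n(\boldsymbol 1)\}^{1/2}$ decaying like a negative power of $n/k$. It also uses the same bias split into an $\alpha(n/k)$ copula term and an $O(\max_j|\delta_{X^{(j)}}(u_{j,n})|)$ marginal term, obtained from the Hall model and the Lipschitz property of $R^X_r$ and $R^X$.

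Two small repairs are needed. In the divergence bound take $w=v^{-\min_j1/\gamma_{X^{(j)}}}$; your $\gamma_{\min}$ reverses the inequality for $v<1$. For the comparison of the $\delta$'s, use, as the paper does, the one-sided fact $u_{j,n}\geq U_{Z^{(j)}}(n/k)$, which follows from $\overline F_{Z^{(j)}}\leq\overline F_{X^{(j)}}$. Combine it with Potter's bound for the negatively indexed $|\delta_{X^{(j)}}|$. A two-sided comparison is not available: $U_{Z^{(j)}}(n/k)/u_{j,n}$ is regularly varying with index $-\gamma_{X^{(j)}}\beta_j/(1+\beta_j)$, so the two arguments need not be comparable up to constants.
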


\subsection{Arbitrary margins and the Kaplan--Meier standardization}\label{sec:KM}
We now assume no marginal model at all. Let $x^*_{V}=\sup\{x:F_V(x)<1\}$, and take the standardization
\begin{align}
 \psi_{j,n}(x)={k/n\over\overline F_{X^{(j)}}(x)},\qquad
 \widehat\psi_{j,n}(x)={k/n\over\widehat{\overline F}_{j,n}(x)} ,
 \label{KMpsi}
\end{align}
where $\widehat{\overline F}_{j,n}$ is the (non-directional) Kaplan--Meier estimator of
$\overline F_{X^{(j)}}$ computed from $(Z^{(j)}_i,\delta^{(j)}_i)$, $i\leq n$.  Under
Assumption~\ref{a:Mstrict} the map $\psi_{j,n}$ is continuous and strictly increasing on
$(0,x^*_{X^{(j)}})$, which contains $Z^{(j)}$ almost surely, with
$\psi_{j,n}(u_{j,n})=1$; and $\widehat\psi_{j,n}$ is nondecreasing and, since the
Kaplan--Meier estimator is decreasing strictly at each uncensored observation, injective
on $\{Z^{(j)}_i:\delta^{(j)}_i=1\}$, as Section~\ref{sec:plugin} requires.

\medskip

\noindent\textit{The target.}  Because $F_{X^{(j)}}$ is assumed continuous,
$\P\bigl(\psi_{j,n}(X^{(j)})>s\bigr)=(k/n)/s$ for $s\geq k/n$, so that each standardized
margin is \emph{exactly} standard Pareto above the level $k/n$ and for every $n$.  The same computation in $d$ dimensions
gives, directly from the definition of $R^X_r$ in Assumption~\ref{a:RX} with $r=n/k$ and writing
$q^{-1}=(q_1^{-1},\ldots,q_d^{-1})$,
\begin{align}
 p_n(q)=\P\Bigl(\overline F_{X^{(j)}}(X^{(j)})<{k/n\over q_j},\ j=1,\ldots,d\Bigr)
 ={k\over n}\,R^X_{n/k}\bigl(q^{-1}\bigr).
 \label{pnKM}
\end{align}
Dividing by its value at $\boldsymbol 1$ and letting
$n\to\infty$ in Assumption~\ref{a:RX}, we get
\begin{align}
 {p_n(q)\over p_n(\boldsymbol 1)}={R^X_{n/k}(q^{-1})\over R^X_{n/k}(\boldsymbol 1)}
 \longrightarrow F^\circ(q)={R^X(q^{-1})\over R^X(\boldsymbol 1)}
 \label{KMF}
\end{align}
locally uniformly on $(0,\infty)^d$.  Hence the target is the tail copula of $X$ itself, now
on the marginal rank scale and having no marginal parameters.  By
\eqref{homog} it is homogeneous of order $-1$.

\medskip
\noindent\textit{The conditions.}  The conditions specific to this standardization are
given in Assumption~\ref{ass:KM}.

\begin{assumption}[The Kaplan--Meier standardization]\label{ass:KM}\leavevmode
\begin{aparts}
\item\label{a:Mstrict} The marginal distribution functions satisfy
\begin{align*}
 F_{X^{(j)}}\ \text{ is strictly increasing on }
 \bigl(0,x^*_{X^{(j)}}\bigr),\qquad j=1,\ldots,d.
\end{align*}

\item\label{a:radialKM} A radial regularity condition: there is $\rho>0$ such that
\begin{align*}
 \P\bigl(\overline F_{X^{(j)}}(Z^{(j)})<c_ju,\ j=1,\ldots,d\bigr)
 \ \leq\ A^{\rho}\,\P\bigl(\overline F_{X^{(j)}}(Z^{(j)})<c_ju/A,\ j=1,\ldots,d\bigr),
\end{align*}
for $A\geq1$, $c\in[1/(16T),4]^d$ and $u>0$.

\item\label{a:secondKM} For weak convergence we assume the second-order condition
\begin{align*}
 r_n\,\sup_{x\in[a,b]^d}\bigl|R^X_{n/k}(x)-R^X(x)\bigr|\longrightarrow0
 \qquad\text{for every }0<a<b<\infty .
\end{align*}
\end{aparts}
\end{assumption}

\begin{remark}\label{rem:asmS3}
The quantity $kG_j(n/k)$ is the expected
number of observations of the $j$th coordinate
that exceed the level $u_{j,n}$; since the joint probability is at most each marginal
one, Assumption~\ref{a:effective} forces $kG_j(n/k)\to\infty$ for every $j$.
Like Assumption~\ref{a:radialRV}, Assumption~\ref{a:radialKM} is a statement about the distribution of $Z$ alone
with no dependence on $n$. It is understood on the standardized scale rather than on the
original one, and it is therefore invariant under increasing transformations of
all margins.
Finally, the standardization contributes no bias, so, unlike in
the multiplicative standardization case, Assumption~\ref{a:secondKM} is a condition on the tail copula alone.
\end{remark}

\medskip

\noindent\textit{The standardization.}  Nothing corresponding to
Assumption~\ref{ass:plugin} has to be assumed here; instead a detailed analysis of the Kaplan--Meier transformation is required.  By
Proposition~\ref{prop:KM}, applied on each coordinate at the level $16T$,
\begin{align}
 \sup_{x:\ \psi_{j,n}(x)\leq16T}
 \bigl|{\widehat\psi_{j,n}(x)\over\psi_{j,n}(x)}-1\bigr|
 =O_{\P}\bigl(\bigl\{kG_j(n/k)\bigr\}^{-1/2}\bigr),\qquad j=1,\ldots,d,
 \label{KMrate}
\end{align}
so that, with $\zeta_n=\max_j\{kG_j(n/k)\}^{-1/2}$ which is the rate of \eqref{freeplugin},
Assumption~\ref{a:plugin} is seen to hold with $\eta_n=a_n\zeta_n$ for \emph{every} deterministic
$a_n\to\infty$. Note that the freedom
in the choice of the sequence $a_n$ is not retained when having to show Assumption~\ref{a:plugincons}, which additionally requires $\eta_n\log(n/r_n^2)\to0$;
such a choice exists because $\zeta_n\log(n/k)\to0$, shown in the proof of
Corollary~\ref{cor:KMcons} below.  To establish Assumption~\ref{a:pluginnorm} one would further require
$r_n\zeta_n\log(n/k)\to0$, which holds when $\beta^\ast>\max_j\beta_j$ but not in general,
as the proof of Corollary~\ref{cor:KMnorm} shows.

\begin{corollary}\label{cor:KMcons}
Assume Assumptions~\ref{ass:supp} and~\ref{ass:KM}.1--\ref{ass:KM}.2.  Then
\begin{align*}
 \sup_{q\in K}|\widehat F_n^\circ(q)-F^\circ(q)|\overset{\P}\longrightarrow0 .
\end{align*}
\end{corollary}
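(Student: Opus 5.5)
The plan is to deduce Corollary~\ref{cor:KMcons} from Theorem~\ref{thm:consistency}. Under the Kaplan--Meier standardization \eqref{KMpsi} we verify Assumption~\ref{ass:structure} and Assumptions~\ref{a:plugin}--\ref{a:plugincons} one at a time.

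\textbf{Step 1: Assumptions~\ref{a:margins} and~\ref{a:FG}.} Assumption~\ref{a:margins} is assumed throughout. For Assumption~\ref{a:FG}:
\begin{itemize}[leftmargin=1.5em]
\item The $F^\circ$ half is exactly \eqref{KMF}. Continuity and strict positivity of $F^\circ(q)=R^X(q^{-1})/R^X(\boldsymbol 1)$ follow from the Lipschitz property, from \eqref{copulalower}, and from $R^X(\boldsymbol 1)>0$. The bound $R^X(q^{-1})\leq\min_j q_j^{-1}$ gives decay to zero as $q_j\to\infty$.
\item For the censoring half, note that $\psi_{j,n}(C^{(j)})>q_j$ is the event $\overline F_{X^{(j)}}(C^{(j)})<(k/n)/q_j$. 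Continuity of the margins lets us replace $<$ by $\leq$, so $c_n(q)=\Gamma(n/k;q^{-1})$.
\item Assumption~\ref{a:RC} then gives $c_n(q)/c_n(\boldsymbol 1)\to G^\circ(q)=R^C(q^{-1})$ locally uniformly. By \eqref{homogC} and monotonicity, $R^C(q^{-1})\leq R^C(\min_j q_j^{-1}\boldsymbol 1)\to0$ as $q_j\to\infty$. Continuity and positivity are assumed in Assumption~\ref{a:RC}.
\end{itemize}

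\textbf{Step 2: Assumptions~\ref{a:rate} and~\ref{a:radialmodulus}.}
\begin{itemize}[leftmargin=1.5em]
\item For Assumption~\ref{a:rate}, use $r_n^2=n p_n(\boldsymbol 1)c_n(\boldsymbol 1)=kR^X_{n/k}(\boldsymbol 1)\,\P(C^{(j)}>u_{j,n},\ \forall j)$, using $\psi_{j,n}(u_{j,n})=1$. Since $R^X_{n/k}(\boldsymbol 1)\to R^X(\boldsymbol 1)>0$, Assumption~\ref{a:effective} gives $r_n^2/\log^2(n/k)\to\infty$, which is stronger than needed.
\item For Assumption~\ref{a:radialmodulus}, use \eqref{ZisXC}: $p_n(vq)c_n(vq)=\P(\overline F_{X^{(j)}}(Z^{(j)})<(k/n)/(vq_j),\ \forall j)$.
\item Applying Assumption~\ref{a:radialKM} with $c_j=1/q_j\in[1/(16T),4]$ and $u=(k/n)/v$ gives exactly the required inequality, with the same $\rho$.
\end{itemize}

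\textbf{Step 3: the plug-in conditions.}
\begin{itemize}[leftmargin=1.5em]
\item By \eqref{KMrate}, i.e.\ Proposition~\ref{prop:KM}, the supremum is $O_\P(\zeta_n)$. Hence Assumption~\ref{a:plugin} holds with $\eta_n=a_n\zeta_n$ for any $a_n\to\infty$. Monotonicity and injectivity of $\widehat\psi_{j,n}$ on uncensored points were noted after \eqref{KMpsi}.
\item For Assumption~\ref{a:plugincons}, first note $n/r_n^2=1/\{p_n(\boldsymbol 1)c_n(\boldsymbol 1)\}$ with $p_n(\boldsymbol 1)\asymp k/n$. Then $c_n(\boldsymbol 1)=\Gamma(n/k;\boldsymbol 1)$ is regularly varying in $n/k$ with index $-\beta^\ast$. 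So $\log(n/r_n^2)\leq(2+\beta^\ast)\log(n/k)$ eventually, and it suffices that $\eta_n\log(n/k)\to0$.
\item Since $c_n(\boldsymbol 1)\leq G_j(n/k)$, Assumption~\ref{a:effective} gives $kG_j(n/k)/\log^2(n/k)\to\infty$ for each $j$, i.e.\ $\zeta_n\log(n/k)\to0$.
\item We then choose $a_n=\{\zeta_n\log(n/k)\}^{-1/2}\to\infty$. This gives $\eta_n\log(n/k)=\{\zeta_n\log(n/k)\}^{1/2}\to0$. Theorem~\ref{thm:consistency} now applies.
\end{itemize}

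\textbf{Main obstacle.} The delicate point is Step~3's reliance on \eqref{KMrate}: uniform relative accuracy of the marginal Kaplan--Meier estimator up to standardized level $16T$, in a region where $\overline F$ is of order $k/n$. This is taken from Proposition~\ref{prop:KM}. Its hypotheses, such as $kG_j(n/k)\to\infty$ and strict monotonicity from Assumption~\ref{a:Mstrict}, must be checked, and any slowly varying factors must be tracked in the logarithm comparison.

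One further caveat concerns identifiability. If $x^*_{C^{(j)}}<x^*_{X^{(j)}}$, Remark~\ref{rem:identifiability} applies. However, Assumption~\ref{a:Cbeta} already rules this out, so $c_n>0$ on the relevant range.
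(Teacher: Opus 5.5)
Your proposal is correct and follows the paper's proof essentially step for step:
\begin{itemize}
\item Assumption~\ref{a:FG} comes from \eqref{KMF} and from $c_n(q)=\Gamma(n/k;q^{-1})$.
\item Assumption~\ref{a:rate} comes from $r_n^2=kR^X_{n/k}(\boldsymbol 1)c_n(\boldsymbol 1)$.
\item Assumption~\ref{a:radialmodulus} comes from Assumption~\ref{a:radialKM} with $u=(k/n)/v$ and $c_j=1/q_j$.
\item Assumptions~\ref{a:plugin}--\ref{a:plugincons} come from Proposition~\ref{prop:KM} together with $\zeta_n\log(n/k)\leq\log(n/k)/\{kc_n(\boldsymbol 1)\}^{1/2}\to0$.
\end{itemize}
Your explicit choice $a_n=\{\zeta_n\log(n/k)\}^{-1/2}$ simply makes the paper's ``slowly enough'' concrete.

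One slip needs fixing. In Step~1 the inequality $R^C(q^{-1})\leq R^C(\min_jq_j^{-1}\boldsymbol 1)$ goes the wrong way: $\min_jq_j^{-1}\boldsymbol 1\leq q^{-1}$ componentwise and $R^C$ is nondecreasing, so monotonicity only gives a lower bound. However, the decay of $G^\circ(q)=R^C(q^{-1})$ as $q_j\to\infty$, with the other coordinates fixed, is stated verbatim in Assumption~\ref{a:RC}, so simply cite it. A smaller point: replacing $<$ by $\leq$ in $c_n$ uses continuity of $F_{C^{(j)}}$ together with the strict monotonicity of Assumption~\ref{a:Mstrict}, not continuity of the margins alone.
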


\begin{corollary}\label{cor:KMnorm}
Assume the hypotheses of Corollary~\ref{cor:KMcons} together with
Assumption~\ref{a:secondKM}.  If $\beta^\ast>\max_j\beta_j$, in particular if the
coordinates of $C$ are independent and at least two censoring indices are positive, then
\begin{align*}
 r_n\bigl\{\widehat F_n^\circ-F^\circ\bigr\}\ \leadsto\ F^\circ\,\mathbb G
 \qquad\text{in }\ell^\infty(K),
\end{align*}
with $\mathbb G$ as in \eqref{Glimit}.  If instead the standardization is regular
in the sense of Remark~\ref{rem:regular}, the same holds with $\mathbb G$ replaced by
$\mathbb G+\mathbb M-\mathbb M(\boldsymbol 1)$.
\end{corollary}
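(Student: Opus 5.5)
The plan is to deduce Corollary~\ref{cor:KMnorm} from Theorem~\ref{thm:normality} in the first case, and from Remark~\ref{rem:regular} in the second. All hypotheses of Corollary~\ref{cor:KMcons} are in force. Its proof already verifies Assumption~\ref{ass:structure} and Assumptions~\ref{a:plugin}--\ref{a:plugincons} for the standardization \eqref{KMpsi}. What remains is Assumption~\ref{ass:wc} and, in the first case, Assumption~\ref{a:pluginnorm}.

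\emph{Assumption~\ref{a:divergence}.} We have $F^\circ(q)=R^X(q^{-1})/R^X(\boldsymbol 1)$, which is homogeneous of order $-1$ by \eqref{homog}. Hence $F^\circ(vq)=v^{-1}F^\circ(q)$.

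For $G^\circ$, note that $c_n(q)=\Gamma(n/k;q^{-1})$ by the definition of $\psi_{j,n}$. Assumption~\ref{a:RC} then gives $G^\circ(q)=R^C(q^{-1})$, which by \eqref{homogC} is homogeneous of order $-\beta^\ast$. Therefore $F^\circ(vq)G^\circ(vq)=v^{-1-\beta^\ast}F^\circ(q)G^\circ(q)$. By continuity and positivity, the infimum over a compact $K'$ of $F^\circ G^\circ$ is positive, so this quantity diverges as $v\downarrow0$.

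\emph{Assumption~\ref{a:bias}.} By \eqref{pnKM}, $p_n(q)/p_n(\boldsymbol 1)=R^X_{n/k}(q^{-1})/R^X_{n/k}(\boldsymbol 1)$. Decompose the difference from $R^X(q^{-1})/R^X(\boldsymbol 1)$ as
\[
\{R^X_{n/k}(q^{-1})-R^X(q^{-1})\}/R^X_{n/k}(\boldsymbol 1)+R^X(q^{-1})\{R^X(\boldsymbol 1)-R^X_{n/k}(\boldsymbol 1)\}/\{R^X_{n/k}(\boldsymbol 1)R^X(\boldsymbol 1)\}.
\]
For a compact $\widetilde K$, the set $\{q^{-1}:q\in\widetilde K\}\cup\{\boldsymbol 1\}$ lies in some $[a,b]^d$, and $R^X_{n/k}(\boldsymbol 1)\to R^X(\boldsymbol 1)>0$. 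Assumption~\ref{a:secondKM} therefore makes both terms $o(r_n^{-1})$ uniformly on $\widetilde K$.

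\emph{Assumption~\ref{a:pluginnorm}.} This is the main obstacle. By \eqref{KMrate}, Assumption~\ref{a:plugin} holds with $\eta_n=a_n\zeta_n$ for any $a_n\to\infty$. It therefore suffices to show $r_n\zeta_n\log(n/r_n^2)\to0$ polynomially fast and then take $a_n$ diverging slowly enough.

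First, $\log(n/r_n^2)=-\log\{p_n(\boldsymbol 1)c_n(\boldsymbol 1)\}$, which is $O(\log(n/k))$ because $p_n(\boldsymbol 1)\sim (k/n)R^X(\boldsymbol 1)$ and $c_n(\boldsymbol 1)=\Gamma(n/k;\boldsymbol 1)$ is regularly varying in $n/k$. Next, by \eqref{freeplugin},
\[
r_n^2\zeta_n^2\sim R^X(\boldsymbol 1)\,\Gamma(n/k;\boldsymbol 1)/\min_jG_j(n/k).
\]
This ratio is regularly varying in $y=n/k$ with index $-(\beta^\ast-\max_j\beta_j)<0$. Hence it is $O(y^{-\epsilon})$ for some $\epsilon>0$, and so $r_n\zeta_n\log(n/k)\to0$. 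Choosing $a_n=\{r_n\zeta_n\log(n/k)\}^{-1/2}$ gives Assumption~\ref{a:pluginnorm}, and Theorem~\ref{thm:normality} yields the first claim.

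For the example of independent censoring coordinates, $\beta^\ast=\sum_j\beta_j$, which exceeds $\max_j\beta_j$ whenever at least two $\beta_j$ are positive. For the second claim, in the regular case we invoke Remark~\ref{rem:regular}: its argument uses only Assumptions~\ref{ass:structure}, \ref{a:plugin}--\ref{a:plugincons} and~\ref{ass:wc}, all of which were verified above. The only care needed there is that the joint convergence with the oracle term is exactly what regularity supplies.
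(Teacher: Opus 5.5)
Your proof is correct and follows essentially the paper's route: you verify Assumption~\ref{a:bias} through the same ratio decomposition under Assumption~\ref{a:secondKM}, obtain Assumption~\ref{a:pluginnorm} from the regular variation of $r_n^2\zeta_n^2=R^X_{n/k}(\boldsymbol 1)\,c_n(\boldsymbol 1)/\min_jG_j(n/k)$ with negative index $-(\beta^\ast-\max_j\beta_j)$ together with a slowly diverging $a_n$, and then apply Theorem~\ref{thm:normality} or Remark~\ref{rem:regular}. The only difference is organizational: you re-verify Assumption~\ref{a:divergence} by homogeneity here, whereas the paper does this inside the proof of Corollary~\ref{cor:KMcons}.
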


\newpage
\section{Supplementary numerical studies}\label{supp:numerics}
This section provides additional numerical studies.
Section~\ref{supp:sim-km} analyzes the Kaplan--Meier marginal standardization under
unequal, non-regularly varying margins and the quality of the Gaussian approximation.
Sections~\ref{sec:addsim1} and~\ref{sec:addsim2} study the finite-sample contribution of the
estimated marginal standardization and the effective-sample-size-scaled error as a
function of dimension.  Section~\ref{sec:addsim3} departs from the prime case of
Section~\ref{sec:simulations} of the paper in four separate ways, changing one feature
at a time and holding the rest fixed.  Throughout, the data-generating mechanisms, the direction
grids and the number $B=500$ of replications are same as in the main paper.

\subsection{Arbitrary margins and Gaussian approximation}\label{supp:sim-km}
The goal of this study is to analyze the Kaplan--Meier marginal
standardization for unequal, non-regularly varying margins, and to study
the quality of the
plug-in Gaussian approximation for finite-sample inference. The two event margins are Weibull with shapes
$(0.75,1.5)$ and scales $(1,1.4)$, and the
event copula has $\theta=2$.  The censoring margin has the same shape as the
corresponding event margin and scale
$\sigma_{C,j}=\sigma_j\{c_j/(1-c_j)\}^{-1/\alpha_j}$; this guarantees that the tail censoring fraction
is precisely $c_j$.  We consider
$(c_1,c_2)=(0.10,0.10),(0.25,0.25),(0.10,0.40)$, with $n=8000$, $k=800$, and 25
equally spaced values of $a\in[-\log(2.5),\log(2.5)]$ with
$(s,t)=\bigl(e^{a_+},e^{(-a)_+}\bigr)$.  The marginal Kaplan--Meier standardization
\eqref{KMpsi} is required, so the target is
$F^\circ(s,t)=R_2(s^{-1},t^{-1})/R_2(1,1).$

Pointwise confidence intervals are constructed from \eqref{greenwoodCI}, with $\alpha=0.05$ and the
plug-in variance estimator of Proposition~\ref{prop:variance}. The results are provided in Figure~\ref{fig:sim-directional} and show that the Kaplan--Meier standardization is able to recover
the target throughout the three censoring regimes.  At $s/t =0.54$, for
example, the true value is $0.691$ and the mean directional estimates are
$0.681$, $0.682$, and $0.682$, whereas the naive means are $0.635$, $0.557$, and $0.453$.
Under asymmetric censoring the naive error changes sharply with direction.  Pointwise
coverage over the grid ranges from $0.940$ to $0.996$.  Averaged over that
grid, the variance estimator \eqref{greenwoodvariance} reproduces the Monte Carlo spread of
$\log\widehat F_n^\circ$ to within $1.2\%$ for the estimator built from the known
Weibull survival functions. However, the Kaplan--Meier standardization makes the
estimator less variable, by $21$, $14$, and $13$ percent
as the censoring goes from light to moderate to asymmetric.  By
Theorem~\ref{thm:normality} the two are equivalent in the limit, so the interval is wide in finite samples and in this setting.

\begin{figure}[]
 \centering
 \includegraphics[width=\textwidth]{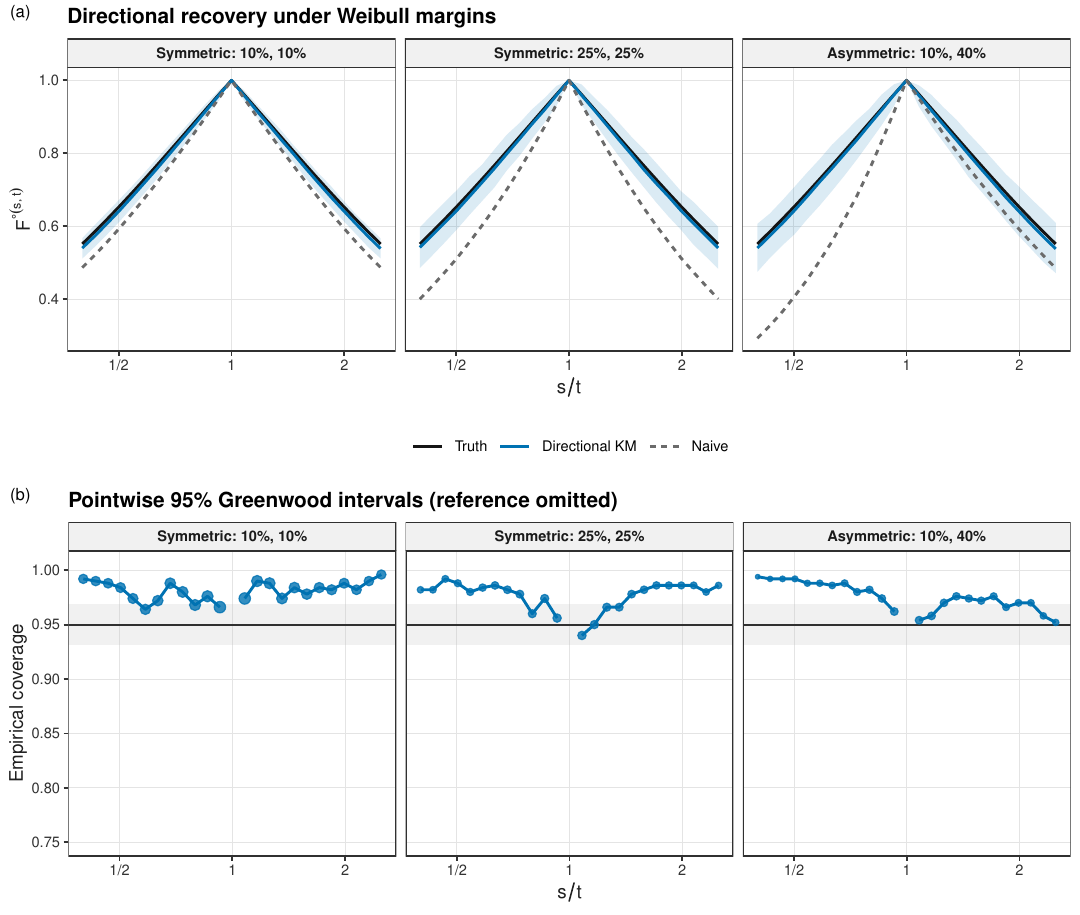}
 \caption{Weibull margins and Kaplan--Meier standardization.  In panel~(a), we show the truth,
 the mean of the directional estimator, the pointwise empirical
 10th--90th percentile band of its estimates, and the naive mean.  In panel~(b),
 the empirical coverage of the Gaussian 95\% intervals
 \eqref{greenwoodCI} is plotted. Here, point size is proportional to the mean directional joint count.
 The horizontal band is $0.95\pm1.96\{0.95(0.05)/500\}^{1/2}$, corresponding to the pointwise binomial
 Monte Carlo range under exact coverage.}
 \label{fig:sim-directional}
\end{figure}

\subsection{Plug-in estimation and oracle standardization}\label{sec:addsim1}

The goal of this study is to study in more detail how the estimated marginal standardization affects estimation.  Let  the oracle estimator $\widehat F_{b,\mathrm{or}}^\circ$
be the directional product limit computed with the known Weibull survival functions,
every other step unchanged, and let $\widehat r_b$ be the rate estimator of
Proposition~\ref{prop:rateest} in replication $b$.  We consider
\begin{align*}
 D_b=\widehat r_b\sup_{q\in{\cal Q}}
 \bigl|\widehat F_b^\circ(q)-\widehat F_{b,\mathrm{or}}^\circ(q)\bigr|.
\end{align*}
This is the sup-norm separation between the plug-in and oracle estimators on the
$25$-point direction grid, blow up by the joint asymptotic scale.  The
results are provided in panel~(a) of Figure~\ref{fig:sim-plugin} and show medians
$0.63$, $0.66$, and $0.69$ in the light, moderate, and asymmetric settings, with 90th
percentiles $0.95$, $1.00$, and $1.10$.  Hence, at $n=8000$ and $k=800$,
marginal estimation contributes a visible finite-sample error, and the error does not
increase under asymmetric censoring.

However, under Assumption~\ref{a:pluginnorm} this error vanishes asymptotically, making the
plug-in and oracle estimators differ by only $o_{\P}(r_n^{-1})$, and in particular $D_b$ should vanish as
the sample grows.  The theoretical rate at which it does is worth recalling. Combining
\eqref{KMrate} with Assumption~\ref{a:rate} gives
\begin{align}
 r_n\zeta_n=\Bigl\{R^X(\boldsymbol 1)\,{c_n(\boldsymbol 1)\over\min_jG_j(n/k)}\Bigr\}^{1/2}
 \ =\ \bigl\{R^X(\boldsymbol 1)\textstyle\prod_{j\neq j^\ast}G_j(n/k)\bigr\}^{1/2},
 \qquad j^\ast=\operatorname{arg\,min}_jG_j(n/k),
 \label{plugingrowth}
\end{align}
the second expression holding because the censoring coordinates are independent in
this study, and the error depends on $n$ only through $n/k$.  Take for example $k=\lceil n^{3/4}\rceil$, gives returns $k=846$ at $n=8000$ which aligns roughly to the setting of Section~\ref{supp:sim-km}.  The results fir growing $n$ are provided in panel~(b) and
show the median error falling from $0.699$, $0.732$, and $0.752$ at $n=2000$ to
$0.547$, $0.494$, and $0.558$ at $n=128,000$, with the 90th percentiles falling from
$1.02$, $1.12$, and $1.16$ to $0.85$, $0.74$, and $0.95$.  The decline is monotone in
every regime, but surprisingly rather slow. So although the standardized and oracle directional estimators are aymptotically equivalent, reaching their equivalence may be slow in setttings as the one considered here.

\begin{figure}[]
 \centering
 \includegraphics[width=0.84\textwidth]{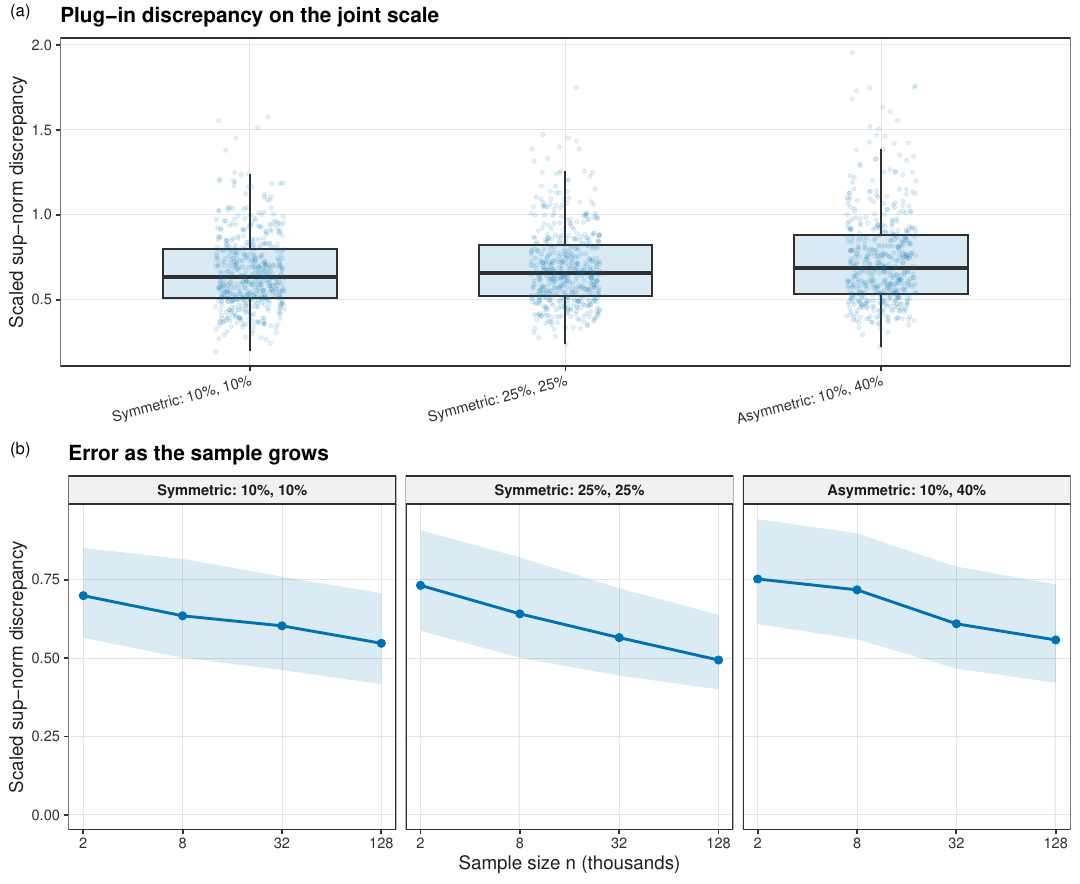}
 \caption{Finite-sample contribution of the Kaplan--Meier marginal standardization in
 Section~\ref{supp:sim-km}.  The oracle uses the known Weibull survival functions, and $D_b$ is the
 sup-norm plug-in versus oracle error on the same 25-point grid as in Figure~\ref{fig:sim-directional},
 multiplied by the square root of the realized joint count.  In panel~(a), at the
 $n=8000$ and $k=800$ of Section~\ref{supp:sim-km}, boxes give medians and interquartile ranges over 500
 replications, whiskers use the $1.5$ interquartile-range rule, and points
 show all replications.  In panel~(b), the median of $D_b$ is plotted against $n$ at
 $k=\lceil n^{3/4}\rceil$, the band being the interquartile range over 500 replications
 at each sample size.}
 \label{fig:sim-plugin}
\end{figure}

\subsection{Inference and scaled error across dimension}\label{sec:addsim2}
We now study whether the effective rate $r_n$ summarizes the effect of dimension
for inference and for error. The first was studied for $d=2$ in
Section~\ref{supp:sim-km} and the second only in aggregate form in the second study of the paper.

For the inference part, we repeat the matched joint count setting at $d=2,\ldots,10$ with the
intervals \eqref{greenwoodCI} at $\alpha=0.05$, and provide the coverage at each
direction of the $21$-point path grid.  Matching holds the realized joint count at the
reference between $100.5$ and $101.9$ across the nine dimensions.  The results
are provided in Figure~\ref{fig:sim-coverage}.  Averaged over the grid, coverage lies
between $0.930$ and $0.950$ with no trend in $d$, so matching the joint count stabilizes
the inferential problem as well.  Departure from nominal is observed along
direction rather than dimension.  It is concentrated at the smallest $\lambda$,
where coverage falls to between $0.816$ and $0.898$, and it has gone by
$\lambda=0.25$, beyond which coverage averages $0.9459$ in every dimension.  The reason
is that $\mathbb G(q)$ vanishes as
$q\to\boldsymbol 1$, so the interval collapses there, its mean width being $0.075$ at
$\lambda=0.05$ against $0.195$ at $\lambda=1$, and the Gaussian approximation is hardest exactly where 
differences are smallest.  The three paths separate this clearly;  for instance, at $\lambda=0.05$ and
$d\geq4$, the path holding $d-1$ coordinates at the reference and departing from it
in one coordinate only, has mean coverage $0.853$ against $0.921$ and $0.925$ for the
other two.

\begin{figure}[]
 \centering
 \includegraphics[width=\textwidth]{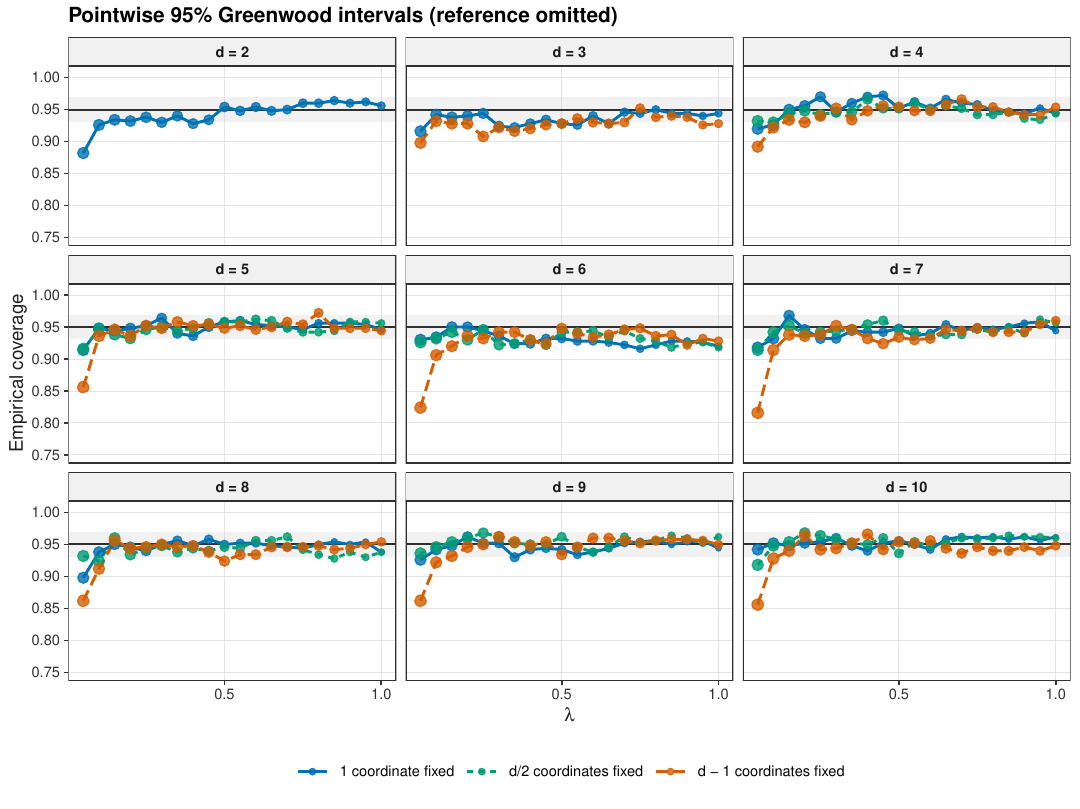}
 \caption{Pointwise empirical coverage of the intervals \eqref{greenwoodCI} along the
 $21$-point path grid, under the matched joint count setting of the second study of the paper with the
 realized count at the reference held near $100$ in every dimension.  Three paths are shown, which correspond to holding $1$, $\lfloor d/2\rfloor$ and
 $d-1$ coordinates at the reference, and coincide when $d=2$; point size is proportional
 to the mean directional joint count, and the reference $\lambda=0$ is omitted.  The horizontal band is
 $0.95\pm1.96\{0.95(0.05)/500\}^{1/2}$, the pointwise binomial range under
 exact coverage.}
 \label{fig:sim-coverage}
\end{figure}

For the error, we revisit the deterioration analysis as a function of $d$ found in the second study of the paper by providing the unaggregated version.  Let
$\widehat r_{b,d}$ be the rate
estimator of Proposition~\ref{prop:rateest} for replication $b$ and dimension $d$, and
define
\begin{align*}
 T_{b,d}=\widehat r_{b,d}\,E_b({\cal Q}_d),\qquad d=2,\ldots,10.
\end{align*}
Because the first $d$ coordinates of the same $10$-dimensional sample are used at every
dimension, joining $T_{b,d}$ across $d$ gives a within-replication trajectory.  The results are provided in
Figure~\ref{fig:sim-scaled}, which contains all $500$ trajectories and whose
dimension-wise means range from $0.412$ to $0.454$ under common $k$ and from $0.418$ to
$0.453$ under matched joint counts.  Location and also spread are stable in $d$,
which supports that the effective rate $r_n$, which is consistently estimated by $\widehat r_n$, is
the main driver of the behavior in Figure~\ref{fig:sim-dimension} of the main
paper, over the
range of dimensions studied. 

In conclusion, once the joint count is held fixed, dimension keeps both the error
and the coverage of the intervals at similar levels; what remains of the
departure from nominal level appears to be a property of the direction, and not intrinsically of $d$.

\begin{figure}[H]
 \centering
 \includegraphics[width=0.80\textwidth]{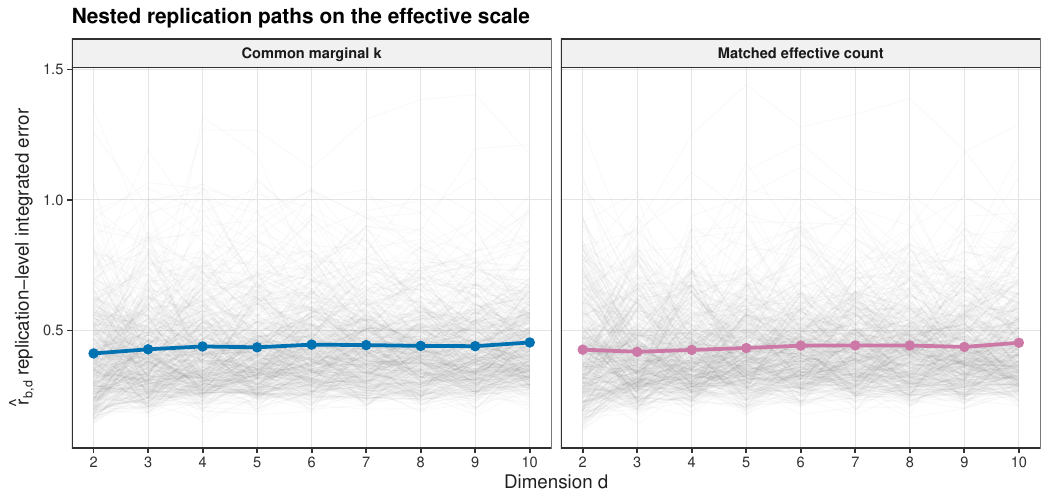}
 \caption{Replication-wise (unaggregated) scaled errors $T_{b,d}$ from the second study of the main paper for every
 $d=2,\ldots,10$, evaluated on the 21-point path grid.  The left and right panels use,
 respectively, the common-$k$ and the matched joint count settings.  Each thin gray line provides errors from one nested replication, and the thick line is the
 dimension-wise mean over all 500 replications.}
 \label{fig:sim-scaled}
\end{figure}

\subsection{Departures from the prime case}\label{sec:addsim3}
We explore how the conclusions of Section~\ref{sec:simulations} change when there are departures from the main hypotheses. Throughout, $d=2$ unless
stated otherwise, $n=12000$, the event margins are the Burr margins of the first study of the paper with
$(\gamma_1,\gamma_2)=(0.2,0.4)$, the multiplicative standardization is
used with the thresholds \eqref{simthreshold}, and the direction grid, the error
criterion \eqref{irmse} and the $B=500$ replications are those of the first study of the paper.  The results
are collected in Figure~\ref{fig:sim-robustness}, and Table~\ref{tab:sim-robustness} provides additional details. Namely, it provides the integrated RMSE, the
realized joint count, the scaled error $\widehat r_n\times\operatorname{IRMSE}$, and the
coverage of the intervals \eqref{greenwoodCI}.  In~(a) and~(c) the scaled error and the coverage are both flat, so the
departure acts mainly on $r_n$; in~(b) and~(d) they move together, the
scaled error rising from $0.446$ to $0.923$ and from $0.511$ to $0.906$ while coverage
falls from $0.944$ to $0.859$ and from $0.950$ to $0.892$. Below we provide more details for each case.

\medskip

\noindent\textit{(a) Asymptotically dependent censoring coordinates.}  This case is
covered by Assumption~\ref{a:RC}; what changes is that $\beta^\ast=\max_j\beta_j$.  The
multiplicative standardization used here still gives $\mathbb M\equiv0$ by
Corollary~\ref{cor:RVnorm}, whatever the dependence among the coordinates of $C$; it is
under the Kaplan--Meier standardization that the limit has the additional term
$\mathbb M-\mathbb M(\boldsymbol 1)$ of Remark~\ref{rem:regular}.  We consider the censoring vector a Gumbel copula
\eqref{gumbelcopula} with parameter $\theta_C\in\{1,1.25,1.5,2,2.5,3,4\}$, $\theta_C=1$ being the
prime case, keeping the censoring margins and hence the marginal censoring fractions
$c\in\{0.25,0.40\}$ unchanged.  The target $F^\circ$ is unchanged as well, since it
depends only on $X$.  Raw error falls as $\theta_C$ grows, from $0.075$ to $0.046$ at
$c=0.25$ and from $0.082$ to $0.044$ at $c=0.40$, but this is an effect on the
effective sample size, which rises from $45.8$ to $110.4$ and from $50.6$ to $178.8$
respectively, because dependent censoring makes the two coordinates tend to be censored
together, so that more observations survive in both at once.  The scaled
error $\widehat r_n\times\operatorname{IRMSE}$, plotted in panel~(a), is flat in
$\theta_C$,
lying in $[0.474,0.508]$ at $c=0.25$ and in $[0.549,0.586]$ at $c=0.40$.  Coverage of
the intervals \eqref{greenwoodCI} stays between $0.933$ and $0.961$ throughout.  For this setting,
violating independence in finite samples changes drastically the rate $r_n$ (which is also explicit in the asymptotic theory) but accuracy is maintained.

\medskip

\noindent\textit{(b) Heavier censoring.}  In the main text we compared the directional estimator with
the bivariate Dabrowska-type estimator of \citet{BladtGoegebeurGuillou2026bS} at
$c\leq0.40$ and found that they performed similarly.  We extend the comparison to
$c\in\{0.10,0.25,0.40,0.50,0.60\}$, keeping the nominal count at $50$.  The
integrated RMSE of the directional estimator is $0.074$, $0.077$, $0.085$, $0.096$ and
$0.118$; that of the Dabrowska-type estimator is $0.073$, $0.076$, $0.089$, $0.108$ and
$0.148$.  The ratios are given in panel~(b) and show the two to be virtually indistinguishable up
to $c=0.25$, with the directional estimator ahead by $5\%$, $12\%$, and $26\%$ at
$c=0.40$, $0.50$, and $0.60$.  Heavy censoring is nonetheless the one departure that the effective rate $r_n$ does not account for, in the sense that the scaled error increases from $0.446$ to
$0.923$ and the coverage deteriorates from $0.944$ to $0.859$ over the same range, so the
gain is only relative.  The mathematical explanation is that the
Dabrowska-type estimator needs a bivariate risk set to be nonempty over a grid of the
quadrant, and heavy censoring thins out parts of that grid; on the other hand, the directional estimator
only needs a one-dimensional risk set along a ray, and that risk set is thinned out through the single variable $D_{i,n}(q)$.  For reference, the naive
estimator is worse by factors $1.03$, $1.68$, $2.35$, $2.60$, and $2.58$.

\medskip

\noindent\textit{(c) Asymptotic independence of the $X$ vector.}  Assumption~\ref{a:RX} with
$R^X(\boldsymbol 1)>0$ excludes asymptotic independence.  We now replace the Gumbel event
copula by a Gaussian one with correlation $\rho\in\{0.75,0.65,0.55,0.5\}$, for which
$R^X\equiv0$, and weaken the Gumbel copula to $\theta\in\{1.75,1.5,1.25\}$ for
comparison; $c=0.25$ and $k=600$ in all eight settings.  When $R^X\equiv0$ the limit $F^\circ$ does
not exist, but the finite-sample ratio $p_n(q)/p_n(\boldsymbol 1)$ does, and it is
computable in closed form for Burr margins and either copula.  Measured against it, the mean errors over the non-reference grid
lie between $-0.0017$ and $0.0054$ across the eight settings, so the estimator remains
essentially unbiased for its modified target under asymptotic independence.  What naturally
collapses is the expected joint count, falling from
$42.8$ at $\theta=2$ to $17.6$ at $\rho=0.5$, and the integrated RMSE rising
correspondingly from $0.077$ to $0.127$.  For reference, in the four Gumbel settings the finite-sample
ratio and the limit $F^\circ$ differ over the grid by at most $0.006$ to $0.023$, and
the errors measured against either agree to three decimal places.  Thus, we see that asymptotic independence changes the rate, and hence effective sample size, but does not break down estimation.  Indeed, Panel~(c) shows integrated
errors falling in order of the joint count,
and only slightly in order of the dependence class; the scaled error stays between
$0.501$ and $0.543$ and the coverage between $0.941$ and $0.949$.

\medskip

\noindent\textit{(d) Dimension at heavier censoring.}  In the second study of the paper we used $5\%$ marginal tail
censoring.  We repeat its matched-joint-count setting now at $c=0.25$ for
$d\in\{2,4,6,8,10\}$, with $n=10,000$, standard Pareto margins of index $1/2$,
copula parameter $\theta=3$ for $X$, the path grid of the second study of the paper and $k$ chosen so that the
expected joint count is $100$ in every dimension.  The results are given in panel~(d), where the directional integrated RMSE rises
from $0.051$ at $d=2$ to $0.084$ at $d=10$ and the naive one from $0.128$ to $0.316$,
so the ratio between them widens from $2.5$ to $3.8$.  The realized joint counts lie
between $102$ and $116$, and the scaled error $\widehat r_n\times
\operatorname{IRMSE}$ rises from $0.51$ to $0.91$.  Unlike in
Section~\ref{sec:addsim2}, where censoring was light and the scaled error was flat,
matching the joint count at the reference direction no longer removes the
dimensional effect totally.  The reason is that the count is matched at $q=\boldsymbol 1$ only,
while the error is integrated over directions at which several coordinate thresholds have been
increased.  By \eqref{Hn} and Assumption~\ref{a:Cbeta} the risk at such a $q$ carries the extra
factor $\prod_jq_j^{-\beta_j}$ with $\beta_j=c/(1-c)$, which is $1/3$ here and $1/19$
at the $5\%$ censoring of the second study of the paper, so each coordinate increased to $q_j=2$ thins out the
directional risk set by $2^{-1/3}=0.79$ in the first case and by $2^{-1/19}=0.96$ in
the second, which is why the effect is more visible here.  The coverage
also deteriorates, from $0.950$ at $d=2$ to $0.892$ at $d=10$.  At $c=0.40$ the same
study should not be run.  Matching the joint count holds
$\prod_jG_j(n/k)=(k/n)^{d\beta}$ against $k$, and $\beta$ increases to $2/3$, so the
required tail fraction $k/n$ reaches $0.32$ at $d=4$ and $0.59$ at $d=10$, which is no longer
a tail. A larger sample does not fix this effect as $d$ grows, since censoring increases geometrically in the dimension, restricting the available tail information; it does, however, fix it for a given $d$.

\begin{figure}[]
 \centering
 \includegraphics[width=\textwidth]{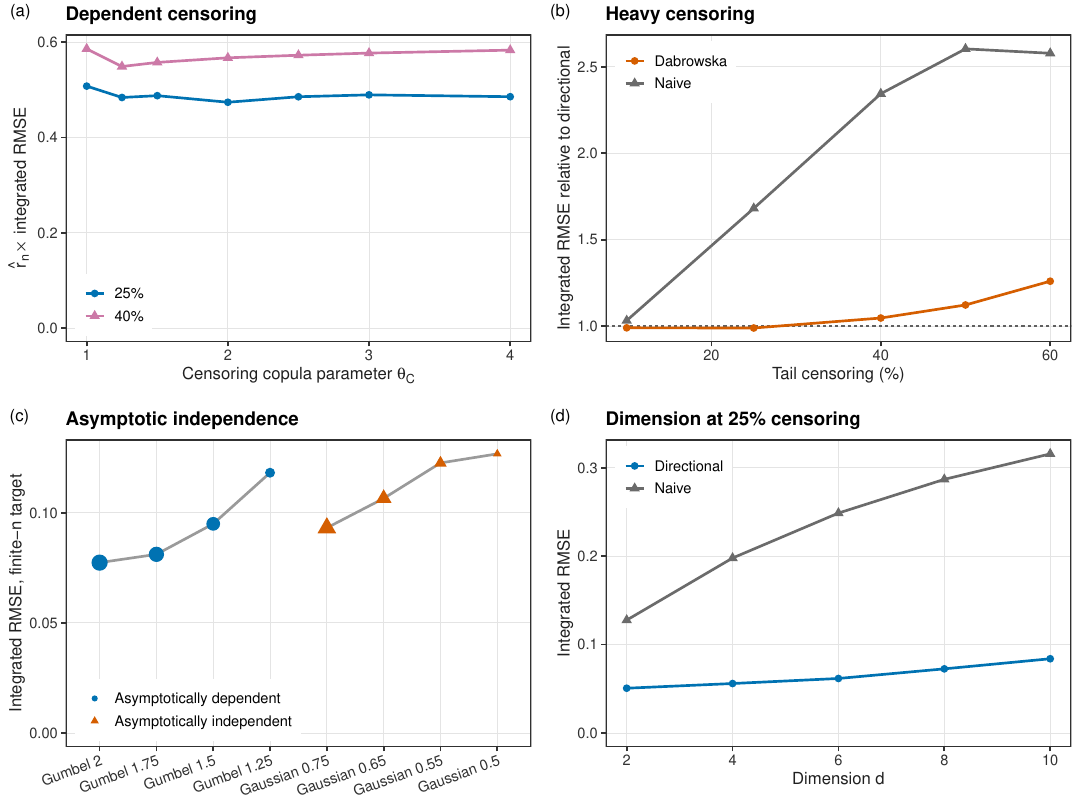}
 \caption{Departures from the prime case.  In panel~(a), the scaled error
 $\widehat r_n\times\operatorname{IRMSE}$ of the directional estimator is plotted as the
 censoring coordinates are made dependent through a Gumbel copula with parameter
 $\theta_C$, at two marginal censoring levels.  In panel~(b), the integrated RMSE of
 the Dabrowska-type and naive estimators are shown relative to the directional one as the tail
 censoring fraction increases.  In panel~(c), the integrated RMSE is plotted against the finite-$n$
 ratio for the eight dependence settings, ordered by decreasing tail dependence, with
 point size proportional to the realized joint count.  In panel~(d), the integrated RMSE of the directional
 and naive estimators are plotted against dimension, at $25\%$ marginal tail censoring and matched
 joint counts.}
 \label{fig:sim-robustness}
\end{figure}

\begin{table}[H]
 \centering
 \small
 \begin{tabular}{lrrrr}
 \hline
 & IRMSE & $\widehat r_n^2$ & $\widehat r_n\times$IRMSE & Coverage\\
 \hline
 \multicolumn{5}{l}{\textit{(a) Dependent censoring coordinates}}\\
 \quad $c=0.25$, $\theta_C=1$ & $0.0751$ & $45.8$ & $0.508$ & $0.945$\\
 \quad $c=0.25$, $\theta_C=1.25$ & $0.0607$ & $63.7$ & $0.484$ & $0.953$\\
 \quad $c=0.25$, $\theta_C=1.50$ & $0.0562$ & $75.3$ & $0.488$ & $0.953$\\
 \quad $c=0.25$, $\theta_C=2$ & $0.0499$ & $90.4$ & $0.474$ & $0.959$\\
 \quad $c=0.25$, $\theta_C=2.50$ & $0.0488$ & $99.0$ & $0.486$ & $0.956$\\
 \quad $c=0.25$, $\theta_C=3$ & $0.0480$ & $103.9$ & $0.490$ & $0.953$\\
 \quad $c=0.25$, $\theta_C=4$ & $0.0462$ & $110.4$ & $0.486$ & $0.954$\\
 \quad $c=0.40$, $\theta_C=1$ & $0.0824$ & $50.6$ & $0.586$ & $0.933$\\
 \quad $c=0.40$, $\theta_C=1.25$ & $0.0583$ & $88.7$ & $0.549$ & $0.958$\\
 \quad $c=0.40$, $\theta_C=1.50$ & $0.0527$ & $112.0$ & $0.558$ & $0.960$\\
 \quad $c=0.40$, $\theta_C=2$ & $0.0479$ & $140.3$ & $0.567$ & $0.950$\\
 \quad $c=0.40$, $\theta_C=2.50$ & $0.0458$ & $156.6$ & $0.573$ & $0.959$\\
 \quad $c=0.40$, $\theta_C=3$ & $0.0447$ & $166.8$ & $0.577$ & $0.954$\\
 \quad $c=0.40$, $\theta_C=4$ & $0.0436$ & $178.8$ & $0.584$ & $0.961$\\
 \multicolumn{5}{l}{\textit{(b) Heavier censoring}}\\
 \quad $c=0.10$ & $0.0742$ & $36.2$ & $0.446$ & $0.944$\\
 \quad $c=0.25$ & $0.0765$ & $44.9$ & $0.513$ & $0.954$\\
 \quad $c=0.40$ & $0.0847$ & $50.5$ & $0.602$ & $0.928$\\
 \quad $c=0.50$ & $0.0963$ & $54.2$ & $0.709$ & $0.908$\\
 \quad $c=0.60$ & $0.1179$ & $61.4$ & $0.923$ & $0.859$\\
 \multicolumn{5}{l}{\textit{(c) Asymptotic independence}}\\
 \quad Gumbel $2$ & $0.0774$ & $42.8$ & $0.506$ & $0.944$\\
 \quad Gumbel $1.75$ & $0.0811$ & $38.1$ & $0.501$ & $0.949$\\
 \quad Gumbel $1.50$ & $0.0950$ & $31.5$ & $0.533$ & $0.943$\\
 \quad Gumbel $1.25$ & $0.1182$ & $20.9$ & $0.541$ & $0.945$\\
 \quad Gaussian $0.75$ & $0.0932$ & $32.1$ & $0.528$ & $0.941$\\
 \quad Gaussian $0.65$ & $0.1065$ & $25.2$ & $0.535$ & $0.942$\\
 \quad Gaussian $0.55$ & $0.1226$ & $19.6$ & $0.543$ & $0.941$\\
 \quad Gaussian $0.50$ & $0.1268$ & $17.6$ & $0.532$ & $0.945$\\
 \multicolumn{5}{l}{\textit{(d) Dimension at $25\%$ censoring}}\\
 \quad $d=2$ & $0.0507$ & $101.7$ & $0.511$ & $0.950$\\
 \quad $d=4$ & $0.0560$ & $105.0$ & $0.574$ & $0.945$\\
 \quad $d=6$ & $0.0617$ & $108.9$ & $0.644$ & $0.942$\\
 \quad $d=8$ & $0.0726$ & $112.6$ & $0.770$ & $0.928$\\
 \quad $d=10$ & $0.0840$ & $116.3$ & $0.906$ & $0.892$\\
 \hline
 \end{tabular}
 \caption{Additional diagnostics for the four departures.  The integrated RMSE is
 measured against $F^\circ$, except in~(c) where no limit exists and it is measured
 against the finite-$n$ ratio $p_n(q)/p_n(\boldsymbol 1)$.  The joint count is the
 realized $\widehat r_n^2$ at the reference, and the coverage is that of
 \eqref{greenwoodCI} at $\alpha=0.05$ over the non-reference grid, each averaged over
 the $500$ replications.}
 \label{tab:sim-robustness}
\end{table}

\newpage
\section{Proofs of the main results}\label{sec:proofs}
\subsection{Preliminaries}
\emph{On the dimension.} The statements in the main paper are made in an arbitrary fixed
dimension $d$, while the proofs below are written for $d=2$, which keeps proofs more readable.  The substitutions that make them general are mostly
routine, apart from the points provided below.

First, put $I(\delta)=\{j:\delta^{(j)}=1\}$ and abbreviate
$A_j=A^{(j)}_{i,n}(q)$.  On every fixed indicator vector $\delta\neq\boldsymbol0$, the exact
failure-band identity is given by
\begin{align}
 &\{\delta_i=\delta,\ \Delta_{i,n}(q)=1,\ a<W_{i,n}(q)\leq b\}\nonumber\\
 &\quad=\{\delta_i=\delta,\ A_h>a\ \text{for all }h\}
 \cap\bigcup_{j\in I(\delta)}
 \{A_j\leq b,\ A_j\leq A_l\ \text{for all }l\notin I(\delta)\}.
 \label{failured}
\end{align}
Thus all directional ties are retained, and comparisons are needed only between
opposite indicators.  Each set in the finite union is cut out, on a fixed indicator, by finitely
many thresholds in the coordinates $\psi_{h,n}(Z^{(h)})$ and the ratios
$\psi_{j,n}(Z^{(j)})/\psi_{l,n}(Z^{(l)})$.  For fixed $d$, finite intersections and
finite unions of these one-parameter threshold classes have a VC index depending only
on $d$; hence the argument preceding \eqref{VC} and the maximal bound
\eqref{maximalVC} continue to hold with constants depending only on $d$.  Moreover,
because the ratio comparisons in \eqref{failured} are only between opposite indicators,
their limit boundaries compare an event coordinate with a censoring coordinate and
are null by the independence of $X$ and $C$ and continuity of the margins.

Second, the common envelope becomes
$E_\tau=\{\psi_{j,n}(Z^{(j)})\geq\tau/T^{d-1},\ j\leq d\}$, since
$\psi_{j,n}(Z^{(j)})>\tau$ together with $d-1$ ratio bounds by $T$ forces every other
coordinate past $\tau/T^{d-1}$, and \eqref{envelopemass} then holds with
$F^\circ G^\circ$ evaluated at $(\tau/T^{d-1})\boldsymbol 1$.

Third, assign, to each failure at a tied value, the smallest uncensored
coordinate attaining the minimum.  Marginal continuity makes the transformed values
distinct across observations within each coordinate, so $d_\ell(q)\leq d$; replacing
$2$ by the fixed number $d$ in \eqref{tietwo}--\eqref{KMNA} leaves the
$O_{\P}(r_n^{-2})$ remainder unchanged.

Fourth, in the corollaries of Section~\ref{supp:mda} the criterion
$\beta^\ast>\max_j\beta_j$ for the plug-in to be negligible is in terms of the
joint censoring index and therefore already dimension free.

\medskip

The following lemma bounds the $\nu_n$-mass of mixed-indicator events which are almost tied.

\begin{lemma}\label{lem:nearties}
Let $\eta\in(0,1]$, $q\in[1/4,16T]^2$ and $0<v\leq2$.  Then
\begin{align}
 \nu_n\Bigl\{\bigl|\log{A^{(1)}_{1,n}(q)/A^{(2)}_{1,n}(q)}\bigr|\leq\eta,\
 \delta^{(1)}_1\neq\delta^{(2)}_1,\ v\leq W_{1,n}(q)\leq2\Bigr\}
 \ \leq\ C\,\eta\,H_n(v;q),
 \label{bandmass}
\end{align}
with $C=2\rho e^{\rho}$ and $\rho$ as in Assumption~\ref{a:radialmodulus}.
\end{lemma}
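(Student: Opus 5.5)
The approach is to turn the mixed-indicator near-tie event into an event about the independent pair $(T_{1,n}(q),D_{1,n}(q))$ from \eqref{reduction}. Then the only law that has to be controlled is that of the product $p_n(\cdot\,q)c_n(\cdot\,q)$ along the ray. This is exactly the object in Assumption~\ref{a:radialmodulus}, which in the form $H_n(u;q)\le A^\rho H_n(Au;q)$ is a Potter bound, equivalent to the monotonicity of $u\mapsto u^\rho H_n(u;q)$.

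\emph{Step 1: reduction to a near-tie of $T$ and $D$.} Take the case $\delta^{(1)}_1=1$, $\delta^{(2)}_1=0$; the other case is symmetric. Then $A^{(1)}_{1,n}(q)=\psi_{1,n}(X^{(1)}_1)/q_1$ and $A^{(2)}_{1,n}(q)=\psi_{2,n}(C^{(2)}_1)/q_2$. Two bounds follow:
\begin{itemize}[leftmargin=2em]
\item $T_{1,n}(q)\le A^{(1)}_{1,n}(q)$ and $D_{1,n}(q)\le A^{(2)}_{1,n}(q)$;
\item $T_{1,n}(q)\ge W_{1,n}(q)$ and $D_{1,n}(q)\ge W_{1,n}(q)$.
\end{itemize}
On the event in \eqref{bandmass} we have $\max(A^{(1)},A^{(2)})\le e^{\eta}W_{1,n}(q)$. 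Hence the event is contained in
\begin{align*}
 E_\eta=\bigl\{W_{1,n}(q)\ge v,\ \ \max\{T_{1,n}(q),D_{1,n}(q)\}\le e^{\eta}\min\{T_{1,n}(q),D_{1,n}(q)\}\bigr\}.
\end{align*}
The condition $W\le2$ can be discarded. In $E_\eta$ the indicator structure has disappeared, and $T$ and $D$ are independent with survival functions $p_n(\cdot\,q)$ and $c_n(\cdot\,q)$.

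\emph{Step 2: complement decomposition.} Write $E_\eta$ as $\{W\ge v\}$ minus the two disjoint sets $\{T>e^{\eta}D,\ D\ge v\}$ and $\{D>e^{\eta}T,\ T\ge v\}$. This gives
\begin{align*}
 \nu_n(E_\eta)=H_n(v;q)-\nu_n\{T>e^{\eta}D,\ D\ge v\}-\nu_n\{D>e^{\eta}T,\ T\ge v\}.
\end{align*}
For $\eta=0$, and because the margins are continuous (Assumption~\ref{a:margins}), ties between $T$ and $D$ have probability zero, so the right-hand side vanishes. It therefore suffices to bound the $\eta$-derivative of the two subtracted terms.

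Condition on $D=s$ and write $\bar P(t)=\P(T>t)$ and $\bar C(s)=\P(D\ge s)$. Then
\begin{align*}
 \nu_n\{T>e^{\eta}D,\ D\ge v\}=\tfrac{n}{r_n^2}\int_{[v,\infty)}\bar P(e^{\eta}s)\,dF_D(s),
\end{align*}
and similarly with the roles of $T$ and $D$ exchanged. The difference from the $\eta=0$ value is
\begin{align*}
 \tfrac{n}{r_n^2}\int_{[v,\infty)}\{\bar P(s)-\bar P(e^{\eta}s)\}\,dF_D(s).
\end{align*}

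\emph{Step 3: the Potter bound, which is the main obstacle.} Assumption~\ref{a:radialmodulus} controls only the product $\bar P\bar C$, not $\bar P$ alone, so the increment $\bar P(s)-\bar P(e^{\eta}s)$ cannot be bounded directly. The plan is as follows.
\begin{itemize}[leftmargin=2em]
\item Add the two symmetric increments.
\item Integrate by parts, so that their sum equals the increment of the product,
\begin{align*}
 \tfrac{n}{r_n^2}\int_v^{\,e^{\eta}v}\!\ldots\;+\;\bigl\{H_n(v;q)-H_n(e^{\eta}v;q)\bigr\},
\end{align*}
up to boundary terms over the band $[v,e^{\eta}v]$. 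This is the identity $ab-a'b'=a(b-b')+b'(a-a')$, applied along the ray.
\item Bound each of these by the monotonicity of $u^{\rho}H_n(u;q)$, which gives $H_n(u)-H_n(e^{\eta}u)\le(1-e^{-\rho\eta})H_n(u)\le\rho\eta H_n(u)$.
\item Use $H_n(u;q)\le H_n(v;q)$ for $u\ge v$.
\end{itemize}
This yields a bound of the form $\rho\eta H_n(v;q)$ for each of the two indicator cases. The factor $e^{\rho}$ in $C=2\rho e^{\rho}$ comes from the shift by $e^{\eta}\le e$ of the lower endpoint, through $H_n(e^{-\eta}v)\le e^{\rho}H_n(v)$. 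The factor $2$ comes from the two cases $\delta^{(1)}_1\neq\delta^{(2)}_1$.

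Here $q\in[1/4,16T]^2$ and all radii stay within $(0,2e]\subset(0,16]$, so Assumption~\ref{a:radialmodulus} applies. If the integration by parts does not close cleanly, the fallback is a geometric grid $v e^{m\eta}$: on each cell use $ab-a'b'\ge(a-a')(b-b')$, and telescope only one of the two factors.
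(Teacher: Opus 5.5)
There is a genuine gap in Step~3. Your Steps~1 and~2 are correct. The mixed-mark near-tie set lies in $E_\eta$, which does not depend on the mark and so covers both cases at once. Its $\nu_n$-mass is $n/r_n^2$ times $I_1+I_2$, where $I_1=\int_{[v,\infty)}\{\bar P(s)-\bar P(e^{\eta}s)\}\,dF_D(s)$ and $I_2$ is its mirror image.

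\emph{The false premise.} Step~3 assumes Assumption~\ref{a:radialmodulus} controls only the product. It controls each factor separately. The maps $u\mapsto p_n(uq)$ and $u\mapsto c_n(uq)$ are nonincreasing, so each of the ratios $p_n(uq)/p_n(Auq)$ and $c_n(uq)/c_n(Auq)$ is at least one. Since their product is at most $A^{\rho}$, each is at most $A^{\rho}$. This one-line observation is the key step of the paper's proof, and it is what your argument is missing.

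\emph{Why your workaround fails.} Your integration by parts does not produce the factor $\eta$. Carrying it out (equivalently, conditioning $I_2$ on $D$ rather than on $T$) gives
\[
 I_1+I_2=\int_{[v,e^{\eta}v)}\{\bar P(v)-\bar P(e^{\eta}s)\}\,dF_D(s)+\int_{[e^{\eta}v,\infty)}\{\bar P(e^{-\eta}s)-\bar P(e^{\eta}s)\}\,dF_D(s).
\]
The boundary terms on $[v,e^{\eta}v]$ do appear. But the main term is an increment of the single factor $\bar P$ over the whole range, which is exactly the paper's integral, and not the increment $H_n(v;q)-H_n(e^{\eta}v;q)$ of the product. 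No rearrangement can turn it into one, because the near-tie mass is not a functional of $H_n$ alone. For pure powers $p_n(uq)\propto u^{-a}$, $c_n(uq)\propto u^{-b}$ it is about $2ab\eta/(a+b)$ times $H_n(v;q)$, while $H_n$ depends only on $a+b$.

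The grid fallback fails for the same reason. Telescoping $ab-a'b'$ over the cells gives $H_n(v;q)+H_n(e^{\eta}v;q)$, with no factor $\eta$. Bounding each cell by $2\rho\eta$ times the product leaves $2\rho\eta\sum_mH_n(ve^{m\eta};q)$. This does not vanish with $\eta$, because the Potter bound itself gives $H_n(ve^{m\eta};q)\ge e^{-\rho m\eta}H_n(v;q)$ on the admissible range.

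\emph{A further correction, and the repaired argument.} Keep the restriction $W\le2$ rather than discarding it. Assumption~\ref{a:radialmodulus} is available only for arguments up to $16$, and your integrals over $[v,\infty)$ leave that range. With $W\le2$, the smaller of $T,D$ lies in $[v,2]$ and the larger in $[v,2e^{\eta}]\subset(0,16]$.

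With that restriction and the factorwise bound, your own decomposition closes immediately:
\begin{itemize}
\item for $s\in[v,2]$ one has $\bar P(s)-\bar P(e^{\eta}s)\le(1-e^{-\rho\eta})\bar P(s)\le\rho\eta\,\bar P(s)$;
\item so the first piece is at most $\rho\eta\,\nu_n\{T>D,\ v\le D\le2\}$;
\item the mirror piece is at most $\rho\eta\,\nu_n\{D>T,\ v\le T\le2\}$;
\item the two events are disjoint subsets of $\{W\ge v\}$, so the sum is at most $\rho\eta H_n(v;q)$.
\end{itemize}
This is sharper than the stated $2\rho e^{\rho}\eta H_n(v;q)$.

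The paper instead conditions on $D$ and integrates the two-sided band $T\in[e^{-\eta}D,e^{\eta}D]$. That costs the factor $2$ (Potter at $A=e^{2\eta}$) and the factor $e^{\rho}$ (passing from $p_n(ve^{-\eta}q)$ to $p_n(vq)$). Your attribution of the factor $2$ to the two marks does not fit your own argument, since $E_\eta$ already contains both marks.
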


\begin{proof}
Write $B$ for the set on the left of \eqref{bandmass}, and suppose $\delta^{(1)}_1=1$,
$\delta^{(2)}_1=0$; the opposite mark is symmetric.  Then
$A^{(1)}_{1,n}(q)=\psi_{1,n}(X_1^{(1)})/q_1\geq T_{1,n}(q)$ and
$A^{(2)}_{1,n}(q)=\psi_{2,n}(C_1^{(2)})/q_2\geq D_{1,n}(q)$ by \eqref{TD}.  If
$A^{(1)}_{1,n}(q)\leq A^{(2)}_{1,n}(q)$ then $W_{1,n}(q)=A^{(1)}_{1,n}(q)$ and, by
\eqref{reduction}, $T_{1,n}(q)=W_{1,n}(q)$ and $T_{1,n}(q)\leq D_{1,n}(q)\leq
A^{(2)}_{1,n}(q)$; otherwise $D_{1,n}(q)=W_{1,n}(q)=A^{(2)}_{1,n}(q)$ and
$D_{1,n}(q)\leq T_{1,n}(q)\leq A^{(1)}_{1,n}(q)$.  In both cases $T_{1,n}(q)$ and
$D_{1,n}(q)$ lie between the two coordinates $A^{(1)}_{1,n}(q)$ and $A^{(2)}_{1,n}(q)$, so on
$B$,
\begin{align*}
 \bigl|\log{T_{1,n}(q)/D_{1,n}(q)}\bigr|\leq\eta,
 \qquad W_{1,n}(q)\leq T_{1,n}(q)\vee D_{1,n}(q)\leq e^{\eta}W_{1,n}(q).
\end{align*}
Next, $p_n(\cdot\,q)$ and $c_n(\cdot\,q)$ are nonincreasing, so the two ratios forming the
left-hand side of the inequality in Assumption~\ref{a:radialmodulus} are each at least one; as their product is at most
$A^{\rho}$, each is at most $A^{\rho}$, and in particular $p_n(wq)\leq A^{\rho}p_n(Awq)$ on the
same range.  By \eqref{TD}, $\P\{T_{1,n}(q)\geq w\}=p_n(wq)$ and
$\P\{D_{1,n}(q)\geq w\}=c_n(wq)$, and $T_{1,n}(q)$ and $D_{1,n}(q)$ are independent.  On $B$
one has $v\leq D_{1,n}(q)\leq2e^{\eta}$, so conditioning on $D_{1,n}(q)$ and using
$1-e^{-x}\leq x$,
\begin{align*}
 \P(B)&\leq\int_{[v,2e^{\eta}]}\bigl\{p_n(we^{-\eta}q)-p_n(we^{\eta}q)\bigr\}
 \,\P\bigl\{D_{1,n}(q)\in dw\bigr\}\\
 &\leq2\rho\eta\int_{[v,\infty)}p_n(we^{-\eta}q)\,\P\bigl\{D_{1,n}(q)\in dw\bigr\}
 \ \leq\ 2\rho\eta\,p_n(ve^{-\eta}q)\,c_n(vq)\\
 &\leq2\rho e^{\rho}\eta\,p_n(vq)\,c_n(vq),
\end{align*}
every argument lying in $(0,16]$ because $\eta\leq1$ and
$w\leq2e^{\eta}$.  Divide by $\P\{W_{1,n}(q)\geq v\}=p_n(vq)c_n(vq)$ and multiply by
$n/r_n^2$.
\end{proof}

\medskip

We also use the scaling identity
\begin{align}
 W_{i,n}(cq)={W_{i,n}(q)\over c},\qquad \Delta_{i,n}(cq)=\Delta_{i,n}(q),
 \qquad c>0,
 \label{rayscaling}
\end{align}
immediate from \eqref{TD} and \eqref{reduction}: moving along a ray in $q$ merely
rescales the reduced sample. 

\subsection{Proof of Theorem~\ref*{thm:consistency}}
\emph{Plan.}  Steps~1--4 treat the oracle problem, in which the standardizations are
known.  Step~1 identifies the population risk and failure measures of the reduced
observations and their limits.  Step~2 proves uniform convergence of their empirical
counterparts on $v\in[\tau,1]$, a region bounded away from the origin, where the risk
function is bounded.  Step~3 extends this to $v\in(0,\tau]$, where the risk function is
unbounded, by cutting the interval into layers on which it has a fixed order of
magnitude.  Step~4 converts the resulting hazard convergence into convergence of the
product limit.  Step~5 then passes from the oracle to the estimated standardizations,
appealing to the results in Appendix~\ref{sec:appendix}.

\textit{Step 1: population quantities.}
We first work with the known standardizations $\psi_{j,n}$.  A tilde distinguishes
the resulting empirical quantities from those in \eqref{KM}.  Independence in
\eqref{TD} turns the definitions \eqref{HH1} into
\begin{align}
 H_n(v;q)&={p_n(vs,vt)\over p_n(1,1)}
             {c_n(vs,vt)\over c_n(1,1)},
 \label{Hn}\\
 H_{1,n}(dv;q)&={c_n(vs,vt)\over c_n(1,1)}
                  \bigl\{-d{p_n(vs,vt)\over p_n(1,1)}\bigr\}.
 \label{H1n}
\end{align}
Indeed, independence of $T_{1,n}(q)$ and $D_{1,n}(q)$ gives, as an identity between
measures on $(0,\infty)$,
\begin{align*}
 &\P\bigl(W_{1,n}(q)\in dv,\Delta_{1,n}(q)=1\bigr)\\
 &\hspace{2cm}=\P\bigl(D_{1,n}(q)\geq v\bigr)\P\bigl(T_{1,n}(q)\in dv\bigr)
 =c_n(vs,vt)\{-dp_n(vs,vt)\}.
\end{align*}
Note that \eqref{H1n} does not require densities.  By
Assumption~\ref{a:FG} we get, locally uniformly for $v,s,t>0$,
\begin{align}
 H_n(v;s,t)\longrightarrow
 H(v;s,t):=F^\circ(vs,vt)G^\circ(vs,vt).
 \label{Hlimit}
\end{align}

\textit{Step 2: uniform convergence of the empirical measures on $[\tau,1]$.}
The empirical versions, suitably normalized, are
\begin{align*}
 \widetilde H_n(v;q)&={1\over r_n^2}\sum_{i=1}^n
               \ind_{\{W_{i,n}(q)\geq v\}},\\
 \widetilde H_{1,n}((a,b];q)&={1\over r_n^2}\sum_{i=1}^n
 \Delta_{i,n}(q)\ind_{\{a<W_{i,n}(q)\leq b\}}.
\end{align*}
By \eqref{nu}, $\E[\widetilde H_n(v;q)]=H_n(v;q)$ and
$\E[\widetilde H_{1,n}((a,b];q)]=H_{1,n}((a,b];q)$, so these are unbiased.
We show first that, for every $\tau\in(0,1)$,
\begin{align}
 \sup_{q\in K}\sup_{\tau\leq v\leq1}
 |\widetilde H_n(v;q)-H_n(v;q)|&\overset{\P}{\longrightarrow}0,
 \label{empH}\\
 \sup_{q\in K}\sup_{\tau\leq v\leq1}
 |\widetilde H_{1,n}((\tau,v];q)-H_{1,n}((\tau,v];q)|
 &\overset{\P}{\longrightarrow}0.
 \label{empH1}
\end{align}
The argument is a pointwise-variance and then a monotonicity extension argument, similar to a Glivenko--Cantelli argument.  We
give the details because the directional ordering requires one additional
reparametrization.

Put $U_{i,n}=\psi_{1,n}(Z_i^{(1)})$ and $V_{i,n}=\psi_{2,n}(Z_i^{(2)})$.  The indicators in
\eqref{empH} are upper orthants,
\begin{align*}
 \ind_{\{W_{i,n}(s,t)\geq v\}}
 =\ind_{\{U_{i,n}\geq vs,V_{i,n}\geq vt\}},
\end{align*}
and are coordinatewise monotone in $(vs,vt)$.  For the failure subdistribution, split
the observations according to $(\delta_i^{(1)},\delta_i^{(2)})$.  If both coordinates
are uncensored, \eqref{Delta} equals one.  If the censoring mark is $(1,0)$ or $(0,1)$,
the failure is produced by the first or second coordinate, respectively.  Consequently,
for $\tau\leq a<b\leq1$,
\begin{align}
 \widetilde H_{1,n}((a,b];s,t)
={}&\widetilde K_n(as,at)-\widetilde K_n(bs,bt)\nonumber\\
 &+\widetilde J_{1,n}(bs,s/t)-\widetilde J_{1,n}(as,s/t)\nonumber\\
 &+\widetilde J_{2,n}(bt,t/s)-\widetilde J_{2,n}(at,t/s),
 \label{failuredecomp}
\end{align}
where
\begin{align*}
 \widetilde K_n(x,y)&={1\over r_n^2}\sum_{i=1}^n
 \ind_{\{U_{i,n}>x,V_{i,n}>y,\delta_i^{(1)}=\delta_i^{(2)}=1\}},\\
 \widetilde J_{1,n}(x,z)&={1\over r_n^2}\sum_{i=1}^n
 \ind_{\{\delta_i^{(1)}=1,\delta_i^{(2)}=0,\,
             \tau<U_{i,n}\leq x,\,U_{i,n}/V_{i,n}\leq z\}},\\
 \widetilde J_{2,n}(y,z)&={1\over r_n^2}\sum_{i=1}^n
 \ind_{\{\delta_i^{(1)}=0,\delta_i^{(2)}=1,\,
             \tau<V_{i,n}\leq y,\,V_{i,n}/U_{i,n}\leq z\}}.
\end{align*}
The three nonzero censoring configurations give the disjoint events
\begin{align*}
 \{\delta^{(1)}=\delta^{(2)}=1,\ a<W\leq b\}
 &=\{\delta^{(1)}=\delta^{(2)}=1,\ U>as,V>at\}
   \setminus\{U>bs,V>bt\},\\
 \{(\delta^{(1)},\delta^{(2)})=(1,0),\ \Delta=1,\ a<W\leq b\}
 &=\{(\delta^{(1)},\delta^{(2)})=(1,0),\ as<U\leq bs,\ U/V\leq s/t\},\\
 \{(\delta^{(1)},\delta^{(2)})=(0,1),\ \Delta=1,\ a<W\leq b\}
 &=\{(\delta^{(1)},\delta^{(2)})=(0,1),\ at<V\leq bt,\ V/U\leq t/s\}.
\end{align*}
These identities establish \eqref{failuredecomp}; note that the weak inequalities keep all
possible directional ties.
The population versions are obtained by replacing the empirical sums by their
expectations.

Each family of indicators appearing in \eqref{empH} and \eqref{failuredecomp} is
therefore indexed by a point of a compact rectangle and is monotone in that index.  Namely, writing
$O_{i,n}=(U_{i,n},V_{i,n},\delta_i^{(1)},\delta_i^{(2)})$ for the transformed
observation, the six families are
\begin{align}
 {\cal R}&=\bigl\{\ind_{\{U\geq x,V\geq y\}}:
      (x,y)\in[\tau,T]^2\bigr\},\nonumber\\
 {\cal K}&=\bigl\{\ind_{\{U>x,V>y,\ \delta^{(1)}=\delta^{(2)}=1\}}:
      (x,y)\in[\tau,T]^2\bigr\},\nonumber\\
 {\cal J}_1&=\bigl\{\ind_{\{\tau<U\leq x,\ U/V\leq z,\
      (\delta^{(1)},\delta^{(2)})=(1,0)\}}:
      (x,z)\in[\tau,T]\times[T^{-1},T]\bigr\},\nonumber\\
 {\cal J}_2&=\bigl\{\ind_{\{\tau<V\leq y,\ V/U\leq z,\
      (\delta^{(1)},\delta^{(2)})=(0,1)\}}:
      (y,z)\in[\tau,T]\times[T^{-1},T]\bigr\},\nonumber\\
 {\cal J}_1'&=\bigl\{\ind_{\{\tau<V\leq y,\ V/U\leq z,\
      (\delta^{(1)},\delta^{(2)})=(1,0)\}}:
      (y,z)\in[\tau,T]\times[T^{-1},T]\bigr\},\nonumber\\
 {\cal J}_2'&=\bigl\{\ind_{\{\tau<U\leq x,\ U/V\leq z,\
      (\delta^{(1)},\delta^{(2)})=(0,1)\}}:
      (x,z)\in[\tau,T]\times[T^{-1},T]\bigr\},
 \label{fourclasses}
\end{align}
the index ranges being those produced by $v\in[\tau,1]$ and $q=(s,t)\in K$, that is
$vs,vt\in[\tau,T]$ and $s/t,t/s\in[T^{-1},T]$.  Each member of \eqref{fourclasses} is
an indicator of an orthant in one of the three coordinate pairs $(U,V)$, $(U,U/V)$,
$(V,V/U)$, intersected with a fixed censoring mark, and is monotone in each of its two
index coordinates.  The last two families differ from ${\cal J}_2$ and ${\cal J}_1$
only in the censoring mark attached to them; they actually do not occur in
\eqref{failuredecomp}, but are listed here since they are required in Step~3 and in
Appendix~\ref{sec:appendix}.

Each family has a fixed envelope of bounded $\nu_n$-mass.  A member of ${\cal R}$ or
${\cal K}$ vanishes off $\{U\geq\tau,V\geq\tau\}$.  A member of ${\cal J}_1$ or of
${\cal J}_2'$ vanishes
off $\{U>\tau,V>\tau/T\}$, because $U>\tau$ and $U/V\leq z\leq T$ force $V\geq U/T>\tau/T$;
symmetrically, a member of ${\cal J}_2$ or of ${\cal J}_1'$ vanishes off
$\{U>\tau/T,V>\tau\}$.  All these
envelopes are contained in $E_\tau:=\{U\geq\tau/T,V\geq\tau/T\}$, and by \eqref{Hn},
\eqref{Hlimit},
\begin{align}
 \nu_n(E_\tau)=H_n\bigl(1;(\tau/T,\tau/T)\bigr)
 \longrightarrow F^\circ(\tau/T,\tau/T)\,G^\circ(\tau/T,\tau/T)<\infty,
 \label{envelopemass}
\end{align}
so that $\sup_n\nu_n(E_\tau)=:C_\tau<\infty$.

Each of the six families is a family of indicators of sets of
geometric type, namely of a set cut out by two inequalities, in one of the three fixed
coordinate pairs $(U,V)$, $(U,U/V)$, $(V,V/U)$, intersected with a fixed censoring mark.  For a
fixed real function $\xi$ the collection $\{\{\xi\leq x\}:x\in\mathbb R\}$ is a
Vapnik--Chervonenkis class of index $2$; intersections of two such collections, and
intersection with a fixed set, again give Vapnik--Chervonenkis classes of bounded index
\citep[Lemma~2.6.17]{vanderVaartWellner1996S}.  Hence each family in
\eqref{fourclasses} is a Vapnik--Chervonenkis class whose index is an absolute constant, and in particular it does not depend on $n$, on $\tau$, or on $T$.  For a class ${\cal G}$ whose members are
signed sums of at most six indicators of sets drawn from such classes, the covering
numbers satisfy
\begin{align}
 \sup_Q N\bigl(\varepsilon\|G\|_{Q,2},{\cal G},L^2(Q)\bigr)
 \leq C\varepsilon^{-\vartheta},\qquad 0<\varepsilon<1,
 \label{VC}
\end{align}
with $C$ and $\vartheta$ being fixed constants and the supremum being taken over \emph{all} probability measures $Q$
\citep[Theorem~2.6.7]{vanderVaartWellner1996S}.  The bound is therefore automatically
uniform in $n$, even if the law of $O_{1,n}$ changes with $n$.  
Also note that the families are indexed by compact subsets of Euclidean space and, under the strict and weak endpoint conventions, their indicators admit pointwise approximation from rational one-sided grids, so no measurability issues arise.

We now establish a consequence that is used here and in
Sections~\ref{sec:largesample} and~\ref{sec:appendix}.  Let ${\cal G}$ be any family of
functions as in \eqref{VC}, that is, signed sums of at most six indicators of sets
drawn from the classes underlying \eqref{fourclasses}, and suppose every member of
${\cal G}$ vanishes off a set $E$ with $\nu_n(E)\leq M$ (the envelope being a fixed
multiple of $\ind_E$, which only inflates the absolute constant below).  Then
\begin{align}
 \E^*\Biggl[\sup_{g\in{\cal G}}\bigl|{1\over r_n^2}\sum_{i=1}^ng(O_{i,n})-\nu_n(g)\bigr|\Biggr]
 \leq {C\sqrt M\over r_n} ,
 \label{maximalVC}
\end{align}
with $C$ an absolute constant.  Indeed, writing $P_n$ for the law of $O_{1,n}$ and
$\mathbb G_n=\sqrt n(\mathbb P_n-P_n)$ for the empirical process of the row, the
left-hand side equals $(\sqrt n/r_n^2)\,\E^*[\|\mathbb G_n\|_{\cal G}]$; the maximal
inequality \citep[Theorem~2.14.1]{vanderVaartWellner1996S} bounds
$\E^*[\|\mathbb G_n\|_{\cal G}]$ by $CJ(1,{\cal G})\|\ind_E\|_{P_n,2}$, the entropy
integral $J(1,{\cal G})=\int_0^1\{1+\log\sup_QN(\varepsilon\|\ind_E\|_{Q,2},{\cal G},
L^2(Q))\}^{1/2}d\varepsilon$ being finite and fixed by \eqref{VC}; and by
\eqref{nu},
$\|\ind_E\|_{P_n,2}=P_n(E)^{1/2}=\{r_n^2\nu_n(E)/n\}^{1/2}\leq(r_n^2M/n)^{1/2}$.
Multiplying the two factors gives \eqref{maximalVC}.

Applying \eqref{maximalVC} with $E=E_\tau$ and $M=C_\tau$ yields
\begin{align}
 \E^*\Biggl[\sup_{f\in{\cal F}}\bigl|{1\over r_n^2}\sum_{i=1}^nf(O_{i,n})-\nu_n(f)\bigr|\Biggr]
\longrightarrow0
 \qquad\text{for }
 {\cal F}\in\bigl\{{\cal R},{\cal K},{\cal J}_1,{\cal J}_2,
 {\cal J}_1',{\cal J}_2'\bigr\},
 \label{empsix}
\end{align}
and hence, through the decomposition \eqref{failuredecomp}, the same convergence holds for both \eqref{empH} and \eqref{empH1}.
With these ingredients in hand, let
\begin{align*}
 \widetilde\Lambda_n(q)=
 \int_{(0,1]}{\widetilde H_{1,n}(dv;q)\over\widetilde H_n(v;q)}.
\end{align*}
We establish a lower bound for $H$ that will be used repeatedly.  Put
\begin{align*}
 H(v;q)=F^\circ(vs,vt)G^\circ(vs,vt).
\end{align*}
By Assumption~\ref{a:FG}, this function is continuous and strictly positive on $(0,\infty)^3$,
and the set $\{(v,q):v\in[\tau,2],q\in K\}$ is compact.
Hence
\begin{align}
 \inf_{v\in[\tau,2],\,q\in K}H(v;q)=:2h_\tau>0 .
 \label{Hlower}
\end{align}
Since $H_n\to H$ uniformly on $[\tau,2]\times K$ by \eqref{Hlimit}, we also have
$H_n(v;q)\geq h_\tau$ on that set for all $n$ large, and the same holds with $K$
replaced by any compact subset of $(0,\infty)^2$, with $h_\tau$ depending on that set.
The difference of the two integrals over
$(\tau,1]$ equals
\begin{align*}
 &\int_{(\tau,1]}\bigl\{{1\over\widetilde H_n(v;q)}
                    -{1\over H_n(v;q)}\bigr\}\widetilde H_{1,n}(dv;q)\\
 &\quad+{\widetilde H_{1,n}((\tau,1];q)-H_{1,n}((\tau,1];q)\over H_n(1;q)}\\
 &\quad-\int_{(\tau,1]}\{\widetilde H_{1,n}((\tau,v];q)
                   -H_{1,n}((\tau,v];q)\}\,d\{H_n(v;q)^{-1}\}.
\end{align*}
By \eqref{Hlower}, we have $H_n\geq h_\tau$ on $[\tau,1]\times K$ for $n$ large; note that
\eqref{empH} gives
$\inf_{v\in[\tau,1],q\in K}\widetilde H_n(v;q)\geq h_\tau/2$ with probability tending to
one, and that
$\widetilde H_{1,n}((\tau,1];q)\leq\widetilde H_n(\tau;q)=O_{\P}(1)$ uniformly in $q$.
The three terms are then bounded as follows.  The first is at most
\begin{align*}
 {\sup_{v\in[\tau,1]}|\widetilde H_n(v;q)-H_n(v;q)|\over
  \widetilde H_n(v;q)H_n(v;q)}\,\widetilde H_{1,n}((\tau,1];q)
 \leq{2\over h_\tau^2}\sup_{v\in[\tau,1]}|\widetilde H_n-H_n|(v;q)\cdot O_{\P}(1)
 =o_{\P}(1)
\end{align*}
uniformly in $q$, by \eqref{empH}.  The second is at most
$h_\tau^{-1}\sup_{v}|\widetilde H_{1,n}-H_{1,n}|((\tau,v];q)=o_{\P}(1)$ by \eqref{empH1}.
The third is at most
$\sup_{v}|\widetilde H_{1,n}-H_{1,n}|((\tau,v];q)$ times the total variation of the
nondecreasing function $v\mapsto H_n(v;q)^{-1}$ on $[\tau,1]$, which equals
$H_n(1;q)^{-1}-H_n(\tau;q)^{-1}\leq h_\tau^{-1}$; it is therefore $o_{\P}(1)$ as well.
Hence
\begin{align}
 \sup_{q\in K}\Biggl|
 \int_{(\tau,1]}{\widetilde H_{1,n}(dv;q)\over\widetilde H_n(v;q)}
 -\int_{(\tau,1]}{H_{1,n}(dv;q)\over H_n(v;q)}
 \Biggr|\overset{\P}{\longrightarrow}0.
 \label{hazardtau}
\end{align}

\textit{Step 3: the interval $(0,\tau]$.}
We now treat the interval close to zero. For the full integral, use the exact decomposition
\begin{align*}
 \widetilde\Lambda_n(q)-\int_{(0,1]}{H_{1,n}(dv;q)\over H_n(v;q)}
={}&\int_{(0,1]}{(\widetilde H_{1,n}-H_{1,n})(dv;q)\over H_n(v;q)}\\
 &+\int_{(0,1]}{H_n(v;q)-\widetilde H_n(v;q)
       \over\widetilde H_n(v;q)H_n(v;q)}\,\widetilde H_{1,n}(dv;q).
\end{align*}

On $[\tau,1]$ the two integrals in the exact decomposition are controlled by
\eqref{hazardtau}.  On $(0,\tau]$ the envelope $E_\tau$ of Step~2 cannot be reused,
because it came, together with the bound \eqref{envelopemass}, from a fixed distance to the
coordinate axes, whereas here $H_n(v;q)\to\infty$ as $v\downarrow0$.  The plan is to cut
$(0,\tau]$ into layers on which $H_n$ has a fixed order of magnitude $2^mb_n$, and to
apply \eqref{maximalVC} on the $m$th layer under an envelope of $\nu_n$-mass
$O(2^mb_n)$, allowing the accuracy to deteriorate like $2^{(1-\alpha)m}b_n$.  Because the integrands carry the factor $H_n^{-1}$ which is of asymptotic order
$(2^mb_n)^{-1}$, the resulting layerwise error is $O(\varepsilon2^{-\alpha m})$, which
is summable in $m$; and because the layerwise deviation probabilities carry the factor
$2^{-(1/2-\alpha)m}$, so does a final union bound argument.

Take $\tau<1/(16T)$, and set
\begin{align*}
 b_n=\inf_{q\in K}H_n(\tau;q),\qquad
 I_{m,n}(q)=\{v\in(0,\tau]:2^m b_n\leq H_n(v;q)<2^{m+1}b_n\},\qquad m\geq0.
\end{align*}
Each $I_{m,n}(q)$ is an interval, possibly empty, and
$b_n\to\inf_{q\in K}H(\tau;q)>0$.

The localized classes on these intervals have a common finite envelope.  In fact, for
$q=(s,t)\in K=[1,T]^2$ the definition \eqref{TD} of $W_{1,n}$ gives
\begin{align*}
 {W_{1,n}(1,1)\over T}
 =\min\bigl\{{\psi_{1,n}(Z_1^{(1)})\over T},{\psi_{2,n}(Z_1^{(2)})\over T}\bigr\}
 \leq W_{1,n}(q)\leq W_{1,n}(1,1),\qquad q\in K,
\end{align*}
because $1\leq s,t\leq T$.  Consequently $\{W_{1,n}(1,1)\geq Tv\}
\subset\{W_{1,n}(q)\geq v\}\subset\{W_{1,n}(1,1)\geq v\}$, so that, if $(v,q)$ belongs
to the $m$th layer, Assumption~\ref{a:radialmodulus} with $A=T$ gives
\begin{align}
 \nu_n\{W_{1,n}(1,1)\geq v\}
 \leq C_T\,\nu_n\{W_{1,n}(1,1)\geq Tv\}
 \leq C_T H_n(v;q)\leq C_T2^{m+1}b_n,
 \label{layerenvelope}
\end{align}
the first inequality holding because $v\leq\tau<1/T$.
Let $\underline v_{m,n}$ be the infimum of the lower endpoints of the nonempty intervals
$I_{m,n}(q)$, $q\in K$, and put
\begin{align*}
 E_{m,n}=\{W_{1,n}(1,1)\geq \underline v_{m,n}\}.
\end{align*}
If an infimum is not attained, the event is defined by decreasing limits; if
$\underline v_{m,n}=0$, it is the whole sample space.  Monotone convergence in
\eqref{layerenvelope} gives
\begin{align}
 \nu_n(E_{m,n})\leq C_T2^{m+1} b_n.
 \label{layermass}
\end{align}
Every risk event $\{W_{1,n}(q)\geq v\}$ with $v\in I_{m,n}(q)$ is contained in
$E_{m,n}$, and so is each of the three mark-specific bands displayed after
\eqref{failuredecomp}, since each of them is contained in a risk event of the layer.
The cutoff $\tau$ in the index sets of ${\cal J}_1,{\cal J}_2,{\cal J}_1',{\cal J}_2'$
plays no part in the geometry underlying \eqref{VC} and may be lowered to the left
endpoint of the layer; i.e., the band formulas remain available below
$\tau$, whereas the functions $\widetilde J_1,\widetilde J_2$ of
\eqref{failuredecomp}, whose definition contains the fixed cutoff $\tau$, do not.

Fix $\varepsilon\in(0,1)$ and $\alpha\in(0,1/2)$.  On the $m$th layer we apply
\eqref{maximalVC} with $E=E_{m,n}$, so that $M=C2^mb_n$ by \eqref{layermass}: for each
of the classes of \eqref{fourclasses}, restricted to the layer,
\begin{align}
 \E^*\Biggl[\sup_{g}\bigl|{1\over r_n^2}\sum_{i=1}^n
 \bigl\{g(O_{i,n})-\E[g(O_{i,n})]\bigr\}\bigr|\Biggr]
 \leq{C\sqrt{2^mb_n}\over r_n} .
 \label{layermaximalVC}
\end{align}
Markov's inequality with the threshold $\varepsilon2^{(1-\alpha)m}b_n$ therefore gives,
for each layer, a failure probability at most
$C\sqrt{2^mb_n}/(r_n\varepsilon2^{(1-\alpha)m}b_n)
=C\varepsilon^{-1}2^{-(1/2-\alpha)m}/(r_n\sqrt{b_n})$, and summing over $m\geq0$ and
over the finitely many classes, we obtain an upper bound of the probability that the $m$th layer bound below fails, taking the union over $m$, given by
\begin{align}
{C\over\varepsilon\,r_n\sqrt{b_n}}\sum_{m\geq0}2^{-(1/2-\alpha)m}
 ={C\over\varepsilon\,r_n\sqrt{b_n}}=o(1),
 \label{layerprob}
\end{align}
because $\alpha<1/2$ makes the series converge and because $r_n\to\infty$ by
Assumption~\ref{a:rate} while $b_n\to\inf_{q\in K}H(\tau;q)>0$.  Outside that event we have,
uniformly in $q\in K$ and $m\geq0$,
\begin{align}
 \sup_{v\in I_{m,n}(q)}|\widetilde H_n(v;q)-H_n(v;q)|
 &\leq \varepsilon2^{(1-\alpha)m}b_n,\nonumber\\
 \sup_{v'<v;\ v',v\in I_{m,n}(q)}
 |(\widetilde H_{1,n}-H_{1,n})((v',v];q)|
 &\leq \varepsilon2^{(1-\alpha)m}b_n,
 \label{layerbounds}
\end{align}
and the same bound holds, with the same proof, for the empirical minus the population
mass of any band $\{W_{1,n}(q)\in(v',v]\}$ intersected with a fixed censoring mark and
with a ratio constraint $z_1<A^{(1)}_{1,n}(q)/A^{(2)}_{1,n}(q)\leq z_2$, since such a
band is obtained from the classes ${\cal J}_1$, ${\cal J}_2$, ${\cal J}_1'$,
${\cal J}_2'$ of \eqref{fourclasses} restricted to $E_{m,n}$ by intersections and
differences, and so again has uniform covering numbers of the form \eqref{VC}.

We now insert \eqref{layerbounds} into the two integrals.  Write the endpoints of
$I_{m,n}(q)$ as $v_m^-=v_{m,n}^-(q)\leq
v_m^+=v_{m,n}^+(q)$, suppressing $n$ and $q$ below, and let
$M(v)=(\widetilde H_{1,n}-H_{1,n})((v_m^-,v];q)$, so that $\sup_{v}|M(v)|\leq
C\varepsilon2^{(1-\alpha)m}b_n$ by \eqref{layerbounds}.  Since $H_n^{-1}$ is
nondecreasing and takes values in $(2^{-(m+1)}b_n^{-1},2^{-m}b_n^{-1}]$ on the layer,
its total variation there is at most $2^{-m}b_n^{-1}$, and integration by parts gives
\begin{align*}
 \Bigl|\int_{I_{m,n}(q)}{(\widetilde H_{1,n}-H_{1,n})(dv;q)\over H_n(v;q)}\Bigr|
 &=\Bigl|{M(v_m^+)\over H_n(v_m^+;q)}
       -\int_{I_{m,n}(q)}M(v)\,d\bigl\{H_n(v;q)^{-1}\bigr\}\Bigr|\\
 &\leq\sup_v|M(v)|\bigl\{{1\over H_n(v_m^+;q)}
       +\mathop{\rm TV}_{I_{m,n}(q)}\bigl(H_n^{-1}\bigr)\bigr\}\\
 &\leq C\varepsilon2^{(1-\alpha)m}b_n\cdot{2\over2^mb_n}
 =C\varepsilon2^{-\alpha m}.
\end{align*}
The first inequality in \eqref{layerbounds} also yields
$\widetilde H_n(v;q)\geq H_n(v;q)/2$ on the layer, after decreasing $\varepsilon$ by a
fixed constant, since $C\varepsilon2^{(1-\alpha)m}b_n\leq C\varepsilon
H_n(v;q)$ there.  Moreover, the observations counted by $\widetilde H_{1,n}(I_{m,n}(q);q)$
all satisfy $W_{i,n}(q)\geq v$ for any $v\in I_{m,n}(q)$ below them, so that, letting
$v\downarrow v_m^-$ inside the layer and using \eqref{layerbounds},
\begin{align*}
 \widetilde H_{1,n}(I_{m,n}(q);q)
 \leq \sup_{v\in I_{m,n}(q)}\widetilde H_n(v;q)
 \leq \sup_{v\in I_{m,n}(q)}H_n(v;q)+C\varepsilon2^{(1-\alpha)m}b_n\\
 \leq 2^{m+1}b_n+2^mb_n
 \leq C2^m b_n.
\end{align*}
Therefore
\begin{align*}
 &\Bigl|\int_{I_{m,n}(q)}{H_n(v;q)-\widetilde H_n(v;q)
       \over\widetilde H_n(v;q)H_n(v;q)}
       \widetilde H_{1,n}(dv;q)\Bigr|\\
 &\quad\leq{\sup_{v\in I_{m,n}(q)}|\widetilde H_n-H_n|(v;q)\over
       \bigl(2^mb_n/2\bigr)\bigl(2^mb_n\bigr)}
       \widetilde H_{1,n}(I_{m,n}(q);q)\\
 &\leq{C\varepsilon2^{(1-\alpha)m}b_n\over(2^mb_n)^2}\,C2^mb_n
 = C\varepsilon2^{-\alpha m}.
\end{align*}
Summing the two displays over $m\geq0$ bounds the contribution of $(0,\tau]$ to the
exact decomposition by $C\varepsilon\sum_{m\geq0}2^{-\alpha m}=C\varepsilon/(1-2^{-\alpha})$,
uniformly in $q\in K$, outside an event of probability $o(1)$.  Since $\varepsilon$ was
arbitrary, that contribution is $o_{\P}(1)$ uniformly in $q$.  Consequently,
\begin{align}
 \sup_{q\in K}\bigl|\widetilde\Lambda_n(q)
 -\int_{(0,1]}{H_{1,n}(dv;q)\over H_n(v;q)}\bigr|
 \overset{\P}{\longrightarrow}0.
 \label{hazardfull}
\end{align}

\textit{Step 4: from the hazard to the product limit.}
It remains to identify the population hazard and to compare the product limit with its
logarithm.  This is standard in other non-extreme contexts, but we give a full proof in
our more delicate setting under our assumptions.  Together,
\eqref{Hn}--\eqref{H1n} give the following equality of measures:
\begin{align*}
 {H_{1,n}(dv;q)\over H_n(v;q)}
 &={\{c_n(vs,vt)/c_n(1,1)\}
       [-d\{p_n(vs,vt)/p_n(1,1)\}]
    \over
    \{p_n(vs,vt)/p_n(1,1)\}\{c_n(vs,vt)/c_n(1,1)\}}\\
 &={-dp_n(vs,vt)\over p_n(vs,vt)}.
\end{align*}
The function $v\mapsto p_n(vs,vt)=\P\bigl(T_{1,n}(q)>v\bigr)$ is continuous, since an atom of
the random variable $T_{1,n}(q)=\min\{\psi_{1,n}(X^{(1)})/s,\psi_{2,n}(X^{(2)})/t\}$ at a
point $v_0$ would require $\P\bigl(\psi_{1,n}(X^{(1)})=v_0s\bigr)>0$ or
$\P\bigl(\psi_{2,n}(X^{(2)})=v_0t\bigr)>0$, because $\{T_{1,n}(q)=v_0\}\subset
\{\psi_{1,n}(X^{(1)})=v_0s\}\cup\{\psi_{2,n}(X^{(2)})=v_0t\}$, and the
margins of $X$ are continuous.  It is also clearly positive on $(0,1]$ and tends to
one as $v\downarrow0$.  Hence $v\mapsto\log p_n(vs,vt)$ is continuous and of bounded
variation on $[\varepsilon',1]$ for every $\varepsilon'>0$, and the chain rule for
continuous Stieltjes integrals gives
$-d\,p_n(vs,vt)/p_n(vs,vt)=-d\log p_n(vs,vt)$, so that
\begin{align}
 \int_{(0,1]}{H_{1,n}(dv;q)\over H_n(v;q)}
 =\int_{(0,1]}{-d p_n(vs,vt)\over p_n(vs,vt)}
 =-\bigl\{\log p_n(s,t)-\lim_{v\downarrow0}\log p_n(vs,vt)\bigr\}
 =-\log p_n(s,t).
 \label{populationhazard}
\end{align}

Let $\widetilde p_n(q)$ be the product limit formed from
$(W_{i,n}(q),\Delta_{i,n}(q))$.  Denote by
$1\geq v_1(q)>v_2(q)>\cdots$ its distinct failure times, listed in decreasing order,
and by
\begin{align*}
 d_\ell(q)=\sum_{i=1}^n\Delta_{i,n}(q)\ind_{\{W_{i,n}(q)=v_\ell(q)\}},
 \qquad
 Y_\ell(q)=\sum_{i=1}^n\ind_{\{W_{i,n}(q)\geq v_\ell(q)\}}
\end{align*}
the failure and risk counts there.  Because the four marginal distribution functions
are continuous, the variables $U_{1,n},\ldots,U_{n,n}$ are almost surely distinct, and
so are $V_{1,n},\ldots,V_{n,n}$.  Split the sample into
$P_1(q)=\{i:U_{i,n}/s\leq V_{i,n}/t\}$ and $P_2(q)=\{i:V_{i,n}/t<U_{i,n}/s\}$; then
$W_{i,n}(q)=U_{i,n}/s$ for $i\in P_1(q)$ and $W_{i,n}(q)=V_{i,n}/t$ for
$i\in P_2(q)$, so no two observations in the same part share a value of $W_{i,n}(q)$.
Hence
\begin{align}
 d_\ell(q)\leq2\qquad\forall\ell,q.
 \label{tietwo}
\end{align}
Write $Y_n^\ast(q)=\sum_{i=1}^n\ind_{\{W_{i,n}(q)\geq1\}}$ for the risk count at
$v=1$; since $v_\ell(q)\leq1$, we have $Y_\ell(q)\geq Y_n^\ast(q)$ for every $\ell$.
The uniform convergence of the empirical risk function established in Steps~2--3, the
identity $Y_n^\ast(q)=r_n^2\widetilde H_n(1;q)$ and the positivity of
$\inf_{q\in K}H(1;q)$, guaranteed by Assumption~\ref{a:FG}, show that
\begin{align}
 {Y_n^\ast(q)\over r_n^2}\stackrel{\mathbb{P}}{\longrightarrow}
 H(1;q)\quad\text{uniformly in }q\in K,\qquad
 \inf_{q\in K}H(1;q)>0.
 \label{riskatone}
\end{align}
We claim that, on the event $\{\inf_{q\in K}Y_n^\ast(q)>2\}$, whose probability tends
to one by \eqref{riskatone},
\begin{align}
 \sum_\ell {d_\ell(q)\over Y_\ell(q)^2}
 \leq{1\over Y_n^\ast(q)-2}
 \qquad\text{for all }q\in K.
 \label{ranksum}
\end{align}
Indeed, for $d\in\{1,2\}$ and $Y\geq d$ one has
$d/Y^2\leq\sum_{j=0}^{d-1}(Y-j)^{-2}$, so it suffices to show that the integers
\begin{align*}
 \bigl\{Y_\ell(q)-j:\ \ell\geq1,\ 0\leq j\leq d_\ell(q)-1\bigr\}
\end{align*}
are pairwise distinct and at least $Y_n^\ast(q)-1$.  For distinctness, note that
the $d_\ell(q)$ observations failing at $v_\ell(q)$ are counted in $Y_\ell(q)$ but not
in $Y_{\ell-1}(q)$, because $v_{\ell-1}(q)>v_\ell(q)$; hence
$Y_\ell(q)\geq Y_{\ell-1}(q)+d_\ell(q)$ and the integer blocks
$\{Y_\ell(q)-d_\ell(q)+1,\ldots,Y_\ell(q)\}$, $\ell\geq1$, are disjoint and increasing
in $\ell$.  For the lower bound, the smallest of all these integers is
$Y_1(q)-d_1(q)+1\geq Y_n^\ast(q)-1$ by \eqref{tietwo}.  Therefore the left-hand side
of \eqref{ranksum} is at most
$\sum_{m\geq Y_n^\ast(q)-1}m^{-2}\leq\int_{Y_n^\ast(q)-2}^\infty x^{-2}dx
=\{Y_n^\ast(q)-2\}^{-1}$, which is \eqref{ranksum}.
Next, for $0\leq x\leq1/2$,
\begin{align*}
 0\leq-\log(1-x)-x=\int_0^x{y\over1-y}\,dy
 \leq{1\over1-x}\int_0^xy\,dy={x^2\over2(1-x)}\leq x^2,
\end{align*}
and by \eqref{tietwo} and \eqref{riskatone} we have
$d_\ell(q)/Y_\ell(q)\leq2/Y_n^\ast(q)\leq1/2$ for all $\ell$ and $q$, with probability
tending to one.  On that event, by $d_\ell^2\leq2d_\ell$ and \eqref{ranksum},
\begin{align}
 0\leq\sum_\ell\Bigl[-\log\bigl\{1-{d_\ell(q)\over Y_\ell(q)}\bigr\}
             -{d_\ell(q)\over Y_\ell(q)}\Bigr]
 \leq \sum_\ell{d_\ell(q)^2\over Y_\ell(q)^2}
 \leq {2\over Y_n^\ast(q)-2}.
 \label{KMNA}
\end{align}
By \eqref{riskatone} the right-hand side of \eqref{KMNA} is $O_{\P}(r_n^{-2})$ uniformly
in $q\in K$.  Thus the logarithm of the product limit and the
Nelson--Aalen integral differ uniformly by an $O_{\P}(r_n^{-2})$ term.  On the same event no
factor of the product limit vanishes, so $\widetilde p_n(q)>0$ and, more precisely,
\begin{align*}
 \log{\widetilde p_n(q)\over p_n(q)}
 ={}&-\bigl\{\widetilde\Lambda_n(q)
       -\int_{(0,1]}{H_{1,n}(dv;q)\over H_n(v;q)}\bigr\}-\sum_\ell\Bigl[-\log\bigl\{1-{d_\ell(q)\over Y_\ell(q)}\bigr\}
             -{d_\ell(q)\over Y_\ell(q)}\Bigr].
\end{align*}
Using the above limit result and equations \eqref{hazardfull} and
\eqref{populationhazard}, we get that the uniform absolute value of the right-hand side
tends to zero.  Since
$|e^x-1|\leq e^{|x|}-1$, this also yields
\begin{align}
 \sup_{q\in K}\bigl|{\widetilde p_n(q)\over p_n(q)}-1\bigr|
 \overset{\P}{\longrightarrow}0.
 \label{oraclecons}
\end{align}
Steps~1--4 used no property of $K=[1,T]^2$ beyond its being a compact subset of
$[1/4,16T]^2$, the direction set of Assumption~\ref{a:radialmodulus}, and we state this for later use.  Let
$K^\ast=[\kappa,\kappa^\ast]^2$ with $1/4\leq\kappa\leq1\leq\kappa^\ast\leq16T$ and take
$\tau<\kappa/\kappa^\ast$.  In Step~2 the index ranges in \eqref{fourclasses} become
$[\tau\kappa,\kappa^\ast]$ for $x$ and $y$ and
$[\kappa/\kappa^\ast,\kappa^\ast/\kappa]$ for $z$, and the common envelope becomes
$\{U\geq\tau\kappa^2/\kappa^\ast,V\geq\tau\kappa^2/\kappa^\ast\}$, because
$U>\tau\kappa$ and $U/V\leq\kappa^\ast/\kappa$ force $V>\tau\kappa^2/\kappa^\ast$; it
is still of bounded $\nu_n$-mass by \eqref{Hn} and \eqref{Hlimit}.  In Step~3 the
inequalities
$W_{1,n}(1,1)/\kappa^\ast\leq W_{1,n}(q)\leq W_{1,n}(1,1)/\kappa$, valid for
$q\in K^\ast$, give
$\{W_{1,n}(1,1)\geq\kappa^\ast v\}\subset\{W_{1,n}(q)\geq v\}
\subset\{W_{1,n}(1,1)\geq\kappa v\}$, so that \eqref{layerenvelope} holds with
Assumption~\ref{a:radialmodulus} invoked at $A=\kappa^\ast/\kappa$ and at the argument
$\kappa v\leq\kappa\tau<\kappa/\kappa^\ast$.  Nothing else changes, and therefore
\eqref{empsix}, \eqref{empH}, \eqref{empH1}, \eqref{layerbounds} and
\eqref{oraclecons} all hold with their suprema taken over $K^\ast$.

\textit{Step 5: from the oracle to the estimated standardizations.}
Proposition~\ref{prop:stability}(i) of
Appendix~\ref{sec:appendix} applies with $K^\circ=K$ and gives
\begin{align}
 \sup_{q\in K}\bigl|\log\widehat p_n(q)-\log\widetilde p_n(q)\bigr|
 \overset{\P}{\longrightarrow}0 .
 \label{stabilitycons}
\end{align}
\textit{Conclusion.}
Combining \eqref{stabilitycons} with the oracle statement \eqref{oraclecons} of
Step~4, we obtain
\begin{align*}
 \sup_{q\in K}\bigl|\log\widehat p_n(q)-\log p_n(q)\bigr|
 \overset{\P}{\longrightarrow}0 ,
\end{align*}
and $\log\{p_n(s,t)/p_n(1,1)\}\to\log F^\circ(s,t)$ uniformly on $K$ by
Assumption~\ref{a:FG} and the positivity of $F^\circ$.  Subtracting the statement at $q=(1,1)$
gives $\sup_{q\in K}|\log\widehat F_n^\circ(q)-\log F^\circ(q)|\overset{\P}\to0$, and
since $F^\circ$ is bounded on $K$ the same statement holds without logarithms.  This proves the
theorem.

\subsection{Proof of Theorem~\ref*{thm:normality}}
Throughout, $\mathbb V$ denotes the centered Gaussian process indexed by $L^2(\nu)$
whose covariance is $\E[\mathbb V(f)\mathbb V(g)]=\nu(fg)$, and
\begin{align}
 \mathbb L(q)=-\mathbb V(\xi_q),\qquad q\in(0,\infty)^2 ,
 \label{Ldef}
\end{align}
so that $\mathbb G(q)=\mathbb L(\boldsymbol 1)-\mathbb L(q)$ has the covariance
\eqref{Glimit}.

\emph{Plan.}  The proof follows a linearization, empirical-process and
continuous-mapping structure.  Steps~1--5 treat the oracle
problem and Step~6 puts them together and plugs in the estimated standardizations.
Step~1 deals with the deterministic bias.  Step~2 identifies the limit of the
normalized law of a single observation; this is the key input from
which every covariance and continuity statement follows.
Step~3 proves weak convergence of the empirical risk and failure processes on the
region $[\tau,1]$, bounded away from the origin.  Step~4 linearizes the product limit
in that region.  Step~5 shows that the contribution of $(0,\tau]$ is negligible as
$\tau\downarrow0$, uniformly in $n$, which is what allows $\tau$ to be removed, and
constructs $\mathbb L$.  Step~6 passes from the cumulative hazard to the product limit,
inserts the estimated standardizations by using
Appendix~\ref{sec:appendix}, and finally takes the ratio.

Choose $\delta\in(0,1/2)$ and put
$\widetilde K=[1-\delta,T+\delta]^2$, with $K$ contained in its interior and
$\widetilde K\subset[1/4,4T]^2$.

\textit{Step 1: the deterministic approximation.}
Step~1 consists of Assumption~\ref{a:bias}, which we restate here in the form used
below: with $\widetilde K=[1-\delta,T+\delta]^2$,
\begin{align}
 \sup_{q\in\widetilde K}r_n\bigl|{p_n(q)\over p_n(1,1)}-F^\circ(q)\bigr|
 \longrightarrow0 .
 \label{bias2}
\end{align}
Since $\inf_{\widetilde K}F^\circ>0$ by Assumption~\ref{a:FG}, \eqref{bias2} also holds after
taking logarithms on both sides of the difference.

\textit{Step 2: the limit of the normalized law of one observation.}
Recall the normalization \eqref{nu} and abbreviate
$U=U_{1,n}=\psi_{1,n}(Z_1^{(1)})$, $V=V_{1,n}=\psi_{2,n}(Z_1^{(2)})$,
$\delta^{(j)}=\delta_1^{(j)}$.  Fix $\lambda\in(0,1)$ and define the finite measures
\begin{align*}
 \mu_n^X(B)&={\P\bigl((\psi_{1,n}(X^{(1)}),\psi_{2,n}(X^{(2)}))\in B\bigr)
                \over p_n(1,1)},\\
 \mu_n^C(B)&={\P\bigl((\psi_{1,n}(C^{(1)}),\psi_{2,n}(C^{(2)}))\in B\bigr)
                \over c_n(1,1)},
 \qquad B\subset[\lambda,\infty)^2\ \text{Borel}.
\end{align*}
Their survival functions are $\mu_n^X([x,\infty)\times[y,\infty))=p_n(x,y)/p_n(1,1)$ and
$\mu_n^C([x,\infty)\times[y,\infty))=c_n(x,y)/c_n(1,1)$, which by Assumption~\ref{a:FG}
converge, locally uniformly on $(0,\infty)^2$, to the continuous
functions $F^\circ$ and $G^\circ$.  Moreover the two families are uniformly tight on
$[\lambda,\infty)^2$, since for $L\geq1$,
\begin{align}
 \mu_n^X\bigl([\lambda,\infty)^2\setminus[\lambda,L]^2\bigr)
 \leq{p_n(L,\lambda)\over p_n(1,1)}+{p_n(\lambda,L)\over p_n(1,1)}
 \longrightarrow F^\circ(L,\lambda)+F^\circ(\lambda,L),
 \label{Ltail}
\end{align}
which tends to zero as $L\to\infty$ because $F^\circ(s,t)\to0$ when $s\vee t\to\infty$;
the same computation with $c_n$ and $G^\circ$ applies to $\mu_n^C$.  Convergence of
the survival functions to a continuous limit, together
with the tightness of \eqref{Ltail} and its counterpart for $\mu_n^C$, gives
weak convergence of finite measures on $[\lambda,\infty)^2$:
\begin{align}
 \mu_n^X\Longrightarrow\mu^X,\qquad \mu_n^C\Longrightarrow\mu^C,
 \label{muka}
\end{align}
where $\mu^X$ and $\mu^C$ are finite Borel measures on $[\lambda,\infty)^2$ with
survival functions $F^\circ$ and $G^\circ$.  Neither $\mu^X$ nor $\mu^C$ accumulates mass on a line
$\{x_1=c\}$, $\{x_2=c\}$, $\{c_1=c\}$ or
$\{c_2=c\}$, because $F^\circ$ and $G^\circ$ are continuous in each argument.

Now let
\begin{align*}
 \Psi(x_1,x_2,c_1,c_2)=
 \bigl(x_1\wedge c_1,\ x_2\wedge c_2,\
       \ind_{\{x_1\leq c_1\}},\ \ind_{\{x_2\leq c_2\}}\bigr).
\end{align*}
Since $X$ and $C$ are independent, since
$\P\bigl(\psi_{1,n}(Z^{(1)})>1,\psi_{2,n}(Z^{(2)})>1\bigr)=p_n(1,1)c_n(1,1)$ by \eqref{ZisXC},
and since
$\{U\geq\lambda,V\geq\lambda\}$ is exactly the event where all four scaled coordinates are at
least $\lambda$, we have the exact identity
\begin{align}
 \nu_n\bigl(\ \cdot\ \cap\{U\geq\lambda,V\geq\lambda\}\bigr)
 =\bigl(\mu_n^X\otimes\mu_n^C\bigr)\circ\Psi^{-1},
 \label{imagemeasure}
\end{align}
both sides being measures on $[\lambda,\infty)^2\times\{0,1\}^2$.  By \eqref{muka},
$ \mu_n^X\otimes\mu_n^C\Rightarrow\mu^X\otimes\mu^C$ on $[\lambda,\infty)^4$.  The map $\Psi$
is continuous off $\{x_1=c_1\}\cup\{x_2=c_2\}$, and this set is
$(\mu^X\otimes\mu^C)$-null, since by Fubini's theorem
$(\mu^X\otimes\mu^C)\{x_1=c_1\}=\int\mu^C\bigl(\{x_1\}\times[\lambda,\infty)\bigr)
\mu^X(dx_1\,dx_2)=0$, since $\mu^C$ accumulates no mass on vertical lines, and symmetrically for
$\{x_2=c_2\}$.  Combining \eqref{imagemeasure} with the continuous mapping theorem for
weak convergence therefore gives
\begin{align}
 \nu_n\bigl(\ \cdot\ \cap\{U\geq\lambda,V\geq\lambda\}\bigr)\Longrightarrow\nu^{(\lambda)}
 :=\bigl(\mu^X\otimes\mu^C\bigr)\circ\Psi^{-1}.
 \label{nulimit}
\end{align}
The measures $\nu^{(\lambda)}$ are restrictions of one another as $\lambda$ decreases, and we write
$\nu$ for the common extension.

We shall use \eqref{nulimit} only through the following derivation.  Call a set
\emph{admissible} if it is a finite combination, contained in
$\{U\geq\lambda,V\geq\lambda\}$, of sets of the form
\begin{align*}
 \{U\geq x\},\quad\{V\geq y\},\quad\{\delta^{(1)}=1\},\quad\{\delta^{(2)}=1\},\quad
 \{W_{1,n}(q)\geq v\},\quad\{\Delta_{1,n}(q)=1\},
\end{align*}
each with $\geq$ optionally replaced by $>$, together with $\{U\leq zV\}$ or
$\{V\leq zU\}$ intersected with a mark
$(\delta^{(1)},\delta^{(2)})\in\{(1,0),(0,1)\}$.  Note that all of these are fixed
subsets of $[\lambda,\infty)^2\times\{0,1\}^2$, not depending on $n$, since $W_{1,n}(s,t)=\min(U/s,V/t)$, and $\Delta_{1,n}(s,t)$ is determined by
$U,V,\delta^{(1)},\delta^{(2)}$ through \eqref{Delta}.  Then it is straightforward to
verify case by case that
\begin{align}
 \nu_n(A)\longrightarrow\nu(A)\qquad\text{for every admissible }A.
 \label{admissible}
\end{align}
Indeed, in the coordinates $(x_1,x_2,c_1,c_2)$ the boundary of such a set is contained
in a finite union of the sets
\begin{align*}
 \{x_j=c\},\quad\{c_j=c\},\quad\{x_j=c_j\},\quad
 \{x_1=zc_2\},\quad\{c_1=zx_2\},
\end{align*}
with $c,z$ constants.  The sets $\{U\geq x\}$ and $\{V\geq y\}$ contribute boundaries of
the first two types, since $U=x_1\wedge c_1$ and $V=x_2\wedge c_2$; the marks
$\{\delta^{(j)}=1\}=\{x_j\leq c_j\}$ contribute the third type; the set
$\{W_{1,n}(q)\geq v\}$ contributes the first two types, its boundary lying in
$\{x_1=sv\}\cup\{x_2=tv\}\cup\{c_1=sv\}\cup\{c_2=tv\}$; and the set
$\{\Delta_{1,n}(q)=1\}=\{\min(x_1/s,x_2/t)\leq\min(c_1/s,c_2/t)\}$ has boundary
contained in
$\{x_1=c_1\}\cup\{x_2=c_2\}\cup\{x_1=(s/t)c_2\}\cup\{c_1=(s/t)x_2\}$.  The ratio sets
contribute the last two types because on the mark $(1,0)$ one has $U=x_1$ and $V=c_2$,
and on the mark $(0,1)$ one has $U=c_1$ and $V=x_2$.  Each of these five sets is
$(\mu^X\otimes\mu^C)$-null, the first two by the line-null property of $\mu^X$ and
$\mu^C$ together with Fubini's theorem, the third as shown above, and the last two
because for fixed $x_1$ the section $\{c_2=x_1/z\}$ is $\mu^C$-null, and symmetrically
for $\{c_1=zx_2\}$.  This proves \eqref{admissible}.
Note that restricting to the two mixed marks is crucial because it keeps the ratio
sets admissible; a ratio set on the mark $(1,1)$ would require $\mu^X$ not to accumulate mass on the
ray $\{x_1=zx_2\}$, but no such set occurs in \eqref{failuredecomp}.

\textit{Step 3: the empirical tail process.}
Fix $\tau\in(0,1)$ and, for $v\in[\tau,1]$, set
\begin{align*}
 \phi_{0,n}(v;q)=\ind_{\{W_{1,n}(q)\geq v\}},\qquad
 \phi_{1,n}^\tau(v;q)=\Delta_{1,n}(q)\ind_{\{\tau<W_{1,n}(q)\leq v\}},
\end{align*}
with $\phi^\tau_{0,n}=\phi_{0,n}$, write $\phi^\tau_{a,i,n}(v;q)$ for the same
indicator evaluated at the $i$th observation, and write $\phi^\tau_a$ for the corresponding
indicator in the limiting coordinates of Step~2.  For $a,b\in\{0,1\}$ put
\begin{align}
 \Sigma_{ab}^\tau((v,q),(v',q'))=\lim_{n\to\infty}
 \nu_n\bigl\{\phi_{a,n}^\tau(v;q)\phi_{b,n}^\tau(v';q')\bigr\},
 \label{Sigma}
\end{align}
the argument of $\nu_n$ being an indicator, so that this is the normalized probability
of the intersection of the two underlying events; the existence of the limits, uniformly
in $v,v'\in[\tau,1]$ and in $q,q'\in\widetilde K$, is part of what is verified below.
Set $\mathbb V^\tau_a(v;q)=\mathbb V(\phi^\tau_a(v;q))$, with $\mathbb V$ the Gaussian
process of \eqref{Ldef}, and abbreviate $\mathbb V_0=\mathbb V^\tau_0$.  Being
integrals of one $\mathbb V$, these satisfy the compatibility relation
\begin{align}
 \mathbb V_1^\tau(v;q)=\mathbb V_1^{\tau'}(v;q)-\mathbb V_1^{\tau'}(\tau;q),
 \qquad 0<\tau'<\tau,
 \label{compatible}
\end{align}
automatically, it being the pathwise identity
$\phi_1^\tau(v;q)=\phi_1^{\tau'}(v;q)-\phi_1^{\tau'}(\tau;q)$ read under $\mathbb V$.
At the population thresholds consider
\begin{align*}
 \mathbb V_{0,n}(v;q)&=r_n\{\widetilde H_n(v;q)-H_n(v;q)\},\\
 \mathbb V_{1,n}((\tau,v];q)&=r_n
 \{\widetilde H_{1,n}((\tau,v];q)-H_{1,n}((\tau,v];q)\}.
\end{align*}
We prove that these two processes converge jointly to
$(\mathbb V_0,\mathbb V_1^\tau)$ on
$[\tau,1]\times\widetilde K$.  For this purpose, the five ingredients of the
triangular-array argument are verified next.

First consider the covariance limits.  Take
$\lambda=\tau(1-\delta)$.  Every product
$\phi^\tau_{a,n}(v;q)\phi^\tau_{b,n}(v';q')$ with
$v,v'\in[\tau,1]$ and $q,q'\in\widetilde K$ is the indicator of an admissible set in
the sense of Step~2.
Hence \eqref{admissible} gives the limits in
\eqref{Sigma}, with
$\Sigma^\tau_{ab}((v,q),(v',q'))=\nu\{\phi_a^\tau(v;q)\phi_b^\tau(v';q')\}$, where
$\phi^\tau_a$ denotes the corresponding indicator in the limiting coordinates.
By \eqref{reduction} and the independence of $T_{1,n}$ and $D_{1,n}$, the censoring
weight in these limits is evaluated at the larger of the two directional failure levels,
which is what makes the covariance depend on the censoring distribution.

Finally, the limits of the uncentered products in \eqref{Sigma} are the covariance
limits of the centered processes, because each $\nu_n(\phi^\tau_{a,n})$ is bounded, so
\begin{align*}
 {\E[\phi_{a,n}^{\tau}]\,\E[\phi_{b,n}^{\tau}]
  \over p_n(1,1)\,c_n(1,1)}
 ={r_n^2\over n}\,\nu_n\bigl(\phi^\tau_{a,n}\bigr)\nu_n\bigl(\phi^\tau_{b,n}\bigr)
 =O\bigl(r_n^2/n\bigr)=o(1),
\end{align*}
the last step because $r_n^2/n=p_n(1,1)c_n(1,1)\to0$.

We now establish finite-dimensional convergence.  Fix finitely many indices
$\theta_j=(v_j,q_j)\in[\tau,1]\times\widetilde K$, components
$a_j\in\{0,1\}$ and coefficients $c_j$, and let
\begin{align*}
 \zeta_{i,n}={1\over r_n}\sum_jc_j
 \bigl\{\phi_{a_j,n}^{\tau}(\theta_j)(O_{i,n})
       -\E[\phi_{a_j,n}^{\tau}(\theta_j)(O_{i,n})]\bigr\},
 \qquad i=1,\ldots,n,
\end{align*}
be the summands of a linear combination of terms from
$(\mathbb V_{0,n},\mathbb V_{1,n})$, which are independent and centered within the
row.  Since each $\phi$ takes values in $\{0,1\}$, $|\zeta_{i,n}|\leq2\sum_j|c_j|/r_n$,
which tends to zero; hence for each fixed $\eta>0$ the Lindeberg criterion sum
$\sum_{i=1}^n\E\bigl[\zeta_{i,n}^2\ind_{\{|\zeta_{i,n}|>\eta\}}\bigr]$ is identically zero for
all $n$ large.  At the same time
$\sum_{i=1}^n\E[\zeta_{i,n}^2]
=\sum_{j,l}c_jc_l\{\nu_n(\phi^\tau_{a_j,n}(\theta_j)\phi^\tau_{a_l,n}(\theta_l))
-(r_n^2/n)\nu_n(\phi^\tau_{a_j,n}(\theta_j))\nu_n(\phi^\tau_{a_l,n}(\theta_l))\}$
converges, by \eqref{Sigma} and the preceding display, to the quadratic form
$\sum_{j,l}c_jc_l\Sigma^\tau_{a_ja_l}(\theta_j,\theta_l)$.  The Lindeberg--Feller
theorem and the Cram\'er--Wold device then give the required finite-dimensional
convergence; this is the first condition.

Next, we establish an entropy bound.  Let $\mathcal F_n$ be the class of the two
indicator components divided by $r_n$, and equip it with
\begin{align*}
 \varrho_n(f,g)^2=\sum_{i=1}^n\E[\|f(O_{i,n})-g(O_{i,n})\|_2^2],
 \qquad O_{i,n}=(U_{i,n},V_{i,n},\delta_i^{(1)},\delta_i^{(2)}).
\end{align*}
Note that since the observations
are identically distributed within the row and $f\in\mathcal F_n$ is an indicator
divided by $r_n$, we have, for $f=g_1/r_n$ and $g=g_2/r_n$ with $g_1,g_2$ indicators,
\begin{align}
 \varrho_n(f,g)^2={n\over r_n^2}\E\bigl[(g_1(O_{1,n})-g_2(O_{1,n}))^2\bigr]
 =\nu_n\bigl(|g_1-g_2|\bigr);
 \label{rhonu}
\end{align}
that is, squared $\varrho_n$-distances are $\nu_n$-masses, exactly the quantity that the
maximal inequality \eqref{maximalVC} controls.  Further, the triangular-array central
limit theorem used below is stated for the uncentered class $\mathcal F_n$ and
performs the centering itself, so no centering of the class is required.

The class $\mathcal F_n$ has the same orthant structure as in the consistency proof.
The at-risk indicators are upper orthants in $(U_{i,n},V_{i,n})$.  By
\eqref{failuredecomp}, with the fixed lower cutoff changed from $\tau$ to
$\tau(1-\delta)$ so that the transformed region remains bounded away from the axes,
the failure indicators are signed sums of six orthant indicators from the classes
${\cal K}$, ${\cal J}_1$, ${\cal J}_2$ of \eqref{fourclasses}, with index ranges now
$x,y\in[\tau(1-\delta),T+\delta]$ and
$z\in[z_{\min},z_{\max}]$, $z_{\max}=z_{\min}^{-1}=(T+\delta)/(1-\delta)$.  All these
indicators vanish off the event
$E:=\{U_{1,n}\geq\tau(1-\delta)/z_{\max},V_{1,n}\geq\tau(1-\delta)/z_{\max}\}$. Indeed, for a
${\cal J}_1$ index, $U_{1,n}>\tau(1-\delta)$ and $U_{1,n}/V_{1,n}\leq z_{\max}$ force
$V_{1,n}\geq\tau(1-\delta)/z_{\max}$, and symmetrically for ${\cal J}_2$.  The mass of
that envelope is bounded uniformly in $n$, since by writing
$\theta=\tau(1-\delta)/z_{\max}$, \eqref{Hn} and \eqref{Hlimit} we get
\begin{align}
 \nu_n(E)=H_n\bigl(1;(\theta,\theta)\bigr)
 \longrightarrow F^\circ(\theta,\theta)G^\circ(\theta,\theta)<\infty,
 \label{envelopemass2}
\end{align}
so $\sup_n\nu_n(E)=:C_\tau<\infty$.

Each of the two families making up ${\cal F}_n$ is built from the same geometric sets
as in \eqref{fourclasses}, since every one of its members is a signed sum of at most six
indicators of the sets described there, all of them vanishing off $E$.  By the discussion preceding \eqref{VC}, those sets form
Vapnik--Chervonenkis classes of fixed index; sums of finitely many such classes, and
pairs of them, again have uniform covering numbers of the form \eqref{VC}
\citep[Lemma~2.6.17 and Theorem~2.6.7]{vanderVaartWellner1996S}.  Hence there are constants $C,\vartheta$ with
\begin{align}
 \sup_QN\bigl(\varepsilon\|G\|_{Q,2},{\cal F}_n,L^2(Q)\bigr)
 \leq C\varepsilon^{-\vartheta},\qquad0<\varepsilon<1 ,
 \label{arraycovering}
\end{align}
$G=\ind_E/r_n$ being the common envelope, the supremum running over all probability
measures $Q$, and the bound being free of $n$.  In particular the entropy integral
\begin{align*}
 \int_0^{\varsigma_n}\Bigl\{\log\sup_QN\bigl(\varepsilon\|G\|_{Q,2},{\cal F}_n,
 L^2(Q)\bigr)\Bigr\}^{1/2}d\varepsilon
 \leq\int_0^{\varsigma_n}\{\log(C\varepsilon^{-\vartheta})\}^{1/2}\,d\varepsilon
 \longrightarrow0
\end{align*}
for every $\varsigma_n\downarrow0$, because the integrand is integrable at zero.
Thus the random-entropy hypothesis of the triangular-array central limit
theorem is satisfied; for its random semimetric take $Q=\mathbb P_n$, while
$\sqrt n\|G\|_{\mathbb P_n,2}=O_{\P}(1)$ because
$\E[n\mathbb P_nG^2]=\nu_n(E)\leq C_\tau$.
More explicitly, writing $R_n=\sqrt n\|G\|_{\mathbb P_n,2}$, the substitution
$u=\varepsilon R_n$ bounds the theorem's entropy integral by
$R_n\int_0^{(\delta_n/R_n)\wedge1}\{\log(C\varepsilon^{-\vartheta})\}^{1/2}
\,d\varepsilon=o_{\P}(1)$ for every $\delta_n\downarrow0$; this is the second condition.

We turn to the envelope condition, which requires that
$\sum_{i=1}^n\E^*[\|f\|^2_{\mathcal F_n}\ind_{\{\|f\|_{\mathcal F_n}>\eta\}}]\to0$ for
every $\eta>0$, where $\|f\|_{\mathcal F_n}$ denotes the pointwise supremum over
$f\in\mathcal F_n$.  Both components of an $f\in\mathcal F_n$ are
indicators divided by $r_n$, so
\begin{align*}
 \|f\|_{\mathcal F_n}
 =\sup_{\substack{\tau\leq v\leq1\\q\in\widetilde K}}
 {1\over r_n}\Biggl\|
 \begin{pmatrix}
  \ind_{\{W_{1,n}(q)\geq v\}}\\
  \Delta_{1,n}(q)\ind_{\{\tau<W_{1,n}(q)\leq v\}}
 \end{pmatrix}\Biggr\|_2
 \leq {\sqrt2\over r_n}.
\end{align*}
Since $r_n\to\infty$, for each fixed $\eta>0$ we have $\sqrt2/r_n\leq\eta$ for all $n$
large, so the indicator $\ind_{\{\|f\|_{\mathcal F_n}>\eta\}}$ vanishes identically and
the displayed sum is exactly zero for all sufficiently large $n$.  Thus the envelope
condition is satisfied; this is the third condition.

We next upgrade the pointwise covariance convergence to uniform convergence.
Here discretization is needed, but it can be built under the fixed
limit measure $\nu$ of Step~2, and then applied to $\nu_n$.  Write $\theta=(v,q)$ and $\Sigma^\tau_{ab,n}(\theta,\theta')
=\nu_n\{\phi^\tau_{a,n}(\theta)\phi^\tau_{b,n}(\theta')\}$ for the finite-$n$
expression in \eqref{Sigma}.  Given $\varepsilon>0$ and one of the six orthant classes of \eqref{fourclasses},
partition the compact index range of each of its two index coordinates into finitely
many intervals whose endpoints are chosen so that the $\nu$-mass of the corresponding
open strip is at most $\varepsilon/2$; this is possible because $\nu(E)<\infty$, and it
requires at most $2\nu(E)/\varepsilon+1$ cut points per coordinate.  Taking the
resulting indicators at the cut points as endpoint functions produces, for every index
$\theta$, a bracket
$\phi^-_a(\theta)\leq\phi^\tau_{a,n}(\theta)\leq\phi^+_a(\theta)$ built from them,
clipped to $[0,1]$, with $\nu(\phi_a^+-\phi_a^-)\leq6\varepsilon$.  Every endpoint
function is the indicator of an admissible set, so \eqref{admissible} applies to it;
since there are finitely many of them, for all $n$ large we also have
$\nu_n(\phi^+_a(\theta)-\phi^-_a(\theta))\leq7\varepsilon$ simultaneously for all
$\theta$.  Using $|ff'-gg'|\leq|f-g|+|f'-g'|$ for numbers in $[0,1]$ and
$|\phi^\tau_{a,n}(\theta)-\phi^-_a(\theta)|\leq\phi^+_a(\theta)-\phi^-_a(\theta)$,
\begin{align*}
 \bigl|\Sigma^\tau_{ab,n}(\theta,\theta')
       -\nu_n\bigl\{\phi^-_a(\theta)\phi^-_b(\theta')\bigr\}\bigr|
 \leq\nu_n\bigl(\phi^+_a(\theta)-\phi^-_a(\theta)\bigr)
 +\nu_n\bigl(\phi^+_b(\theta')-\phi^-_b(\theta')\bigr)\leq14\varepsilon,
\end{align*}
and the same bound with $\nu$ in place of $\nu_n$ and $12\varepsilon$ in place of
$14\varepsilon$.  Each product $\phi^-_a(\theta)\phi^-_b(\theta')$ is a fixed simple
function, measurable with respect to the finite algebra generated by the endpoint sets;
every atom of that algebra is a combination of admissible sets and hence
admissible, so \eqref{admissible} gives
$\nu_n\{\phi^-_a(\theta)\phi^-_b(\theta')\}\to\nu\{\phi^-_a(\theta)\phi^-_b(\theta')\}$,
and the convergence is uniform over the finitely many possible products.  Combining
the two inequalities and one convergence gives
\begin{align}
 \limsup_{n\to\infty}
 \sup_{\substack{\tau\leq v,v'\leq1\\q,q'\in\widetilde K}}
 \bigl|\Sigma_{ab,n}^\tau(\theta,\theta')
       -\Sigma_{ab}^\tau(\theta,\theta')\bigr|
 \leq 26\,\varepsilon,\qquad a,b\in\{0,1\}.
 \label{uniformcov}
\end{align}
Letting $\varepsilon\downarrow0$ proves the uniformity asserted after \eqref{Sigma}; this is the fourth condition.

Let $\varrho$ be the limiting intrinsic semimetric, that is
$\varrho(\theta,\theta')^2=\Sigma^\tau_{aa}(\theta,\theta)
+\Sigma^\tau_{bb}(\theta',\theta')-2\Sigma^\tau_{ab}(\theta,\theta')$ for the
corresponding components, and let $\varrho_n$ denote the same expression with
$\Sigma^\tau_{ab}$ replaced by $\Sigma^\tau_{ab,n}$, consistent with the definition of
$\varrho_n$ above.  By \eqref{uniformcov}, $\|\varrho_n-\varrho\|_\infty\to0$.  Moreover
$\varrho$ is continuous because as $\theta''\to\theta$, the symmetric difference of the
underlying sets shrinks to a $\nu$-null boundary of the type listed in Step~2, so
$\nu|\phi^\tau_a(\theta'')-\phi^\tau_a(\theta)|\to0$ by dominated convergence.  A
continuous semimetric on a compact set is totally bounded, and for every
$\varsigma_n\downarrow0$,
\begin{align*}
 \sup_{\varrho(\theta,\theta')<\varsigma_n}\varrho_n(\theta,\theta')
 \leq\varsigma_n+\|\varrho_n-\varrho\|_\infty\longrightarrow0;
\end{align*}
this is the fifth and final condition.

Consequently we may apply the triangular-array central
limit theorem under the random-entropy condition
\citep[Theorem~2.11.1]{vanderVaartWellner1996S}. The entropy condition there is
stated with a random measure, and \eqref{arraycovering}, being uniform over
all probability measures $Q$, dominates it.  The same finite entropy integral is Dudley's sufficient
condition for the limiting Gaussian process to have a version with uniformly
$\varrho$-continuous sample paths; since $\varrho$ is itself continuous with respect to
the ordinary topology on $[\tau,1]\times\widetilde K$, as just shown, this upgrades to
continuity in the ordinary sense.  We therefore obtain the desired conclusion:
\begin{align}
 (\mathbb V_{0,n},\mathbb V_{1,n})
 \leadsto(\mathbb V_0,\mathbb V_1^\tau)
 \quad\text{in }[\ell^\infty([\tau,1]\times\widetilde K)]^2,
 \label{processconv}
\end{align}
with $(\mathbb V_0,\mathbb V_1^\tau)$ having a version with continuous sample paths on
$[\tau,1]\times\widetilde K$.

Two extensions of \eqref{processconv} will be needed, which we state for future reference.  First, the
argument used no property of $\widetilde K$ beyond compactness, so \eqref{processconv}
holds with $\widetilde K$ replaced by any compact subset of $[1/8,8T]^2$; combined
with the scaling identity \eqref{rayscaling}, which turns a statement for
$v\in[\tau/2,1]$ and $q\in2\widetilde K$ into one for $v\in[\tau,2]$ and
$q\in\widetilde K$, this extends \eqref{processconv} to $[\tau,2]\times\widetilde K$.
Second, the classes ${\cal J}_1'$ and ${\cal J}_2'$ of \eqref{fourclasses} differ from
${\cal J}_2$ and ${\cal J}_1$ only in the censoring mark attached to them, so they
satisfy \eqref{arraycovering} with the same constants and enter the envelope and
semimetric computations identically.  The same array central limit theorem therefore also applies, and in particular the family
\begin{align}
 \Bigl\{{1\over r_n}\sum_{i=1}^n\bigl[g(O_{i,n})-\E[g(O_{i,n})]\bigr]:
 g\in{\cal F}\Bigr\}\
 \label{aecsix}
\end{align}
is asymptotically equicontinuous in $\varrho_n$ for each of the six families ${\cal F}$ of \eqref{fourclasses}, with $v$ ranging over
$[\tau,2]$ and $q$ over $\widetilde K$.

\textit{Step 4: linearization on $[\tau,1]$.}
Let
\begin{align*}
 \widetilde\Lambda_n(q)=
 \int_{(0,1]}{\widetilde H_{1,n}(dv;q)\over\widetilde H_n(v;q)}.
\end{align*}
For fixed $\tau\in(0,1)$, define the truncated linear term
\begin{align}
 \mathbb L_{n,\tau}(q)={}&
 \int_{(\tau,1]}{\mathbb V_{1,n}(dv;q)\over H_n(v;q)}
 -\int_{(\tau,1]}{\mathbb V_{0,n}(v;q)\over H_n(v;q)^2}
 H_{1,n}(dv;q).
 \label{Lntau}
\end{align}
and write
\begin{align}
 \mathbb L_\tau(q)={\mathbb V_1^\tau(1;q)\over H(1;q)}
 -\int_{(\tau,1]}\mathbb V_1^\tau(v;q)\,d\{H(v;q)^{-1}\}
 -\int_{(\tau,1]}{\mathbb V_0(v;q)\over H(v;q)^2}H_1(dv;q)
 \label{Ltau}
\end{align}
for the corresponding limiting expression.
On $[\tau,1]$, both $H_n$ and, by \eqref{processconv}, $\widetilde H_n$ are uniformly
bounded away from zero with probability tending to one.  We claim that, uniformly in
$q\in\widetilde K$,
\begin{align}
 r_n\Bigl\{
 \int_{(\tau,1]}{\widetilde H_{1,n}(dv;q)\over\widetilde H_n(v;q)}
 -\int_{(\tau,1]}{H_{1,n}(dv;q)\over H_n(v;q)}
 \Bigr\}
={}&\mathbb L_{n,\tau}(q)+o_{\P}(1).
 \label{linearization}
\end{align}
To verify this, we use the exact
decomposition
\begin{align}
 \int_{(\tau,1]}{\widetilde H_{1,n}(dv)\over\widetilde H_n(v)}
 -\int_{(\tau,1]}{H_{1,n}(dv)\over H_n(v)}
 =\Pi_{1,n}-\Pi_{2,n}+\Pi_{3,n}-\Pi_{4,n},
 \label{fourterms}
\end{align}
with
\begin{align*}
 \Pi_{1,n}&=\int_{(\tau,1]}{(\widetilde H_{1,n}-H_{1,n})(dv)\over H_n(v)},
 &\Pi_{2,n}&=\int_{(\tau,1]}{\widetilde H_n(v)-H_n(v)\over H_n(v)^2}\,H_{1,n}(dv),\\
 \Pi_{3,n}&=\int_{(\tau,1]}{\{\widetilde H_n(v)-H_n(v)\}^2\over
                         \widetilde H_n(v)H_n(v)^2}\,H_{1,n}(dv),
 &\Pi_{4,n}&=\int_{(\tau,1]}{\widetilde H_n(v)-H_n(v)\over
                         \widetilde H_n(v)H_n(v)}\,
          (\widetilde H_{1,n}-H_{1,n})(dv).
\end{align*}
By the definitions of $\mathbb V_{0,n}$ and $\mathbb V_{1,n}$ we have the exact
identity $r_n(\Pi_{1,n}-\Pi_{2,n})=\mathbb L_{n,\tau}$, so \eqref{linearization}
is implied by showing that $r_n\Pi_{3,n}$ and $r_n\Pi_{4,n}$ are $o_{\P}(1)$ uniformly in
$q$.

The term $\Pi_{3,n}$ is $O_{\P}(r_n^{-2})$ uniformly, since \eqref{processconv} gives
$\sup_{v,q}|\widetilde H_n-H_n|=O_{\P}(r_n^{-1})$, the denominators are bounded away
from zero on $[\tau,1]$, and $H_{1,n}((\tau,1];q)$ is bounded uniformly by
$H_n(\tau;q)$, hence uniformly in $n$ and $q$.  Therefore $r_n\Pi_{3,n}=O_{\P}(r_n^{-1})
=o_{\P}(1)$.

For $\Pi_{4,n}$ write
\begin{align*}
 f_n(v;q)={\mathbb V_{0,n}(v;q)\over
                  \widetilde H_n(v;q)H_n(v;q)}.
\end{align*}
Since $\widetilde H_n-H_n=\mathbb V_{0,n}/r_n$ and
$(\widetilde H_{1,n}-H_{1,n})(dv;q)=\mathbb V_{1,n}(dv;q)/r_n$, we obtain
\begin{align*}
 r_n\Pi_{4,n}
 =r_n\int_{(\tau,1]}{\mathbb V_{0,n}(v;q)/r_n\over
       \widetilde H_n(v;q)H_n(v;q)}\,{\mathbb V_{1,n}(dv;q)\over r_n}
 ={1\over r_n}\int_{(\tau,1]}f_n(v;q)\,\mathbb V_{1,n}(dv;q).
\end{align*}
Fix a partition $\pi:\tau=t_0<\cdots<t_J=1$ and let $f_n^\pi(v;q)$ equal
$f_n(t_{j-1};q)$ on $(t_{j-1},t_j]$.  For fixed $\pi$,
\begin{align*}
 \sup_{q\in\widetilde K}\bigl|{1\over r_n}
       \int_{(\tau,1]}f_n^\pi(v;q)\,\mathbb V_{1,n}(dv;q)\bigr|
 \leq {1\over r_n}\sum_{j=1}^J&
       \sup_{q\in\widetilde K}|f_n(t_{j-1};q)|\times\sup_{q\in\widetilde K}
       |\mathbb V_{1,n}((t_{j-1},t_j];q)|=o_{\P}(1).
\end{align*}
Every factor in the finite sum is tight by \eqref{processconv}.  The approximation
error satisfies
\begin{align*}
 \sup_{q\in\widetilde K}\bigl|{1\over r_n}
       \int_{(\tau,1]}(f_n-f_n^\pi)(v;q)\,
                   \mathbb V_{1,n}(dv;q)\bigr|
 \leq \|f_n-f_n^\pi\|_\infty
       \sup_{q\in\widetilde K}
       {\mathop{\rm TV}_{(\tau,1]}\{\mathbb V_{1,n}(\,\cdot\,;q)\}\over r_n}.
\end{align*}
The second factor is bounded by
\begin{align*}
 \sup_{q\in\widetilde K}
 \{\widetilde H_{1,n}((\tau,1];q)+H_{1,n}((\tau,1];q)\}=O_{\P}(1).
\end{align*}
For the first factor, note that
\begin{align*}
\|f_n-f_n^\pi\|_\infty&\leq
C\sup_{|v-v'|\leq|\pi|}\sup_q|\mathbb V_{0,n}(v;q)-\mathbb V_{0,n}(v';q)|
\\
&\quad+C\|\mathbb V_{0,n}\|_\infty\sup_{|v-v'|\leq|\pi|}\sup_q
|(\widetilde H_nH_n)(v;q)-(\widetilde H_nH_n)(v';q)|,
\end{align*}
where $|\pi|$ is the mesh and $C$ is a bound for the reciprocal of
$\widetilde H_nH_n$ on $[\tau,1]\times\widetilde K$.  The first supremum tends to zero
in probability as $|\pi|\downarrow0$ after $n\to\infty$, by the asymptotic
equicontinuity contained in \eqref{processconv} together with the continuity of
$\varrho$ established above; the second does so because $\widetilde H_n\to H$ and $H_n\to H$
uniformly and $H$ is uniformly continuous, while $\|\mathbb V_{0,n}\|_\infty=O_{\P}(1)$.
Hence $\|f_n-f_n^\pi\|_\infty=o_{\P}(1)$.  Thus, $r_n\Pi_{4,n}=o_{\P}(1)$ uniformly in $q$, which together with the bound for
$\Pi_{3,n}$ proves \eqref{linearization}.

Uniformly on $[\tau,1]\times\widetilde K$, \eqref{Hn}--\eqref{H1n} give
$H_n\to H$ and
$H_{1,n}((\tau,v];q)\to H_1((\tau,v];q)$.  Integration by parts gives
\begin{align*}
 \int_{(\tau,1]}{\mathbb V_{1,n}(dv;q)\over H_n(v;q)}
 ={\mathbb V_{1,n}((\tau,1];q)\over H_n(1;q)}
 -\int_{(\tau,1]}\mathbb V_{1,n}((\tau,v];q)
                  \,d\{H_n(v;q)^{-1}\}.
\end{align*}
The increasing Stieltjes measures induced by $H_n^{-1}$ converge weakly to the measure
induced by $H^{-1}$.  Likewise, $H_{1,n}(dv)/H_n(v)^2$ converges weakly to
$H_1(dv)/H(v)^2$.  Uniform convergence of the deterministic functions, together with the
asymptotic equicontinuity of \eqref{processconv}, makes both integral maps continuous
at the limiting processes.  The continuous mapping theorem can therefore be applied and gives
\begin{align}
 \mathbb L_{n,\tau}\leadsto\mathbb L_\tau
 \qquad\text{in }\ell^\infty(\widetilde K).
 \label{Lntauconv}
\end{align}

\textit{Step 5: the interval $(0,\tau]$ and the construction of $\mathbb L$.}
It remains to show that the part of the hazard integral on $(0,\tau]$ is
negligible, uniformly in $n$, as $\tau\downarrow0$; this is the most delicate and arduous step.  Hence we do this in three parts.  We first establish an
exact variance computation for the linear part, then we obtain a layerwise maximal inequality that
turns that computation into an appropriate bound for the supremum over $q$, and finally treat the two nonlinear remainder terms $\Pi_{3,n}$ and $\Pi_{4,n}$ below $\tau$. 

Define the lower linear term
\begin{align*}
 D_{n,\tau}(q)={}&
 \int_{(0,\tau]}{\mathbb V_{1,n}(dv;q)\over H_n(v;q)}
 -\int_{(0,\tau]}{\mathbb V_{0,n}(v;q)\over H_n(v;q)^2}
                 H_{1,n}(dv;q).
\end{align*}
The deterministic parts in the two empirical processes cancel, and we may therefore write
\begin{align*}
 D_{n,\tau}(q)={1\over r_n}\sum_{i=1}^n\ell_{i,n,\tau}(q),
\end{align*}
where
\begin{align*}
 \ell_{i,n,\tau}(q)={}&
{\Delta_{i,n}(q)\ind_{\{W_{i,n}(q)\leq\tau\}}\over H_n(W_{i,n}(q);q)}-\int_{(0,\tau]}{\ind_{\{W_{i,n}(q)\geq v\}}
        \over H_n(v;q)^2}H_{1,n}(dv;q)=A_i-B_i.
\end{align*}
The latter terms have the same expectation. Indeed, by \eqref{nu},
\begin{align*}
 \E[A_i]&={r_n^2\over n}\int_{(0,\tau]}{H_{1,n}(dv;q)\over H_n(v;q)},
 \\
 \E[B_i]&=\int_{(0,\tau]}{\E[\ind_{\{W_{i,n}(q)\geq v\}}]\over H_n(v;q)^2}H_{1,n}(dv;q)
 ={r_n^2\over n}\int_{(0,\tau]}{H_{1,n}(dv;q)\over H_n(v;q)},
\end{align*}
so $\ell_{i,n,\tau}(q)$ has mean zero.  Its second moment is computed as
follows.  We have
\begin{align*}
 \E[A_i^2]=\E\Biggl[{\Delta_{i,n}(q)\ind_{\{W_{i,n}(q)\leq\tau\}}\over
       H_n(W_{i,n}(q);q)^2}\Biggr]
 ={r_n^2\over n}\int_{(0,\tau]}{H_{1,n}(dv;q)\over H_n(v;q)^2}.
\end{align*}
For the remaining two moments use
$\E\bigl[\ind_{\{W_{i,n}(q)\geq u\}}\ind_{\{W_{i,n}(q)\geq v\}}\bigr]
=(r_n^2/n)H_n(u\vee v;q)$, so that
\begin{align*}
 \E[B_i^2]={r_n^2\over n}\int_{(0,\tau]}\int_{(0,\tau]}
 {H_n(u\vee v;q)\over H_n(u;q)^2H_n(v;q)^2}\,H_{1,n}(du;q)H_{1,n}(dv;q),
\end{align*}
while
\begin{align*}
 \E[A_iB_i]
 &=\int_{(0,\tau]}\int_{(0,\tau]}\ind_{\{v\leq u\}}
 {\P\bigl(W_{i,n}(q)\in du,\Delta_{i,n}(q)=1\bigr)\over H_n(u;q)H_n(v;q)^2}H_{1,n}(dv;q)\\
 &={r_n^2\over n}\int_{(0,\tau]}\int_{(0,\tau]}\ind_{\{v\leq u\}}
 {H_{1,n}(du;q)H_{1,n}(dv;q)\over H_n(u;q)H_n(v;q)^2}.
\end{align*}
Split the double integral in $\E[B_i^2]$ into $\{v<u\}$, $\{u<v\}$, and the diagonal
$\{u=v\}$.  The diagonal is null because $H_{1,n}(\cdot;q)$ is atomless.
On $\{v<u\}$ one has $H_n(u\vee v;q)=H_n(u;q)$ and the integrand becomes
$\{H_n(u;q)H_n(v;q)^2\}^{-1}$, which is the integrand of $\E[A_iB_i]$; on $\{u<v\}$
the same expression is obtained after interchanging the names of $u$ and $v$.  Hence
$\E[B_i^2]=2\E[A_iB_i]$ and
\begin{align*}
 \E[\ell_{i,n,\tau}(q)^2]=\E[A_i^2]-2\E[A_iB_i]+\E[B_i^2]=\E[A_i^2] .
\end{align*}
Since the $\ell_{i,n,\tau}(q)$, $i=1,\ldots,n$, are independent and centered,
\begin{align}
 \E[D_{n,\tau}(q)^2]
 ={n\over r_n^2}\E[\ell_{1,n,\tau}(q)^2]
 =\int_{(0,\tau]}{H_{1,n}(dv;q)\over H_n(v;q)^2}
 \leq {1\over H_n(\tau;q)},
 \label{exactlower}
\end{align}
the last inequality holding because the failure measure is dominated by
the distribution measure of $W_{1,n}(q)$, and so
\begin{align*}
 \int_{(0,\tau]}{H_{1,n}(dv;q)\over H_n(v;q)^2}
 \leq\int_{(0,\tau]}{-dH_n(v;q)\over H_n(v;q)^2}
 ={1\over H_n(\tau;q)}-\lim_{v\downarrow0}{1\over H_n(v;q)}
 \leq {1\over H_n(\tau;q)}.
\end{align*}
The same cancellation and the same domination, carried out under $\nu$ on $(0,1]$
rather than under $\nu_n$ on $(0,\tau]$, give
$\nu(\xi_q^2)=\int_{(0,1]}H_1(dv;q)/H(v;q)^2\leq1/H(1;q)$, which is
\eqref{limitlower}.

The bound \eqref{exactlower} is pointwise in $q$; to make it uniform we again cut
$(0,\tau]$ into layers.  Take $\tau<1/(16T)$ and put
\begin{align*}
 b_{n,\tau}&=\inf_{q\in\widetilde K}H_n(\tau;q),\\
 I_{m,n}(q)&=\{v\in(0,\tau]:
        2^m b_{n,\tau}\leq H_n(v;q)<2^{m+1}b_{n,\tau}\},\qquad m\geq0.
\end{align*}
For $q=(s,t)\in\widetilde K=[1-\delta,T+\delta]^2$,
\begin{align*}
 {W_{1,n}(1,1)\over T+\delta}\leq W_{1,n}(q)
 \leq {W_{1,n}(1,1)\over1-\delta}.
\end{align*}
Let $\underline v_{m,n}$ be the infimum of the lower endpoints of the nonempty intervals
$I_{m,n}(q)$ and put
$E_{m,n}=\{W_{1,n}(1,1)\geq(1-\delta)\underline v_{m,n}\}$.  The preceding display and
Assumption~\ref{a:radialmodulus}, with $A=(T+\delta)/(1-\delta)$, show by monotone limits that
\begin{align}
 \nu_n(E_{m,n})\leq C2^m b_{n,\tau}.
 \label{layermass2}
\end{align}
Every risk or failure band with endpoints in $I_{m,n}(q)$ vanishes off $E_{m,n}$.

For later use define the two centered indicator processes
\begin{align*}
 A_{m,n}=\sup_{q\in\widetilde K}\sup_{v\in\overline{I_{m,n}(q)}}
              |\mathbb V_{0,n}(v;q)|,\quad B_{m,n}=\sup_{q\in\widetilde K}
       \sup_{\substack{v'<v\\v',v\in\overline{I_{m,n}(q)}}}
              |\mathbb V_{1,n}((v',v];q)|.
\end{align*}
The class for $A_{m,n}$ consists of upper-orthant indicators, and by
\eqref{failuredecomp} the class for $B_{m,n}$ consists of signed sums of at most six
indicators from \eqref{fourclasses}.  Both classes vanish off $E_{m,n}$, whose
$\nu_n$-mass is at most $C2^mb_{n,\tau}$ by \eqref{layermass2}.  Thus
\eqref{maximalVC}, multiplied by $r_n$, gives
\begin{align}
 \E^*[A_{m,n}]\leq C\sqrt{2^m b_{n,\tau}},
 \qquad
 \E^*[B_{m,n}]\leq C\sqrt{2^m b_{n,\tau}}.
 \label{ABmoments}
\end{align}

Define the layer contribution directly by
\begin{align*}
 D_{n,\tau}^{(m)}(q)={}&
 \int_{I_{m,n}(q)}{\mathbb V_{1,n}(dv;q)\over H_n(v;q)}
 -\int_{I_{m,n}(q)}{\mathbb V_{0,n}(v;q)\over H_n(v;q)^2}H_{1,n}(dv;q).
\end{align*}
Write the endpoints of $I_{m,n}(q)$ as
$v_m^-=v_{m,n}^-(q)\leq v_m^+=v_{m,n}^+(q)$, suppressing $n$ and $q$ below.
Integration by parts and the fact that
$H_n^{-1}$ is nondecreasing with total variation at most $(2^mb_{n,\tau})^{-1}$
on the layer give
\begin{align*}
 \Bigl|\int_{I_{m,n}(q)}{\mathbb V_{1,n}(dv;q)\over H_n(v;q)}\Bigr|
 &\leq \sup_{v\in I_{m,n}(q)}|\mathbb V_{1,n}((v_m^-,v];q)|
 \Bigl\{{1\over H_n(v_m^+;q)}+
 \mathop{\rm TV}_{I_{m,n}(q)}(H_n^{-1})\Bigr\}\\
 &\leq {2B_{m,n}\over2^mb_{n,\tau}}.
\end{align*}
For the compensator, the domination $H_{1,n}(dv;q)\leq-dH_n(v;q)$ used in
\eqref{exactlower} yields
\begin{align*}
 \Bigl|\int_{I_{m,n}(q)}{\mathbb V_{0,n}(v;q)\over H_n(v;q)^2}
 H_{1,n}(dv;q)\Bigr|
 \leq A_{m,n}\int_{I_{m,n}(q)}{H_{1,n}(dv;q)\over H_n(v;q)^2}
 \leq {C A_{m,n}\over2^mb_{n,\tau}}.
\end{align*}
Consequently, \eqref{ABmoments} gives directly
\begin{align}
 \E^*\Bigl[\sup_{q\in\widetilde K}|D_{n,\tau}^{(m)}(q)|\Bigr]
 \leq {C\over\sqrt{2^m b_{n,\tau}}}.
 \label{layermaximal}
\end{align}

Since $D_{n,\tau}=\sum_{m\geq0}D_{n,\tau}^{(m)}$, \eqref{layermaximal} gives
\begin{align*}
 \E^*\bigl[\sup_{q\in\widetilde K}|D_{n,\tau}(q)|\bigr]
 \leq {C\over\sqrt{b_{n,\tau}}}\sum_{m\geq0}2^{-m/2}
 \leq {C\over\sqrt{b_{n,\tau}}}.
\end{align*}
Here $b_{n,\tau}\to b_\tau:=\inf_{q\in\widetilde K}H(\tau;q)$ by the uniform
convergence $H_n\to H$, and Assumption~\ref{a:divergence} applied with $K'=\widetilde K$ gives
\begin{align}
 \lim_{\tau\downarrow0}b_\tau=\infty.
 \label{btaudiverge}
\end{align}
Then an application of Markov's inequality gives
\begin{align}
 \lim_{\tau\downarrow0}\limsup_{n\to\infty}
 \P\Bigl(\sup_{q\in\widetilde K}|D_{n,\tau}(q)|>\varepsilon\Bigr)=0.
 \label{lowerlinear}
\end{align}

We turn to the two nonlinear remainders below $\tau$.  Denote by
$\Pi_{3,n}^{(m)}$ and $\Pi_{4,n}^{(m)}$ the contributions of
$I_{m,n}(q)$ to $\Pi_{3,n}$ and $\Pi_{4,n}$ of
\eqref{fourterms}, now integrated over $(0,\tau]$ instead of $(\tau,1]$.  The
quantities $A_{m,n}$ and $B_{m,n}$ were defined and bounded above precisely for these
layers.

Consequently,
\begin{align*}
 \E^*\Biggl[\sup_{q\in\widetilde K}\sup_{0<v\leq\tau}
 {|\widetilde H_n(v;q)-H_n(v;q)|\over H_n(v;q)}
 \Biggr]
 &\leq {1\over r_n}\sum_{m\geq0}
       {\E^*[A_{m,n}]\over2^m b_{n,\tau}}\leq {C\over r_n\sqrt{b_{n,\tau}}}\sum_{m\geq0}2^{-m/2}=o(1).
\end{align*}
By Markov's inequality, the \emph{relative-risk event}
\begin{align}
 {\cal E}_n=\Bigl\{\sup_{q\in\widetilde K}\sup_{0<v\leq\tau}
 {|\widetilde H_n(v;q)-H_n(v;q)|\over H_n(v;q)}\leq{1\over2}\Bigr\}
 \label{relrisk}
\end{align}
has probability tending to one.  On ${\cal E}_n$ we have $\widetilde H_n\geq H_n/2$ and, for
$v$ in the $m$th layer, $|\widetilde H_n-H_n|(v;q)\leq H_n(v;q)/2\leq2^mb_{n,\tau}$, that
is
\begin{align}
 A_{m,n}\leq r_n\,2^mb_{n,\tau}\qquad\text{on }{\cal E}_n.
 \label{Abounded}
\end{align}
On ${\cal E}_n$ the third remainder on the $m$th layer satisfies
\begin{align*}
 r_n\sup_{q\in\widetilde K}|\Pi_{3,n}^{(m)}(q)|
 \leq {C A_{m,n}^2\over r_n(2^mb_{n,\tau})^2}
 \leq {CA_{m,n}\over2^mb_{n,\tau}},
\end{align*}
where the first inequality uses $H_{1,n}(I_{m,n}(q);q)\leq C\,2^mb_{n,\tau}$ together
with $\widetilde H_nH_n^2\geq(2^mb_{n,\tau})^3/2$ on the layer, and the second is
\eqref{Abounded}.

For the fourth remainder, put
\begin{align*}
 g_n(v;q)={(\widetilde H_n-H_n)(v;q)\over\widetilde H_n(v;q)H_n(v;q)}.
\end{align*}
One-dimensional integration by parts gives
\begin{align*}
 \Bigl|\int_{I_{m,n}(q)}g_n(v;q)
       (\widetilde H_{1,n}-H_{1,n})(dv;q)\Bigr|
 \leq {B_{m,n}\over r_n}
 \{2\sup_{v\in I_{m,n}(q)}|g_n(v;q)|
       +\mathop{\rm TV}_{I_{m,n}(q)}(g_n)\}.
\end{align*}
Moreover,
\begin{align*}
 \sup_{v\in I_{m,n}(q)}|g_n(v;q)|
 &\leq {CA_{m,n}\over r_n(2^mb_{n,\tau})^2},\\
 \mathop{\rm TV}_{I_{m,n}(q)}(g_n)
 &\leq {C\over2^mb_{n,\tau}}+
       {CA_{m,n}\over r_n(2^mb_{n,\tau})^2}.
\end{align*}
For the variation bound, both risk functions are nonincreasing.  On the relative-risk
event, their variations on the layer satisfy
\begin{align*}
 \mathop{\rm TV}(\widetilde H_n)
 \leq\mathop{\rm TV}(H_n)+2\sup|\widetilde H_n-H_n|\leq C\,2^mb_{n,\tau},
\end{align*}
uniformly in $m$ and $q$, while $(\widetilde H_nH_n)^{-1}$ is nondecreasing with
variation at most $C(2^mb_{n,\tau})^{-2}$.  The inequality
$\mathop{\rm TV}(fg)\leq\|f\|_\infty\mathop{\rm TV}(g)+\|g\|_\infty\mathop{\rm TV}(f)$,
applied to $g_n=(\widetilde H_n-H_n)\cdot(\widetilde H_nH_n)^{-1}$, gives the stated
result.  It
follows that, on ${\cal E}_n$ and using \eqref{Abounded} once more,
\begin{align*}
 r_n\sup_{q\in\widetilde K}|\Pi_{4,n}^{(m)}(q)|
 \leq {CB_{m,n}\over2^mb_{n,\tau}}+
       {CA_{m,n}B_{m,n}\over r_n(2^mb_{n,\tau})^2}
 \leq {CB_{m,n}\over2^mb_{n,\tau}}.
\end{align*}
Summing the two layer bounds over $m$ and taking outer expectations, the
first-moment bounds \eqref{ABmoments} give
\begin{align*}
 \E^*\Biggl[\sum_{m\geq0}\bigl\{
       {CA_{m,n}\over2^mb_{n,\tau}}
       +{CB_{m,n}\over2^mb_{n,\tau}}\bigr\}\Biggr]
 \leq C\sum_{m\geq0}{\sqrt{2^mb_{n,\tau}}\over2^mb_{n,\tau}}
 ={C\over\sqrt{b_{n,\tau}}}\sum_{m\geq0}2^{-m/2}
 \leq {C\over\sqrt{b_{n,\tau}}}.
\end{align*}
Since $\P({\cal E}_n)\to1$, Markov's inequality applied to the preceding display, and
using \eqref{btaudiverge}, prove that the two nonlinear lower-endpoint contributions
vanish uniformly after letting $n\to\infty$ and then $\tau\downarrow0$.

Combining this conclusion with \eqref{lowerlinear} proves
\begin{align}
 \lim_{\tau\downarrow0}\limsup_{n\to\infty}
 \P\Bigl(\sup_{q\in\widetilde K}\Bigl|
 r_n\{\widetilde\Lambda_n(q)-\log p_n(q)^{-1}\}
 -\mathbb L_{n,\tau}(q)\Bigr|>\varepsilon\Bigr)=0.
 \label{lowertogether}
\end{align}

Next we show that $\mathbb L_\tau$ converges, and identify its limit with the process
$\mathbb L$ of \eqref{Ldef}.  For $0<\tau'<\tau$ the compatibility relation
\eqref{compatible} gives, from \eqref{Ltau},
\begin{align*}
 \mathbb L_{\tau'}(q)-\mathbb L_\tau(q)
 =\int_{(\tau',\tau]}{\mathbb V_1^{\tau'}(dv;q)\over H(v;q)}
 -\int_{(\tau',\tau]}{\mathbb V_0(v;q)\over H(v;q)^2}H_1(dv;q),
\end{align*}
which is exactly the limiting form of $\mathbb L_{n,\tau'}-\mathbb L_{n,\tau}$, from the contribution of $(\tau',\tau]$ to $D_{n,\tau}$.  Taking the covariance limit
in \eqref{exactlower}, restricted to $(\tau',\tau]$,
gives the pointwise identity
\begin{align*}
 \E\bigl[(\mathbb L_\tau(q)-\mathbb L_{\tau'}(q))^2\bigr]
 =\int_{(\tau',\tau]}{H_1(dv;q)\over H(v;q)^2}\leq{1\over H(\tau;q)},
\end{align*}
the inequality by the domination of the failure measure used in \eqref{limitlower}.
For the uniform statement we do not need a separate Gaussian argument.  Since
$\mathbb L_{n,\tau}$ and $\mathbb L_{n,\tau'}$ are, by \eqref{Lntauconv} applied at
level $\tau'$ and the continuous mapping theorem, jointly weakly convergent to
$\mathbb L_\tau$ and $\mathbb L_{\tau'}$ in $\ell^\infty(\widetilde K)$, the
supremum norm is a continuous functional, and the portmanteau theorem for nonnegative
lower semicontinuous functionals gives
\begin{align*}
 \E^*[\|\mathbb L_\tau-\mathbb L_{\tau'}\|_{\widetilde K}]
 \leq\liminf_{n\to\infty}
 \E^*[\|\mathbb L_{n,\tau}-\mathbb L_{n,\tau'}\|_{\widetilde K}]
 \leq\liminf_{n\to\infty}{C\over\sqrt{b_{n,\tau}}}
 ={C\over\sqrt{b_\tau}},
 \qquad b_\tau=\inf_{q\in\widetilde K}H(\tau;q),
\end{align*}
where the middle inequality is the bound \eqref{layermaximal} summed over
$m\geq0$, applied to the layers contained in $(\tau',\tau]$, and is uniform in
$0<\tau'<\tau$.  Since $b_\tau\to\infty$ by \eqref{btaudiverge}, the bound just obtained shows that the
family $\{\mathbb L_\tau:\tau>0\}$ is Cauchy in probability.
Two further facts turn this into convergence to an actual limit.  First,
$\ell^\infty(\widetilde K)$ is complete under the sup norm.  Second, each $\mathbb
L_\tau$ is already tight, being the weak limit \eqref{Lntauconv} of $\mathbb
L_{n,\tau}$.  A Cauchy (in probability) family of tight random elements of a complete
space converges in outer probability to a tight limit \citep{vanderVaartWellner1996S}.
Defined, as here, on one common probability space through \eqref{compatible}, the
family therefore converges in outer probability to a tight limit.

That limit is $\mathbb L$.  Substituting $\mathbb V^\tau_a(v;q)=\mathbb V(\phi^\tau_a(v;q))$
into \eqref{Ltau} and evaluating the second integral, which contributes
$\{H(1;q)^{-1}-H(W(q);q)^{-1}\}$ on $\{\tau<W(q)\leq1\}$ and cancels the first term
there, gives $\mathbb L_\tau(q)=-\mathbb V(\xi_{q,\tau})$, where $\xi_{q,\tau}$ is
\eqref{scorelimit} with the integral restricted to $(\tau,1]$ and the indicator in the
second term replaced by $\ind_{\{\tau<W(q)\leq1\}}$.  The computation behind
\eqref{limitlower}, applied on $(0,\tau]$, gives
$\nu\{(\xi_q-\xi_{q,\tau})^2\}=\int_{(0,\tau]}H_1(dv;q)/H(v;q)^2$, which tends to zero
as $\tau\downarrow0$, so $\mathbb L_\tau(q)\to\mathbb L(q)$ in $L^2$ for every $q$.  A
tight limit in $\ell^\infty(\widetilde K)$ agreeing pointwise in $L^2$ with $\mathbb L$
is $\mathbb L$.  Together with \eqref{Lntauconv} and \eqref{lowertogether}, the
converging-together lemma concludes that
\begin{align}
 r_n\{\widetilde\Lambda_n-\log p_n^{-1}\}
 \leadsto\mathbb L
 \qquad\text{in }\ell^\infty(\widetilde K).
 \label{NAconv}
\end{align}
Since $(\mathbb V_0,\mathbb V_1^\tau)$ has a version with continuous sample paths, by
\eqref{processconv}, and $\mathbb L_\tau$ is built from it through the continuous
integral maps in \eqref{Ltau}, each $\mathbb L_\tau$ has a version with continuous sample paths on
$\widetilde K$; since $\mathbb L_\tau\to\mathbb L$ uniformly on $\widetilde K$, $\mathbb L$ also inherits a version with continuous sample paths.

\textit{Step 6: from Nelson--Aalen to Kaplan--Meier, and the ratio.}
The main identity that allows passing from the cumulative hazard to the product limit is \eqref{KMNA}, proved already in
the consistency argument with $K$ replaced by $\widetilde K$. On the event
$\{\inf_{q\in\widetilde K}Y^\ast_n(q)>2\}$,
\begin{align*}
 0\leq-\log\widetilde p_n(q)-\widetilde\Lambda_n(q)
 \leq {2\over Y_n^\ast(q)-2},
 \qquad Y_n^\ast(q)=\sum_{i=1}^n\ind_{\{W_{i,n}(q)\geq1\}}.
\end{align*}
Uniform convergence on $\widetilde K$, which follows from \eqref{processconv} at
$v=1$, gives
\begin{align*}
 {Y_n^\ast(q)\over r_n^2}
 \overset{\P}{\longrightarrow}H(1;q)\ \text{uniformly},
 \qquad\inf_{q\in\widetilde K}H(1;q)>0.
\end{align*}
Consequently,
\begin{align*}
 \sup_{q\in\widetilde K}r_n
 |\log\widetilde p_n(q)+\widetilde\Lambda_n(q)|=o_{\P}(1).
\end{align*}
Together with \eqref{NAconv}, we obtain
\begin{align}
 r_n\{\log\widetilde p_n-\log p_n\}
 \leadsto-\mathbb L
 \qquad\text{in }\ell^\infty(\widetilde K).
 \label{oracleCLT}
\end{align}

We now plug in the estimated standardizations.  Proposition~\ref{prop:stability}(ii)
gives $r_n\sup_{q\in\widetilde K}|\log\widehat p_n(q)-\log\widetilde p_n(q)|\overset{\P}\to0$,
which with \eqref{oracleCLT} gives
\begin{align}
 r_n\bigl\{\log\widehat p_n-\log p_n\bigr\}\leadsto-\mathbb L
 \qquad\text{in }\ell^\infty(\widetilde K).
 \label{plugCLT}
\end{align}

It remains to pass to the ratio.  By \eqref{bias2} and $\inf_KF^\circ>0$,
\begin{align*}
 \sup_{q\in K}r_n\bigl|\log{p_n(q)\over p_n(1,1)}
 -\log F^\circ(q)\bigr|\overset{\P}{\longrightarrow}0 ,
\end{align*}
so applying the continuous mapping theorem to \eqref{plugCLT} through the bounded
linear functional $z\mapsto z(\cdot)-z(1,1)$ from $\ell^\infty(\widetilde K)$ to
$\ell^\infty(K)$ gives
\begin{align*}
 r_n\bigl\{\log\widehat F_n^\circ-\log F^\circ\bigr\}
 \leadsto\mathbb L(1,1)-\mathbb L
 \qquad\text{in }\ell^\infty(K),
\end{align*}
the logarithms on the left being well defined with probability tending to one by
\eqref{plugCLT} and \eqref{oraclecons}.
Finally, to get rid of the logarithms, write $\Upsilon_n=\log\widehat F_n^\circ-\log F^\circ$, so that
$\widehat F_n^\circ-F^\circ=F^\circ(e^{\Upsilon_n}-1)$.  Since
$r_n\Upsilon_n=O_{\P}(1)$ uniformly on $K$ and $|e^x-1-x|\leq x^2e^{|x|}$, we get
\begin{align*}
 \sup_{q\in K}r_n\bigl|\widehat F_n^\circ-F^\circ-F^\circ\,\Upsilon_n\bigr|
 \leq\sup_{q\in K}F^\circ\cdot r_n\Upsilon_n^2e^{|\Upsilon_n|}
 =O_{\P}\bigl(r_n^{-1}\bigr)=o_{\P}(1),
\end{align*}
so $r_n(\widehat F_n^\circ-F^\circ)$ and $F^\circ r_n\Upsilon_n$ have the same weak
limit.
This yields $r_n(\widehat F_n^\circ-F^\circ)\leadsto F^\circ\mathbb G$, with
$\mathbb G$ as in \eqref{Glimit}. The proof is complete.

\subsection{Proof of Proposition~\ref*{prop:rateest}}
For $v>0$, write
\begin{align*}
 Y_n^{\rm ref}(v)=\sum_{i=1}^n
 \ind_{\{\psi_{1,n}(Z_i^{(1)})\geq v,\,
          \psi_{2,n}(Z_i^{(2)})\geq v\}}.
\end{align*}
Continuity of the margins and \eqref{ZisXC} give
\begin{align*}
 {\E[Y_n^{\rm ref}(v)]\over r_n^2}
 ={p_n(v,v)c_n(v,v)\over p_n(1,1)c_n(1,1)}.
\end{align*}
Let $v_n^{\rm hi}=(1-\eta_n)^{-1}$ and
$v_n^{\rm lo}=(1+\eta_n)^{-1}$.  Since both thresholds tend to one,
the locally uniform convergence in Assumption~\ref{a:FG} yields
\begin{align}
 {\E[Y_n^{\rm ref}(v_n^{\rm hi})]\over r_n^2}\longrightarrow1,
 \qquad
 {\E[Y_n^{\rm ref}(v_n^{\rm lo})]\over r_n^2}\longrightarrow1.
 \label{ratecountmean}
\end{align}
For $a\in\{{\rm hi},{\rm lo}\}$, $Y_n^{\rm ref}(v_n^a)$ is a sum of independent Bernoulli
variables, and hence, for either value of $a$,
\begin{align*}
 \operatorname{Var}\Bigl\{{Y_n^{\rm ref}(v_n^a)\over r_n^2}\Bigr\}
 \leq {\E[Y_n^{\rm ref}(v_n^a)]\over r_n^4}\longrightarrow0
\end{align*}
by Assumption~\ref{a:rate}, \eqref{ratecountmean}, and Chebyshev's inequality.  Consequently,
\begin{align}
 {{Y_n^{\rm ref}(v_n^{\rm hi})}\over r_n^2}\overset{\P}\longrightarrow1,
 \qquad
 {{Y_n^{\rm ref}(v_n^{\rm lo})}\over r_n^2}\overset{\P}\longrightarrow1.
 \label{ratecounts}
\end{align}

It remains to replace the true standardizations by their estimates.  Work on the
event in Assumption~\ref{a:plugin}, whose probability tends to one.  If
$\psi_{j,n}(Z_i^{(j)})\geq v_n^{\rm hi}$ and this value is at most $16T$, then
$\widehat\psi_{j,n}(Z_i^{(j)})\geq
(1-\eta_n)v_n^{\rm hi}=1$.  If it exceeds $16T$, the same conclusion follows from
\eqref{pluginabove} for all sufficiently large $n$.  Conversely, if
$\widehat\psi_{j,n}(Z_i^{(j)})\geq1$ and
$\psi_{j,n}(Z_i^{(j)})\leq16T$, then the same event gives
$\psi_{j,n}(Z_i^{(j)})\geq v_n^{\rm lo}$; if the latter value exceeds $16T$, this inequality
is automatic.  Applying these implications to both coordinates gives
\begin{align*}
 Y_n^{\rm ref}(v_n^{\rm hi})
 \leq \widehat r_n^2
 \leq Y_n^{\rm ref}(v_n^{\rm lo})
\end{align*}
with probability tending to one.  The squeeze theorem and \eqref{ratecounts} imply
$\widehat r_n^2/r_n^2\overset{\P}\longrightarrow1$.  Taking square roots proves the
claim.

\subsection{Proof of Proposition~\ref*{prop:variance}}
Put $q_0=(1,1)$. For fixed
$\tau\in(0,1)$, let $\widehat\xi_{i,n}^\tau(q)$ be \eqref{greenwoodscore} with the
integral restricted to $(\tau,1]$ and its second numerator replaced by
$n\widehat\Delta_{i,n}(q)
\ind_{\{\tau<\widehat W_{i,n}(q)\leq1\}}$.  Define
\begin{align*}
 \widehat V_n^\tau(q)
 ={\widehat r_n^2\over n(n-1)}\sum_{i=1}^n
 \bigl\{\widehat\xi_{i,n}^\tau(q)
       -\widehat\xi_{i,n}^\tau(q_0)\bigr\}^2.
\end{align*}

\emph{Step 1: a fixed lower endpoint.}
For $a=0,1$, introduce the estimated-sample indicators
\begin{align*}
 \widehat\phi_{0,i,n}(v;q)&=\ind_{\{\widehat W_{i,n}(q)\geq v\}},\\
 \widehat\phi_{1,i,n}^\tau(v;q)&=\widehat\Delta_{i,n}(q)
       \ind_{\{\tau<\widehat W_{i,n}(q)\leq v\}},
\end{align*}
and, with $\widehat\phi_{0,i,n}^\tau=\widehat\phi_{0,i,n}$, put
\begin{align*}
 \widehat\Sigma_{ab,n}^\tau((v,q),(v',q'))
 ={1\over\widehat r_n^2}\sum_{i=1}^n
 \widehat\phi_{a,i,n}^\tau(v;q)
 \widehat\phi_{b,i,n}^\tau(v';q').
\end{align*}
We claim that, for $a,b\in\{0,1\}$ and $q',q''\in\{q,q_0\}$,
\begin{align}
 \sup_{v,v'\in[\tau,1]}
 \bigl|\widehat\Sigma_{ab,n}^\tau((v,q'),(v',q''))
       -\Sigma_{ab}^\tau((v,q'),(v',q''))\bigr|
 \overset{\P}\longrightarrow0.
 \label{greenwoodcov}
\end{align}
For the oracle indicators, the corresponding statement with normalization $r_n^2$
follows from the covariance calculation in Step~3 of the proof of
Theorem~\ref{thm:normality}.  More explicitly, every product of two such indicators
belongs to one of the VC classes used there, and \eqref{maximalVC} bounds the supremum
of the centered, $r_n^{-2}$-normalized sums by $C_\tau/r_n=o(1)$.  Their expectations
converge uniformly to \eqref{Sigma}, by \eqref{admissible}.

It remains to compare the estimated and oracle indicators.  By Lemma~\ref{lem:sandwich}, a risk indicator at
$v\in[\tau,1]$ can change only when
\begin{align*}
 {v\over1+\eta_n}\leq W_{i,n}(q)\leq{v\over1-\eta_n}.
\end{align*}
Uniformly in $v\in[\tau,1]$ and $q'\in\{q,q_0\}$, the $\nu_n$-mass of this set tends
to zero by \eqref{Hn}, the locally uniform convergence \eqref{Hlimit}, and continuity
of $H$ on $[\tau/2,2]\times\widetilde K$.  A failure indicator can additionally change
when its mark flips.  By \eqref{flipset}, such observations belong to an angular band
of width $3\eta_n$, and, on the present interval, Lemma~\ref{lem:nearties} bounds its $\nu_n$-mass
by $C_\tau\eta_n=o(1)$.  The set and angular-band classes have the VC bounds used in
Steps~C--D of the proof of Proposition~\ref{prop:stability}; hence \eqref{maximalVC}
shows that their empirical $r_n^{-2}$-normalized masses differ from their expectations
by $O_{\P}(r_n^{-1})$.  It follows that, uniformly in $v\in[\tau,1]$,
\begin{align}
 {1\over r_n^2}\sum_{i=1}^n
 \bigl|\widehat\phi_{a,i,n}^\tau(v;q)-\phi_{a,i,n}^\tau(v;q)\bigr|
 \overset{\P}\longrightarrow0,\qquad a=0,1.
 \label{greenwoodindicators}
\end{align}
The difference of two products of indicators is bounded by the sum of the two
individual differences, so \eqref{greenwoodindicators}, Proposition~\ref{prop:rateest}
and the oracle result prove \eqref{greenwoodcov}.

For later use, define on $[\tau,1]$, for Borel sets $B\subset(\tau,1]$,
\begin{align*}
 \widehat H_n(v;q)={1\over\widehat r_n^2}\sum_{i=1}^n
       \widehat\phi_{0,i,n}(v;q),\quad \widehat H_{1,n}(B;q)={\widehat N_n(B;q)\over\widehat r_n^2}.
\end{align*}
Combining \eqref{greenwoodcov} and \eqref{Hlimit} gives, for
$q'\in\{q,q_0\}$, that $\widehat H_n\to H$ uniformly and that the Stieltjes measures
$\widehat H_{1,n}(dv;q)$ converge weakly to $H_1(dv;q)$.  In particular, their total
masses are bounded in probability and
\begin{align}
 \inf_{q'\in\{q,q_0\}}\inf_{v\in[\tau,1]}\widehat H_n(v;q')
 \overset{\P}\longrightarrow
 \inf_{q'\in\{q,q_0\}}H(1;q')>0.
 \label{greenwoodrisk}
\end{align}

Multiplying \eqref{greenwoodscore} by $\widehat r_n^2/n$ gives
\begin{align}
 {\widehat r_n^2\over n}\widehat\xi_{i,n}^\tau(q)
 ={}&\int_{(\tau,1]}
 {\widehat\phi_{0,i,n}(v;q)\over
  \widehat H_n(v;q)
  \{\widehat H_n(v;q)-\widehat H_{1,n}(\{v\};q)\}}
 \,\widehat H_{1,n}(dv;q)
 \nonumber\\
 &-{\widehat\Delta_{i,n}(q)
       \ind_{\{\tau<\widehat W_{i,n}(q)\leq1\}}\over
  \widehat H_n(\widehat W_{i,n}(q);q)
  -\widehat H_{1,n}(\{\widehat W_{i,n}(q)\};q)}.
 \label{greenwoodz}
\end{align}
Every failure multiplicity is at most two, by the injectivity condition on the
estimated standardizations and the argument preceding \eqref{KMNAhat}.  Hence
\eqref{greenwoodrisk} and $\widehat r_n\to\infty$ in probability show that the
exceptional event in the definition of $\widehat\sigma_n^2$ has probability tending
to zero, and that replacing each denominator in \eqref{greenwoodz} by
the same denominator without its failure-count term changes the displayed quantity by
$O_{\P}(\widehat r_n^{-2})$, uniformly in $i$ and $q'\in\{q,q_0\}$.  The quantities are
uniformly bounded on the observations for which they are nonzero, and their number is
$O_{\P}(\widehat r_n^2)$.

Expanding the product of two expressions in \eqref{greenwoodz}, summing over $i$ and
dividing by $\widehat r_n^2$ gives, up to the negligible denominator correction, the
following four terms.  For fixed $q,q'$, a failure coordinate written as $dv$ or
$dv'$ denotes the Stieltjes measure generated by that coordinate of
$\widehat\Sigma_{ab,n}^\tau$.
\begin{align*}
 &\int_{(\tau,1]}\int_{(\tau,1]}
 {\widehat\Sigma_{00,n}^\tau((v,q),(v',q'))\over
  \widehat H_n(v;q)^2\widehat H_n(v';q')^2}
 \,\widehat H_{1,n}(dv;q)\widehat H_{1,n}(dv';q'),\\
 &-\int_{(\tau,1]}\int_{(\tau,1]}
 {\widehat H_{1,n}(dv;q)\over
  \widehat H_n(v;q)^2\widehat H_n(v';q')}
 \,\widehat\Sigma_{01,n}^\tau((v,q),dv';q'),\\
 &-\int_{(\tau,1]}\int_{(\tau,1]}
 {\widehat H_{1,n}(dv';q')\over
  \widehat H_n(v;q)\widehat H_n(v';q')^2}
 \,\widehat\Sigma_{10,n}^\tau(dv;q,(v',q')),\\
 &\int_{(\tau,1]}\int_{(\tau,1]}
 {\widehat\Sigma_{11,n}^\tau(dv,dv';q,q')\over
  \widehat H_n(v;q)\widehat H_n(v';q')}.
\end{align*}
By \eqref{greenwoodcov} we obtain weak convergence of the corresponding finite Stieltjes
measures.  Together with \eqref{greenwoodrisk} and the weak convergence of
$\widehat H_{1,n}$, it therefore shows that the four terms converge to the expansion
of $\E[\mathbb L_\tau(q)\mathbb L_\tau(q')]$ obtained from \eqref{Ltau}.  Since
$\widehat V_n^\tau$ is $n/(n-1)$ times the corresponding quadratic combination for
$q',q''\in\{q,q_0\}$, it follows that
\begin{align}
 \widehat V_n^\tau(q)
 \overset{\P}\longrightarrow
 \E\Bigl[\bigl\{\mathbb L_\tau(q_0)-\mathbb L_\tau(q)\bigr\}^2\Bigr].
 \label{greenwoodtruncated}
\end{align}

\emph{Step 2: removal of the lower endpoint.}
Let $\widehat\xi_{i,n}^{0,\tau}(q)$ be \eqref{greenwoodscore} with its integral and
failure term restricted to $(0,\tau]$.  At an atom $v$ of
$\widehat N_n(\cdot\,;q)$, the contribution to
$\widehat\xi_{i,n}^{0,\tau}(q)/n$ is
\begin{align*}
 {\widehat N_n(\{v\};q)\ind_{\{\widehat W_{i,n}(q)\geq v\}}
  \over \widehat Y_n(v;q)
  \{\widehat Y_n(v;q)-\widehat N_n(\{v\};q)\}}
 -{\widehat\Delta_{i,n}(q)
       \ind_{\{\widehat W_{i,n}(q)=v\}}
  \over \widehat Y_n(v;q)-\widehat N_n(\{v\};q)}.
\end{align*}
The sum of this contribution over $i$ is zero.  Contributions from two distinct
atoms are orthogonal under summation over $i$. Indeed, on the smaller risk set the
contribution from the earlier failure time is constant, whereas the contribution from
the later failure time sums to zero.  The sum of squares at an atom $v$ equals
\begin{align*}
 {\widehat N_n(\{v\};q)\over
  \widehat Y_n(v;q)
  \{\widehat Y_n(v;q)-\widehat N_n(\{v\};q)\}}.
\end{align*}
Consequently, the exact identity
\begin{align}
 {1\over n^2}\sum_{i=1}^n
 \bigl\{\widehat\xi_{i,n}^{0,\tau}(q)\bigr\}^2
 =\int_{(0,\tau]}
 {\widehat N_n(dv;q)\over
  \widehat Y_n(v;q)
  \{\widehat Y_n(v;q)-\widehat N_n(\{v\};q)\}}
 \label{greenwoodidentity}
\end{align}
holds.  Since every atom of $\widehat N_n(\cdot\,;q)$ has mass at most two, the
argument of \eqref{ranksum}, with endpoint $\tau$, gives on the event
$\widehat Y_n(\tau;q)>4$,
\begin{align}
 \int_{(0,\tau]}
 {\widehat N_n(dv;q)\over
  \widehat Y_n(v;q)
  \{\widehat Y_n(v;q)-\widehat N_n(\{v\};q)\}}
 \leq2\int_{(0,\tau]}{\widehat N_n(dv;q)\over\widehat Y_n(v;q)^2}
 \leq {2\over\widehat Y_n(\tau;q)-2}.
 \label{greenwoodlower}
\end{align}
For each fixed $\tau$, Step~1 gives
$\widehat Y_n(\tau;q')/\widehat r_n^2\to H(\tau;q')$ in probability for
$q'\in\{q,q_0\}$.  Using $(x-y)^2\leq2x^2+2y^2$ together with
\eqref{greenwoodidentity}--\eqref{greenwoodlower}, we obtain
\begin{align}
 &{\widehat r_n^2\over n(n-1)}\sum_{i=1}^n
 \Bigl[\bigl\{\widehat\xi_{i,n}^{0,\tau}(q)
                  -\widehat\xi_{i,n}^{0,\tau}(q_0)\bigr\}^2\Bigr]
 \nonumber\\
 &\hspace{2cm}=O_{\P}\Bigl({1\over H(\tau;q)}+{1\over H(\tau;q_0)}\Bigr).
 \label{greenwoodlower2}
\end{align}
Both terms on the right tend to zero as $\tau\downarrow0$, by Assumption~\ref{a:divergence}.

The full score is the sum of its $(0,\tau]$ and $(\tau,1]$ parts.  Cauchy--Schwarz,
\eqref{greenwoodtruncated} and \eqref{greenwoodlower2} therefore yield, for every
$\varepsilon>0$,
\begin{align}
 \lim_{\tau\downarrow0}\limsup_{n\to\infty}
 \P\bigl(\bigl|\widehat\sigma_n^2(q)-\widehat V_n^\tau(q)\bigr|>
 \varepsilon\bigr)=0.
 \label{greenwoodtogether}
\end{align}
On the other hand, the exact limiting second-moment calculation following
\eqref{compatible} gives
\begin{align*}
 \E\bigl[\{\mathbb L(q')-\mathbb L_\tau(q')\}^2\bigr]
 =\int_{(0,\tau]}{H_1(dv;q')\over H(v;q')^2}
 \leq {1\over H(\tau;q')},
 \qquad q'\in\{q,q_0\}.
\end{align*}
Thus
\begin{align*}
 \E\Bigl[\bigl\{\mathbb L_\tau(q_0)-\mathbb L_\tau(q)\bigr\}^2\Bigr]
 \longrightarrow
 \E\Bigl[\bigl\{\mathbb L(q_0)-\mathbb L(q)\bigr\}^2\Bigr]
 =\E[\mathbb G(q)^2].
\end{align*}
Combining this limit with \eqref{greenwoodtruncated} and
\eqref{greenwoodtogether} proves the variance consistency.

Finally, Theorem~\ref{thm:normality} and the delta method give
$r_n\{\log\widehat F_n^\circ(q)-\log F^\circ(q)\}\leadsto\mathbb G(q)$.
Together, Proposition~\ref{prop:rateest} and the first part of the present proposition permit
replacement of $r_n$ and $\E[\mathbb G(q)^2]^{1/2}$ by their estimators.  Slutsky's
lemma proves the studentized limit and hence \eqref{greenwoodCI}.

\subsection{Proof of Proposition~\ref*{prop:adaptive}}
Since $A(k_1)=0\leq\kappa$, the count $\widehat k$ is well defined, and
$k_1\leq\widehat k\leq k_L$ gives the first two conclusions.  For each of the
finitely many $q\in{\cal Q}_0$, Theorem~\ref{thm:normality} and
Propositions~\ref{prop:rateest} and~\ref{prop:variance} give
$\widehat F^\circ_{n,k_1}(q)\overset{\P}\to F^\circ(q)>0$,
$\widehat r_{n,k_1}\overset{\P}\to\infty$ and
$\widehat\sigma^2_{n,k_1}(q)\overset{\P}\to\E[\mathbb G(q)^2]$, the last being positive
because $F^\circ(q)<1$, as noted in the remark following
Proposition~\ref{prop:variance}; hence $\widehat{\rm sd}_{n,k_1}(q)>0$ for every
$q\in{\cal Q}_0$ with probability tending to one.  Taking $l'=1$ in
\eqref{stabilityrule} then gives
\begin{align*}
 \max_{q\in{\cal Q}_0}
 \bigl|\widehat F^\circ_{n,\widehat k}(q)-F^\circ(q)\bigr|
 \leq
 \max_{q\in{\cal Q}_0}
 \bigl|\widehat F^\circ_{n,k_1}(q)-F^\circ(q)\bigr|
 +\kappa\max_{q\in{\cal Q}_0}\widehat{\rm sd}_{n,k_1}(q).
\end{align*}
The two terms on the right are $O_{\P}(r_n^{-1})$ by
Theorem~\ref{thm:normality} and Propositions~\ref{prop:rateest}
and~\ref{prop:variance}, respectively.

\newpage

\section{Stability of the directional product limit under coordinate
distortion}\label{supp:distortion}\label{sec:appendix}
Replacing $\psi_{j,n}$ by $\widehat\psi_{j,n}$ distorts each coordinate.  This
section describes that distortion, shows that it acts on the reduced sample in exactly
two ways, and proves the stability of the product limit under both. It aligns with Assumption~\ref{a:plugin} and provides an example on how to satisfy the stability result used in
Section~\ref{sec:proofs}.

It is convenient to describe the marginal replacement by its effect on a standardized axis.
Put
\begin{align}
 \varphi_{j,n}=\widehat\psi_{j,n}\circ\psi_{j,n}^{-1},
 \label{distortion}
\end{align}
a random nondecreasing map of the range of $\psi_{j,n}$ into that of
$\widehat\psi_{j,n}$.  The estimated sample is then exactly the oracle sample pushed
through $(\varphi_{1,n},\ldots,\varphi_{d,n})$, so the plug-in estimation component involves
 understanding how the product limit \eqref{KM} reacts to a distortion of
each coordinate.  Since $\psi_{j,n}$ and $\widehat\psi_{j,n}$ are both nondecreasing, on the
event in Assumption~\ref{a:plugin} every $x$ with $\psi_{j,n}(x)>16T$ satisfies
\begin{align}
 \widehat\psi_{j,n}(x)\ \geq\ \sup_{x':\ \psi_{j,n}(x')\leq16T}\widehat\psi_{j,n}(x')
 \ \geq\ 16T(1-\eta_n) ,
 \label{pluginabove}
\end{align}
which is the only property of the large values that is used.
Such a distortion affects the reduced sample in exactly two ways,
and they are of a different nature.

\emph{Shift.}  As long as $W_{i,n}(q)\leq2$, which by
Lemma~\ref{lem:sandwich} is the only range in which the two samples are compared,
$W_{i,n}(q)$ is multiplied by a
factor in $[1-\eta_n,1+\eta_n]$.  A failure time therefore moves slightly, and the risk
set is read at a displaced argument.  What controls this is a modulus of continuity for
$H_n$ along rays, and that is precisely Assumption~\ref{a:radialmodulus} with
$A=\theta_n:=(1+\eta_n)/(1-\eta_n)\downarrow1$.

\emph{Flip.}  The distortion may also change which coordinate attains the
minimum in $W_{i,n}(q)$.  By \eqref{Delta} such an exchange alters the mark only when
two coordinates have different censoring indicators, and it requires their
standardized values to be within a factor $\theta_n$ of each other, that is, the
observed angle to lie in a band of logarithmic width $O(\eta_n)$ around one of the
$\binom d2$ diagonals of the direction $q$.  On such a band the reduced failure and
censoring times are themselves within that factor of each other, so Assumption~\ref{a:radialmodulus}
bounds its mass through Lemma~\ref{lem:nearties}; no further condition is required.

For the multiplicative standardization of Section~\ref{sec:RV} the distortion is a pure
rescaling, $\widehat\psi_{j,n}=(u_{j,n}/\widehat u_{j,n})\psi_{j,n}$, so the estimated sample is
the oracle sample at the random direction $q_j\widehat u_{j,n}/u_{j,n}$, and it suffices
that the marginal quantiles be estimated at a rate faster than $r_n$, which is what
Assumption~\ref{a:pluginnorm} requires.  In general the argument uses
Lemma~\ref{lem:nearties} instead, and only through the following
consequence: integrating \eqref{bandmass} by parts and using
$-\!\int_{(0,2]}dH_n/H_n=\log\{H_n(0+;q)/H_n(2;q)\}$ gives
\begin{align}
& \nu_n\Biggl[{\ind\bigl\{|\log(A^{(1)}_{1,n}(q)/A^{(2)}_{1,n}(q))|
 \leq\eta,\ \delta^{(1)}_1\neq\delta^{(2)}_1,\ W_{1,n}(q)\leq2\bigr\}
 \over H_n(W_{1,n}(q);q)}\Biggr]\nonumber\\
 &\quad\leq C\,\eta\bigl\{1+\log{n\over r_n^2\,H_n(2;q)}\bigr\} .
 \label{angleint}
\end{align}
The logarithm on the right is $O\{\log(n/r_n^2)\}$ uniformly for $q$ in a compact set,
by Assumption~\ref{a:FG} and $H_n(0+;q)=n/r_n^2$.
The proposition below studies these two effects separately.

\begin{proposition}\label{prop:stability}
Assume Assumptions~\ref{ass:structure} and~\ref{a:plugin},
and let $K^\circ$ be a compact subset of $[1/4,4T]^d$.
\begin{itemize}
\item[\rm(i)] If $\eta_n\log(n/r_n^2)\to0$, then
 $\sup_{q\in K^\circ}|\log\widehat p_n(q)-\log\widetilde p_n(q)|\overset{\P}\to0$.
\item[\rm(ii)] If moreover $r_n\eta_n\log(n/r_n^2)\to0$, if Assumption~\ref{a:divergence} holds,
 and if the empirical processes indexed by the finite families in the
 $d$-dimensional failure decomposition described at the start of
 Section~\ref{sec:proofs} are asymptotically equicontinuous on $[\tau,2]\times K^\circ$ for
 every $\tau$, as in \eqref{aecsix}, then
 $r_n\sup_{q\in K^\circ}|\log\widehat p_n(q)-\log\widetilde p_n(q)|\overset{\P}\to0$.
\end{itemize}
\end{proposition}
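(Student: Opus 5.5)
The plan is to work throughout on the event of Assumption~\ref{a:plugin}, whose probability tends to one. On that event I compare the estimated reduced sample $(\widehat W_{i,n}(q),\widehat\Delta_{i,n}(q))$ with the oracle sample observation by observation. The first step is a sandwich statement, the one quoted as Lemma~\ref{lem:sandwich}: whenever $W_{i,n}(q)\le 2$ or $\widehat W_{i,n}(q)\le 2$, one has $(1-\eta_n)W_{i,n}(q)\le\widehat W_{i,n}(q)\le(1+\eta_n)W_{i,n}(q)$. Observations with large standardized values are handled by \eqref{pluginabove}, which keeps them above level $1$ in both samples. Consequently the risk counts satisfy
\begin{align*}
 \widetilde Y_n\bigl(v(1+\eta_n);q\bigr)\ \le\ \widehat Y_n(v;q)\ \le\ \widetilde Y_n\bigl(v(1-\eta_n);q\bigr),\qquad v\le1 .
\end{align*}
Next I pass from product limits to Nelson--Aalen integrals on both sides. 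For the oracle this is \eqref{KMNA}. For the estimated sample, the injectivity of $\widehat\psi_{j,n}$ on uncensored observations again bounds failure multiplicities by $d$, so \eqref{ranksum} carries over. Each side therefore costs $O_{\P}(r_n^{-2})$ uniformly on $K^\circ$, which is negligible even after multiplication by $r_n$. It thus suffices to control $\widehat\Lambda_n(q)-\widetilde\Lambda_n(q)$.

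I split this difference into a shift part and a flip part. The flip part collects the observations whose mark changes, together with those whose reduced time crosses the endpoint $1$. The shift part keeps the oracle marks but evaluates the integrand at displaced times and displaced risk sets.

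\emph{Shift part.} By the sandwich, the integrand $1/\widehat Y_n$ differs from $1/\widetilde Y_n$ at the same failure by a ratio that is controlled by $\widetilde Y_n(v(1-\eta_n))/\widetilde Y_n(v(1+\eta_n))$. Its population counterpart is at most $\theta_n^{\rho}=1+O(\eta_n)$ by Assumption~\ref{a:radialmodulus} with $A=\theta_n$. Multiplying this relative error by the total hazard $\int_{(0,1]}dH_{1,n}/H_n=-\log p_n(q)=O\{\log(n/r_n^2)\}$, uniformly on $K^\circ$ by \eqref{populationhazard} and Assumption~\ref{a:FG}, gives a bound of order $\eta_n\log(n/r_n^2)$. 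To replace empirical risk ratios by population ones I reuse the layer decomposition of Step~3 of the consistency proof: on the layers $I_{m,n}(q)$, the bounds \eqref{layerbounds} make $\widetilde Y_n/(r_n^2H_n)$ uniformly close to one, with relative error summable in $m$. Failures that cross $v=1$ lie in a band of relative width $O(\eta_n)$ around $1$; this band has $\nu_n$-mass $O(\eta_n)$ by continuity of $H$, and it is divided by a risk set of order $r_n^2$.

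\emph{Flip part.} A mark can change only if the observation lies in the angular band of Lemma~\ref{lem:nearties} with $\eta=3\eta_n$ and mixed indicators. The corresponding hazard contribution is at most a constant times $r_n^{-2}\sum_i \ind_{\{i\in\text{band}\}}/H_n(W_{i,n}(q);q)$, once the empirical risk is compared with $H_n$ on the layers. The expectation of this sum is exactly the left side of \eqref{angleint}, hence $O\{\eta_n\log(n/r_n^2)\}$. Its uniform-in-$q$ fluctuation is handled by the VC bound \eqref{maximalVC} applied layerwise, which gives the same summable geometric series as in \eqref{layerprob}. Together the two parts prove (i).

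For (ii), the deterministic parts of both the shift and the flip terms are $O\{\eta_n\log(n/r_n^2)\}$, and these become $o(1)$ after multiplication by $r_n$ by assumption. What remains is the stochastic part of the shift term at scale $r_n$. There I write the shift error as increments $r_n\{\widetilde H_n(v(1\pm\eta_n))-\widetilde H_n(v)\}$ minus their means, together with the analogous failure increments. On $[\tau,2]\times K^\circ$ these vanish by the assumed asymptotic equicontinuity \eqref{aecsix}, since the $\varrho_n$-distance between $v$ and $v(1\pm\eta_n)$ tends to zero. On $(0,\tau]$ I bound them in outer expectation by $C/\sqrt{b_{n,\tau}}$ using \eqref{ABmoments} and the layer argument of Step~5. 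Then I let $n\to\infty$ and afterwards $\tau\downarrow0$, invoking Assumption~\ref{a:divergence} through \eqref{btaudiverge}. The flip term at scale $r_n$ is treated in the same way: its mean is $o(r_n^{-1})$ and its fluctuation is $O_{\P}(r_n^{-1})$ times a vanishing band mass.

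I expect the main obstacle to be this stochastic shift term near the origin in (ii). There the risk set is unbounded, so uniform relative control of the empirical risk across tiny multiplicative shifts is needed, simultaneously in $q$ and across infinitely many layers. My first attempt would be to run the layered maximal inequality with envelopes $E_{m,n}$ enlarged by the factor $\theta_n$, which is permitted by Assumption~\ref{a:radialmodulus}.
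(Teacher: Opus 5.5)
Your proposal is correct and follows the paper's proof essentially step for step. The shared steps are the sandwich Lemma~\ref{lem:sandwich}, the Kaplan--Meier to Nelson--Aalen transfer with failure multiplicities bounded via injectivity, the exact flip/shift split with the flip band controlled through Lemma~\ref{lem:nearties} and \eqref{angleint}, the layered maximal inequality below $\tau$, asymptotic equicontinuity on $[\tau,2]$ in case~(ii), and Assumption~\ref{a:divergence} to let $\tau\downarrow0$.

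Two small repairs are needed, neither of which changes anything downstream. First, only $\widehat W_{i,n}(q)\geq(1-\eta_n)W_{i,n}(q)$ is available, so your lower risk-count bound must read $\widetilde Y_n\bigl(v/(1-\eta_n);q\bigr)\leq\widehat Y_n(v;q)$, and the effective displacement is by $\theta_n$, as in the paper. Second, the $O(\eta_n)$ mass of the band around $v=1$, which you need at scale $r_n$ in~(ii), comes from Assumption~\ref{a:radialmodulus}; continuity of $H$ alone gives no rate.
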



Assumption~\ref{a:plugin} is required only below the level $16T$.
The following lemma is what converts it into the comparison of the two
samples that the proof uses; it is the only place where that condition is used.  Write
throughout $\theta_n=(1+\eta_n)/(1-\eta_n)\downarrow1$, and
$\widetilde Y_n(v;q)=\#\{i:W_{i,n}(q)\geq v\}=r_n^2\widetilde H_n(v;q)$.

\begin{lemma}\label{lem:sandwich}
Let $K^\circ\subset[1/4,4T]^2$ and let $\eta_n\leq1/8$.  On the event in
Assumption~\ref{a:plugin}, the following hold for every $i\leq n$ and every $q\in K^\circ$.
\begin{itemize}
\item[\rm(i)] If $W_{i,n}(q)\leq2$, then every coordinate attaining $W_{i,n}(q)$ or
 $\widehat W_{i,n}(q)$ has standardized value at most $16T$, and
 \begin{align*}
  (1-\eta_n)W_{i,n}(q)\ \leq\ \widehat W_{i,n}(q)\ \leq\ (1+\eta_n)W_{i,n}(q).
 \end{align*}
\item[\rm(ii)] If $W_{i,n}(q)>2$, then $\widehat W_{i,n}(q)\geq2(1-\eta_n)\geq7/4$.
\item[\rm(iii)] If $\widehat W_{i,n}(q)\leq1$, then $W_{i,n}(q)\leq8/7$, so that
 \rm{(i)} applies to $i$.
\item[\rm(iv)] For every $0<v\leq3/2$,
 \begin{align}
  \widetilde Y_n(\theta_nv;q)\ \leq\ \widehat Y_n\bigl((1+\eta_n)v;q\bigr),
  \qquad
  \widehat Y_n\bigl((1-\eta_n)v;q\bigr)\ \leq\ \widetilde Y_n(v/\theta_n;q).
  \label{Ynested}
 \end{align}
\end{itemize}
Since $\widehat Y_n(\cdot\,;q)$ is nonincreasing, \eqref{Ynested} implies in particular
$\widetilde Y_n(\theta_nv;q)\leq\widehat Y_n((1-\eta_n)v;q)$ and
$\widehat Y_n((1+\eta_n)v;q)\leq\widetilde Y_n(v/\theta_n;q)$.
\end{lemma}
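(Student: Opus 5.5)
The plan is to work coordinatewise with the oracle values $a_j=\psi_{j,n}(Z^{(j)}_i)$ and their estimated counterparts $\widehat a_j=\widehat\psi_{j,n}(Z^{(j)}_i)$, for $j=1,2$.

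On the event of Assumption~\ref{a:plugin} there are two regimes. If $a_j\leq16T$, then $(1-\eta_n)a_j\leq\widehat a_j\leq(1+\eta_n)a_j$. If $a_j>16T$, then \eqref{pluginabove} gives $\widehat a_j\geq16T(1-\eta_n)\geq14T$. Because $q\in[1/4,4T]^2$, each $q_j$ lies in $[1/4,4T]$. We therefore have the basic comparison of $A^{(j)}=a_j/q_j$ with $\widehat A^{(j)}=\widehat a_j/q_j$. When $a_j\leq16T$, $\widehat A^{(j)}$ lies within the factor $[1-\eta_n,1+\eta_n]$ of $A^{(j)}$. When $a_j>16T$, both $A^{(j)}$ and $\widehat A^{(j)}$ exceed $14T/(4T)=7/2$.

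For (i), assume $W_{i,n}(q)\leq2$. A coordinate attaining $W$ has $A^{(j)}\leq2$, so $a_j\leq2q_j\leq8T\leq16T$. A coordinate attaining $\widehat W$ has
\[
\widehat A^{(j)}\leq\widehat A^{(l)}\leq(1+\eta_n)W\leq 9/4,
\]
where $l$ is a coordinate attaining $W$. This is below $7/2$, so that coordinate is also in the controlled regime. The two-sided bound then follows from the coordinatewise bounds. The upper bound holds because $\widehat W\leq\widehat A^{(l)}\leq(1+\eta_n)A^{(l)}$. For the lower bound, take a coordinate $j$ attaining $\widehat W$; since it is controlled, $\widehat W=\widehat A^{(j)}\geq(1-\eta_n)A^{(j)}\geq(1-\eta_n)W$.

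For (ii), assume $W>2$. Then every coordinate satisfies $A^{(j)}>2$. A controlled coordinate gives $\widehat A^{(j)}\geq2(1-\eta_n)$. An uncontrolled one gives $\widehat A^{(j)}\geq7/2$. Taking the minimum, and using $\eta_n\leq1/8$, yields $\widehat W\geq2(1-\eta_n)\geq7/4$.

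For (iii), part (ii) shows that $\widehat W\leq1$ forces $W\leq2$. Part (i) then gives $W\leq\widehat W/(1-\eta_n)\leq8/7$.

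For (iv), the first inclusion in \eqref{Ynested} needs care about whether $W_i$ lies above or below $2$.
\begin{itemize}
\item Suppose $W_i\geq\theta_nv$ and $W_i\leq2$. Then (i) gives $\widehat W_i\geq(1-\eta_n)\theta_nv=(1+\eta_n)v$.
\item Suppose $W_i>2$. Then (ii) gives $\widehat W_i\geq7/4$. This must be compared with $(1+\eta_n)v\leq(9/8)(3/2)=27/16<7/4$, so the inequality again holds.
\end{itemize}
The second inclusion works the same way. Suppose $\widehat W_i\geq(1-\eta_n)v$. If $W_i\leq2$, then (i) gives $W_i\geq\widehat W_i/(1+\eta_n)\geq v/\theta_n$. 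If $W_i>2$, then $W_i>2\geq v/\theta_n$ trivially.

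The closing remark follows from monotonicity, since $(1-\eta_n)v\leq(1+\eta_n)v$.

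The main obstacle is the boundary bookkeeping between the controlled range, standardized value at most $16T$, and the uncontrolled range. The constants $16T$, $4T$, $2$ and $3/2$ must be checked to leave room. In particular, the step $27/16<7/4$ in (iv) is the tightest point, and it is where $\eta_n\leq1/8$ is used.
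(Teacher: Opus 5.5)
Your proof is correct and follows essentially the same route as the paper. You split coordinates according to whether the standardized value is at most $16T$, apply the two-sided bound to the controlled coordinates and \eqref{pluginabove} (so $\widehat A^{(j)}\geq7/2$) to the others, and run the same case analysis on $W_{i,n}(q)\leq2$ versus $>2$ for (i)--(iv); the only differences are cosmetic (the lower bound in (i) via a coordinate attaining $\widehat W_{i,n}(q)$, and a direct rather than contradiction proof of the second inequality in \eqref{Ynested}), though note that $\eta_n\leq1/8$ is used at several points, for instance in $2(1-\eta_n)\geq7/4$, not only at $27/16<7/4$.
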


\begin{proof}
Fix $i$ and $q=(q_1,q_2)\in K^\circ$ and abbreviate
$y_j=\psi_{j,n}(Z^{(j)}_i)$, $a_j=A^{(j)}_{i,n}(q)=y_j/q_j$ and
$\widehat y_j=\widehat\psi_{j,n}(Z^{(j)}_i)$,
$\widehat a_j=\widehat A^{(j)}_{i,n}(q)=\widehat y_j/q_j$, so that
$W_{i,n}(q)=a_1\wedge a_2$ and $\widehat W_{i,n}(q)=\widehat a_1\wedge\widehat a_2$.
Call the coordinate $j$ \emph{low} if $y_j\leq16T$ and \emph{high} otherwise.  For a low
$j$, the event in Assumption~\ref{a:plugin} applied at $x=Z^{(j)}_i$ gives
\begin{align}
 (1-\eta_n)a_j\ \leq\ \widehat a_j\ \leq\ (1+\eta_n)a_j .
 \label{lowbound}
\end{align}
For a high $j$ we have $a_j=y_j/q_j>16T/(4T)=4$, and by \eqref{pluginabove}
\begin{align}
 \widehat a_j\ \geq\ {16T(1-\eta_n)\over q_j}\ \geq\ {16T(1-\eta_n)\over4T}
 \ =\ 4(1-\eta_n)\ \geq\ {7\over2} ,
 \label{highbound}
\end{align}
both using $1/4\leq q_j\leq4T$ and $\eta_n\leq1/8$.

(i) Let $W_{i,n}(q)\leq2$ and let $j$ attain the minimum, $a_j=W_{i,n}(q)$.  Then
$y_j=q_ja_j\leq4T\cdot2=8T<16T$, so $j$ is low; in particular at least one coordinate is
low and $\min\{a_l:l\text{ low}\}=W_{i,n}(q)$.  By \eqref{lowbound},
\begin{align*}
 (1-\eta_n)W_{i,n}(q)\ \leq\ \min\{\widehat a_l:l\text{ low}\}
 \ \leq\ (1+\eta_n)W_{i,n}(q)\ \leq\ {9\over8}\cdot2={9\over4} ,
\end{align*}
while by \eqref{highbound} any high $l$ has $\widehat a_l\geq7/2>9/4$.  The minimum
over all coordinates is therefore the minimum over the low ones, which is the displayed
bound, and no high coordinate attains $\widehat W_{i,n}(q)$.

(ii) Let $W_{i,n}(q)>2$.  A low $l$ has $\widehat a_l\geq(1-\eta_n)a_l>2(1-\eta_n)$ by
\eqref{lowbound}, and a high $l$ has $\widehat a_l\geq7/2>2(1-\eta_n)$ by
\eqref{highbound}; taking the minimum gives $\widehat W_{i,n}(q)\geq2(1-\eta_n)\geq7/4$.

(iii) If $W_{i,n}(q)>2$ then $\widehat W_{i,n}(q)\geq7/4>1$ by (ii); so
$\widehat W_{i,n}(q)\leq1$ forces $W_{i,n}(q)\leq2$, and then (i) gives
$W_{i,n}(q)\leq\widehat W_{i,n}(q)/(1-\eta_n)\leq8/7$.

(iv) For the first inequality let $l$ satisfy $W_{l,n}(q)\geq\theta_nv$.  If
$W_{l,n}(q)\leq2$ then (i) gives
$\widehat W_{l,n}(q)\geq(1-\eta_n)\theta_nv=(1+\eta_n)v$; if
$W_{l,n}(q)>2$ then (ii) gives
$\widehat W_{l,n}(q)\geq7/4>(9/8)(3/2)\geq(1+\eta_n)v$.
Either way $\widehat W_{l,n}(q)\geq(1+\eta_n)v$, so every index counted on the left is
counted on the right.  For the second inequality let $l$ satisfy
$\widehat W_{l,n}(q)\geq(1-\eta_n)v$ and suppose $W_{l,n}(q)<v/\theta_n$.  Since
$\theta_n\geq1$ gives $v/\theta_n\leq v\leq3/2<2$, part (i) applies to $l$ and gives
$\widehat W_{l,n}(q)\leq(1+\eta_n)W_{l,n}(q)<(1+\eta_n)v/\theta_n=(1-\eta_n)v$,
a contradiction; hence $W_{l,n}(q)\geq v/\theta_n$.
\end{proof}

\begin{proof}[of Proposition~\ref{prop:stability}]
Work throughout on the event of Assumption~\ref{a:plugin}, whose probability tends to one, and
assume $n$ large enough that $\eta_n\leq1/8$.

\emph{Step A: the transfer to cumulative hazards.}  Lemma~\ref{lem:sandwich} holds.  Applying
\eqref{Ynested} at $v=1$ and using \eqref{Hlower} gives
$\widehat Y_n(1;q)\geq\widetilde Y_n(\theta_n;q)\geq\widetilde Y_n(2;q)
\geq r_n^2h_\tau/2$ with probability tending
to one, uniformly on $K^\circ$.  Moreover the failure times of the estimated sample are
distinct in the sense of \eqref{tietwo}, since if $\widehat\Delta_{i,n}(q)=1$ then, by
\eqref{Delta}, there is a $j$ with $\widehat A^{(j)}_{i,n}(q)=\widehat W_{i,n}(q)$ and
$\delta^{(j)}_i=1$, so an index $i$ with $\widehat\Delta_{i,n}(q)=1$ and
$\widehat W_{i,n}(q)=w$ satisfies $\widehat\psi_{j,n}(Z^{(j)}_i)=wq_j$ for some $j$
with $\delta^{(j)}_i=1$; since $\widehat\psi_{j,n}$ is injective on the uncensored
observations of the $j$th coordinate, at most one index does so for each of the two
values of $j$, whence at most two in total.  The argument of Step~4 of the proof of
Theorem~\ref{thm:consistency} therefore applies verbatim to the estimated sample and
gives
\begin{align}
 \sup_{q\in K^\circ}\bigl|\log\widehat p_n(q)+\widehat\Lambda_n(q)\bigr|
 \leq\sup_{q\in K^\circ}{2\over\widehat Y_n(1;q)-2}=O_{\P}\bigl(r_n^{-2}\bigr),
 \label{KMNAhat}
\end{align}
with $\widehat\Lambda_n(q)=\sum_{i:\widehat\Delta_{i,n}(q)=1,\ \widehat W_{i,n}(q)\leq1}
\widehat Y_n(\widehat W_{i,n}(q);q)^{-1}$ being the Nelson--Aalen integral of the estimated
sample.  The same statement for $\widetilde p_n$ and $\widetilde\Lambda_n$ is Step~4
itself.  It therefore suffices to compare $\widehat\Lambda_n$ with
$\widetilde\Lambda_n$.

\emph{Step B: decomposing into flip and shift terms.}  A mark can change only if the coordinate attaining
the minimum changes, which by \eqref{Delta} matters only when
$\delta^{(1)}_i\neq\delta^{(2)}_i$.  We claim that, when
$\log\theta_n\leq3\eta_n$ and $\eta_n\leq1/2$,
\begin{align}
 \bigl\{i:\widehat W_{i,n}(q)\leq1,\
 \widehat\Delta_{i,n}(q)\neq\Delta_{i,n}(q)\bigr\}\subset
 {\cal B}_n(q):=\bigl\{i:\bigl|\log{A^{(1)}_{i,n}(q)\over A^{(2)}_{i,n}(q)}\bigr|
 \leq3\eta_n,\ \delta^{(1)}_i\neq\delta^{(2)}_i\bigr\} ,
 \label{flipset}
\end{align}
Indeed, let $\widehat W_{i,n}(q)\leq1$, so that
$W_{i,n}(q)\leq8/7$ by Lemma~\ref{lem:sandwich}(iii).  If one coordinate, say the
second, were high in the sense of the proof of Lemma~\ref{lem:sandwich}, then
$A^{(2)}_{i,n}(q)>4>8/7\geq A^{(1)}_{i,n}(q)$ and, by \eqref{lowbound} and
\eqref{highbound}, $\widehat A^{(1)}_{i,n}(q)\leq(9/8)(8/7)=9/7<7/2\leq
\widehat A^{(2)}_{i,n}(q)$; the first coordinate would then attain both minima and
\eqref{Delta} would give $\widehat\Delta_{i,n}(q)=\delta^{(1)}_i=\Delta_{i,n}(q)$.  So
both coordinates are low, and \eqref{lowbound} holds for both, and an exchange of the
coordinate attaining the minimum forces
$A^{(1)}_{i,n}(q)/A^{(2)}_{i,n}(q)\in[\theta_n^{-1},\theta_n]$, which is
\eqref{flipset}.
Splitting according to whether the mark changes gives the exact decomposition
\begin{align}
 \widehat\Lambda_n(q)-\widetilde\Lambda_n(q)=\Phi_n(q)+\Psi_n(q),
 \label{Lamsplit}
\end{align}
\begin{align*}
 \Phi_n(q)&=\sum_{i\in{\cal B}_n(q)}
 {\widehat\Delta_{i,n}(q)-\Delta_{i,n}(q)\over
  \widehat Y_n(\widehat W_{i,n}(q);q)}\,\ind_{\{\widehat W_{i,n}(q)\leq1\}},\\
 \Psi_n(q)&=\sum_{i:\ \Delta_{i,n}(q)=1}\bigl\{
 {\ind_{\{\widehat W_{i,n}(q)\leq1\}}\over\widehat Y_n(\widehat W_{i,n}(q);q)}
 -{\ind_{\{W_{i,n}(q)\leq1\}}\over\widetilde Y_n(W_{i,n}(q);q)}\bigr\} ,
\end{align*}
the flip term and the shift term.  We bound them one by one.  Throughout we use the
relative-risk bound of Steps~2 and~3 of the proof of
Theorem~\ref{thm:consistency}; Step~3 gives
$\widetilde H_n\geq H_n/2$ on the layers below $\tau$, and Step~2 gives the uniform
convergence of $\widetilde H_n$ to $H_n$ above $\tau$, where $H_n$ is bounded below by
\eqref{Hlower}. These results are applied on
the enlarged set $[1,16T]^2$ and transported by the scaling identity
$\widetilde H_n(v;q)=\widetilde H_n(cv;q/c)$. Namely, outside an event of probability $o(1)$,
\begin{align}
 \sup_{q\in[1/4,4T]^2}\ \sup_{0<v\leq4}
 \bigl|{\widetilde H_n(v;q)\over H_n(v;q)}-1\bigr|\leq{1\over2},
 \label{relrisk0}
\end{align}
while Assumption~\ref{a:radialmodulus} at $A=\theta_n$ gives
\begin{align}
 H_n(\theta_nv;q)\geq\theta_n^{-\rho}H_n(v;q),\qquad 0<v\leq4 .
 \label{Hmodulus}
\end{align}

\emph{Remark:} Throughout Steps~C and~D below, $b_{n,\tau}=\inf_{q\in K^\circ}H_n(\tau;q)$, and $I_{m,n}(q)$
and $E_{m,n}$ are the layers and envelopes of Step~5 of the proof of
Theorem~\ref{thm:normality} formed with $K^\circ$ in place of $\widetilde K$; since
$K^\circ\subset[1/4,4T]^2$ gives $W_{1,n}(1,1)/(4T)\leq W_{1,n}(q)\leq4W_{1,n}(1,1)$, the
bound \eqref{layermass2} holds verbatim, with Assumption~\ref{a:radialmodulus} at $A=16T$.

\emph{Step C: the flip term.}  If $\widehat W_{i,n}(q)\leq1$ then $W_{i,n}(q)\leq8/7$
by Lemma~\ref{lem:sandwich}(iii), and
$\widehat W_{i,n}(q)\leq(1+\eta_n)W_{i,n}(q)$ by Lemma~\ref{lem:sandwich}(i), so
$\widehat Y_n(\widehat W_{i,n}(q);q)\geq\widehat Y_n((1+\eta_n)W_{i,n}(q);q)\geq
\widetilde Y_n(\theta_nW_{i,n}(q);q)$, the last step being the first inequality of
\eqref{Ynested} at $v=W_{i,n}(q)\leq8/7\leq3/2$; combining with
\eqref{relrisk0} and \eqref{Hmodulus},
\begin{align}
 \bigl|\Phi_n(q)\bigr|\ \leq\ {4\over r_n^2}
 \sum_{i\in{\cal B}_n(q),\ W_{i,n}(q)\leq2}{1\over H_n(W_{i,n}(q);q)} .
 \label{Phibound}
\end{align}

For $0\leq a<b\leq2$, define the empirical and population flip measures by
\begin{align*}
 \widetilde M_n((a,b];q)={1\over r_n^2}\sum_{i=1}^n
 \ind_{\{i\in{\cal B}_n(q),\ a<W_{i,n}(q)\leq b\}},\quad 
 M_n((a,b];q)=\E[\widetilde M_n((a,b];q)].
\end{align*}
Thus the right-hand side of \eqref{Phibound} is
$4\int_{(0,2]}H_n(v;q)^{-1}\widetilde M_n(dv;q)$.  Its expectation is four times
the left-hand side of \eqref{angleint} with $\eta=3\eta_n$, and is therefore at most
$C\eta_n\log(n/r_n^2)$ uniformly in $q$.  On each unequal censoring mark, the bands
defining $\widetilde M_n$ are finite intersections and differences of the six indicator
families in \eqref{fourclasses}; hence their centered empirical processes satisfy the
same bound \eqref{maximalVC}.

Split the centered integral at $\tau$.  On $I_{m,n}(q)$, write
$v_m^-=v_{m,n}^-(q)$ for its lower endpoint.
The envelope bound \eqref{layermass2} and \eqref{maximalVC} give
\begin{align*}
 \E^*\Bigl[\sup_{q\in K^\circ}\sup_{v\in I_{m,n}(q)}
 |\{\widetilde M_n-M_n\}((v_m^-,v];q)|\Bigr]
 \leq{C\sqrt{2^mb_{n,\tau}}\over r_n}.
\end{align*}
Integration by parts, using the monotonicity of $H_n^{-1}$, now gives
\begin{align*}
 \Bigl|\int_{I_{m,n}(q)}{(\widetilde M_n-M_n)(dv;q)\over H_n(v;q)}\Bigr|
 \leq {2\over2^mb_{n,\tau}}
 \sup_{v\in I_{m,n}(q)}|\{\widetilde M_n-M_n\}((v_m^-,v];q)|.
\end{align*}
Consequently the centered lower-interval contribution has outer expectation at most
\begin{align}
 \sum_{m\geq0}{2\over2^mb_{n,\tau}}\cdot
 {C\sqrt{2^mb_{n,\tau}}\over r_n}
 \leq{C\over r_n\sqrt{b_{n,\tau}}}.
 \label{layerdev}
\end{align}

On $[\tau,2]$, a second integration by parts
and \eqref{Hlower} yield
\begin{align*}
 \Bigl|\int_{(\tau,2]}{(\widetilde M_n-M_n)(dv;q)\over H_n(v;q)}\Bigr|
 \leq {2\over h_\tau}\sup_{\tau\leq v\leq2}
 |\{\widetilde M_n-M_n\}((\tau,v];q)|.
\end{align*}
In case~(i), \eqref{maximalVC} with the fixed envelope $E_\tau$ makes the last
supremum $O_{\P}(r_n^{-1})=o_{\P}(1)$ uniformly in $q$.  In case~(ii), each such band
is a finite sum of increments of ${\cal J}_1,{\cal J}_2,{\cal J}_1',{\cal J}_2'$ whose
two angular indices have, by \eqref{rhonu} and \eqref{bandmass}, distance at most
\begin{align*}
 \bigl\{3C\eta_nH_n(\tau;q)\bigr\}^{1/2}\longrightarrow0
\end{align*}
uniformly in $q\in K^\circ$ and $v\in[\tau,2]$.  The assumed asymptotic
equicontinuity \eqref{aecsix} therefore gives
\begin{align*}
 \sup_{q\in K^\circ}\sup_{\tau\leq v\leq2}
 |\{\widetilde M_n-M_n\}((\tau,v];q)|=o_{\P}(r_n^{-1}).
\end{align*}

Altogether, we get
\begin{align}
 \sup_{q\in K^\circ}|\Phi_n(q)|
 \leq C\eta_n\log{n\over r_n^2}+{C\over r_n\sqrt{b_{n,\tau}}}
 +\begin{cases} o_{\P}(1) &\text{in case (i)},\\
                o_{\P}(r_n^{-1}) &\text{in case (ii)},\end{cases}
 \label{Phifinal}
\end{align}
outside an event of vanishing probability.

\emph{Step D: the shift term.}  Fix $i$ with $\Delta_{i,n}(q)=1$ and write
$v=W_{i,n}(q)$.  If both $v\leq1$ and $\widehat W_{i,n}(q)\leq1$, then
Lemma~\ref{lem:sandwich}(i) gives
$(1-\eta_n)v\leq\widehat W_{i,n}(q)\leq(1+\eta_n)v$, so that, $\widehat Y_n(\cdot\,;q)$
being nonincreasing and \eqref{Ynested} applying at this $v$,
\begin{align*}
 \widetilde Y_n(\theta_nv;q)\leq\widehat Y_n\bigl((1+\eta_n)v;q\bigr)
 \leq\widehat Y_n(\widehat W_{i,n}(q);q)
 \leq\widehat Y_n\bigl((1-\eta_n)v;q\bigr)\leq\widetilde Y_n(v/\theta_n;q);
\end{align*}
both $\widehat Y_n(\widehat W_{i,n}(q);q)$ and $\widetilde Y_n(v;q)$ therefore
lie between $\widetilde Y_n(\theta_nv;q)$ and $\widetilde Y_n(v/\theta_n;q)$, so
$|a^{-1}-b^{-1}|=|a-b|/(ab)$ bounds the corresponding summand of $\Psi_n(q)$ by
\begin{align}
 {1\over r_n^2}\cdot
 {\widetilde H_n(v/\theta_n;q)-\widetilde H_n(\theta_nv;q)
 \over\widetilde H_n(\theta_nv;q)\,\widetilde H_n(v;q)} .
 \label{shiftsummand}
\end{align}
By \eqref{relrisk0} and \eqref{Hmodulus} the denominator is at least
$\tfrac18H_n(v;q)^2$ for $n$ large.  For the numerator write
\begin{align}
 &\widetilde H_n\bigl({v\over\theta_n};q\bigr)-\widetilde H_n(\theta_nv;q) \label{shiftsplit}\\
 &=\bigl\{H_n\bigl({v\over\theta_n};q\bigr)-H_n(\theta_nv;q)\bigr\}
 +\bigl\{\widetilde H_n-H_n\bigr\}\bigl({v\over\theta_n};q\bigr)
 -\bigl\{\widetilde H_n-H_n\bigr\}(\theta_nv;q).
\nonumber
\end{align}
By Assumption~\ref{a:radialmodulus} the first term in curly braces is at most
$(\theta_n^{2\rho}-1)H_n(\theta_nv;q)\leq C\rho(\theta_n-1)H_n(v;q)$ for $\theta_n$
close to one, so its contribution to $\sup_{q}|\Psi_n(q)|$ is at most
\begin{align*}
 C\rho\bigl(\theta_n-1\bigr){1\over r_n^2}
 \sum_{i:\Delta_{i,n}(q)=1,\ W_{i,n}(q)\leq2}{1\over H_n(W_{i,n}(q);q)} ,
\end{align*}
whose expectation equals
\begin{align*}
 C\rho(\theta_n-1)\int_{(0,2]}{H_{1,n}(dv;q)\over H_n(v;q)}.
\end{align*}
By \eqref{populationhazard}, that integral is
$-\log p_n(2s,2t)\leq\log(n/r_n^2)+C$
uniformly on $K^\circ$, and $\theta_n-1\leq3\eta_n$, so this contribution is at most
$C\eta_n\log(n/r_n^2)$ in expectation; its absolute deviation is bounded exactly as in
Step~C.

The two remaining terms of \eqref{shiftsplit} are increments of
$\widetilde H_n-H_n$ over $\{\theta_nv\leq W_{1,n}(q)<v/\theta_n\}$,
whose indicator is the difference of two risk indicators.  On $(0,\tau)$, summing over
the layers with the weight $H_n^{-2}$, the layer masses satisfy
$\nu_n\{W_{1,n}(q)\in I_{m,n}(q)\}\leq2^{m+1}b_{n,\tau}$ and together with \eqref{maximalVC} give
\begin{align*}
 \sum_{m\geq0}{2^{m+1}b_{n,\tau}\over(2^mb_{n,\tau})^2}
 \cdot{C\sqrt{2^mb_{n,\tau}}\over r_n}
 \leq{C\over r_n\sqrt{b_{n,\tau}}} ,
\end{align*}
again with $C$ a fixed constant.  On $[\tau,2]$ the two risk indicators are taken at indices
whose $\varrho_n$-distance is, by \eqref{rhonu} and Assumption~\ref{a:radialmodulus},
$\bigl[\nu_n\{\theta_nv\leq W_{1,n}(q)<v/\theta_n\}\bigr]^{1/2}$
$\leq\{(\theta_n^{2\rho}-1)H_n(\theta_nv;q)\}^{1/2}
\leq\{C\rho\eta_nH_n(\tau;q)\}^{1/2}\to0$ uniformly, so the same two cases as in
Step~C apply here, with the absolute deviation being $o_{\P}(1)$ in case~(i), by \eqref{maximalVC}, and
$o_{\P}(r_n^{-1})$ in case~(ii), by asymptotic equicontinuity.

Finally, the observations for which exactly one of $W_{i,n}(q)\leq1$ or
$\widehat W_{i,n}(q)\leq1$ holds have $W_{i,n}(q)\in[(1+\eta_n)^{-1},(1-\eta_n)^{-1}]$.  Indeed, if $W_{i,n}(q)\leq1<\widehat W_{i,n}(q)$ then
Lemma~\ref{lem:sandwich}(i) applies and gives
$1<\widehat W_{i,n}(q)\leq(1+\eta_n)W_{i,n}(q)$; and if
$\widehat W_{i,n}(q)\leq1<W_{i,n}(q)$ then Lemma~\ref{lem:sandwich}(iii) gives
$W_{i,n}(q)\leq8/7$, so that (i) applies and gives
$1\geq\widehat W_{i,n}(q)\geq(1-\eta_n)W_{i,n}(q)$.  Its $\nu_n$-mass is at most
$H_n((1+\eta_n)^{-1};q)-H_n((1-\eta_n)^{-1};q)\leq C\rho\,\eta_n$ by
Assumption~\ref{a:radialmodulus}, and each summand is at most $2\{r_n^2H_n(2;q)\}^{-1}\leq
Cr_n^{-2}$ by \eqref{Hlower}; so this contribution is at most $C\rho\eta_n$ plus some of the terms derived above.  Altogether $\Psi_n$ obeys the same bound
\eqref{Phifinal} as $\Phi_n$, with a possibly different fixed constant.

\emph{Conclusion.}  Combining \eqref{Lamsplit}, \eqref{Phifinal}, and its counterpart
for $\Psi_n$ with \eqref{KMNAhat} and the corresponding bound for $\widetilde p_n$
established in Step~4 of the proof of Theorem~\ref{thm:consistency}, we obtain
\begin{align*}
 \sup_{q\in K^\circ}\bigl|\log\widehat p_n(q)-\log\widetilde p_n(q)\bigr|
 \leq C\eta_n\log{n\over r_n^2}+{C\over r_n\sqrt{b_{n,\tau}}}
 +\begin{cases} o_{\P}(1),\\ o_{\P}(r_n^{-1}),\end{cases}
\end{align*}
outside an event of probability $o(1)$, with $C$ absolute and the two cases as in
\eqref{Phifinal}.  In case~(i) the first term vanishes by assumption, the second is
$o(1)$ because $r_n\to\infty$ and $b_{n,\tau}\to\inf_{q}H(\tau;q)>0$, and the third is
$o_{\P}(1)$; this is part~(i), and $\tau$ may be kept fixed throughout.  In case~(ii),
multiply by $r_n$; the first term vanishes by assumption, the third is $o_{\P}(1)$, and
the second becomes $C/\sqrt{b_{n,\tau}}$, whose limsup in $n$ is
$C/\sqrt{\inf_qH(\tau;q)}$ with $C$ fixed; letting $\tau\downarrow0$ and using
$\inf_{q\in K^\circ}H(\tau;q)\to\infty$, which is Assumption~\ref{a:divergence}, gives part~(ii).
\end{proof}

\newpage
\section{Asymptotic theory for the Kaplan--Meier standardization}\label{sec:appendixKM}

This section proves the crucial theory on which Section~\ref{sec:KM} rests, namely the
standardization \eqref{KMpsi} built from the Kaplan--Meier estimator satisfies
Assumption~\ref{a:plugin} at the rate $\zeta_n$.  Each coordinate is involved separately, so throughout
this section $j\in\{1,\ldots,d\}$ is fixed and suppressed.  Locally, we simplify notation and
write $X=X^{(j)}$, $C=C^{(j)}$,
$Z=X\wedge C$, $\delta=\ind_{\{X\leq C\}}$, $G=G_j$, $\beta=\beta_j$ and
$x^*=\sup\{x:F_X(x)<1\}$, and $(Z_i,\delta_i)$, $i\leq n$, are the observations of that
coordinate.  Recall $\psi_n:=\psi_{j,n}$ and $\widehat\psi_n:=\widehat\psi_{j,n}$ from
\eqref{KMpsi}, and set
\begin{align}
 \kappa_n=k\,G(n/k).
 \label{kappadef}
\end{align}

\begin{proposition}\label{prop:KM}
Let $X$ and $C$ be independent, assume that $G$ satisfies
Assumption~\ref{a:Cbeta}, let $F_X$ and $F_C$ be continuous with $F_X$
strictly increasing on $(0,x^*)$, let $k=k_n$ be an intermediate sequence, $k\to\infty$
and $k/n\to0$, and let $\kappa_n\to\infty$.  Then, for every
fixed $y^\ast>1$,
\begin{align*}
 \sup_{x\,:\ \psi_n(x)\leq y^\ast}
 \bigl|{\widehat\psi_n(x)\over\psi_n(x)}-1\bigr|=O_{\P}\bigl(\kappa_n^{-1/2}\bigr).
\end{align*}
\end{proposition}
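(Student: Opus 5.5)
Since $\widehat\psi_n(x)/\psi_n(x)=\overline F_X(x)/\widehat{\overline F}_n(x)$, I will prove the equivalent relative statement for the Kaplan--Meier estimator itself:
\[
\sup_{x\le t_n}\bigl|\widehat{\overline F}_n(x)/\overline F_X(x)-1\bigr|=O_{\P}(\kappa_n^{-1/2}),
\qquad \overline F_X(t_n)=\frac{k}{ny^\ast}.
\]
The region $\{x:\psi_n(x)\le y^\ast\}$ is exactly $\{x\le t_n\}$, by continuity and strict monotonicity of $F_X$. Once the relative error is small, its reciprocal transfers at the same order.

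\emph{Martingale representation.} I use Gill's representation. Let $Y(s)=\#\{i:Z_i\ge s\}$, let $\Lambda$ be the cumulative hazard of $X$, and let $M$ be the counting-process martingale. On the event that $Y(t_n)>0$,
\[
\frac{\widehat{\overline F}_n(x)}{\overline F_X(x)}-1=-\int_{(0,x]}\frac{\widehat{\overline F}_n(s-)}{\overline F_X(s)}\frac{dM(s)}{Y(s)},\qquad x\le t_n .
\]
This is a martingale in $x$. Its predictable variation is bounded by
\[
\sup_{s\le x}\Bigl(\frac{\widehat{\overline F}_n(s-)}{\overline F_X(s-)}\Bigr)^2\cdot\sup_{s\le t_n}\frac{n\overline F_Z(s-)}{Y(s)}\cdot\frac1n\int_{(0,t_n]}\frac{dF_X(s)}{\overline F_X(s)^2\,\overline F_C(s-)} .
\]

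\emph{The deterministic integral.} I substitute $z=1/\overline F_X(s)$, so that $dF_X/\overline F_X^2=dz$ and $\overline F_C(s)=G(z)$ with $G=\overline F_C\circ U_X$. The integral becomes
\[
\int_1^{ny^\ast/k}\frac{dz}{G(z)} .
\]
Since $1/G$ is regularly varying with index $\beta\ge0$, Karamata's theorem gives that this is asymptotic to $(ny^\ast/k)/\{(1+\beta)G(n/k)\}\cdot(y^\ast)^{\beta}$. Dividing by $n$ leaves a term of order $1/\{kG(n/k)\}=1/\kappa_n$.

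\emph{The two random factors.} Both must be $O_{\P}(1)$.

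For the risk-set ratio, I note that $n\overline F_Z(t_n)=n\,(k/(ny^\ast))\,G(ny^\ast/k)\asymp\kappa_n\to\infty$. A uniform ratio bound of Wellner/Shorack--Wellner type for the empirical survival function of $Z$ then gives
\[
\sup_{s\le t_n}n\overline F_Z(s)/Y(s)=O_{\P}(1),
\]
since the expected count at the right end diverges. The same bound gives $\P(Y(t_n)>0)\to1$. If a ready-made reference is not at hand, I will prove this directly: use Bernstein's inequality on a geometric grid of levels where $n\overline F_Z$ doubles, and sum the resulting exponentially small failure probabilities.

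For the factor $\widehat{\overline F}_n(s-)/\overline F_X(s-)$, I bound it by $2$ via a stopping argument. Stop at the first $s$ where the ratio exceeds $2$. On the stopped martingale the variance bound above holds, so Lenglart's inequality shows the stopped process is $O_{\P}(\kappa_n^{-1/2})=o_{\P}(1)$. Hence the stopping time exceeds $t_n$ with probability tending to one.

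\emph{Conclusion.} Lenglart's inequality (Gill 1983, Theorem 2.1) applied to the martingale, with its predictable variation of order $O_{\P}(\kappa_n^{-1})$, yields the uniform $O_{\P}(\kappa_n^{-1/2})$ bound on $[0,t_n]$. Inverting the ratio gives the proposition.

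The main obstacle is the uniform relative control of the risk set down to the level where only about $\kappa_n$ observations remain. The rest is the standard Gill machinery plus the single Karamata computation.
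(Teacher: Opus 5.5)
Your argument is correct, but it takes a genuinely different route from the paper.

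\emph{What you do.} You stay on the original scale and use Gill's identity. With $\ind_{\{Y(s)>0\}}$ inserted in the integrand, which coincides with your display on $\{Y(t_n)>0\}$, this makes $\widehat{\overline F}_n/\overline F_X-1$ a martingale on $[0,t_n]$. The remaining steps all go through as you describe:
the three-factor bound on the predictable variation;
the linear lower bound for the uniform empirical distribution function, which controls the risk-set factor on the whole range up to the largest observation;
the stopping argument for $\widehat{\overline F}_n(s-)/\overline F_X(s)$, which works because this ratio increases only continuously and jumps only downward;
and Lenglart's inequality.

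\emph{What the paper does.} It uses no martingale structure. It passes to the standardized scale $y=\psi_n(x)$ and normalizes counts by $\kappa_n$. It computes $h_n(y)=G(ny/k)/\{yG(n/k)\}$ and $h_{1,n}(dy)=h_n(y)\,dy/y$, so the population hazard is exactly $\log(ny/k)$. It then cuts $(k/n,y^\ast]$ into dyadic layers of $h_n$ and, on each layer, applies the VC maximal inequality with an envelope of mass at most $2^{m+1}b_n$. Summing the geometric series gives relative control of the empirical risk function and of the Nelson--Aalen integral at rate $(b_n\kappa_n)^{-1/2}$. Finally, the product limit is compared with its Nelson--Aalen integral through \eqref{ranksum}--\eqref{KMNA}.

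\emph{What each buys.} Your route is shorter and more classical, and it isolates the single deterministic input $n^{-1}\int_1^{ny^\ast/k}dz/G(z)=O(\kappa_n^{-1})$. Karamata is not needed there. Since $G$ is nonincreasing, the integral is at most $(ny^\ast/k)/G(ny^\ast/k)$, so the only thing required from Assumption~\ref{a:Cbeta} is $\liminf_n G(ny^\ast/k)/G(n/k)>0$; in the paper this enters through $b_n$. The paper's route buys uniformity of method. The same VC classes, maximal inequality, layering and product-limit comparison are reused from the directional theory, where suprema over directions and plug-in distortions of the reduced sample are handled by empirical-process bounds rather than by a single counting-process martingale.
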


The proof follows Steps~1--4 of the proof of
Theorem~\ref{thm:consistency} but with the direction $q$ absent and the radius replaced by
the standardized value $y=\psi_n(x)$; in particular it uses the same
Vapnik--Chervonenkis classes, the same maximal inequality, the same decomposition of a
hazard integral, the same layering near the divergence of the risk function, and the
same passage \eqref{ranksum}--\eqref{KMNA} from a product limit to its Nelson--Aalen
integral.

\begin{proof}
\emph{Step 1: the reduced sample and its population quantities.}
Since $F_X$ is continuous and strictly increasing on $(0,x^*)$, the map $\psi_n$ is
continuous and strictly increasing there, with range $[k/n,\infty)$, and
$Z\leq X<x^*$ almost surely, so $\psi_n(Z)$ is well defined.  Put
\begin{align*}
 \widetilde h_n(y)={1\over\kappa_n}\#\bigl\{i\leq n:\psi_n(Z_i)\geq y\bigr\},
 \qquad
 \widetilde h_{1,n}(B)={1\over\kappa_n}
 \#\bigl\{i\leq n:\psi_n(Z_i)\in B,\ \delta_i=1\bigr\},
\end{align*}
for $y>0$ and Borel $B\subset(0,\infty)$, and let $h_n=\E[\widetilde h_n]$ and
$h_{1,n}=\E[\widetilde h_{1,n}]$.  These are the one-dimensional counterparts of
\eqref{HH1}.

Because $F_X$ is continuous, $\overline F_X(X)$ is uniform on $(0,1)$, so
\begin{align}
 \P\bigl(\psi_n(X)>y\bigr)
={k/n\over y},\qquad y\geq k/n ;
 \label{exactpareto}
\end{align}
and so the standardized risk is exactly standard Pareto above the level $k/n$.  Next, $\overline F_X$ is
nonincreasing, so $\overline F_X(Z)=\overline F_X(X)\vee\overline F_X(C)$, and for
$0<a<1$ continuity of $F_X$ gives $\overline F_X(c)\leq a$ if and only if $c\geq U_X(1/a)$.  Hence,
by the independence of $X$ and $C$ and the definition $G=\overline F_C\circ U_X$ of
Assumption~\ref{a:Cbeta},
\begin{align*}
 \P\bigl(\overline F_X(Z)\leq a\bigr)
 =\P\bigl(\overline F_X(X)\leq a\bigr)\P\bigl(C\geq U_X(1/a)\bigr)
 =a\,G(1/a) .
\end{align*}
Taking $a=k/n/y$ and dividing by $\kappa_n/n$ gives
\begin{align}
 h_n(y)={G(ny/k)\over y\,G(n/k)},\qquad y>k/n .
 \label{hnKM}
\end{align}
On $\{\delta=1\}$ one has $Z=X$, so by independence again, and by
\eqref{exactpareto} and $\psi_n^{-1}(y)=U_X(ny/k)$,
\begin{align*}
 h_{1,n}(dy)={n\over\kappa_n}\,\P\bigl(C\geq U_X(ny/k)\bigr)
 \P\bigl(\psi_n(X)\in dy\bigr)
 ={n\over\kappa_n}\,G\bigl({ny\over k}\bigr){k/n\over y^2}\,dy
 =h_n(y)\,{dy\over y}
\end{align*}
for $y>k/n$, while $h_{1,n}$ puts no mass on $(0,k/n]$.  Consequently
\begin{align}
 \int_{(0,y]}{h_{1,n}(du)\over h_n(u)}=\int_{k/n}^y{du\over u}=\log{ny\over k}
 =-\log\overline F_X\bigl(\psi_n^{-1}(y)\bigr),\qquad y>k/n ,
 \label{hazardKM}
\end{align}
the last equality because $\overline F_X(\psi_n^{-1}(y))=k/n/y$.  This is the
one-dimensional form of \eqref{populationhazard}.

\emph{Step 2: layers and a maximal inequality.}
Fix $y^\ast>1$ and put
\begin{align*}
 b_n=h_n(y^\ast)={G(ny^\ast/k)\over y^\ast\,G(n/k)} .
\end{align*}
By Assumption~\ref{a:Cbeta},
$b_n\to (y^\ast)^{-1-\beta}>0$; fix $n$ large enough that $b_n\geq\tfrac12(y^\ast)^{-1-\beta}$.  By
\eqref{hnKM} the function $h_n$ is continuous and strictly decreasing on $(k/n,y^\ast]$,
with $h_n(y^\ast)=b_n$ and $h_n(y)\uparrow G(1)\,n/\kappa_n$ as $y\downarrow k/n$, a
bound that diverges with $n$ because $\kappa_n\leq k=o(n)$ and $G(1)>0$, $G$ being
regularly varying.  The sets
\begin{align*}
 I_m=\bigl\{y\in(k/n,y^\ast]:2^mb_n\leq h_n(y)<2^{m+1}b_n\bigr\},\qquad m\geq0,
\end{align*}
are therefore intervals, and they partition $(k/n,y^\ast]$; only finitely many of them
are nonempty, namely those with $2^mb_n\leq G(1)n/\kappa_n$, but the bounds below are
summed over all $m\geq0$ and no count of the layers is needed.  Write
$\alpha_m=\inf I_m$ for a nonempty $I_m$, and let
\begin{align*}
 E_m=\begin{cases}
  \bigl\{\psi_n(Z)\geq\alpha_m\bigr\} &\text{if }\alpha_m\in I_m,\\
  \bigl\{\psi_n(Z)>\alpha_m\bigr\} &\text{otherwise},
 \end{cases}
\end{align*}
the smallest set containing $\{\psi_n(Z)\geq y\}$ for every $y\in I_m$.  In the first
case $(n/\kappa_n)\P(E_m)=h_n(\alpha_m)<2^{m+1}b_n$ because $\alpha_m\in I_m$; in the
second, $h_n$ being continuous on $(k/n,y^\ast]$ and nonincreasing,
$(n/\kappa_n)\P(E_m)=\lim_{y\downarrow\alpha_m}h_n(y)\leq2^{m+1}b_n$.  Either way
\begin{align}
 \sup_{y\in I_m}h_n(y)\ \leq\ {n\over\kappa_n}\P(E_m)\ \leq\ 2^{m+1}b_n .
 \label{layersupKM}
\end{align}
The distinction matters only for the bottom layer, where $\alpha_m=k/n\notin I_m$ and
$(n/\kappa_n)\P(E_m)=G(1)n/\kappa_n$, which is smaller than the mass $n/\kappa_n$ of
$\{\psi_n(Z)\geq k/n\}$ whenever $G(1)<1$.

Let $O_i=(\psi_n(Z_i),\delta_i)$ and consider, for a nonempty $I_m$, the two families
of functions of $O_1$
\begin{align*}
 {\cal R}_m=\bigl\{\ind_{\{\psi_n(Z)\geq y\}}:y\in I_m\bigr\},
 \qquad
 {\cal K}_m=\bigl\{\ind_{\{a<\psi_n(Z)\leq y,\ \delta=1\}}:
 a\in\{\alpha_m\}\cup I_m,\ y\in I_m\bigr\} .
\end{align*}
The index $a=\alpha_m$ is included because the partial sums used below start at the
left endpoint of the layer, which need not belong to it.  Each member of either family
is the indicator of a set cut out by at most two
inequalities on the single real variable $\psi_n(Z)$, intersected in the second case
with the fixed set $\{\delta=1\}$; exactly as in the discussion preceding \eqref{VC},
these are Vapnik--Chervonenkis classes whose index is an absolute constant, so their
uniform covering numbers obey \eqref{VC}.  Every member of either family vanishes off
$E_m$, whose mass satisfies \eqref{layersupKM}.  We claim that, for
${\cal G}\in\{{\cal R}_m,{\cal K}_m\}$,
\begin{align}
 \E^*\Biggl[\sup_{g\in{\cal G}}\bigl|{1\over\kappa_n}\sum_{i=1}^ng(O_i)-{n\over\kappa_n}\E[g(O_1)]
 \bigr|\Biggr]\ \leq\ C\,{\sqrt{2^mb_n}\over\sqrt{\kappa_n}} ,
 \label{maximalKM}
\end{align}
with $C$ absolute.  The derivation is that of \eqref{maximalVC} with $r_n^2$ replaced
by $\kappa_n$; we provide the details for completeness. Writing $P_n$ for the law of $O_1$ and
$\mathbb G_n=\sqrt n(\mathbb P_n-P_n)$, the left-hand side equals
$(\sqrt n/\kappa_n)\E^*[\|\mathbb G_n\|_{\cal G}]$; the maximal inequality
\citep[Theorem~2.14.1]{vanderVaartWellner1996S} bounds $\E^*[\|\mathbb G_n\|_{\cal G}]$ by
$CJ(1,{\cal G})\|\ind_{E_m}\|_{P_n,2}$, the entropy integral being finite and absolute
by \eqref{VC}; and
$\|\ind_{E_m}\|_{P_n,2}=P_n(E_m)^{1/2}=\{\kappa_n(n/\kappa_n)\P(E_m)/n\}^{1/2}
\leq(\kappa_n2^{m+1}b_n/n)^{1/2}$.  Multiplying the two factors gives
\eqref{maximalKM}.

\emph{Step 3: the risk function.}
Applying \eqref{maximalKM} to ${\cal R}_m$ and dividing by
$\inf_{I_m}h_n\geq2^mb_n$, we get
\begin{align*}
 \E^*\Biggl[\sup_{y\in I_m}{\bigl|\widetilde h_n(y)-h_n(y)\bigr|\over h_n(y)}\Biggr]
 \ \leq\ {C\sqrt{2^mb_n}\over 2^mb_n\sqrt{\kappa_n}}
 \ =\ {C\,2^{-m/2}\over\sqrt{b_n\kappa_n}} .
\end{align*}
Since the $I_m$ partition $(k/n,y^\ast]$, summing the geometric series over $m\geq0$ gives
\begin{align}
 \E^*\Biggl[\sup_{k/n<y\leq y^\ast}
 {\bigl|\widetilde h_n(y)-h_n(y)\bigr|\over h_n(y)}\Biggr]\ \leq\ {C\over\sqrt{b_n\kappa_n}}
 \ \longrightarrow\ 0 ,
 \label{relriskKM}
\end{align}
because $b_n$ is bounded away from zero and $\kappa_n\to\infty$.  By Markov's
inequality the event
\begin{align*}
 {\cal E}_n=\Bigl\{\sup_{k/n<y\leq y^\ast}
 \bigl|{\widetilde h_n(y)\over h_n(y)}-1\bigr|\leq{1\over2}\Bigr\}
\end{align*}
has probability tending to one, and on it $\tfrac12h_n\leq\widetilde h_n\leq\tfrac32h_n$
on $(k/n,y^\ast]$.

\emph{Step 4: the hazard integral.}
Let $\widetilde\lambda_n(y)=\int_{(0,y]}\widetilde h_{1,n}(du)/\widetilde h_n(u)$.  Exactly as in Step~3 of the proof of
Theorem~\ref{thm:consistency}, and using \eqref{hazardKM},
\begin{align}
 \widetilde\lambda_n(y)-\log{ny\over k}
 =\int_{(k/n,y]}{\bigl(\widetilde h_{1,n}-h_{1,n}\bigr)(du)\over h_n(u)}
 +\int_{(k/n,y]}{h_n(u)-\widetilde h_n(u)\over\widetilde h_n(u)h_n(u)}\,
 \widetilde h_{1,n}(du) ,
 \label{splitKM}
\end{align}
both measures putting no mass on $(0,k/n]$.

For the first term of \eqref{splitKM}, fix $m$.  On $I_m$,
$h_n\geq2^mb_n$, while $u\mapsto h_n(u)^{-1}$ is nondecreasing and has total
variation at most $(2^mb_n)^{-1}$.  Integration by parts therefore gives
\begin{align*}
 \sup_{k/n<y\leq y^\ast}
 \Bigl|\int_{I_m\cap(k/n,y]}
 {\bigl(\widetilde h_{1,n}-h_{1,n}\bigr)(du)\over h_n(u)}\Bigr|
 &\leq {2\over2^mb_n}
 \sup_{u\in I_m}\bigl|(\widetilde h_{1,n}-h_{1,n})((\alpha_m,u])\bigr|.
\end{align*}
Applying \eqref{maximalKM} to ${\cal K}_m$ and summing over $m\geq0$,
\begin{align}
 \E^*\Biggl[\sup_{k/n<y\leq y^\ast}
 \Bigl|\int_{(k/n,y]}{\bigl(\widetilde h_{1,n}-h_{1,n}\bigr)(du)\over h_n(u)}\Bigr|\Biggr]
 \ \leq\ \sum_{m\geq0}{2\over2^mb_n}\cdot{C\sqrt{2^mb_n}\over\sqrt{\kappa_n}}
 \ \leq\ {C\over\sqrt{b_n\kappa_n}} .
 \label{firstKM}
\end{align}

For the second term of \eqref{splitKM}, work on ${\cal E}_n$, where
$\widetilde h_n\geq h_n/2$.  Every index counted in $\widetilde h_{1,n}(I_m)$ has
$\psi_n(Z_i)\in I_m$, hence is counted in $\widetilde h_n(y)$ for every
$y\in I_m$ with $y\leq\psi_n(Z_i)$; as $\widetilde h_n$ is nonincreasing this gives
$\widetilde h_{1,n}(I_m)\leq\sup_{y\in I_m}\widetilde h_n(y)
\leq\tfrac32\sup_{y\in I_m}h_n(y)\leq3\cdot2^mb_n$ by \eqref{layersupKM}.  Therefore,
on ${\cal E}_n$,
\begin{align*}
 \int_{I_m}{\bigl|h_n-\widetilde h_n\bigr|\over\widetilde h_nh_n}\,
 \widetilde h_{1,n}(du)
 \ \leq\ {2\sup_{I_m}\bigl|h_n-\widetilde h_n\bigr|\over(2^mb_n)^2}\,
 \widetilde h_{1,n}(I_m)
 \ \leq\ {6\over2^mb_n}\sup_{I_m}\bigl|h_n-\widetilde h_n\bigr| ,
\end{align*}
whose outer expectation is at most $C2^{-m/2}(b_n\kappa_n)^{-1/2}$ by
\eqref{maximalKM}; summing over $m\geq0$ and combining with \eqref{firstKM} and
\eqref{splitKM}, gives
\begin{align}
 \sup_{k/n<y\leq y^\ast}\bigl|\widetilde\lambda_n(y)-\log{ny\over k}\bigr|
 \ =\ O_{\P}\bigl({1\over\sqrt{b_n\kappa_n}}\bigr)
 \ =\ O_{\P}\bigl(\kappa_n^{-1/2}\bigr).
 \label{lambdaKM}
\end{align}

\emph{Step 5: from the hazard integral to the product limit.}
The Kaplan--Meier estimator of $\overline F_X$ is
\begin{align*}
 \widehat{\overline F}_n(x)=\prod_{i:\ Z_i\leq x,\ \delta_i=1}
 \bigl\{1-{1\over\#\{l\leq n:Z_l\geq Z_i\}}\bigr\} .
\end{align*}
Since $\psi_n$ is strictly increasing on $(0,x^*)$ and all $Z_i$ lie there almost
surely, $Z_l\geq Z_i$ if and only if $\psi_n(Z_l)\geq\psi_n(Z_i)$, so with
$y=\psi_n(x)$
\begin{align*}
 \widehat{\overline F}_n(x)=\prod_{i:\ \psi_n(Z_i)\leq y,\ \delta_i=1}
 \bigl\{1-{1\over\kappa_n\widetilde h_n(\psi_n(Z_i))}\bigr\} ,
\end{align*}
a product limit whose failure times are the values $\psi_n(Z_i)$ at uncensored $i$ and
whose Nelson--Aalen integral is $\widetilde\lambda_n(y)$.  Those values are almost
surely distinct, because $F_Z$ is continuous, so every failure multiplicity equals one;
and every risk count in the product is at least
$\kappa_n\widetilde h_n(y)\geq\kappa_n\widetilde h_n(y^\ast)$, because
$\widetilde h_n$ is nonincreasing and the product runs over
$\psi_n(Z_i)\leq y\leq y^\ast$.  These are the
only two properties used in the derivation of \eqref{ranksum} and \eqref{KMNA}, which
therefore give, on ${\cal E}_n\cap\{\kappa_n\widetilde h_n(y^\ast)>2\}$ and uniformly for
$y\in(k/n,y^\ast]$,
\begin{align}
 0\ \leq\ -\log\widehat{\overline F}_n(x)-\widetilde\lambda_n(y)
 \ \leq\ {2\over\kappa_n\widetilde h_n(y)-2}
 \ \leq\ {2\over\kappa_n\widetilde h_n(y^\ast)-2}
 \ \leq\ {4\over\kappa_nb_n-4} ,
 \label{KMNAmarg}
\end{align}
the last step because $\widetilde h_n(y^\ast)\geq h_n(y^\ast)/2=b_n/2$ on ${\cal E}_n$.  The
right-hand side is $O(\kappa_n^{-1})$, and $\kappa_n\widetilde h_n(y^\ast)>2$ with
probability tending to one because $\kappa_nb_n\to\infty$.  On the same event every
factor of the product limit is at least
$1-\{\kappa_n\widetilde h_n(y^\ast)\}^{-1}>0$, so $\widehat{\overline F}_n(x)>0$ and
its logarithm is finite.

\emph{Step 6: conclusion.}
Let $x$ satisfy $\psi_n(x)\leq y^\ast$ and put $y=\psi_n(x)$.  If $y=k/n$ then
$\overline F_X(x)=1$, no $Z_i$ with $\delta_i=1$ satisfies $Z_i\leq x$ almost surely,
so $\widehat{\overline F}_n(x)=1$ and the ratio in the proposition equals one.  If
$y>k/n$, then by \eqref{hazardKM}, \eqref{lambdaKM}, and \eqref{KMNAmarg},
\begin{align*}
 \bigl|\log{\widehat{\overline F}_n(x)\over\overline F_X(x)}\bigr|
 =\bigl|-\log\widehat{\overline F}_n(x)-\log{ny\over k}\bigr|
 \leq\bigl|\widetilde\lambda_n(y)-\log{ny\over k}\bigr|+{4\over\kappa_nb_n-4}
 =O_{\P}\bigl(\kappa_n^{-1/2}\bigr)
\end{align*}
uniformly in such $x$.  By \eqref{KMpsi},
$\widehat\psi_n(x)/\psi_n(x)=\overline F_X(x)/\widehat{\overline F}_n(x)$, and
$|e^{-z}-1|\leq e^{|z|}-1$ for real $z$, so the displayed bound transfers to
$|\widehat\psi_n(x)/\psi_n(x)-1|$.  The proof is complete.
\end{proof}

\newpage
\section{Proofs of the corollaries}\label{sec:corollaryproofs}
This section provides the proofs of the four corollaries stated in Section~\ref{supp:mda}.
\subsection{Proof of Corollary~\ref*{cor:RVcons}}
We verify Assumptions~\ref{ass:structure}, \ref{a:plugin} and~\ref{a:plugincons}, and
appeal to Theorem~\ref{thm:consistency}; Assumption~\ref{a:divergence} is verified here
as well, for use in Corollary~\ref{cor:RVnorm}.

\emph{Assumption~\ref{a:FG}.}  The first half follows from \eqref{pnlimit}, dividing by
its value at $(1,1)$, with $F^\circ$ as defined in \eqref{Fcirc}.  This
function is continuous and strictly positive by \eqref{copulalower}, and it is bounded
above by $(s^{-1/\gamma_{X^{(1)}}}\wedge t^{-1/\gamma_{X^{(2)}}})/R^X(1,1)$,
which tends to zero as $s\vee t\to\infty$.  For the censoring, write $q=(s,t)$ and
$x_{j,n}=(n/k)\overline F_{X^{(j)}}(q_ju_{j,n})$, so that $c_n(q)=\Gamma(n/k;x_n)$
exactly.  Assumption~\ref{a:RVX} gives $x_{j,n}\to q_j^{-1/\gamma_{X^{(j)}}}$ locally
uniformly, and the local uniformity in Assumption~\ref{a:RC} with the continuity of $R^C$
then gives the second half with
\begin{align}
 G^\circ(s,t)=R^C\bigl(s^{-1/\gamma_{X^{(1)}}},t^{-1/\gamma_{X^{(2)}}}\bigr) .
 \label{Gcirc}
\end{align}
It is continuous and strictly positive, and tends to zero as either argument tends
to infinity with the other fixed, because the corresponding coordinate of $x$ then tends
to zero in Assumption~\ref{a:RC}.
Writing $\gamma_{C^{(j)}}=\gamma_{X^{(j)}}/\beta_j$, the exponent is
$1/\gamma_{C^{(j)}}$, and Assumption~\ref{a:Cbeta} together with Assumption~\ref{a:RVX} is exactly regular
variation of $\overline F_{C^{(j)}}$ with index $-1/\gamma_{C^{(j)}}$.

\emph{Assumption~\ref{a:divergence}.}  For $(s,t)\in K'=[\kappa,\kappa']^2$
and $v\leq1/\kappa'$, both $vs$ and $vt$ lie in $(0,1]$, so by \eqref{copulalower}
and, for the censoring, by the monotonicity of $R^C$ and \eqref{homogC},
\begin{align*}
 F^\circ(vs,vt)G^\circ(vs,vt)
 \geq(v\kappa')^{-\min_j1/\gamma_{X^{(j)}}}
 (v\kappa')^{-\beta^\ast\min_j1/\gamma_{X^{(j)}}}=(\kappa')^{-\beta}v^{-\beta},
\end{align*}
with $\beta=(1+\beta^\ast)\min_j\gamma_{X^{(j)}}^{-1}>0$, which diverges as $v\downarrow0$.

\emph{Assumption~\ref{a:rate}.}  By the definition of $r_n$,
\begin{align}
 r_n^2=\bigl\{{n\over k}p_n(1,1)\bigr\}\,k\,c_n(1,1),
 \label{rnRV}
\end{align}
whose first factor converges to $R^X(1,1)>0$ by \eqref{pnlimit}.  Hence Assumption~\ref{a:rate}
holds because Assumption~\ref{a:effective} does, which is assumed.

\emph{Assumption~\ref{a:radialmodulus}.}  By \eqref{ZisXC} and \eqref{RVpsi},
$p_n(vq)c_n(vq)=\P\bigl(Z^{(j)}>vq_ju_{j,n},\ j=1,\ldots,d\bigr)$.  Put $z_j=vq_ju_{j,n}$;
replacing $v$ by $Av$ replaces $z$ by $Az$, so Assumption~\ref{a:radialRV} is
Assumption~\ref{a:radialmodulus}.

\emph{Assumption~\ref{a:plugin}.}  Here the multiplicative
form of \eqref{RVpsi} is rather convenient. The distortion
\eqref{distortion} is the linear map
\begin{align}
 \varphi_{j,n}(y)={u_{j,n}\over\widehat u_{j,n}}\ y ,
 \label{RVratio}
\end{align}
whose ratio to the identity does not depend on $y$.  Here
$\widehat\psi_{j,n}(x)/\psi_{j,n}(x)=u_{j,n}/\widehat u_{j,n}$ for every $x$, so the
restriction to $\{x:\psi_{j,n}(x)\leq16T\}$ in Assumption~\ref{a:plugin} costs nothing and
Assumption~\ref{a:plugin} holds
with any deterministic $\eta_n\downarrow0$ for which
$\P\bigl(\max_j|u_{j,n}/\widehat u_{j,n}-1|\leq\eta_n\bigr)\to1$.  Also $\widehat\psi_{j,n}$ is
strictly increasing, hence injective on the whole sample, as
Section~\ref{sec:plugin} requires.  More importantly, the
identity
\begin{align}
 \widehat W_{i,n}(s,t)=W_{i,n}\bigl(s{\widehat u_{1,n}\over u_{1,n}},\,
 t{\widehat u_{2,n}\over u_{2,n}}\bigr),\qquad
 \widehat\Delta_{i,n}(s,t)=\Delta_{i,n}\bigl(s{\widehat u_{1,n}\over u_{1,n}},\,
 t{\widehat u_{2,n}\over u_{2,n}}\bigr)
 \label{randomindex}
\end{align}
holds exactly for every $n$ and every realization, so that the estimated sample is the
oracle sample at the random direction
\begin{align*}
 Q_n(s,t)=\bigl(s{\widehat u_{1,n}\over u_{1,n}},
 t{\widehat u_{2,n}\over u_{2,n}}\bigr),
\end{align*}
and
$\sup_q\|Q_n(q)-q\|\leq C\max_j|\widehat u_{j,n}/u_{j,n}-1|$ on compact direction sets,
which tends to zero in probability by Assumption~\ref{a:quantilecons}.  No angular
condition is needed.

\emph{Assumption~\ref{a:plugincons}.}  Take $\eta_n\downarrow0$ with
$\eta_n\log(n/k)\to0$ and
$\P\bigl(\max_j|u_{j,n}/\widehat u_{j,n}-1|\leq\eta_n\bigr)\to1$, which is possible by
Assumption~\ref{a:quantilecons}.  By \eqref{pnlimit} and \eqref{rnRV},
$n/r_n^2=\{p_n(1,1)c_n(1,1)\}^{-1}$ satisfies
$\log(n/r_n^2)=O\{\log(n/k)\}$, because $p_n(1,1)$ is of order $(k/n)R^X(\boldsymbol 1)$ by
\eqref{pnlimit} and $c_n(1,1)$ is regularly varying with index $-\beta^\ast$ in $n/k$ by
Assumption~\ref{a:RC}, so $\eta_n\log(n/r_n^2)\to0$.
 
 In conclusion, Theorem~\ref{thm:consistency} applies.

\subsection{Proof of Corollary~\ref*{cor:RVnorm}}
Assumptions~\ref{ass:structure}, \ref{a:divergence}, \ref{a:plugin}
and~\ref{a:plugincons} were verified in the proof of
Corollary~\ref{cor:RVcons}, Assumption~\ref{a:plugin} now with the $\eta_n$ constructed below.
It remains to verify Assumptions~\ref{a:bias} and~\ref{a:pluginnorm}.

\emph{Assumption~\ref{a:bias}.}  Continuity of the margins gives the exact identity
 \begin{align*}
 {p_n(s,t)\over p_n(1,1)}
 ={R^X_{n/k}\bigl(\omega_{1,n}(s),\omega_{2,n}(t)\bigr)\over R^X_{n/k}(1,1)},
 \qquad
 \omega_{j,n}(s)={\overline F_{X^{(j)}}(su_{j,n})\over
 \overline F_{X^{(j)}}(u_{j,n})} .
\end{align*}
By Assumption~\ref{a:Hall},
$\omega_{j,n}(s)=s^{-1/\gamma_{X^{(j)}}}[1+\{\delta_{X^{(j)}}(su_{j,n})
-\delta_{X^{(j)}}(u_{j,n})\}/\{\gamma_{X^{(j)}}+\delta_{X^{(j)}}(u_{j,n})\}]$, and
the uniform convergence theorem for the regularly varying $|\delta_{X^{(j)}}|$ gives
$|\delta_{X^{(j)}}(su_{j,n})|\leq C|\delta_{X^{(j)}}(u_{j,n})|$ uniformly for $s$ in a
compact subset of $(0,\infty)$; since
$\gamma_{X^{(j)}}+\delta_{X^{(j)}}(u_{j,n})\to\gamma_{X^{(j)}}>0$,
\begin{align*}
 \sup_{s}\bigl|\omega_{j,n}(s)-s^{-1/\gamma_{X^{(j)}}}\bigr|
 \leq C\bigl|\delta_{X^{(j)}}(u_{j,n})\bigr| .
\end{align*}
As $R^X_r$ and $R^X$ are Lipschitz with constant one in each argument, two applications
of the triangle inequality give, uniformly for $q$ in a compact subset of
$(0,\infty)^2$,
\begin{align*}
 \bigl|R^X_{n/k}\bigl(\omega_{1,n}(s),\omega_{2,n}(t)\bigr)
 -R^X\bigl(s^{-1/\gamma_{X^{(1)}}},t^{-1/\gamma_{X^{(2)}}}\bigr)\bigr|
 \leq C\bigl\{\alpha(n/k)+\max_j\bigl|\delta_{X^{(j)}}(u_{j,n})\bigr|\bigr\},
\end{align*}
the first term by Assumption~\ref{a:S} and the second by the previous inequality.  The same bound at
$(s,t)=(1,1)$ applies to the denominators, which converge to $R^X(1,1)>0$, so
\begin{align*}
 \sup_{q\in\widetilde K}r_n\bigl|{p_n(q)\over p_n(1,1)}-F^\circ(q)\bigr|
 \leq Cr_n\bigl\{\alpha(n/k)+\max_j\bigl|\delta_{X^{(j)}}(u_{j,n})\bigr|\bigr\} .
\end{align*}
The first term vanishes by Assumption~\ref{a:rateXR}.  For the second, $\overline
F_{Z^{(j)}}=\overline F_{X^{(j)}}\overline F_{C^{(j)}}\leq\overline F_{X^{(j)}}$, so
$u_{j,n}\geq U_{Z^{(j)}}(n/k)$, and since $|\delta_{X^{(j)}}|$ is regularly varying with
negative index, Potter's inequality gives
$|\delta_{X^{(j)}}(u_{j,n})|\leq C|\delta_{X^{(j)}}(U_{Z^{(j)}}(n/k))|$.  Moreover, by
\eqref{rnRV} and $\overline F_{X^{(j)}}(u_{j,n})=k/n$,
\begin{align}
 {r_n\over\sqrt k}=\big\{R^X_{n/k}(1,1)\,c_n(1,1)\big\}^{1/2}\longrightarrow0 ,
 \label{rksqrtk}
\end{align}
so that $r_n|\delta_{X^{(j)}}(u_{j,n})|=o(1)$ by Assumption~\ref{a:rateXR}.  This is Assumption~\ref{a:bias}.

\emph{Assumption~\ref{a:pluginnorm}.}  By Assumption~\ref{a:RC} and Potter's bound
we get, for $0<\varepsilon<\beta^\ast$,
\begin{align*}
 c_n(1,1)=O\bigl\{(n/k)^{-\beta^\ast+\varepsilon}\bigr\} .
\end{align*}
Together with \eqref{rksqrtk}, we further get
\begin{align}
 {r_n\log^{3}(n/k)\over\sqrt k}\longrightarrow0 ,
 \label{tworates}
\end{align}
the power of the logarithm being immaterial because $r_n/\sqrt k$ decays like a
negative power of $n/k$.
By Assumption~\ref{a:quantilerate} there is a deterministic $a_n\to\infty$, which by
\eqref{tworates} may be taken to increase slowly enough that
$r_na_n\log^3(n/k)/\sqrt k\to0$, such that
$\P\bigl(\max_j|u_{j,n}/\widehat u_{j,n}-1|\leq\eta_n\bigr)\to1$ for
$\eta_n=a_n\log^2(n/k)/\sqrt k$; thus we have
$r_n\eta_n\log(n/r_n^2)\leq Cr_na_n\log^3(n/k)/\sqrt k\to0$, which is
Assumption~\ref{a:pluginnorm}.

In conclusion, Theorem~\ref{thm:normality} applies.

\subsection{Proof of Corollary~\ref*{cor:KMcons}}
We verify Assumptions~\ref{ass:structure} and \ref{ass:plugin}.1--\ref{ass:plugin}.2, and
appeal to Theorem~\ref{thm:consistency}; Assumption~\ref{a:divergence} is verified here
as well, for use in Corollary~\ref{cor:KMnorm}.

\emph{Assumption~\ref{a:FG}.}  The first half is \eqref{KMF}, and the limit $F^\circ$ is
continuous, strictly positive by \eqref{copulalower}, and bounded by $(s^{-1}\wedge t^{-1})/R^X(1,1)$, which tends
to zero as $s\vee t\to\infty$.  For the censoring, $\psi_{j,n}(C^{(j)})>s_j$ means
$\overline F_{X^{(j)}}(C^{(j)})\leq(k/n)/s_j$, so
\begin{align}
 c_n(s,t)=\Gamma\bigl(n/k;\,1/s,1/t\bigr),
 \label{cnKM}
\end{align}
and Assumption~\ref{a:RC} gives, locally uniformly,
\begin{align*}
 {c_n(s,t)\over c_n(1,1)}\longrightarrow
 G^\circ(s,t)=R^C(1/s,1/t),
\end{align*}
which is continuous, strictly positive, and tends to zero as either argument tends
to infinity with the other fixed, by Assumption~\ref{a:RC}.

\emph{Assumption~\ref{a:divergence}.}  Both limits are exactly homogeneous, the
second by \eqref{homogC}, so
\begin{align}
 F^\circ(vs,vt)\,G^\circ(vs,vt)
 =v^{-(1+\beta^\ast)}F^\circ(s,t)\,G^\circ(s,t),\qquad v>0 ,
 \label{KMpower}
\end{align}
and the divergence follows because $F^\circ G^\circ$ is bounded away from zero on
the compact $K'$.

\emph{Assumption~\ref{a:rate}.}  By \eqref{pnKM} and \eqref{cnKM},
$r_n^2=n\,p_n(1,1)c_n(1,1)=k\,R^X_{n/k}(1,1)\,c_n(1,1)$, whose first factor
converges to $R^X(1,1)>0$ by Assumption~\ref{a:RX}.  Hence Assumption~\ref{a:rate} holds because of
Assumption~\ref{a:effective}, which is assumed.

\emph{Assumption~\ref{a:radialmodulus}.}  By \eqref{ZisXC} and \eqref{KMpsi},
$p_n(vq)c_n(vq)=\P\bigl(\overline F_{X^{(j)}}(Z^{(j)})<c_ju,\ j=1,\ldots,d\bigr)$ with
$u=(k/n)/v$ and $c_j=1/q_j$; thus $c\in[1/(16T),4]^d$ for $q\in[1/4,16T]^d$, and
replacing $v$ by $Av$ replaces $u$ by $u/A$.  Assumption~\ref{a:radialKM} is therefore
Assumption~\ref{a:radialmodulus}.

\emph{Assumption~\ref{a:plugin}.}  Proposition~\ref{prop:KM} applies to each coordinate. Indeed,
its hypotheses are the independence of $X^{(j)}$ and $C^{(j)}$,
Assumption~\ref{a:Cbeta}, the continuity of the corresponding two marginal
distribution functions, Assumption~\ref{a:Mstrict}, and $kG_j(n/k)\to\infty$, which follows from
Assumption~\ref{a:effective}.  This yields \eqref{KMrate}, and hence the claim with
$\eta_n=a_n\zeta_n$ for any deterministic $a_n\to\infty$ chosen to increase slowly enough,
as in the proof of Corollary~\ref{cor:RVnorm}.

\emph{Assumption~\ref{a:plugincons}.}  Since
$\zeta_n^2=\max_j\{kG_j(n/k)\}^{-1}=\{k\min_jG_j(n/k)\}^{-1}$ and
$c_n(1,1)\leq\min_jG_j(n/k)$, we get
\begin{align}
 \zeta_n\log{n\over k}\ \leq\ {\log(n/k)\over\{k\,c_n(1,1)\}^{1/2}}\longrightarrow0
 \label{rnzeta}
\end{align}
by Assumption~\ref{a:effective}.  Furthermore, by
\eqref{pnKM} and \eqref{cnKM},
$n/r_n^2=(n/k)\{R^X_{n/k}(1,1)c_n(1,1)\}^{-1}$, so
$\log(n/r_n^2)\leq C\log(n/k)$ for $n$ large, because $c_n(1,1)$ is regularly varying
with index $-\beta^\ast$ in $n/k$ by Assumption~\ref{a:RC}. Hence the deterministic sequence $a_n\to\infty$ of the
previous step may be taken to increase slowly enough such that
$\eta_n\log(n/r_n^2)\leq Ca_n\zeta_n\log(n/k)\to0$.

In conclusion, Theorem~\ref{thm:consistency} applies.

\subsection{Proof of Corollary~\ref*{cor:KMnorm}}
Assumptions~\ref{ass:structure}, \ref{a:plugin}, \ref{a:plugincons}
and~\ref{a:divergence} were verified in the proof of
Corollary~\ref{cor:KMcons}.  It remains to verify Assumption~\ref{a:bias} and to
decide which of the two conclusions applies.

\emph{Assumption~\ref{a:pluginnorm}.}  With $\eta_n=a_n\zeta_n$ and
$\log(n/r_n^2)\leq C\log(n/k)$ as in the proof of Corollary~\ref{cor:KMcons},
$r_n\eta_n\log(n/r_n^2)\leq Ca_nr_n\zeta_n\log(n/k)$, while \eqref{freeplugin} and
Assumptions~\ref{a:Cbeta} and~\ref{a:RC} give
\begin{align*}
 r_n^2\zeta_n^2=R^X_{n/k}(1,1)\,{c_n(1,1)\over\min_jG_j(n/k)}
 =(n/k)^{-(\beta^\ast-\max_j\beta_j)+o(1)}\,R^X_{n/k}(1,1).
\end{align*}
Under $\beta^\ast>\max_j\beta_j$ the right-hand side tends to zero faster than any
power of $\log(n/k)$, so $a_n\to\infty$ may be taken to increase slowly enough that
$r_n\eta_n\log(n/r_n^2)\to0$.  This is Assumption~\ref{a:pluginnorm}, so
Theorem~\ref{thm:normality} applies and gives the first conclusion.  By
Remark~\ref{rem:asmS1} the strict inequality holds when the coordinates of $C$ are
independent and at least two censoring indices are positive.  Without it one has
$\beta^\ast=\max_j\beta_j$, the marginal standardizations are estimated at the rate $r_n$
itself, and the second conclusion, via Remark~\ref{rem:regular}, is the one that
applies.

\emph{Assumption~\ref{a:bias}.}  Let $\delta\in(0,1/2)$ be arbitrary and
$\widetilde K=[1-\delta,T+\delta]^2$, as in the proof of Theorem~\ref{thm:normality}; since
Assumption~\ref{a:secondKM} is assumed for every $0<a<b<\infty$, the argument applies verbatim to any
compact $\widetilde K\subset(0,\infty)^2$.  Abbreviate $A_n=R^X_{n/k}(1/s,1/t)$,
$A=R^X(1/s,1/t)$, $B_n=R^X_{n/k}(1,1)$, and $B=R^X(1,1)>0$, and let
$\varepsilon_n$ denote the supremum in Assumption~\ref{a:secondKM} taken over
$[a,b]=[1/(T+\delta),1/(1-\delta)]$, which contains $1$ and every argument that occurs,
since $q=(s,t)$ ranges over $\widetilde K=[1-\delta,T+\delta]^2$.  By \eqref{pnKM},
\begin{align*}
 \bigl|{p_n(s,t)\over p_n(1,1)}-F^\circ(s,t)\bigr|
 =\bigl|{A_n\over B_n}-{A\over B}\bigr|
 ={\bigl|(A_n-A)B+A(B-B_n)\bigr|\over B_nB}
 \leq{\bigl\{B+(1-\delta)^{-1}\bigr\}\varepsilon_n\over B_nB} ,
\end{align*}
uniformly in $q\in\widetilde K$, using $A
\leq(1-\delta)^{-1}$ by \eqref{copulalower} and $B_n\to B>0$.  Multiplying by $r_n$ and
using Assumption~\ref{a:secondKM} gives the claim.

In conclusion, Theorem~\ref{thm:normality} gives the first statement and
Remark~\ref{rem:regular} the second.

\newpage

\phantomsection

\end{document}